\documentclass[11pt]{article}
\usepackage[margin=1in,letterpaper]{geometry}
\usepackage[T1]{fontenc}
\usepackage[sc]{mathpazo}
\usepackage{array}
\usepackage{microtype}
\usepackage{amsmath}
\usepackage[dvipsnames]{xcolor}
\usepackage{amssymb}
\usepackage{amsthm}
\usepackage{bm,float}
\usepackage{dsfont}
\usepackage{authblk}
\usepackage{fullpage,caption,wrapfig}
\usepackage{comment}
\usepackage{mathtools}
\usepackage[shortlabels,inline]{enumitem}
\usepackage{bbm}
\usepackage{booktabs}
\usepackage{makecell}
\definecolor{pku-red}{RGB}{139,0,18}
\usepackage{nag,tikz}
\usetikzlibrary{calc}

\usepackage{tablefootnote}
\usetikzlibrary{decorations.pathreplacing}
\usetikzlibrary{graphs, positioning, quotes, shapes.geometric}
\usepackage{pgfplots}
\pgfplotsset{compat=1.18}
\usetikzlibrary{arrows.meta}%
\usetikzlibrary{arrows}
\usepackage{tikz-3dplot}
\usepgfplotslibrary{colormaps,fillbetween}

\usepackage{multicol}
\usepackage[scaled]{helvet} 

\usepackage{float}

\usepackage{multirow}

\usepackage[normalem]{ulem}

\usepackage{etoolbox}
\makeatletter
\patchcmd{\@bibitem}{\ignorespaces}{\label{bib-#1}\ignorespaces}{}{}
\makeatother

\newcommand{\mycite}[1]{[\ref{bib-#1}]}
\newcommand{\titlecite}[1]{\texorpdfstring{\cite{#1}}{\mycite{#1}}}

\usepackage[ruled]{algorithm2e} 

\SetAlFnt{\small}
\SetAlCapFnt{\small}
\SetAlCapNameFnt{\small}
\SetAlCapHSkip{0pt}
\IncMargin{-\parindent}

\usepackage{hyperref}
\usepackage{crossreftools}
\pdfstringdefDisableCommands{%
    \let\Cref\crtCref
    \let\cref\crtcref
}
\hypersetup{
    colorlinks=true,
    linkcolor=pku-red,     
    urlcolor=violet,
    citecolor=BlueViolet,
    pdffitwindow=true,
}\usepackage[capitalize, nameinlink]{cleveref}

\allowdisplaybreaks[3]


\theoremstyle{plain}
\newtheorem{theorem}{Theorem}[section]
\newtheorem{lemma}[theorem]{Lemma}
\newtheorem{corollary}[theorem]{Corollary}
\newtheorem{proposition}[theorem]{Proposition}

\newtheorem{claim}[theorem]{Claim}

\theoremstyle{definition}
\newtheorem{definition}[theorem]{Definition}
\newtheorem{remark}[theorem]{Remark}
\newtheorem{example}[theorem]{Example}

\DeclareMathOperator*{\suppmax}{suppmax}
\DeclareMathOperator*{\suppmin}{suppmin}
\DeclareMathOperator*{\argmin}{argmin}

\DeclareMathOperator*{\supp}{supp}
\DeclareMathOperator*{\br}{br}
\DeclareMathOperator*{\aff}{aff}
\DeclareMathOperator*{\Span}{Span}
\DeclareMathOperator*{\poly}{poly}

\newcommand{\R}{\mathbb{R}}

\newcommand{\norm}[1]{\left\lVert #1\right\rVert}

\newcommand{\RHS}{\mathrm{RHS}}
\newcommand{\LHS}{\mathrm{LHS}}
\newcommand{\Term}{\mathrm{Term}\ }
\newcommand{\T}{\mathsf{T}}

\newcommand \za {\textbf{a}}
\newcommand \zb {\textbf{b}}
\newcommand \zc {\textbf{c}}
\newcommand \zd {\textbf{d}}
\newcommand \ze {\textbf{e}}
\newcommand \zf {\textbf{f}}
\newcommand \zg {\textbf{g}}
\newcommand \zh {\textbf{h}}
\newcommand \zi {\textbf{i}}
\newcommand \zj {\textbf{j}}
\newcommand \zk {\textbf{k}}
\newcommand \zl {\textbf{l}}
\newcommand \zm {\textbf{m}}
\newcommand \zn {\textbf{n}}
\newcommand \zu {\textbf{u}}
\newcommand \zv {\textbf{v}}
\newcommand \zw {\textbf{w}}
\newcommand \zt {\textbf{t}}

\makeatletter

\makeatother

\title{The Search-and-Mix Paradigm in Approximate Nash Equilibrium Algorithms}



\author[1]{Xiaotie Deng\thanks{\href{mailsto:xiaotie@pku.edu.cn}{xiaotie@pku.edu.cn}.}}

\author[2]{Dongchen Li\thanks{\href{mailsto:dongchen.li@connect.hku.hk}{dongchen.li@connect.hku.hk}.}}

\author[1]{Hanyu Li\thanks{\href{mailsto:lhydave@pku.edu.cn}{lhydave@pku.edu.cn}.}}
\affil[1]{Peking University}
\affil[2]{The Univerity of Hong Kong}

\date{}

\begin{document}
\hypersetup{pageanchor=false}
\maketitle 
\begin{abstract}
AI in Math deals with mathematics in a constructive manner so that reasoning becomes automated, less laborious, and less error-prone. For algorithms, the question becomes how to automate analyses for specific problems. For the first time, this work provides an automatic method for approximation analysis on a well-studied problem in theoretical computer science: computing approximate Nash equilibria in two-player games. We observe that such algorithms can be reformulated into a search-and-mix paradigm, which involves a search phase followed by a mixing phase. By doing so, we are able to fully automate the procedure of designing and analyzing the mixing phase. For example, we illustrate how to perform our method with a program to analyze the approximation bounds of all the algorithms in the literature. Same approximation bounds are computed without any hand-written proof. Our automatic method heavily relies on the LP-relaxation structure in approximate Nash equilibria. Since many approximation algorithms and online algorithms adopt the LP relaxation, our approach may be extended to automate the analysis of other algorithms.
\end{abstract}
\thispagestyle{empty} 

\newpage
\tableofcontents
\newpage
\setcounter{page}{1}
\hypersetup{pageanchor=true}
\pagenumbering{arabic}

\section{Introduction}

Automated theorem discovering and proving have been a long-standing goal of computer science and artificial intelligence, with the earliest work dating from the 1950s \cite{davisComputingProcedureQuantification1960,gilmoreProofMethodQuantification1960}. Some of the most successful examples include geometry \cite{recioAutomaticDiscoveryTheorems1999,wuMechanicalTheoremProving1994} and inequality proofs \cite{akbarpourMetiTarskiAutomaticTheorem2010,cavinessQuantifierEliminationCylindrical1998}. With the advancement of deep learning, people are also leveraging deep neural networks to discover theorems and proofs \cite{daviesAdvancingMathematicsGuiding2021,gukovSearchingRibbonsMachine2023,lampleHyperTreeProofSearch2022}.

In computer science, the parallel tasks involve the automatic discovery and analysis of algorithms. Using deep reinforcement learning, machines can discover algorithms for tasks like matrix multiplication \cite{fawziDiscoveringFasterMatrix2022} and sorting \cite{mankowitzFasterSortingAlgorithms2023} that outperform human-designed algorithms on fixed-size inputs.

On the other hand, over the course of history, the complexity of algorithm analyses has increased dramatically. Many of them are error-prone and could only be verified by a handful of specialists. Thus, there is an increasing demand for the automatic method for algorithm analysis, leaving a certain amount of complicated proof to programs. However, it remains largely unexplored even for a specific problem, such as analyzing the worst-case approximation bounds of a particular algorithm.

As a typical example, we note that all these drawbacks arose in a well-studied problem of approximation algorithms in theoretical computer science: computing \emph{approximate Nash equilibria} in two-player games \cite{BBM07_0.36NE,CDF+16_0.382NE&0.653WSNE,DMP07_0.38NE,DMP06_0.5NE,DFM22_0.3333NE,KPS06_0.75NE,TS07_0.3393NE}. As is shown in \Cref{tab:approx-result}, the progress of approximation improvement is becoming more and more incremental, while the length of the paper's proof is increasing. More seriously, the grow-up of proof complexity led to errors in the preliminary version of the SOTA result \cite{wrong_DFM_0.3333}. 

In this work, we provide an automatic tool to derive the approximation bound of a given algorithm without any hand-written proof, which is far less error-prone than humans. Our method also dramatically speeds up the analysis process. To obtain the same approximation bounds for \emph{all} algorithms in the literature for this problem, our automatic method takes only a few seconds\footnote{The only exception is \cite{BBM07_0.36NE}-2. Mathematica has problems dealing with the function \texttt{Sqrt} in the \texttt{NMaximize} function. Therefore, we use the \texttt{RandomSearch} method to calculate the bound of \cite{BBM07_0.36NE}-2, which takes several minutes.} on a common laptop, while the original proofs could span more than 5 pages (see \Cref{tab:approx-result}). Our work can thus be viewed as the first step towards automatic algorithm analysis.

\begin{table}[ht]
\centering
\begin{tabular}{c|cccc}
\toprule
    Algorithm & \makecell{Bound proved by\\ original paper} & \makecell{Length of proofs\\(page) / codes (line)} &\makecell{Bound computed\\
    by our program} & \makecell{Running time\\(second)}\\
        \midrule
\cite{KPS06_0.75NE}&$0.75$& $1/32$ &$0.75$&$1.39$\\
\cite{DMP06_0.5NE}&$0.5$& $0.5/20$ &$0.5$&$0.06$\\
\cite{DMP07_0.38NE}&$0.381966+\delta$& $1/33$&$0.381966+\delta$&$1.99$\\ 
\cite{BBM07_0.36NE}-1&$0.381966$& $1/25$&$0.381966$&$4.63$\\
\cite{CDF+16_0.382NE&0.653WSNE}&$0.381966$& $2/34$&$0.381966$&$3.38$\\                   
\cite{BBM07_0.36NE}-2& $0.363917$& $6/32$ & $0.363917$&$484.23$\\
\cite{TS07_0.3393NE}  &$0.339331+\delta$ & $7/25$&$0.339331+\delta$&$6.28$ \\
\cite{DFM22_0.3333NE}&$1/3+\delta$& $13/42$&$0.333333+\delta$&$15.67$\\
\bottomrule 
\end{tabular}
    \caption{The approximation bounds of \emph{all} algorithms in the literature: original vs. our results.}
    \label{tab:approx-result}
\end{table}

\emph{Nash equilibrium} (NE) is a fundamental concept in economics. In economic research, human interactions are often modeled as strategic games, where rational players interact with each other and attempt to achieve the best possible outcome. Nash equilibrium captures a state of balance among the players, where no player can benefit from unilaterally changing their strategy.

In his seminar works \cite{N51_NashConcept,NashKakutani}, Nash established the existence of such equilibria. However, from a theoretical view, it is also important to find polynomial-time algorithms for Nash equilibria in practice. In fact, from the perspective of computational complexity, the established hardness results \cite{Chen20053NASHIP, CDT09_2PPADcomplete,Daskalakis2005ThreePlayerGA} give strong evidence that even in two-player games, there could probably not exist polynomial-time algorithm for Nash equilibria.

This naturally leads to the consideration of $\epsilon$-\emph{approximate Nash equilibria} ($\epsilon$-NE), where the players are permitted to gain at most $\epsilon$ by deviation. In view of computational complexity, a similar negative result is given by Rubinstein \cite{R17_consthard}. It shows that, under a certain moderate assumption, computing $\epsilon$-NE \emph{cannot} be polynomial time when $\epsilon < \epsilon^*$, where $\epsilon^*$ is some constant.

Therefore, the main open problem remains to identify the smallest $\epsilon$ for which a polynomial-time algorithm can compute an $\epsilon$-NE. There have been numerous works in this direction, achieving increasingly better approximations, such as \cite{KPS06_0.75NE, DMP06_0.5NE, DMP07_0.38NE, BBM07_0.36NE,TS07_0.3393NE, DFM22_0.3333NE}. A process of lower bound and upper bound results is presented in \Cref{fig:process-fig}.

\begin{figure}[ht]
    \centering
    \begin{tikzpicture}[x=0.09\textwidth,y=2em]
\draw [thick] (0,0) -- (10,0); 

\node[left] at (0,0) {$0$}; 

\node[right] at (10,0) {$1$}; 

\fill[BurntOrange] (1,0) circle (2pt); 
\node[above] at (1,0) {$\epsilon^*$?, \cite{R17_consthard} }; 

\fill[NavyBlue] (3.334,0) circle (2pt); 
\node[below] at (1.5,-0.5) {$1/3+\delta$, \cite{DFM22_0.3333NE} }; 
\draw [ultra thin] (3.334,-0.15) -- (2,-0.5);

\fill[NavyBlue] (3.394,0) circle (2pt); 
\node[above] at (2,0.5) {$0.3393+\delta$, \cite{TS07_0.3393NE} }; 
\draw [ultra thin] (3.394,0.15) -- (2.5,0.5);

\fill[NavyBlue] (3.6,0) circle (2pt); 
\node[below] at (4.5,-0.5) {$0.36$, \cite{BBM07_0.36NE} }; 
\draw [ultra thin] (3.6,-0.15) -- (4.5,-0.5);

\fill[NavyBlue] (3.8,0) circle (2pt); 
\node[above] at (3.8,0.5) {$0.38$, \cite{BBM07_0.36NE, CDF+16_0.382NE&0.653WSNE} }; 
\draw [ultra thin] (3.8,0.15) -- (4,0.5);

\fill[NavyBlue] (3.85,0) circle (2pt); 
\node[below] at (8,-0.5) {$0.38+\delta$, \cite{DMP07_0.38NE} }; 
\draw [ultra thin] (3.85,-0.15) -- (8,-0.5);

\fill[NavyBlue] (5,0) circle (2pt); 
\node[above] at (5,0) {$1/2$, \cite{DMP06_0.5NE} }; 

\fill[NavyBlue] (7.5,0) circle (2pt); 
\node[above] at (7.5,0) {$3/4$, \cite{KPS06_0.75NE} }; 

\end{tikzpicture}
    \caption{The process on approximation NE in bimatrix games. The blue points are the upper bound results while the orange point is the lower bound result.}
    \label{fig:process-fig}
\end{figure}
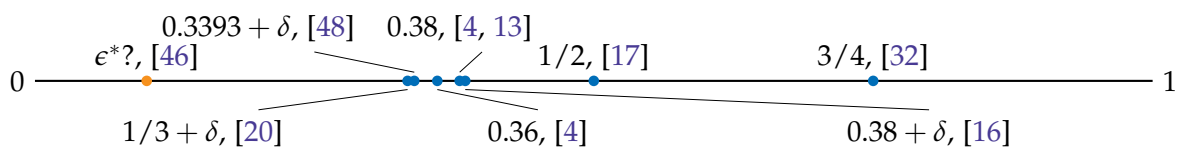

Our work focuses on the automatic method to derive an upper bound proven for an approximation algorithm. The main idea of our method is to restate the approximation algorithms into a \emph{search-and-mix} paradigm. Then, we provide a series of theories and algorithms that collectively assist in automating the design and analysis of any algorithm under this paradigm.

\subsection{Related work}
Nash equilibrium is a solution concept in which no player can gain more by deviating from her current (possibly randomized) strategy against her opponents. A typical scenario for this solution concept is finite normal-form games, where a finite number of players independently and simultaneously choose from a finite number of actions. By applying fixed-point theorems, Nash \cite{N51_NashConcept,NashKakutani} proved that such a solution always exists in any finite normal-form game, marking the start of a new paradigm in economics known as non-cooperative games. Nash equilibria subsequently became a standard solution concept in game theory, further enriched by contributions from computer science and artificial intelligence, such as algorithmic game theory~\cite{nisan_roughgarden_tardos_vazirani_2007} and adversarial learning~\cite{GAN}. As a result, game theory has become a thriving interdisciplinary field.

From the perspective of computer science, Nash's non-constructive proof was just the beginning. The further key question is how to find Nash equilibria in polynomial time. This question adopts the view from computational complexity. Papadimitriou \cite{P94_PPAD} introduced a complexity class called PPAD and demonstrated that the problem of computing Nash equilibria (\textsc{Nash}) lies in PPAD, so do $k$-player games ($k$-\textsc{Nash}).\footnote{Strictly speaking, in most cases where $k \geq 3$, Nash equilibria cannot be represented by Turing machines, a standard computational model. Therefore, ($k$-)\textsc{Nash} actually aims to find \emph{approximate Nash equilibria} with an approximation of $O(1/2^m)$ or even $O(1/\poly(m))$. See the introduction below for more details.} However, the establishment of PPAD-completeness results required more efforts. Daskalakis, Goldberg, and Papadimitriou~\cite{DGP06_nPPADcomplete} proved the completeness result for $k$-\textsc{Nash} with $k \geq 4$. Subsequently, two independent works by \cite{Chen20053NASHIP} and \cite{Daskalakis2005ThreePlayerGA} showed that $3$-\textsc{Nash} is PPAD-complete. Finally, Chen and Deng \cite{chenSettlingComplexityTwoPlayer2006} established the PPAD-completeness of $2$-\textsc{Nash}. These findings provide compelling evidence that unless $\text{PPAD} \subseteq \text{P}$, there is no polynomial-time algorithm\footnote{More accurately, no fully polynomial-time approximation scheme (FPTAS)} for Nash equilibria, even in the simplest two-player game scenarios. 

This naturally leads us to consider \emph{approximate Nash equilibria} with a constant approximation. In an $\epsilon$-Nash equilibrium ($\epsilon$-NE), any player can gain at most $\epsilon$ additional payoff by deviating from the original strategy. To have a unified discussion on $\epsilon$-NE, it is a convention in the literature to normalize the payoffs to the interval $[0,1]$. It is important to note that $\epsilon$-NE is not solely a computational consideration: It also characterizes the outcomes of players with limited rationality~\cite{dengComplexityCooperativeSolution1994,samuelson1997evolutionary}. 

Until now, most studies still have been focusing on two-player games, also known as bimatrix games, where the two players independently choose a row and a column from the payoff matrix. However, even this seemingly simple concept involves significant intricacy. 

Initially, using a straightforward argument in probabilistic methods, Lipton, Markarkis, and Mehta \cite{LMM03_quasipoly} proposed an algorithm to find an $\epsilon$-NE in time $n^{O(\log n/\epsilon^2)}$ (i.e., quasi-polynomial time). For a long time, it was conjectured that such a simple algorithm was not optimal. However, Rubinstein \cite{R17_consthard} proved that, under a certain moderate assumption\footnote{That is, the exponential-time hypothesis (ETH) for PPAD.}, there exists a constant $\epsilon^*$ such that computing $\epsilon$-NE is at least quasi-polynomial for any $\epsilon < \epsilon^*$.

Therefore, the main open problem remains to identify the smallest $\epsilon$ for which a polynomial algorithm can compute an $\epsilon$-NE. There have been numerous works in this direction. Kontogiannis, Panangopoulou, and Spirakis \cite{KPS06_0.75NE} introduced an approximation algorithm with $\epsilon=3/4$, which initiated the improvement process. Soon after, Daskalakis, Mehta, and Papadimitriou \cite{DMP06_0.5NE,DMP07_0.38NE} improved the approximation to $1/2$ and $0.38+\delta$, respectively. Bosse, Byrka, and Markakis \cite{BBM07_0.36NE} achieved $\epsilon=0.364$. Concurrently, Tasknakis and Spirakis \cite{TS07_0.3393NE} presented the famous TS algorithm that achieved $\epsilon=0.3393+\delta$, which remained the then state-of-the-art for 15 years. Research and discoveries continued... Czumaj et al. \cite{CDF+16_0.382NE&0.653WSNE} introduced the first distributed method with a $0.38$ approximation bound. Chen et al. \cite{CDH+21_0.3393tight} proved that $0.3393$ is the tight bound for the TS algorithm. Subsequently, in 2022, Deligkas, Fasoulakis, and Markakis \cite{DFM22_0.3333NE} showed that by carefully revising the TS algorithm, $\epsilon=1/3+\delta$ can be reached, marking it as the next milestone. Very recently, the tightness of their algorithmic $1/3$ bound was shown by Chen et al. \cite{chen_tightness_2023}.

\subsection{The traditional paradigm}

It is worth noting that all the polynomial-time approximation algorithms in the literature can be fitted into a general framework which we call the \emph{search-and-mix paradigm}. In the search phase, these algorithms \emph{search} in polynomial time for strategies of both players that satisfy specific properties. In the subsequent mixing phase, the algorithms give a few specific convex combinations to \emph{mix} the strategies obtained in the search phase and select the one with the minimum approximation. The mixing phase usually takes a negligible amount of time. Through the exploration of diverse strategies and convex combinations, researchers have been able to develop algorithms with various approximation bounds as mentioned above.

As a concrete example\footnote{If the readers are not familiar with concepts discussed below, see \Cref{sec:pre} for definitions.}, here we restate the algorithm given by Daskalakis, Mehta, and Papadimitriou~\cite{DMP06_0.5NE} in this paradigm. 

\begin{example}\label{ex:1/2NE}
In the search phase, select a pure strategy for one of two players, such as the $i$th row for the row player. Then, calculate the best-response column $j$ for the column player against row $i$ and then the best-response row $k$ for the row player against column $j$. In words, the result of the search is row player's strategies $i,k$ and column player's strategy $j$.

In the mixing phase, among all the possible convex combinations of the strategies, given by $\{(\alpha i+(1-\alpha)k, j):\alpha\in [0,1]\}$, select the strategy profile $((i+k)/2,j)$ as the output of the algorithm. By using the properties of best responses, it was shown that this strategy profile yields a $1/2$-NE.    
\end{example}

This paradigm follows a natural idea. In the search phase, it generates strategies with certain properties. However, using only a single strategy profile could lead to a poor approximation bound. To address this problem, it introduces a mixing phase that combines the generated strategies along with their properties. This combination produces new strategy profiles that one of them yields a better approximation bound.

However, this traditional approach also has some drawbacks. 
\begin{itemize}
    \item \emph{Practice}. The selection of the convex combinations emphasizes too much on the approximation bound. It weakens the performance of the algorithms in non-worst-case scenarios, which is considerably important in practice as these algorithms are widely applied in finding the approximate NE in polynomial time.
    
    \item \emph{Theory}. As algorithms with smaller approximation bounds are proposed, the improvements they achieved are becoming increasingly incremental. As there are more and more strategies generated in the search phase\footnote{One can refer to \Cref{tab:instances}.}, the design of convex combinations in the mixing phase is suffering from increasing intricacy and complexity. 
    
    Moreover, the approximation analysis is going to be complicated, lengthy, and error-prone. \Cref{tab:proof-length} shows the length of the approximation analysis of the original algorithms and the modified algorithms. These works made a very long proof but obtained a very incremental improvement in the approximation. Even worse, in the preliminary versions of the SOTA algorithm \cite{wrong_DFM_0.3333}, the proof is wrong.
    
\begin{table}[ht]
        \centering
        \begin{tabular}{c|cc}
        \toprule
             & Original Algorithm& Modified Algorithm \\\midrule
             Work & \cite{BBM07_0.36NE} &\cite{BBM07_0.36NE}\\
             Approximation Bound & $0.38$ & $0.36$\\
             Analysis Length (pages) & $1$ &$6$\\\hline
             Work & \cite{TS07_0.3393NE} &\cite{DFM22_0.3333NE}\\
             Approximation Bound& $0.3393+\delta$ & $1/3+\delta$\\
             Analysis Length (pages) & $7$ & $13$\\
             \bottomrule
        \end{tabular}
        \caption{Length of the approximation analysis of the original algorithms and the modified algorithms. The analysis of the DFM algorithm \cite{DFM22_0.3333NE} is based on that of the TS algorithm. Thus, it needs an additional $6$ pages to derive the approximation bound.}
        \label{tab:proof-length}
    \end{table}
\end{itemize}
Thus, it is crucial to develop a systematic understanding of how to design approximation algorithms and analyze their approximation bounds. Our approach is a novel attempt to overcome these long-standing limitations.

\subsection{The novel paradigm}

To begin with, our work presents a new formalization of the search-and-mix paradigm, establishing a framework for further analysis and understanding of such algorithms. The main observation, initially formalized by \cite{TS07_0.3393NE}, is that the approximation on a given strategy profile $(x,y)$ can be expressed by a function $f(x,y)$. From the definition of $\epsilon$-NE, if we set $f_R(x,y)$ and $f_C(x,y)$ to be the maximum additional payoff of the row and the column player by deviating from their current strategies, then $f(x,y)=\max\{f_R(x,y),f_C(x,y)\}$ is exactly the approximation of the strategy profile $(x,y)$. Thus, the mission of finding the minimum approximation can be transferred to solving an optimization problem with the objective function $f$.

The main difference of our formalization is in the mixing phase. As mentioned above, in the traditional approach, people select certain convex combinations and output the one with the minimum approximation. However, our mixing phase now directly finds the \emph{instance optimal} convex combination with the minimum approximation. This problem can be formalized as a global optimization problem, as discussed above.

To solve this optimization problem, we present a polynomial-time subroutine for our mixing phase. The main advantage of our subroutine is that it is \emph{search-phase independent}. In other words, it does not rely on the particular form of the search phase. Thus, it is suitable for \emph{all} approximation algorithms in the literature. It is both theoretically and practically better than the traditional approach.

For instance, in continuation of \Cref{ex:1/2NE}, instead of outputting the fixed strategy profile $((i+k)/2,j)$ as in \cite{DMP06_0.5NE}, our approach directly computes the optimal $\alpha^*$ with the minimum approximation among all strategy profiles in $\{(\alpha i+(1-\alpha)k, j):\alpha\in [0,1]\}$. Namely, we solve the following optimization problem
\[
\begin{array}{rl}
    \text{minimize}&\quad f(\alpha i+(1-\alpha)k, j)\\
    \text{ s.t. }&\quad \alpha\in [0,1].
\end{array}
\]

The main difficulty is that this optimization problem, in general cases, is non-convex and even non-smooth\footnote{That is, the objective function is non-differentiable.}. Thus, we cannot rely on traditional global optimization techniques based on iterative methods and gradients. Our solution is as follows. First, using classic methods in computational geometry, we introduce a divide-and-conquer method to partition the problem into several optimization subproblems to overcome the non-smoothness. Then, we use techniques from continuous optimization and discrete geometry to derive necessary conditions for the local minimum of the subproblems. Finally, for each subproblem, we find its feasible solutions satisfying the necessary conditions. The global minimum of the original problem is thus the optimal one among all found subproblem solutions.

Having obtained the subroutine for the new mixing phase, the immediate next problem is to analyze the approximation bound of the algorithm equipped with this subroutine. We develop a corresponding method for this purpose. Our solution is to convert the approximation analysis into a fixed-size constraint optimization problem. Moreover, this conversion can be accomplished in a fully automated manner. For this purpose, we divide the approximation analysis into two steps. The first step is search-phase independent (just like our mixing-phase subroutine), while the second only depends on the search phase.

In the first step, we show how to construct an upper bound for our mixing subroutine in a unified and search-phase-independent manner. Our main tool is an extension of the technique proposed in \cite{CDH+21_0.3393tight}. We call our extension the \emph{linearization} method. Using this technique, we linearize functions $f_R$ and $f_C$. By doing so, we can express the upper bound $h^*$ of the global minimum of $f$ solely by $f_R$ and $f_C$ values of strategies given by the search phase.

For instance, consider \Cref{ex:1/2NE} again. The result of linearization is presented in \Cref{fig:linearization}. The blue point is the global minimum of $f$, and the pink point is the upper bound. We can show that the upper bound of $\min_{\alpha\in [0,1]} f(\alpha i+(1-\alpha)k,j)$ is fully determined by $f_R(i,j)$, $f_C(i,j)$, $f_R(k,j)$, and $f_C(k,j)$, given by $h^*=\min_{\alpha\in [0,1]}\max\{\alpha f_R(i,j)+(1-\alpha) f_R(k,j), \alpha f_C(i,j)+(1-\alpha) f_C(k,j)\}$.

\begin{figure}[htb]
    \centering
    \hspace{4em}\begin{tikzpicture}[scale=4]
    \draw[-Stealth, thick] (0,0) -- (1.1*1.5,0) node[above right] {$\alpha$};
    \draw[-Stealth, thick] (0,0) -- (0,1.1);
    \draw[very thick] (0,0) -- (1*1.5,0.5) node[right] {$f_R$};
    \draw[very thick] (0,0.8) -- (0.2*1.5,0.4) node [below left] {$f_C$} -- (0.6*1.5,0.1) -- (1*1.5,0);
    \draw[very thick, dashed] (0,0.8) -- (1*1.5,0);
    \draw[ultra thin] (0.3*1.5,0.6) -- (0.4*1.5,0.8) node[above] {linearized $f_C$};
    
    \draw (0,0) node [above left] {$0$};
    \draw (0,0) node[below] {$(k,j)$};
    \draw (1*1.5,0) node[below] {$(i,j)$};
    
    \draw[fill,Periwinkle] (0.44*1.5,0.22) circle [radius=0.025];
    \draw[fill,Periwinkle] (0.7*1.5,0.95) circle [radius=0.025] node[right] {\quad global minimum};
    
    \draw[fill,Salmon] (0.8/1.3*1.5,0.4/1.3) circle [radius=0.025];
    \draw[fill,Salmon] (0.7*1.5,0.8) circle [radius=0.025] node[right] {\quad upper bound};
\end{tikzpicture}
    \caption{Linearization procedure.}
    \label{fig:linearization}
\end{figure}
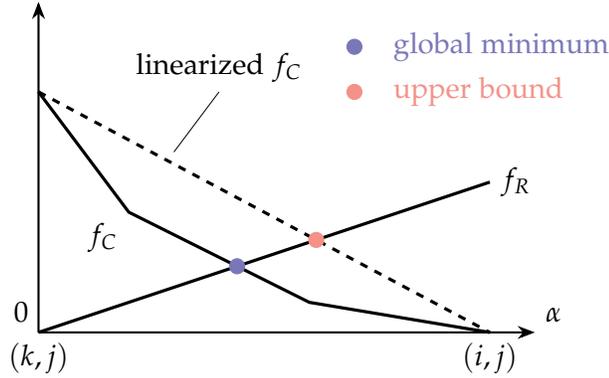

In the second step, we require the search phase to provide relations of $f_R$ and $f_C$ values of strategies given by it. Note that the upper bound $h^*$ derived in the first step is fully determined by these $f_R$'s and $f_C$'s. We view these $f_R$'s and $f_C$'s as variables and their relations as constraints, and maximize $h^*$, which then produces a fixed-size constraint optimization problem. No matter what values these $f_R$'s and $f_C$'s have, we can guarantee that the maximized $h^*$ is the desired constant upper bound. We can solve this optimization problem by a computer program.

For instance, in \Cref{ex:1/2NE}, denote $a=f_R(i,j)$, $b=f_R(k,j)$, $c=f_C(i,j)$, and $d=f_C(k,j)$. Since $j$ is the best response for the column player against $i$, we have $c=0$. Similarly, $b=0$. Besides, by the $[0,1]$ normalization assumption of the game payoffs, we have $0\leq a,d\leq 1$. Bearing with these relations, the expression of the approximation upper bound is given by:
\begin{equation}
\begin{aligned}
\max_{a,b,c,d} \min_{\alpha} & \quad h\\
    \text{s.t.} &\quad h =\max\{\alpha a+(1-\alpha)b,\alpha c+(1-\alpha) d\},\\
    &\quad 0\leq \alpha\leq 1,\\
    &\quad b=0, c=0,\\
    &\quad 0\leq a,d \leq 1.
\end{aligned}\label{eq:example-opt-prob}    
\end{equation}
The intuition behind this optimization is clear: By the definition of $h^*$, no matter what values these $f_R$ and $f_C$ have, the approximation should be bounded by $h^*$. Then, to compute the approximation bound, we need to consider the worst, namely the largest $h^*$. Thus, we have to maximize all these $f_R$'s and $f_C$'s.  It is not hard to show the optimal value to \eqref{eq:example-opt-prob} is $1/2$, which is the desired approximation bound.

To demonstrate the power of our new paradigm, we apply our mixing-phase subroutine to revise \emph{all} approximation algorithms in the literature and provide the corresponding approximation analysis. We present the details for the BBM algorithm \cite{BBM07_0.36NE}, the TS algorithm \cite{TS07_0.3393NE}, and the state-of-the-art DFM algorithm \cite{DFM22_0.3333NE}. To illustrate how to conduct the approximation analysis in an automated way, we also write down the optimization problem for the approximation bound similar to \eqref{eq:example-opt-prob} for \emph{all} algorithms in the literature. Subsequently, we implement Mathematica codes to calculate the corresponding approximation bounds, as shown in \Cref{tab:approx-result}. It is worth noting that most computations are completed within a few seconds and achieve very high accuracy!

It is worth noting that these algorithms have very different search phases. However, our mixing-phase subroutine can be applied directly without any modifications. Moreover, the approximation analysis for the revised algorithms now takes a unified approach and can be conducted completely by a computer program. The authors highly encourage the readers to look at \Cref{details:literature-examples} as well as the Mathematica codes\footnote{We have put them as well as the evaluation outputs (also shown in \Cref{tab:approx-result}) in an anonymous repository. Please click the following link \url{https://anonymous.4open.science/r/approxNE-auto-analyzer-D71E/}.} for this purpose.

\subsection{Main Contributions}

Our new paradigm enjoys the following advantages:
\begin{itemize}
    \item Our subroutine of the mixing phase is search-phase independent. Moreover, it finds the instance optimal strategy profile in the mixing phase. Thus, researchers no longer need to design ad hoc convex combinations with great efforts.
    
    \item Approximation analysis can be fully automated. The constructions of upper bounds are given by formal rules and do not depend on the search phase. Then, given suitable relations from the search phase, we can automatically write down the optimization problem of the approximation bound and solve it using a computer program. As a result, when designing new approximation algorithms, it is sufficient to only focus on the design of the search phase and find out the relations used in the analysis. This reduces a lot of workload.
    
    \item It leads to the first systematic approach to prove the tightness of the approximation bound. By examining the optimal solution to the optimization problem of the approximation bound, we can obtain conditions under which the approximation bound becomes tight. This helps generate tight instances and prove the tightness of the approximation bound.
    
    \item Using our paradigm, it is convenient to refine the search phase. When introducing parameters in the search phase, we can obtain the corresponding parameterized approximation bound. This helps in selecting the optimal parameters with the minimum approximation bound.
\end{itemize}

We are also aware of the value of the mixing problem itself. The mixing problem is closely related to quadratic constraint quadratic programming. It brings new perspectives on approximate Nash equilibria and a wide range of open problems concerning the relationships between PPAD, P, and NP. It turns out that the mixing problem serves a role in approximate Nash equilibria similar to the \textsc{Sat} problem in computational complexity theory.

Moreover, our linearization method is essentially a special kind of LP relaxations. In fact, many approximation algorithms and online algorithms heavily adopt the LP-relaxation, both for design (e.g., vertex cover and max-cut \cite{vaziraniApproximationAlgorithms2003}) and for analysis (e.g., factor revealing LP \cite{jainGreedyFacilityLocation2003,mehtaAdWordsGeneralizedOnline2007}).  our approach may be extended to automate the analysis of these kinds of algorithms.

\subsection{Organization}

This paper is organized as follows.
\begin{itemize}
    \item In \Cref{sec:pre}, we define the basic notions about Nash Equilibria.
    \item In \Cref{sec:search-and-mix}, we provide a formal definition of the mixing phase, together with our new mixing phase.
    \item In \Cref{sec:find-opt-comb}, we propose three polynomial-time algorithms to calculate the \emph{optimal} convex combinations of a certain number of strategies. They can serve as the desired subroutine for the new mixing phase.
    \item In \Cref{sec:approx-anal}, we propose upper bound constructors corresponding to algorithms in \Cref{sec:find-opt-comb}. This constructor is able to write the approximation analysis into a fixed-size constraint optimization problem via a systematic and automated approach.
\end{itemize}

As a concrete example, we demonstrate how our framework works on a concise but nontrivial $0.38$-approximation algorithm \cite{BBM07_0.36NE} (called the BBM algorithm) in \Cref{sec:search-and-mix}, \Cref{sec:find-opt-comb}, and \Cref{sec:approx-anal}. In \Cref{sec:mixing-problem}, we further discuss the mixing problem. Finally, in \Cref{sec:conclusion}, we summarize our work and propose further directions.
\section{Preliminaries}\label{sec:pre}
\subsubsection*{Vectors and matrices}
The $n$-dimensional Euclidean space is denoted by $\R^n$. The standard orthonormal basis of $\R^n$ is $e_1,\dots,e_n$. Denote by $\Span(v_1,\dots,v_k)$ the linear space spanned by vectors $v_1,\dots,v_k\in\R^n$. Notation $[n]:=\{1,\dots,n\}$ represents an index set. For vector $v\in\R^n$, denote its $i$th item by $v_i$. For vector $u\in\R^n$, define the following operators: $\max\{u\}:=\max\{u_1,\dots,u_n\}$, $\min\{u\}:=\min\{u_1,\dots,u_n\}$, $\supp(u):=\{i\in[n]:u_i\neq 0\}$, $\suppmax(u):=\{i\in[n]: \forall j\in[n]\ u_{i} \geq u_{j}\}$, and $\suppmin(u):=\{i\in[n]: \forall j\in[n]\ u_{i} \leq u_{j}\}$. For vector $u\in\R^n$, denote its Euclidean norm by $\norm{u}$. For two vectors $v,w\in\R^n$, notation $v\geq w$ ($v>w$) represents that $v_i\geq w_i$ ($v_i> w_i$) holds for every $i\in[n]$.

For an $m\times n$ matrix $A$, denote its $i$th row by $A_i$, its $j$th column by $A^j$, and its item at $i$th row $j$th column by $A_{ij}$. Its transpose is denoted by $A^\T$.

\subsubsection*{Simplex and mixing operations}

A standard \emph{$(n-1)$-simplex} is the set $\Delta_n:=\{\alpha \in\R^n:\alpha \geq 0\text{ and }\sum_{i=1}^n \alpha_i=1\}$. A simplex can be viewed as a probability space and its elements are probability vectors. 

For given elements $z_1,\dots ,z_w$ from $\R^n$, the set of their \emph{convex combinations} is defined to be $\{\alpha_1 z_1+\dots +\alpha_w z_w: \alpha\in \Delta_w\}$. For any element in the set, the corresponding $\alpha$ is called its \emph{barycentric coordinate}. Any vector $\alpha\in \Delta_w$ uniquely determines a convex combination, which can be viewed as an operation:
\[
\alpha(z_1,\dots, z_w):=\alpha_1 z_1+\dots +\alpha_w z_w.
\]
We call it a \emph{mixing operation} over $(z_1,\dots, z_w)$ with $\alpha$. Under this situation, $\alpha$ is called the \emph{mixing coefficient}.

\subsubsection*{Games, mixed strategies, and best responses}
We focus on \emph{bimatrix games}, in which there are only two players. We refer to them as the row player and the column player. We assume that the row player has $m$ pure strategies and the column player has $n$ pure strategies. A game can be defined by a pair of payoff matrices $R$ and $C$ of size $m\times n$. When the row player chooses the $i$th row and the column player chooses the $j$th column, their payoffs are denoted by $R_{ij}$ and $C_{ij}$, respectively. By adjusting the payoffs through shifting and scaling, we assume that all entries in $R$ and $C$ belong to interval $[0,1]$.

For each player, a \emph{mixed strategy} represents a probability distribution over all pure strategies. Therefore, the space of mixed strategies over $n$ pure strategies is precisely captured by $\Delta_n$. We also use vectors $x\in \Delta_m$ and $y\in \Delta_n$ to represent mixed strategies for the row and the column player, respectively. In particular, pure strategies can be regarded as mixed strategies where a specific pure strategy is chosen with a probability of $1$. Unless otherwise specified, whenever we refer to "strategy" below, we are referring to mixed strategies.

A \emph{strategy profile} $(x,y)$ refers to a pair of mixed strategies $x$ and $y$ from the row and column players, respectively. We also call it a \emph{strategy pair}.

The \emph{payoff} of a strategy profile is defined as the expectation of the payoff over all pure strategy profiles. Consider the strategy profile $(x,y)$, where the row player selects $x$ and the column player selects $y$ independently. Then their payoffs are $x^\T Ry$ and $x^\T Cy$ respectively.

A \emph{best response} against a strategy $x$ from the row player is a mixed strategy of the column player that maximizes the expected payoff against $x$. We denote the set of best response strategies by $\br_{C}(x)$. Similarly, we can define $\br_{R}(y)$ as the set of best response strategies against the strategy $y$ of the column player. By definition, for a strategy pair $(x,y)$, $\br_{C}(x)$ is the set of all probability distributions on $\suppmax(C^\T x)$, and $\br_{R}(y)$ is the set of all probability distributions on $\suppmax(Ry)$.

\subsubsection*{Approximate Nash equilibria and the optimization viewpoint}
An \emph{$\epsilon$-approximate Nash equilibrium} ($\epsilon$-NE) is a strategy profile $(x,y) \in \Delta_m \times \Delta_n$ such that for any $x'\in \Delta_m$ and $y'\in \Delta_n$, it holds that
\begin{align*}
(x')^\T Ry \leq x^\T Ry + \epsilon, \text{\quad and\quad} x^\T Cy' \leq x^\T Cy + \epsilon.
\end{align*}
We call the minimum $\epsilon$ the (additive) \emph{approximation}\footnote{In the literature, this quantity has various names. Here we use the one we think the most appropriate.} of the profile. In particular, a strategy profile is a Nash equilibrium if it is a $0$-NE (or has an approximation of zero). The famous Nash's theorem \cite{N51_NashConcept} states that there exists a Nash equilibrium for any bimatrix game.\footnote{Actually, this holds for any finite normal-form game.}

We follow \cite{TS07_0.3393NE} to express the problem of computing $\epsilon$-NE in an optimization form. Define the regrets of the row and the column player as follows:
\begin{align*}
f_R(x,y):=\max\{Ry\}-x^\T  R y \text{\quad and\quad}
f_C(x,y):=\max\{C^\T  x\}-x^\T  C y.
\end{align*}
Note that $\max \{R y\}=\max_{x'\in \Delta_n} (x')^\T R y$. Similarly, we have $\max\{C^\T  x\}=\max_{y'\in \Delta_m} x^\T C y'$. Therefore, both $f_R$ and $f_C$ are non-negative, and a strategy profile $(x,y)$ is an $\epsilon$-NE if and only if 
\[\max \{f_R(x,y),f_C(x,y)\} \leq \epsilon.\] 

Define $f(x,y):=\max\{f_R(x,y),f_C(x,y)\}$. Then finding an $\epsilon$-NE is equivalent to finding a point $(x,y)$ in $\Delta_m\times\Delta_n$ with $f(x,y)\leq \epsilon$. Thus, it suffices to minimize $f$ in order to find solutions with a certain approximation.

\section{The search-and-mix paradigm}\label{sec:search-and-mix}
In this section, we give the formalization of the search-and-mix paradigm, including both traditional and novel ones. 

The traditional paradigm can be divided into two phases. In the search phase, an algorithm computes several (at most three in the literature) strategies of each player in polynomial time. In the mixing phase, the algorithm then mixes the selected strategies into several strategy profiles and outputs the profile with the minimum $f$ value. The illustration is presented in \Cref{fig:framework-search-mix}.
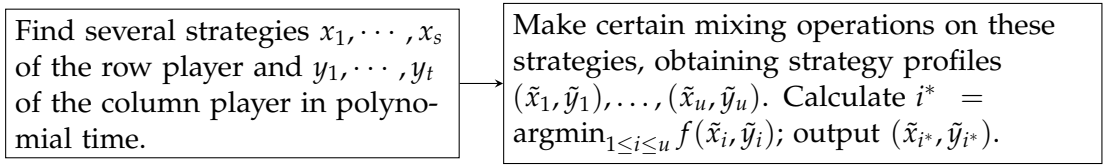
\begin{figure}[ht]
    \centering
\begin{tikzpicture}[node distance=1.5em,>=stealth]
  \node[draw, text width = 15em]                         (step 1)  {Find several strategies $x_1,\cdots, x_s$ of the row player and $y_1,\cdots ,y_t$ of the column player in polynomial time.};
  \node[draw, text width = 20em, right=of step 1]        (step 2)  {Make certain mixing operations on these strategies, obtaining strategy profiles $(\tilde{x}_1, \tilde{y}_1), \ldots, (\tilde{x}_u, \tilde{y}_u)$. Calculate $i^* = \argmin_{1 \leq i \leq u} f(\tilde{x}_i, \tilde{y}_i)$; output $(\tilde{x}_{i^*}, \tilde{y}_{i^*})$.};
  \graph{
    (step 1) -> (step 2)
  };
\end{tikzpicture}
\caption{Procedure of the search-and-mix paradigm in the literature.}
\label{fig:framework-search-mix}
\end{figure}

A typical example is as follows.

\begin{example}[BBM algorithm~\cite{BBM07_0.36NE}]\label{example:start}
~
\begin{itemize}[fullwidth]
    \item \emph{Search phase}: Compute a NE $(x^*, y^*)$ of the zero-sum game $(R-C, C-R)$\footnote{Computing NE in zero-sum games can be modeled by a linear program and thus can be solved in polynomial time. See, e.g., \cite{nisan_roughgarden_tardos_vazirani_2007}.}. Let $g_1=f_R(x^*,y^*)$ and $g_2=f_C(x^*,y^*)$. By symmetry, assume without loss of generality that $g_1\geq g_2$. Then compute $r_1\in \br_R(y^*)$ and $b_2\in \br_C(r_1)$.
    \item \emph{Mixing phase}: Produce strategy profiles $(x^*,y^*)$ and $(r_1,(1-\delta_2) y^*+\delta_2 b_2)$, where $\delta_2=(1-g_1)/(2-g_1)$. Output the one with the smaller $f$ value.
\end{itemize}
\end{example}

Throughout the paper, we use this example repeatedly to demonstrate how to use our framework in general.

\Cref{tab:instances} shows how approximation algorithms in the literature fit into this paradigm. Our purpose is merely to demonstrate how these algorithms can be presented as search-and-mix instances. Thus, we omit the detailed meanings of the notations but keep them the same as those in the original papers.

\begin{table}[ht]
\renewcommand{\arraystretch}{1.25}
\centering
\begin{tabular}{cccc}
\toprule
\multirow{2}*{Approximation}                                                        & \multicolumn{2}{c}{Search Phase}      & \multirow{2}*{Mixing Phase}   \\ \cmidrule{2-3}
& Row         & Column    &   \\
\midrule

$0.75$ \cite{KPS06_0.75NE}                             & $e_{i_1}, e_{i_2}$  & $e_{j_1},e_{j_2}$   & $(\frac{1}{2}(e_{i_1}+e_{i_2}),\frac{1}{2}(e_{j_1}+e_{j_2}))$                                                 \\
$0.5$ \cite{DMP06_0.5NE}                             & $e_i, e_k$        & $e_j$             & $(\frac{1}{2}(e_i+e_k), e_j)$                                                                             \\
$0.38+\delta$ \cite{DMP07_0.38NE}                            & $x,\alpha$        & $y,\beta$         & $(\delta \alpha+(1-\delta)x, \delta \beta+(1-\delta) y), (x,y)$                                           \\
$0.38$ \cite{BBM07_0.36NE}       & $x^*, r_1$        & $y^*,b_2$         & $(x^*,y^*), (r_1, (1-\delta_2)y^*+\delta_2 b_2)$                                                          \\
$0.38$ \cite{CDF+16_0.382NE&0.653WSNE}                         & $x^*,
\hat{x}, r$        & $y^*, j$         & $(\hat{x},y^*), (\frac{1}{2-v_r} x^*+\frac{1-v_r}{2-v_r}r, j)$                                           \\
$0.36$ \cite{BBM07_0.36NE}       & $\hat{x}$       & $y^*,b_2$         & $(\hat{x},(1-\delta_2)y^*+\delta_2 b_2)$     
\\
$0.3393+\delta$ \cite{TS07_0.3393NE}  & $x^*, w^*$        & $y^*, z^*$        & $(x^*, y^*), (\frac{1}{1+\lambda-\mu} w^{*}+\frac{\lambda-\mu}{1+\lambda-\mu} x^{*}, z^{*})$ \\
\specialrule{0pt}{2pt}{2pt}
$1/3+\delta$ \cite{DFM22_0.3333NE}& $x_s, w$ & $y_s, z, \hat{z}$ & \makecell{$(x_s,y_s),(w,z)$, \\
$(w,pz+(1-p)\hat{z}),(\frac{1-q}{2}x_s+\frac{1+q}{2}w,z)$}     \\
\bottomrule 
\end{tabular}
\bigskip
\caption{The search-and-mix instances in the literature. The notations are the same as those in the original papers. If there are symmetric cases, we only provide one for conciseness.}
\label{tab:instances}
\end{table}

As is shown by \Cref{tab:instances}, different algorithms produce different strategies in the search phase. Then, in the mixing phase, they choose different strategy profiles obtained from mixing operations in order to ensure a certain approximation in worst cases.

However, we notice that the traditional approach has the following two defects:
\begin{enumerate}    
    \item \emph{Ad hoc designs}. The intentional design of the mixing phase can undermine the practical applicability of an algorithm. In order to minimize the approximation bound, strategies are often chosen based on worst-case scenarios. However, it is important to note that worst-case scenarios take only a small fraction of all cases. In Appendix C.1 in \cite{CDH+21_0.3393tight}, two million game instances were randomly generated, but only two of them approached even one third of the approximation bound. Consequently, from a practical standpoint, it cannot be justified to deliberately design an algorithm solely for worst-case scenarios.

    \item \emph{Theoretical hardness}. 
   In the literature, there is increasing sophistication in the design of mixing strategies, involving a larger number of strategies and cases, longer proofs (see \Cref{tab:proof-length}) but incremental improvements.
\end{enumerate}

In this paper, we develop a new approach that reshapes the procedure of the mixing phase. Instead of designing certain strategy profiles for the worst case, we develop polynomial-time algorithms to directly compute the \emph{global minimum} of $f$. Now, the procedure of algorithm design is simplified to \Cref{fig:framework-new}.

\begin{figure}[ht]
    \centering
\begin{tikzpicture}[node distance=1.5em,>=stealth]
  \node[draw, text width = 12em]                         (step 1)  {Find several strategies $x_1,\cdots, x_s$ for the row player and $y_1,\cdots ,y_t$ for the column player in polynomial time.};
  \node[draw, text width = 18em, right=of step 1]        (step 2)  {Calculate mixing coefficients $\alpha^*\in \Delta_s, \beta^* \in \Delta_t$ in polynomial time that minimize $f(\alpha(x_1+\cdots +\alpha_s x_s, \beta_1 y_1+\cdots +\beta_t y_t)$ as a function in $\alpha$ and $\beta$. Output $(\alpha_1^* x_1+\cdots +\alpha_s^* x_s, \beta_1^* y_1+\cdots +\beta_t^* y_t)$};
  \graph{
    (step 1) -> (step 2)
  };
\end{tikzpicture}
\caption{The novel procedure for the search-and-mix paradigm.}
\label{fig:framework-new}
\end{figure}
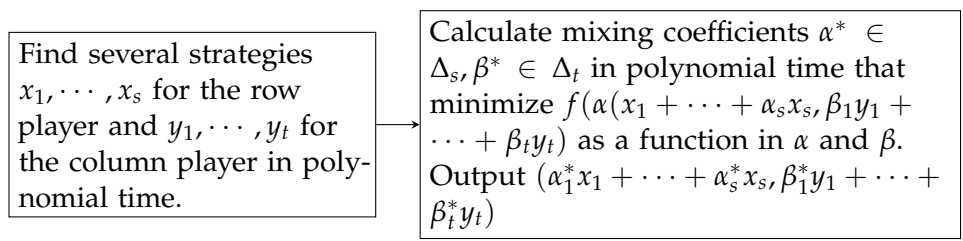

\begin{example}[The modified BBM algorithm]
\label{example:modify}
Applying the novel procedure to the mixing phase in \Cref{example:start}, we can obtain the modified mixing phase as follows.
\begin{itemize}[fullwidth]
    \item \emph{Modified mixing phase}: Calculate coefficients $\alpha^* \in \Delta_2$, $\beta^* \in \Delta_2$ in polynomial time that minimize $f(\alpha_1 x^*+\alpha_2 r_1, \beta_1 y^*+\beta_2 b_2)$ as a function of $\alpha_1$, $\alpha_2$, $\beta_1$, and $\beta_2$.
\end{itemize}
\end{example}

Following this novel procedure, we surprisingly resolve the two defects together of the above traditional approach:

\begin{enumerate}
    \item \emph{One unified design serving for all search phases.} In \Cref{sec:find-opt-comb}, we propose a polynomial-time subroutine for the new mixing phase. The mixing phase is not related to the particular form of the search phase. Moreover, it directly computes the global minimum. Thus, in our new approach, ad hoc designs for the subroutine of the mixing phase are no longer needed. In this way, people now only need to focus on designing the search phase. 
    
    \item \emph{Automated approximation analysis}. In \Cref{sec:approx-anal}, we provide a corresponding approximation analysis method for the novel paradigm. This method is suitable for all algorithms following the novel paradigm. Moreover, the analysis can be implemented in a fully automated manner. In this way, when designing a new approximation algorithm, people can now avoid sophisticated analysis and compute the approximation bound completely using a computer program.
\end{enumerate}
\section{Algorithms for the mixing phase}\label{sec:find-opt-comb}
In this section, we state our problem in the mixing phase and present a sketch idea on how our algorithms are devised. To begin with, we formally define the problem for the new mixing phase illustrated in \Cref{fig:framework-new}.

\begin{definition}[$(s,t)$-mixing problems and algorithms]\label{def:st-mixing}
An $(s,t)$-mixing problem has the following input and output.
\begin{itemize}
    \item \emph{Input}: a game $(R,C)$, mixed strategies $x_1,\dots ,x_s$ for the row player and $y_1,\dots ,y_t$ for the column player.
    \item \emph{Output}: coefficients $\alpha\in \Delta_s$, $\beta \in \Delta_t$ that minimize $f(\alpha_1 x_1+\dots +\alpha_s x_s, \beta_1 y_1+\dots +\beta_t y_t)$.\footnote{Due to the definition, function $f$ is uniquely determined by $R,C$.}
\end{itemize}

An $(s,t)$-mixing algorithm solves an $(s,t)$-mixing problem.
\end{definition}

Due to symmetry, we can suppose without loss of generality that $s\leq t$. Otherwise, we simply exchange the positions of the players.

To propose an $(s,t)$-mixing algorithm running in polynomial time, we first scrutinize the form of the problem. Expanding the objective function $f$ of the mixing problem given in \Cref{def:st-mixing}, we have the expression
\begin{equation}
\label{equ:st-mixing-expansion}
\begin{aligned}
     \max \{&\max \{R(\beta_1 y_1+\dots +\beta_t y_t)\}-(\alpha_1 x_1+\dots +\alpha_s x_s)^\T R (\beta_1 y_1+\dots +\beta_t y_t),\\
    &\max\{C^\T (\alpha_1 x_1+\dots +\alpha_s x_s)\}-(\alpha_1 x_1+\dots +\alpha_s x_s)^\T C (\beta_1 y_1+\dots +\beta_t y_t)\}.
\end{aligned}
\end{equation}

There are three major components in \eqref{equ:st-mixing-expansion}: inner maximum operators (e.g., $\max\{Ry\}$), vector-matrix-vector products (e.g., $x^\T R y$), and the outermost maximum operator (i.e., $\max\{f_R,f_C\}$). All three kinds of terms present difficulties from different aspects.

\begin{enumerate}
    \item Terms in the form of $\max\{Ry\}$ and $\max\{C^\T x\}$ are piecewise-linear in $\beta$ and $\alpha$, respectively. These two linear-piecewise terms are convex but non-differentiable.
    \item Terms in the form of $x^\T R y$ and $x^\T C y$ are bilinear in $\alpha,\beta$. Thus, these terms are differentiable but nonconvex.
    \item The form of the objective function $f=\max\{f_R,f_C\}$ is non-differentiable.
\end{enumerate}

The sketch of our effort to overcome these difficulties is listed below:
\begin{enumerate}
    \item To solve the first difficulty, we adopt the divide-and-conquer method. We divide the problem and solve it on each linear piece of the $\max$ term. We reduce the problem into the famous \emph{half-plane intersection problem} in computational geometry. The detailed approach is explained in \Cref{details:subsec:linearpiece}.

    \item To solve the second and the third difficulties, we derive necessary optimal conditions for the sub-problems resulting from the previous divide-and-conquer. The main technique we use is the combination of discrete geometry and optimization (more specifically, properties of polytopes and KKT conditions). The detailed approach is presented in \Cref{details:subsec:opt-over-polytope}.

    \item Finally, by discussing these linear pieces and conditions case by case, we are able to design polynomial algorithms to find the \emph{global} minimum of $f$ over $(\alpha,\beta)$. To do so, we formulate the problem into various optimization problems (univariate quadratic programming, linear programming, and fractional programming). The algorithms are presented in \Cref{details:subsec:mixing-algo}.
\end{enumerate}

With these efforts, we are able to present $(1,w)$, $(2,2)$, and $(2,3)$-mixing algorithms.

\begin{theorem}\label{thm:all-mixing}
The following statements hold.

\begin{enumerate}
    \item \label{thm:1-w-mixing}
    There exists a $(1,w)$-mixing algorithm in time $O(mnw+L(w,m))$, where $L(w,m)$ is the time complexity of solving a linear program with $w$ variables and $m$ constraints. 
    \item \label{thm:2-2-mixing}
    There exists a $(2,2)$-mixing algorithm in time $O(mn)$.
    \item \label{thm:2-3-mixing}
    There exists a $(2,3)$-mixing algorithm in time $O\left(m^2(n+\log m)+n\log n\right)$.
\end{enumerate}
\end{theorem}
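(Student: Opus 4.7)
The plan is to prove all three parts through the three-step scheme articulated just before the theorem. First, decompose $\Delta_s\times\Delta_t$ into polyhedral cells on which each piecewise-linear term $\max\{Ry(\beta)\}$ and $\max\{C^\T x(\alpha)\}$ is attained by a fixed index, so that on each cell $f$ reduces to the maximum of two bilinear functions of $(\alpha,\beta)$. Second, on each cell apply KKT conditions together with the combinatorics of the simplex to enumerate a finite candidate set of optimizers. Third, take the best candidate across all cells. Throughout, I abbreviate $x(\alpha)=\sum_i\alpha_i x_i$ and $y(\beta)=\sum_j\beta_j y_j$.

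For part (1), the choice $s=1$ already collapses the bilinearity: with $x_1$ fixed, $\max\{C^\T x_1\}$ is a constant, $x_1^\T C y(\beta)$ is linear in $\beta$, and $\max\{Ry(\beta)\}$ is a convex piecewise-linear function of $\beta$. Hence $f=\max\{f_R,f_C\}$ is convex piecewise-linear over $\Delta_w$, and its minimization is equivalent to the linear program
\begin{equation*}
\min_{\beta\in\Delta_w,\,\eta}\ \eta\quad\text{s.t.}\quad \eta\ge (Ry(\beta))_i-x_1^\T Ry(\beta)\ \ \forall i\in[m],\quad \eta\ge \max\{C^\T x_1\}-x_1^\T Cy(\beta),
\end{equation*}
which has $w+1$ variables and $O(m)$ constraints. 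Precomputing the vectors $Ry_j$ and the scalars $x_1^\T C y_j$ for all $j\in[w]$ costs $O(mnw)$, after which a black-box LP solver runs in time $L(w,m)$.

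For parts (2) and (3), the bilinearity cannot be eliminated, but the effective dimension of $(\alpha,\beta)$ is only $2$ (when $s=t=2$) or $3$ (when $s=2,\ t=3$) after using $\sum_i\alpha_i=\sum_j\beta_j=1$. I would first build the partition of $\Delta_s\times\Delta_t$ induced by the affine equations $(Ry(\beta))_i=(Ry(\beta))_{i'}$ and $(C^\T x(\alpha))_j=(C^\T x(\alpha))_{j'}$ via a standard line-arrangement construction---a low-dimensional instance of the half-plane intersection problem. On each resulting cell, $f$ is the maximum of two bilinear functions, and its minimum lies either at an interior stationary point of the KKT system for this bilinear program on a polytope, on a facet where the active branch of $\max\{f_R,f_C\}$ or of one inner max switches, or at a vertex of the cell. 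For $(2,2)$ the interior case reduces to a univariate quadratic with $O(1)$ critical points, and the boundary cases are handled by invoking part (1) restricted to the facet; for $(2,3)$ the interior case reduces to a constant-size fractional program solvable in closed form. Enumerating the $O(1)$ candidates per cell and taking the best yields the output.

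The main obstacle is the amortized bookkeeping needed to match the stated complexities $O(mn)$ and $O(m^2(n+\log m)+n\log n)$: naively iterating over every cell and every pair of indices would be too slow. For $(2,3)$ in particular, I would sweep $\alpha$ along its one-dimensional simplex, maintaining incrementally the two-dimensional upper envelope $\max\{Ry(\beta)\}$ as $\alpha$ crosses one of the $O(m^2)$ events that reorder the entries of $C^\T x(\alpha)$, updating the affected arrangement slice in $O(n+\log m)$ amortized time, with the column-side events sorted once up front in $O(n\log n)$. Correctness requires showing that degenerate configurations---coinciding best-response supports, simplex-boundary coefficients, indices tied at the argmax---do not cause a true minimizer to be missed, and I expect the bulk of the technical verification to concentrate on exactly this case analysis.
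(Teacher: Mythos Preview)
Your treatment of part (1) is correct and matches the paper's Algorithm~1 exactly.

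For parts (2) and (3), you have the right overall architecture (decompose into cells, apply KKT on each), but you miss the key structural simplification that drives the claimed complexities: the partition of $\Delta_s\times\Delta_t$ is a \emph{product}. The term $\max\{Ry(\beta)\}$ depends only on $\beta$ and $\max\{C^\T x(\alpha)\}$ depends only on $\alpha$, so the paper separately computes a partition of $\Delta_s$ into at most $n$ linear pieces (the $(s,n)$-separation) and a partition of $\Delta_t$ into at most $m$ linear pieces (the $(t,m)$-separation), and the cells of $\Delta_s\times\Delta_t$ are just the Cartesian products. No joint arrangement or sweep is needed. For $(2,2)$ this gives at most $mn$ rectangular cells with $O(1)$ work each, yielding $O(mn)$ directly; for $(2,3)$ it gives at most $mn$ prism-shaped cells (interval $\times$ polygon), where each $\beta$-polygon can have $O(m)$ edges, and the per-cell work is $O(m)$, not $O(1)$, giving the $O(m^2n)$ term.

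Your proposed $(2,3)$ bookkeeping is accordingly off in several places. You write that you would ``maintain incrementally the two-dimensional upper envelope $\max\{Ry(\beta)\}$ as $\alpha$ crosses one of the $O(m^2)$ events that reorder the entries of $C^\T x(\alpha)$''---but $\max\{Ry(\beta)\}$ does not depend on $\alpha$ at all, so nothing changes as $\alpha$ sweeps; $C^\T x(\alpha)$ has $n$ (not $m$) entries and its upper envelope over the one-dimensional $\alpha$ has at most $n$ (not $m^2$) breakpoints; and the $m^2\log m$ in the bound comes instead from the one-time $(3,m)$-separation on the $\beta$ side (computing $m$ polygons of total edge complexity $O(m^2)$ via half-plane intersection). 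Once you use the product structure, the amortization problem you flag as the ``main obstacle'' disappears, and the degeneracy concerns reduce to the routine case analysis the paper carries out in Propositions~\ref{prop:2dmin} and~\ref{prop:3dmin}.
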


The algorithms are presented in \Cref{details:sec:mixing-algos} as \Cref{algo:concise-1dbpoint}, \Cref{algo:concise-2dbpoint}, and \Cref{algo:concise-3dbpoint}. They are all polynomial-time and cover the need for algorithms in \Cref{tab:instances}. Still, it is interesting to consider whether there exist polynomial-time mixing algorithms with larger parameters. We discuss the difficulties and the value of this problem in \Cref{sec:mixing-problem}.

Now, in continuation of \Cref{example:modify}, we again consider the BBM algorithm. Below we illustrate the detailed process of the mixing algorithm over it.

\begin{example}[The mixing process of the modified BBM algorithm]
\label{example:mix-modify}
The modified BBM algorithm implements the $(2,2)$-mixing algorithm in its mixing phase. Observe the form of the function $f_R$ over the mixing region $\mathcal{A}=\{(\alpha x^*+(1-\alpha)r_1, \beta y^*+(1-\beta)b_2)\}$:
\[
\begin{aligned}
&f_R(\alpha x^*+(1-\alpha)r_1, \beta y^*+(1-\beta)b_2)\\
=&\max\{R(\beta y^*+(1-\beta)b_2)\}-(\alpha x^*+(1-\alpha)r_1)^\T R (\beta y^*+(1-\beta)b_2).
\end{aligned}
\]

Note that the $\max$ term can be written as $\max\{\beta R (y^*- b_2)+ R b_2\}$. Thus, it has the form of the maximum of $m$ linear functions about $\beta$, which is piecewise linear in $\beta$.

Now, we want to compute the exact form of the piecewise linear function, given by a sequence of breakpoints $0=\beta_1\leq \dots \leq \beta_t=1$ ($t\leq m+1$) so that $f$ is linear in $\beta$ on each $[\beta_i,\beta_{i+1}]$. This is a famous problem in computational geometry called the \emph{envelope problem}, which can be solved in time $O(m \log m)$ (See \Cref{app:prec-2-m-separation} for full details). However, note that we still need to compute the exact form of this problem, that is to compute the value of $R(y^*-b_2)$ and $Rb_2$ with time $O(mn)$.

Similarly, we can compute the linear pieces given by breakpoints $0=\alpha_1\leq \dots \leq \alpha_s=1$ ($s\leq n+1$) in time $O(n \log n)$. Therefore, on each grid $[\alpha_i,\alpha_{i+1}]\times [\beta_j,\beta_{j+1}], i\in [s], j\in [t]$, both $f_R$ and $f_C$ are linear in $\alpha$ and $\beta$, respectively.

Then, we minimize the objective function over each grid and compare the results to take the one with minimal $f$ value. By doing so, we obtain the global minimum of $f$ on region $\mathcal{A}$.

On each grid, the objective function is in the form of the maximum of two bilinear functions $g_1$ and $g_2$. However, it is still non-differentiable. We apply the KKT condition from continuous optimization to obtain the necessary optimal conditions for this problem. We can show (see Appendix \ref{app:prec-2-2-mixing}) that the minimum must be attained at the following three kinds of points:

\begin{enumerate}
\item Points where the partial derivative of $g_1$ or $g_2$ with respect to $\alpha$ or $\beta$ is zero.
\item The four vertices of the grid.
\item Points where $g_1$ and $g_2$ are equal.
\end{enumerate}

Since $g_1$ and $g_2$ are bilinear, the partial derivative is linear. We can easily check points of the first kind in constant time. Then, the second kind contains only four points. Finally, we can show that points of the third kind form a quadratic curve on the plane, on which the minimum can be easily computed by checking at most six points.

In words, on each grid, we only need constant time to compute minimum $f$. Thus, by scanning over all the grids in $O(mn)$ time, we can compute the global minimum of $f$ on $\mathcal{A}$.

The total complexity is given by $O(mn + m\log m + n\log n) = O(mn)$.
\end{example}

Besides, in \Cref{app:subsec:mixing-example}, we also illustrate how to modify the TS algorithm \cite{TS07_0.3393NE} and the DFM algorithm \cite{DFM22_0.3333NE}. Note that these algorithms adopt a gradient descent method in the search phase. Their search phases are totally different from those of the BBM algorithm, which solves a zero-sum game. However, due to the generality of our approach, \Cref{algo:concise-2dbpoint} and \Cref{algo:concise-3dbpoint} can be directly applied to them without any modification.
\section{Approximation analysis}\label{sec:approx-anal}

After developing the novel procedure of the mixing phase, the immediate next question is how the approximation bound is influenced. This prompts us to develop a novel method to analyze the approximation bound. Below, we refer to the algorithms equipped with the new mixing phases as the revised algorithms.

A clear fact is that the approximation bound of the algorithms in \Cref{tab:instances} will, at the very least, not get worse after the revision. Indeed, our mixing algorithms find the global optimum of the mixing problem, while the original algorithms only choose specific mixing coefficients.

However, we should not expect a direct analysis of the approximation bound of the revised algorithms. There are two main barriers. First, the mixing problem in general is both nonconvex and nonsmooth. Thus, it is difficult to control the global behavior of the objective function $f$. Second, since the search phases have very different characteristics, there is no one-size-fits-all solution for all algorithms.

Instead of a direct analysis, we try to establish an \emph{upper bound} of $f$ for all cases, and in the meantime maintain the ability to attain equal conditions of the upper bound. More specifically, we divide the approximation analysis into two steps.

\begin{itemize}
    \item First, we provide \emph{upper bound constructors} for the corresponding mixing algorithms, which have the following properties:
\begin{enumerate}[label=(\arabic*)]
    \item The form of the upper bound is \emph{search-phase independent}. That is, it does not rely on the particular form of the search phase.
    \item The upper bound can be written explicitly as an expression $h^*$ of $f_R$ and $f_C$ values of strategy profiles given in the search phase.
\end{enumerate} 

\item Second, from the search phase, we dig out relations about $f_R$'s and $f_C$'s occurring in the first step. Viewing these relations as constraints, we maximize $h^*$\footnote{That is, we find the worst possible approximation.}, which produces a fixed-size constraint optimization problem on the approximation upper bound. 

By solving this optimization problem, we can obtain a constant approximation bound. Moreover, conditions to reach the optimal solution then can serve as tightness conditions of the approximation.
\end{itemize}

Now we sketch our approach to both steps. Our approach for the first step is to linearize functions $f_R$ and $f_C$. This approach was first proposed in \cite{CDH+21_0.3393tight}, applied to provide tightness analysis for the TS algorithm \cite{TS07_0.3393NE}. We generalize this method to derive an upper bound of $f$ for a wider range of domains of mixing operations. This approach is presented in \Cref{details:subsec:linearization}.

Through the generalized linearization method, we provide formal rules to generate the corresponding $(v,w)$-upper bound constructor as well as the expression $h^*$ of the upper bound of the $(v,w)$-mixing algorithm for \emph{any} given $v,w$. This is given in \Cref{details:subsec:auxiliary-design}.

The upper bound constructor is valuable from the following perspectives:
\begin{enumerate}
\item Benefiting from the linearization approach, the upper bound derived by the constructor is only related to the value of $f_R$ and $f_C$ on the strategies given by the search phase. Therefore, this shows that the approximation bound of an algorithm is solely determined by its search phase, rather than the mixing phase.

\item The upper bound constructor is fully given by formal rules and is thus totally automatable. Thus it is possible to provide computer programs that make the full approximation analysis and compute the approximation bound.
\end{enumerate}

Now, as an illustration, we continue \Cref{example:mix-modify} to give a detailed analysis of the approximation bound of the modified BBM algorithm.

\begin{example}[The approximation analysis of the modified BBM algorithm]
\label{example:approx-ana}
The modified BBM algorithm in \Cref{example:mix-modify} applies the $(2,2)$-mixing algorithm to compute the global minimal $f^*_\mathcal{A}$ of $f$ on square $\mathcal{A}:= \{(\alpha_1 x^*+\alpha_2 r_1, \beta_1 y^*+\beta_2 b_2): \alpha, \beta\in \Delta_2\}$. Now, we derive an upper bound of $f^*_\mathcal{A}$ following the four steps below.

\vspace{5pt}
\textbf{Step 1: Construct the upper bound.}

We illustrate the construction of the corresponding $(2,2)$-upper bound constructor step by step. We denote $g_1=f_R(x^*, y^*)$, $g_2=f_C(x^*, y^*)$, $h_1=f_R(x^*, b_2)$, $h_2=f_C(x^*, b_2)$, $v_1=f_R(r_1, b_2)$, $v_2=f_C(r_1, b_2)$, $u_1=f_R(r_1, y^*)$, and $u_2=f_C(r_1, y^*)$. The aliases are illustrated in \Cref{fig:mod-BBM}.

\begin{figure}[ht]
    \centering \includegraphics[width=0.5\textwidth]{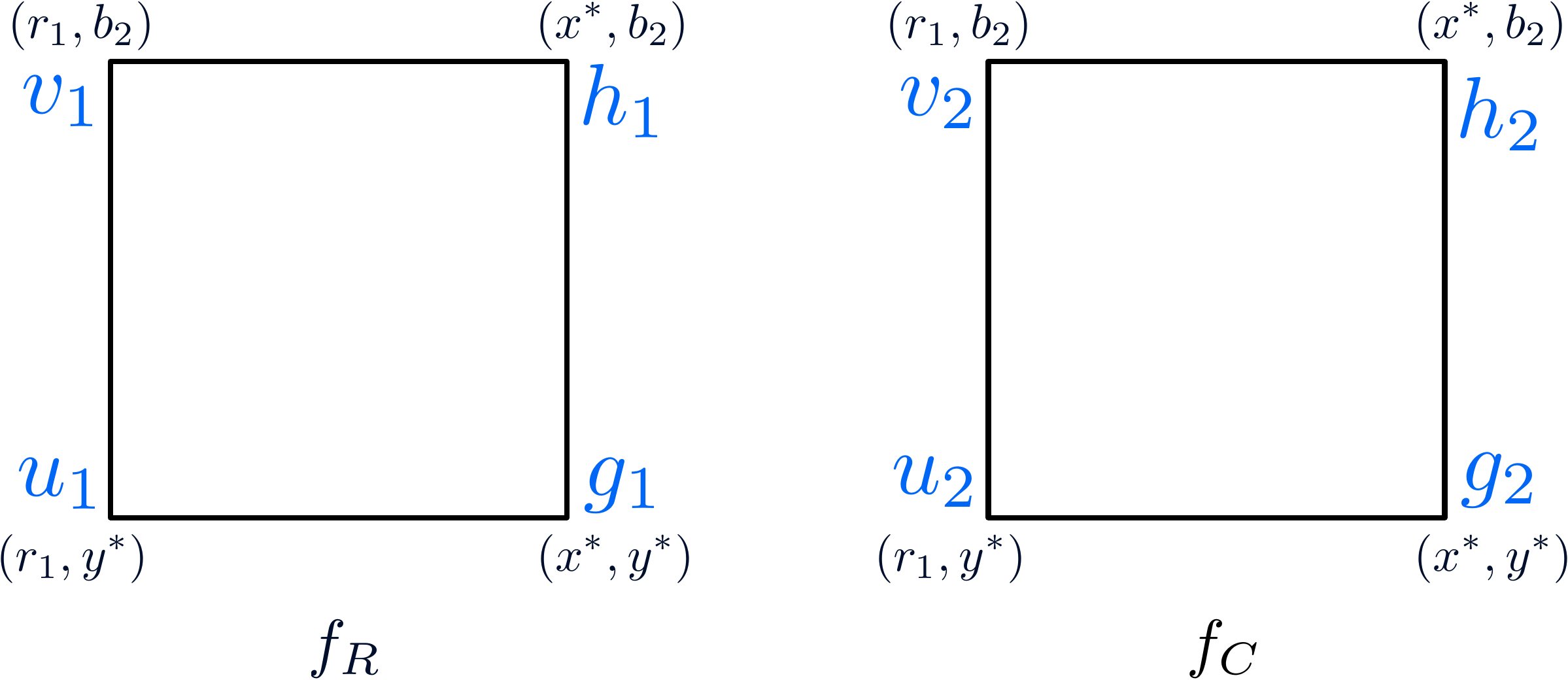}
    \caption{Positions of little letter aliases (case $g_1\geq g_2$).}
    \label{fig:mod-BBM}
\end{figure}

Now, we derive the upper bound of $f^*_{\mathcal{A}}$. Observe that it is upper bounded by the global minimum of $f$ over the four edges of square $\mathcal{A}$. On each edge, by the convexity of $f_R$ and $f_C$, the value of $f_R$ or $f_C$ is bounded by the linear combination of the values at the vertices. This is exactly the linearization process stated in \Cref{details:subsec:linearization}. For example, on the edge $\{(\alpha x^*+(1-\alpha) r_1, y^*): \alpha\in [0,1]\}$, we have
\[
\begin{aligned}
&f_C(\alpha x^*+(1-\alpha) r_1, y^*)\\
=&\max \{C^\T (\alpha x^*+(1-\alpha) r_1) \}- (\alpha x^*+(1-\alpha) r_1)^\T C y^*\\
\leq& \alpha (\max \{C^\T x^*\}-(x^*)^\T C y^*)+(1-\alpha)(\max \{C^\T r_1\}-r_1^\T C y^*)\\
=& \alpha f_C(x^*,y^*)+(1-\alpha)f_C(r_1,y^*).
\end{aligned}
\]
Similarly, we have $f_R(\alpha x^*+(1-\alpha) r_1, y^*)\leq \alpha f_R(x^*,y^*)+(1-\alpha)f_R(r_1,y^*)$. Therefore, we obtain an upper bound of $f$ on this edge: $s_1=\min_{\alpha\in [0,1]}\max\{\alpha g_1+(1-\alpha)u_1, \alpha g_2+(1-\alpha)u_2\}$.

We can do similar discussions on other edges. On the edge $\{(\alpha x^*+(1-\alpha) r_1, b_2): \alpha\in [0,1]\}$, the upper bound is $s_2=\min_{\alpha\in [0,1]}\max\{\alpha v_1+(1-\alpha)h_1, \alpha v_2+(1-\alpha)h_2\}$. On the edge $\{(x^*, \beta y^*+(1-\beta)b_2): \beta\in [0,1]\}$, the upper bound is $s_3=\min_{\beta\in [0,1]}\max\{\beta g_1+(1-\beta)h_1, \beta g_2+(1-\beta)h_2\}$. On the edge $\{(r_1, \beta y^*+(1-\beta)b_2): \beta\in [0,1]\}$, the upper bound is $s_4=\min_{\beta\in [0,1]}\max\{\beta u_1+(1-\beta)v_1, \beta u_2+(1-\beta)v_2\}$. Therefore, $f^*_\mathcal{A}$ is upper bounded by the minimum of the four upper bounds of $f$ on the edges of $\mathcal{A}$, i.e., 
\[f^*_\mathcal{A}\leq\min\{s_1,s_2,s_3,s_4\}=h^*.\]

It should be noted that the analysis until now has nothing to do with the property of the strategies in the search phase. Besides, the presented upper bound is only related to the value of $f_R$ and $f_C$ on the vertices of $\mathcal{A}$. Therefore, the approximation bound of \Cref{example:mix-modify} is totally determined by the search phase and has nothing to do with the mixing phase!

\vspace{5pt}
\textbf{Step 2: Extract relations from the search phase.}

Then, we extract the relations of $f_R$ and $f_C$ values on the vertices.

From \Cref{example:start}, we obtain the following properties: $r_1\in \br_R(y^*)$, $b_2\in \br_C(r_1)$, $(x^*,y^*)$ is the NE of game $(R-C, C-R)$, and $g_1\geq g_2$.

According to the definition of $\br$, $u_1=v_2=0$. Besides, since we have supposed that all the elements of $R,C$ are in $[0,1]$, we have that for any $x\in\Delta_m,y\in\Delta_n$, $0\leq x^\T Ry \leq 1$, $0\leq x^\T Cy \leq 1$, $0\leq \max\{ Ry\} \leq 1$, $0\leq \max\{ C^\T x\} \leq 1$, $0\leq f_R(x,y) \leq 1$, and $0\leq f_C(x,y) \leq 1$.

Then, since $(x^*,y^*)$ is the NE of game $(R-C, C-R)$, for any $i, j\in [n]$,
\begin{equation}
\label{equ:R-C-NE}
    (x^*)^\T R e_j-(x^*)^\T C e_j \geq(x^*)^\T R y^*-(x^*)^\T C y^* \geq e_i^\T R y^*-e_i^\T C y^*.
\end{equation}
Specifically, take $e_i=r_1$, we get $r_1^\T C y^* \geq r_1^\T R y^*-(x^*)^\T R y^*+(x^*)^\T C y^*$. But we know that $r_1^\T R y^*-(x^*)^\T R y^*=g_1$, thus:
\[
r_1^\T C y^* \geq g_1+(x^*)^\T C y^* \geq g_1.
\]
Therefore, $u_2= \max\{C^\T r_1\}-r_1^\T C y^*\leq 1-g_1$. 

A summary of relations is as follows.
\begin{gather*}
    u_1=v_2=0,\\
    0\leq g_1,g_2,h_1,h_2,u_2,v_1\leq 1,\\
    g_1\geq g_2,\\
    u_2\leq 1-g_1.
\end{gather*}

\vspace{5pt}
\textbf{Step 3: Write down the optimization problem of the upper bound.}

To produce a constant approximation bound $s$, we require that no matter what values these $f_R$ and $f_C$ have, they cannot have a greater approximation than $s$. Thus, we have to maximize all these $f_R$'s and $f_C$'s. Therefore, we can write down the following optimization problem of $h^*$. The optimal value is the desired $s$.

\begin{equation}
    \begin{aligned}
\max_{g_1,g_2,h_1,h_2,u_1,u_2,v_1,v_2} & \quad \min_{\alpha_1,\alpha_2,\alpha_3,\alpha_4} \min\{s_1,s_2,s_3,s_4\}\\
    \text{s.t.} &\quad s_1 =\max\{\alpha_1 g_1 + (1-\alpha_1)u_1, \alpha_1 g_2 + (1-\alpha_1)u_2\},\\
    &\quad s_2=\max\{\alpha_2 g_1 + (1-\alpha_2)h_1, \alpha_2 g_2 + (1-\alpha_2)h_2\},\\
    &\quad s_3=\max\{\alpha_3 v_1 + (1-\alpha_3)h_1,\alpha_3 v_2 + (1-\alpha_3)h_2\},\\
    &\quad s_4=\max\{\alpha_4 v_1 + (1-\alpha_4)u_1,\alpha_4 v_2 + (1-\alpha_4)u_2\},\\
    &\quad 0\leq \alpha_1, \alpha_2, \alpha_3, \alpha_4 \leq 1,\\
    &\quad u_1=v_2=0,
    0\leq g_1,g_2,h_1,h_2,u_2,v_1\leq 1,
    g_1\geq g_2,
    u_2\leq 1-g_1.
\end{aligned}\label{eq:origin-bound}
\end{equation}

Now, if we can solve the optimization problem \eqref{eq:origin-bound}, then we can finish the approximation analysis. However, on the one hand, directly solving this expression in Mathematica is extremely slow (it takes hours). On the other hand, it is inconvenient for further proving the correctness of the calculated bound. Thus, below we divide the calculation into two steps.

\vspace{5pt}
\textbf{Step 4-1: Eliminate the minimum operator over $\alpha_i$'s in $h^*$.}
\label{step:eliminate-operator}

Now, we use the relations in Step 2 to eliminate the minimum operator over $\alpha_i$'s. Then $h^*$ can be explicitly expressed using $f_R$'s and $f_C$'s on the vertices. Note that this step can be \emph{automated} as well: In \Cref{details:subsec:auxiliary-minmax}, we provide a search-phase-independent expression for such $s_i$'s while eliminating the minimum operator. This is sufficient for our program to solve the optimization problem \eqref{eq:origin-bound} and obtain the approximation bound of $0.38$. However, to prove that the bound is correct, we show the exact expression concerning the relations of the $f_R$'s and $f_C$'s below.

Since $u_1=0$, $g_1\geq g_2$, $s_1=\min_{\alpha\in [0,1]}\max\{\alpha g_1+(1-\alpha)u_1, \alpha g_2+(1-\alpha)u_2\}=\frac{u_2 g_1}{u_2+g_1-g_2}$. Similarly, on the edge $\{(\alpha x^*+(1-\alpha) r_1, b_2): \alpha\in [0,1]\}$, $s_2$ is $\min \{v_1,h_1\}$ (if $h_1\geq h_2$) or $\frac{v_1h_2}{v_1+h_2-h_1}$ (if $h_1<h_2$). On the edge $\{(x^*, \beta y^*+(1-\beta)b_2): \beta\in [0,1]\}$, $s_3$ is $\min \{g_1,h_1\}$ (if $h_1\geq h_2$) or $\frac{g_1 h_2- h_1 g_2}{g_1+h_2-h_1-g_2}$ (if $h_1<h_2$). On the edge $\{(r_1, \beta y^*+(1-\beta)b_2): \beta\in [0,1]\}$, $s_4= \frac{u_2 v_1}{u_2+v_1}$.

Now we collect the bounds above. The optimization \eqref{eq:origin-bound} is now simplified to maximize \(h^*\) given by the following expression:
\begin{equation}
\label{equ:example:upperbound}
    \begin{aligned}
&\min\left\{\frac{u_2 g_1}{u_2+g_1-g_2},v_1,h_1, g_1, \frac{u_2v_1}{u_2+v_1}\right\},\quad\text{ if } h_1\geq h_2, \\
&\min\left\{\frac{u_2 g_1}{u_2+g_1-g_2}, \frac{v_1 h_2}{v_1+h_2-h_1}, \frac{g_1 h_2- h_1 g_2}{g_1+h_2-h_1-g_2}, \frac{u_2 v_1}{u_2+v_1}\right \},\quad \text{ otherwise}. 
\end{aligned}
\end{equation}

\vspace{5pt}
\textbf{Step 4-2: Obtain the constant upper bound.}

Finally, we maximize $h^*$ given in \eqref{equ:example:upperbound} under constraints presented in \eqref{eq:origin-bound}. The step can also be accomplished by a computer program. Here we demonstrate a hand-written procedure.

First, note that either $h_1 \geq h_2$ or $h_1 < h_2$. The bound can be relaxed to $\min\left\{\frac{u_2 g_1}{u_2+g_1-g_2}, \frac{u_2 v_1}{u_2+v_1}\right\}$.

Then, since $g_1 \geq g_2$, $\frac{u_2 g_1}{u_2+g_1-g_2} \leq g_1$. By $0 \leq v_1 \leq 1, 0 \leq u_2 \leq 1-g_1$, we have $\frac{u_2 v_1}{u_2+v_1} \leq \frac{1-g_1}{2-g_1}$.

Therefore, the bound is finally relaxed to $\min\left\{g_1, \frac{1-g_1}{2-g_1}\right\}$. Note that the solution of $g_1 = \frac{1-g_1}{2-g_1}$ is $g_1 = \frac{3-\sqrt{5}}{2}$. Moreover, $\frac{1-g_1}{2-g_1}$ is decreasing in $g_1$ when $0 \leq g_1 \leq 1$. Thus, for any possible $g_1$, we have $\min\left\{g_1, \frac{1-g_1}{2-g_1}\right\} \leq \frac{3-\sqrt{5}}{2}$. This implies that $f^*_\mathcal{A} \leq \frac{3-\sqrt{5}}{2} \approx 0.38$, as desired.
\end{example}

Although the process of our analysis seems complex, we can see that the upper bound can be completely calculated by a computer program. For instance, such a program can be written in Mathematica. More importantly, this analysis method works for general algorithms following the novel search-and-mix paradigm! See the last paragraph of this section.

Another advantage of our procedure is that it shows a clear structure of the tight cases. Since each step of relaxation is very clear, we can easily deduce a necessary condition for tight cases as illustrated below.

\vspace{5pt}
\textbf{One more step: Derive the tightness conditions.}

Denote $\alpha= (3-\sqrt{5})/2$. In a tight case, we must have $g_1=\alpha$. Besides, each step of relaxation should be tight. Therefore, we have $v_1=1$, $u_2=1-\alpha$, $g_1=g_2=\alpha$, $\max\{C^\T r_1\}=1$, and $(x^*)^\T C y^*=0$. These properties together give a computable necessary condition for the tight case, which helps to further derive ad hoc algorithms to improve the approximation bound. Benefiting from a similar discussion, \cite{CDH+21_0.3393tight} is even able to give a tight instance generator for the TS algorithm \cite{TS07_0.3393NE}.

After the tightness analysis, we might ask if we can improve the approximation by modifying the search phase. As a motivating example, we show a very simple approach to modify the search phase of the BBM algorithm. Our goal is not to improve the approximation, but to show that our framework allows giving a result very quickly for the newly modified search phase (which does not occur in the literature to the best of our knowledge).

\vspace{5pt}
\textbf{Beyond the BBM algorithm: Modify the search phase.}

We make a very naive change: Let the returned $(x^*,y^*)$ of the BBM algorithm to be a NE of the zero-sum game $(R-C/t,C/t-R)$, where $t>0$. Note that now we cannot suppose without harm that $g_1\geq g_2$. So, we make discussions in cases $g_1\geq g_2$ and $g_1<g_2$.

We note that the only change it brings is that now \eqref{equ:R-C-NE} is transformed to
\[
(x^*)^\T R e_j-(x^*)^\T C/t e_j \geq(x^*)^\T R y^*-(x^*)^\T C/t y^* \geq e_i^\T R y^*-e_i^\T C/t y^*.
\]
In this way, the bound of $u_2$ is changed to $u_2\leq 1-t g_1$. Thus, the final bound is changed to $\min\{g_1, (1-t g_1)/(2-t g_1)\}$. By solving a similar equation, the upper bound is given by $\left((t+2)-\sqrt{(t+2)^2-4t}\right)/(2t)$.

The case of $g_1<g_2$ is symmetric, where the inequality is changed to $u_2\leq 1-g_1/t$. The upper bound is given by $\left((1+2t)-\sqrt{(1+2t)^2-4t}\right)/2$.

In the worst case, the upper bound $h^*$ is 
\[
\max\left\{\frac{(t+2)-\sqrt{(t+2)^2-4t}}{2t},\frac{(1+2t)-\sqrt{(1+2t)^2-4t}}{2}\right\}.
\]
We can easily show that the former term is increasing in $t$, while the latter one is decreasing in $t$. In this way, the bound is minimized when $t=1$, which happens to be the case of the BBM algorithm. So, we have actually seen that the choice of form $R-C$ leads to an optimal approximation bound for the BBM algorithm!

We have the following remarks on the extensions of our analysis approach.

\begin{remark}\label{remark:0.36-modify-BBM}
In the above modification, we introduce a parameter $t$ in the search phase. Then using our framework, we are able to obtain an expression of the approximation bound with a parameter $t$. This is a usual way to modify the search phase. Indeed, there are other modifications for the BBM algorithm, where a successful one is given by \cite{BBM07_0.36NE}. It changes the choice of $b_2$ to a pure strategy in $\br_C((1-\delta_1)x^*+\delta_1r_1)$ with $\delta_1$ specifically chosen. In this way, it managed to obtain an improved approximation of $0.36$. In our framework, $\delta_1$ is just like $t$, and thus our framework can also be used to provide quick results for such modifications.
\end{remark}

\begin{remark}\label{remark:stronger-upper-bound}
It is worth mentioning that instead of only scanning over the boundary as shown in \Cref{example:approx-ana} above, there is an alternative way to derive a stronger upper bound by scanning over the whole region. This method is stated in \Cref{prop:v-w-strong-aux-mixing}. However, due to the difficulty in eliminating the minimum operators in Step 4-1, we practically only apply it to several examples in a restricted form, as described in \Cref{details:literature-examples}.
\end{remark}

Finally, we also provide more examples on other algorithms. In \Cref{details:subsec:auxiliary-minmax} and \Cref{details:subsec:aux-example}, we respectively show how to prove that the approximation bounds of the modified TS algorithm and the modified DFM algorithm are still $0.3393+\delta$ and $1/3+\delta$. We also provide a tightness analysis for these examples. To show our method can indeed be automated, for \emph{all} algorithms in \Cref{tab:instances}, we give details about the formalization of approximation analysis in \Cref{details:literature-examples}. We then implement Mathematica code to derive the corresponding approximation bounds. The results are presented in \Cref{tab:approx-result}.
\section{The mixing problems}\label{sec:mixing-problem}
In this section, we further discuss the mixing problems defined in \Cref{def:st-mixing}. It turns out that this class of problems brings new algorithmic and computational complexity issues related to approximate Nash equilibria.

First, note that \Cref{def:st-mixing} states the problem in a \emph{functional optimization style}, that is, we try to find mixing coefficients (i.e., a witness) that minimizes function $f$. Due to our divide-and-conquer strategy stated in \Cref{sec:find-opt-comb} and \Cref{details:subsec:opt-over-polytope}, from an optimization perspective, the mixing problem can be reduced to a special kind of quadratic constraint quadratic programming (QCQP). More precisely, the mixing problem can be reduced to a bilinear program with one bilinear equality constraint.

The main result in \Cref{sec:find-opt-comb} (i.e., \Cref{thm:all-mixing}) shows that the $(s,t)$-mixing problem has a polynomial-time algorithm when $(s,t)$ is one of $(1,w)$, $(v,1)$, $(2,2)$, $(2,3)$, and $(3,2)$. It is natural to ask whether we could have a better result.

Actually, it highly depends on the relation between $s,t$ and the number of pure strategies, i.e., $n,m$. If $s=\Theta(m)$ and $t=\Theta(n)$, we cannot expect the existence of polynomial-time algorithms. This claim relies on the following complexity result on Nash equilibria: It is NP-hard to decide whether there is a Nash equilibrium without a certain support \cite{CS08}. Suppose the $(\Theta(m),\Theta(n))$-mixing problem has a polynomial-time algorithm $A$. Then we can input $A$ with pure strategies that we want to become the support of a Nash equilibrium but without the others. Then $A$ can compute the minimum approximation $f^*$ of these input pure strategies. From the definition of $f$, these strategies can form a Nash equilibrium if and only if $f^*=0$. The above argument shows that such $A$ can hardly exist.

On the other hand, if $s,t$ are constant with respect to $m,n$, numerical difficulty arises even for $(s,t)=(3,3)$. In this case, the QCQP that we reduce to has the form $g(\alpha,\beta)=\sum_{1\leq i,j\leq 2}a_{ij}\alpha_i\beta_j+\sum_{1\leq i\leq 2}b_i\alpha_i+\sum_{1\leq i\leq 2}c_i\beta_i+d$ for both the objective function and the quadratic equality constraint. This forms a quadratic surface in $\mathbb{R}^4$ and the minimum might be obtained at points that are the solutions of a quintic equation. Thus, this solution could not be written in a radical form \cite{dummit2004abstract}, and therefore is not representable by a Turing machine.

Thus, it is more proper to ask for an \emph{approximate solution} for the mixing problem. To fit the tradition in computational complexity, we write the decision version of the mixing problem \textsc{MixStgy} as follows: Decide whether there exist rational mixing coefficients $\alpha$ and $\beta$ such that the corresponding $f$ value is less than a given $\epsilon > 0$. Note that the problem of computing approximate Nash equilibria then becomes a special case of \textsc{MixStgy}.

To understand the difficulty of \textsc{MixStgy} over input parameters $m$, $n$, $s$, and $t$, we make an analogy to the \textsc{Sat} problem. \textsc{Sat} has two input parameters: the number of clauses, $p$, and the number of proposition variables, $q$. The hardness of \textsc{Sat} actually comes from $q$, but not $p$. This is because when $q$ is fixed, the simple truth enumeration algorithm takes $\poly(p)$ time to decide a \textsc{Sat} instance.

Similarly, $m,n$ in \textsc{MixStgy} can be viewed as $p$, and $s,t$ in \textsc{MixStgy} can be viewed as $q$. When $s$ and $t$ become larger, difficulty arises. At last, when $s=\Theta(m)$ and $t=\Theta(n)$, the problem becomes NP-hard as discussed above. Thus, even for small and fixed $s$ and $t$, \textsc{MixStgy} is harder than one may expect.

People try to consider various restricted versions of \textsc{Sat}, which thrive in computational complexity theory. Just like the situation of \textsc{Sat}, we hope our considerations on \textsc{MixStgy} can bring new perspectives on equilibrium computation as well as computational complexity.

bal
\section{Conclusion and Discussion}\label{sec:conclusion}
In this paper, we formalize the search-and-mix paradigm of polynomial-time algorithms for approximate Nash equilibria in bimatrix games. We call it \emph{paradigm} because \emph{all} algorithms in the literature follow it. We reformulate the mixing phase and propose a generally applicable polynomial-time subroutine for the reformed mixing phase. We also present an automated method to derive the approximation bound of any search-and-mix algorithm when suitable information from the search phase is provided. After this paper, researchers are freed from the cumbersome inequality calculations in the mixing phase and can solely focus on the search phase.

Our procedure of the search-and-mix paradigm shown in \Cref{fig:framework-new} can be extended to multiplayer scenarios by simply adding players. Moreover, the method of approximation analysis presented in \Cref{sec:approx-anal} can be extended as well. This is because it only relies on the edges of the mixing region, each of which is solely determined by one player. There are very few works in the literature that calculate $\epsilon$-NE for games with more than two players. Our work opens up a possible approach to designing and analyzing multiplayer approximation algorithms.

We also realize and promote the optimization viewpoint of approximate Nash equilibria. We provide a detailed discussion on a basic problem from this viewpoint: the mixing problem. We relate the mixing problem to the \textsc{Sat} problem to capture its role in approximate Nash equilibria. Other problems in approximate Nash equilibria should benefit from this perspective.

\bibliographystyle{plain}
\bibliography{ref}

\newpage
\appendix

\section{Mixing algorithms}
\label{details:sec:mixing-algos}

\subsection{Linear piece partitioning}
\label{details:subsec:linearpiece}

In this part, we provide the solution for the first difficulty concerning \eqref{equ:st-mixing-expansion}. The idea is to partition the domain into regions where both $f_R$ and $f_C$ are linear in $\alpha$ and $\beta$, respectively. To make a concise description, for a function $F:X\to\R$, we say a \emph{linear 
piece} of $F$ is the maximal region $\Omega\subseteq X$ such that $F$ is linear in each variable on $\Omega$.

Consider function $\max \{R(\beta_1 y_1+\dots +\beta_t y_t)\}$. We have:
\begin{equation}
    \max \{R(\beta_1 y_1+\dots +\beta_t y_t)\} \\
    =\max \{\beta_1 R y_1+\dots +\beta_t R y_t\} \\
    =\max_{1\leq i\leq m} \{\beta_1 (R y_1)_i+\dots +\beta_t (R y_t)_i \}
\end{equation}
It is the maximum of $m$ linear functions in $\beta$. Therefore, we can partition the domain into several linear pieces, as illustrated in \Cref{fig:linear-pieces}.

\begin{figure}[ht]
    \centering  
    \begin{tikzpicture}
\begin{axis}[
        xlabel={$\beta_1$},
        ylabel={$\beta_2$},
        zlabel={max term},
        view={20}{45}
        ]
    
    \addplot3 [surf, opacity=0.8, colormap/summer] table {
        0 0 0.8
        0.3 0 0.56
        0.6 0 0.5
        1 0 0.6
        
        0 0.5 0.6
        0.3 0.75 0.15
        0.6 0.5 0.21
        1 0.5 0.3
        
        0 0.75 0.6
        0.3 0.75 0.15
        0.6 0.75 0.25
        1 0.75 0.4
        
        0 1 0.7
        0.3 1 0.5
        0.6 1 0.58
        1 1 0.9    
        };
\end{axis}
\end{tikzpicture}
    \caption{Linear pieces.}
    \label{fig:linear-pieces}
\end{figure}
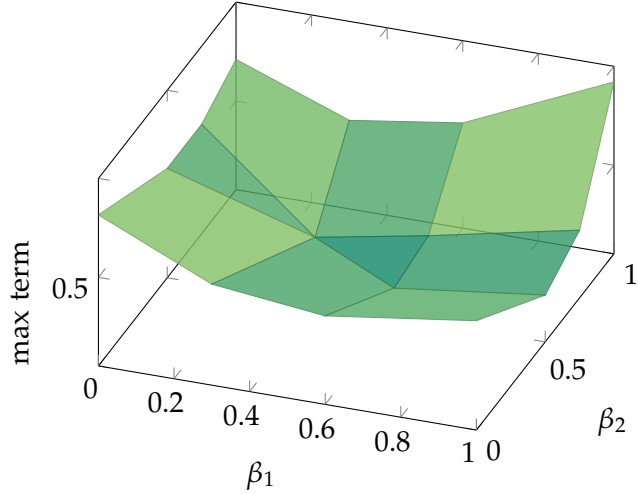

An important observation is that the linear pieces can be expressed by linear inequalities. When the $i$th linear function attains the maximum, we have
\[
\forall j\in [m], j \neq i, \beta_1 (R y_1)_i+\dots +\beta_t (R y_t)_i \geq \beta_1 (R y_1)_j+\dots +\beta_t (R y_t)_j.
\]
Namely,
\[
\forall j\in [m], j \neq i, \beta_1 [(R y_1)_i-(R y_1)_j]+\dots +\beta_t [(R y_t)_i-(R y_t)_j] \geq 0.
\]
Thus, each linear piece of the $\max\{Ry\}$ term can be determined by $m-1$ inequalities, and forming a (possibly empty) polytope (see \Cref{app:def-geo} for the formal definition). There are $m$ such polytopes, denoted by $P^C_1,\dots , P^C_m$, where notation $C$ means that they are collections of the column player's strategies.

Similarly, the linear pieces of $\max\{C^\T x\}$ term can be determined by $n-1$ inequalities. We denote the polytopes by $P^R_1,\dots, P^R_n$.

Our solution is to divide the problem into each polytope $P^R_i\times P^C_j$ ($i,j\in [n]\times [m]$), which we call the \emph{separated polytope}. In this way, we can eliminate the inner $\max$ term of \eqref{equ:st-mixing-expansion} and obtain the following problem:
\begin{equation}
\label{equ:smoothed}
\begin{aligned}
    \min_{\alpha\in (\Delta_s \cap P^R_i), \beta \in (\Delta_t\cap P^C_j) } \max \{\beta_1 (R y_1)_i+\dots +\beta_t (R y_t)_i-(\alpha_1 x_1+\dots +\alpha_s x_s)^\T R (\beta_1 y_1+\dots +\beta_t y_t),\\
    \alpha_1 (C^\T x_1)_j+\dots +\alpha_s (C^\T x_s)_j-(\alpha_1 x_1+\dots +\alpha_s x_s)^\T C (\beta_1 y_1+\dots +\beta_t y_t)\}.
\end{aligned}
\end{equation}

To meet further needs, we also care about solving and expressing the separated polytopes efficiently. Formally, we want to solve the following problem:

\begin{definition}[$(t,m)$-separation algorithm]
\label{def:tm-sep-algo}
~
\begin{itemize}
    \item \emph{Input}: dimension $m$, $t$ vectors $x_1,\dots ,x_t$ in $\R^m$.
    \item \emph{Output}: an appropriate representation\footnote{For example, when $t=2$, this can be given by a clockwise enumeration of vertices of the polytope. For other cases, see \Cref{app:prec-3-m-separation}.} of $P_1,\dots, P_m$ such that $P_i$ is the polytope $\{\beta \in \Delta_t: \beta_1 [(x_1)_i-(x_1)_j]+\dots +\beta_t [(x_t)_i-(x_t)_j] \geq 0,\forall j\in [m]\}$.
\end{itemize}
\end{definition}

When $t\leq 3$, we represent the polytopes with a clockwise enumeration of its vertices. In this case, the separation problem can be restated as a famous problem in computational geometry called the \emph{half-plane intersection problem} \cite{MOMM08}. Benefiting from geometric intuitions, we obtain polynomial-time algorithms for $t\leq 3$, as stated below.

\begin{theorem}
\label{thm:2-m-separation}
There exists a $(2,m)$-separation algorithm in time $O(m\log m)$.
\end{theorem}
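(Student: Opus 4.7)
The plan is to reduce the $(2,m)$-separation problem to the classical upper-envelope problem for $m$ lines in the plane. Since $t=2$, any $\beta\in\Delta_2$ is determined by its first coordinate $\beta_1\in[0,1]$, with $\beta_2=1-\beta_1$. For each $i\in[m]$ define the affine function $L_i(\beta_1):=\beta_1(x_1)_i+(1-\beta_1)(x_2)_i$. The defining inequalities of $P_i$ in \Cref{def:tm-sep-algo} then become exactly $L_i(\beta_1)\geq L_j(\beta_1)$ for every $j\in[m]$; equivalently, $P_i$ is the set of $\beta_1\in[0,1]$ on which line $L_i$ attains the upper envelope of the $m$ lines.

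Hence it suffices to compute the upper envelope of $L_1,\dots,L_m$ restricted to $[0,1]$. This is a standard computational-geometry task solvable in $O(m\log m)$ time, for instance by dualizing to the lower convex hull of $m$ points and invoking any $O(m\log m)$ planar convex-hull algorithm, or by a direct sort-and-sweep procedure. Such an algorithm returns a partition $0=b_0\leq b_1\leq\cdots\leq b_k=1$ together with indices $i_1,\dots,i_k\in[m]$ such that $L_{i_r}$ is the maximal line on $[b_{r-1},b_r]$.

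From this output one reads off the polytopes in $O(m)$ additional time: for each $r$, set $P_{i_r}$ to be the interval $[b_{r-1},b_r]$, whose two endpoints constitute the required vertex enumeration in the one-dimensional case, merging adjacent intervals that share the same index; for any index $i$ not appearing among $i_1,\dots,i_k$, output $P_i=\emptyset$. Degeneracies (parallel lines, or several lines meeting at a single breakpoint) are handled by a consistent tie-breaking rule that respects the non-strict inequality in \Cref{def:tm-sep-algo}, so that every line attaining the envelope at a vertex is credited with that vertex. The only nontrivial algorithmic ingredient is the envelope computation itself, so I do not anticipate a real obstacle; the main care is simply keeping the representation and boundary cases consistent so that each nonempty $P_i$ is emitted exactly once with its two endpoints, yielding the claimed $O(m\log m)$ bound.
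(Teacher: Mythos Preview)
Your proposal is correct and takes essentially the same approach as the paper: both reduce the $(2,m)$-separation problem to computing the upper envelope of $m$ lines on $[0,1]$ in $O(m\log m)$ time. The paper spells out one concrete incremental algorithm (sort the lines by slope, then insert them one by one, using a binary search to locate each new breakpoint), whereas you invoke the standard envelope/convex-hull duality more abstractly, but the underlying idea and the complexity analysis are the same.
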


\begin{theorem}
\label{thm:3-m-separation}
There exists a $(3,m)$-separation algorithm in time $O(m^2 \log m)$.
\end{theorem}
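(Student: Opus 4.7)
The plan is to reduce the $(3,m)$-separation problem to $m$ independent instances of the classical planar half-plane intersection problem, one per output polytope $P_i$. Since every $\beta \in \Delta_3$ satisfies $\beta_1+\beta_2+\beta_3=1$, the simplex is two-dimensional; parameterizing by $(\beta_1,\beta_2)$ with $\beta_3 = 1-\beta_1-\beta_2$ sends $\Delta_3$ bijectively onto a fixed triangle $T \subset \R^2$ cut out by three boundary half-planes. Under this affine chart, each defining inequality
\[
\beta_1[(x_1)_i-(x_1)_j] + \beta_2[(x_2)_i-(x_2)_j] + \beta_3[(x_3)_i-(x_3)_j] \ge 0
\]
becomes a linear inequality in $(\beta_1,\beta_2)$, hence a half-plane $H_{ij} \subset \R^2$, and $P_i$ corresponds to $T \cap \bigcap_{j\neq i} H_{ij}$.

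For each fixed $i \in [m]$, computing $P_i$ is thus the intersection of $(m-1)+3 = O(m)$ half-planes in the plane. I would invoke the standard $O(k\log k)$ half-plane intersection algorithm (e.g., the divide-and-conquer algorithm of Preparata–Muller, which is already used implicitly for the $t=2$ case in \Cref{thm:2-m-separation}); this produces, in $O(m\log m)$ time, the (possibly empty) convex polygon $P_i$ represented by the clockwise enumeration of its vertices. Running this once for each $i$ yields the required representation of $P_1,\dots,P_m$ in total time $m \cdot O(m\log m) = O(m^2\log m)$.

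To justify correctness, I would note two things. First, the affine chart is a bijection between $\Delta_3$ and $T$ and sends the simplex boundary exactly to the boundary of $T$, so the computed 2D polygon is a faithful representation of $P_i$. Second, when the input is degenerate — for instance when $(x_1)_i = (x_1)_j$, $(x_2)_i = (x_2)_j$, $(x_3)_i = (x_3)_j$ (so $H_{ij} = \R^2$) or when $P_i$ collapses to a vertex/edge — the half-plane intersection routine already handles these cases, returning either the full region, a lower-dimensional cell, or an empty list of vertices; these cases cause no issue for the subsequent mixing-phase algorithm since we only need an enumeration of the vertices of each $P_i$.

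The main obstacle I anticipate is not in the asymptotic analysis (which is immediate once the reduction is in place) but in specifying cleanly the output format in degenerate cases and in verifying that the chosen half-plane-intersection subroutine outputs vertices in the clockwise order required downstream; these are bookkeeping issues rather than algorithmic ones. A possible refinement, which I would mention but not pursue, is that the union of all cells $P_i$ is the upper envelope of $m$ planes restricted to $T$, whose total combinatorial complexity is $O(m^2)$ and which could in principle be computed in $O(m^2)$ time via incremental construction; however, $O(m^2\log m)$ suffices for the stated theorem and follows from the direct per-$i$ approach without extra machinery.
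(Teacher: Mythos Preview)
Your proposal is correct and takes essentially the same approach as the paper: both parameterize $\Delta_3$ by two free coordinates, observe that each $P_i$ is then a planar polygon cut out by $O(m)$ half-planes, and compute each $P_i$ in $O(m\log m)$ time for a total of $O(m^2\log m)$. The only minor difference is in the choice of half-plane intersection subroutine: the paper builds its own by splitting constraints into upper/lower semi-boundaries (reusing the $(2,m)$-separation envelope computation) plus vertical boundaries, whereas you invoke a standard black-box divide-and-conquer algorithm; both yield the same $O(m\log m)$ per-polygon bound.
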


The algorithm and complexity analysis are presented in \Cref{app:prec-2-m-separation} and \Cref{app:prec-3-m-separation}.

We also note that if we only require the "appropriate" expression in \Cref{def:tm-sep-algo} to be a vertex enumeration of the polytope, then in general cases, it can be stated in the famous \emph{vertex enumeration problem}.

Suppose we are given a polytope in $\R^t$ determined by $m$ inequalities, then McMullen's upper bound theorem \cite{M70} gives a close upper bound $\mathcal{A} (m^{t/2})$ on the number of its vertices $|V|$.

Several algorithms are proposed for the vertex enumeration problem. Using the pivoting method, Dyer \cite{D83} proposed an $O(mt^2 |V|)$-time algorithm. Then, Avis and Fukuda \cite{AF91} proposed an $O(mt|V|)$-time algorithm, which has remained state-of-the-art since then. For a brief summary of this subject, see \cite{AMA22} as a reference.

We note that our algorithms (\Cref{algo:concise-2dbpoint} and \Cref{algo:concise-3dbpoint}) are faster than these algorithms in the corresponding cases. Indeed, when $t=2$, the time complexity of vertex enumeration is $O(m^3)$. For $t=3$, the time complexity is $O(m^{3.5})$.

It is also worth mentioning the complexity results regarding this problem. For the unbounded case (polyhedra), vertex enumeration has been proven to be NP-hard \cite{KBBGE09}. However, for the bounded case (the case of our problem, which is bounded in $\Delta_t$), it is still an open problem. There is a strong indication of NP-hardness, though, as \cite{K01} proved that uniformly sampling the vertices is NP-hard.

\subsection{Optimization over polytopes}
\label{details:subsec:opt-over-polytope}

Using linear piece separation, we have transferred the form of the problem into solving the subproblem \eqref{equ:smoothed}. To solve this subproblem, we first derive optimal conditions for a slightly generalized problem. For differentiable functions $g_1, g_2$ and polytope $S$, consider the following optimization problem:
\begin{equation}
\label{equ:opt-over-polytope}
\begin{aligned}
    \text{minimize}\quad &\max\{g_1(x), g_2(x)\}\\
    \text{s.t.}\quad &x\in S.
\end{aligned}
\end{equation}

We begin our preparation by a direct application of the KKT condition (see theorem 12.1 in \cite{N99_NumOpt} for details).
\begin{lemma}\label{details:lemma:argmax}
Consider $U\in \R^{k\times n}, V\in \R^k, R\in \R^{j\times n}, T\in \R^j$, where every row of $U,R$ is not zero. Define a convex polytope $S=\{x\in \R^n: Ux\leq V, Rx=T\}$ . Suppose $g_1$ and $g_2$ are two real-valued differentiable functions defined on $S$. Set $g=\max\{g_1,g_2\}$. If the ranges of $g_1$,$g_2$ on $S$ are $[m_1,M_1]$ and $[m_2,M_2]$, respectively, then we have:

\begin{enumerate}[fullwidth]
    \item If $m_1\geq M_2$, $\min_{S} g(x)=m_1$. The minimum is attained precisely on set $g_1^{-1}(m_1)$.
    \item If $m_2\geq M_1$, $\min_{S} g(x)=m_2$. The minimum is attained precisely on set $g_2^{-1}(m_2)$.
     \item Otherwise, $\min_Sg(x)=\min_{S^*} g(x)$ , where $S^*$ is the union of following sets:
     \begin{gather*}
         \{x\in S: g_1(x)=g_2(x)\},\\
         \left\{x\in S:g_1(x)>g_2(x),\exists \lambda\geq 0\left(\nabla g_1(x)+ \lambda^\T \left( 
         \begin{matrix}
         U\\R
         \end{matrix}
         \right)=0\text{ and }\forall i\in [k],  \lambda_i (U_i x-V_i)=0\right) \right\},\\
         \left\{x\in S:g_2(x)>g_1(x),\exists \lambda\geq 0\left(\nabla g_2(x)+ \lambda^\T \left( 
         \begin{matrix}
         U\\R
         \end{matrix}
         \right)=0\text{ and }\forall i\in [k],  \lambda_i (U_i x-V_i)=0\right) \right\}.
     \end{gather*}
     And the minimum must be attained on $S^*$.
\end{enumerate}
\end{lemma}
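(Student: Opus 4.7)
The plan is to split into the three enumerated cases. Cases 1 and 2 are essentially by observation: when $m_1 \geq M_2$, the inequality $g_1(x) \geq m_1 \geq M_2 \geq g_2(x)$ holds pointwise on $S$, forcing $g = g_1$ identically on $S$, so $\min_S g = m_1$ and it is attained exactly on $g_1^{-1}(m_1)$; case 2 follows by swapping the roles of $g_1$ and $g_2$. For the main case 3, I would first use compactness of $S$ (it is a bounded polyhedron, hence closed and bounded in $\R^n$) together with continuity of $g$ to guarantee that the minimum is attained at some $x^* \in S$.

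Next, I would dichotomize on whether $g_1(x^*) = g_2(x^*)$. In the equality subcase, $x^*$ lies in the first subset of $S^*$ by definition and the claim is immediate. In the strict-inequality subcase, assume without loss of generality $g_1(x^*) > g_2(x^*)$ (the reverse case is handled symmetrically). By continuity of $g_1 - g_2$, there exists a neighborhood $N$ of $x^*$ throughout which $g_1 > g_2$, so $g$ coincides with $g_1$ on $N \cap S$. Consequently $x^*$ is a local minimizer of the smooth problem $\min_{x \in S} g_1(x)$, with $S$ cut out by the affine inequalities $Ux \leq V$ and equalities $Rx = T$.

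The remaining step is an application of the KKT necessary first-order conditions at this local minimizer, which yields multipliers $\lambda \geq 0$ indexed over the stacked rows of $U$ and $R$ satisfying the stated stationarity $\nabla g_1(x^*) + \lambda^{\T}\left(\begin{smallmatrix}U\\R\end{smallmatrix}\right) = 0$ together with complementarity $\lambda_i(U_i x^* - V_i) = 0$ for $i \in [k]$; this places $x^*$ in the second subset of $S^*$. The main obstacle I anticipate is justifying the KKT step without imposing any rank assumption on $U$ or $R$. The resolution is to invoke Abadie's constraint qualification, which every polyhedral feasible set satisfies unconditionally at every feasible point, so KKT applies even when the linear independence constraint qualification fails; this is precisely what lets the lemma be stated in full generality over the polytope $S$. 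A small bookkeeping remark is that the lemma's sign and complementarity conventions on the block of $\lambda$ corresponding to the equality rows $R$ should be handled by rewriting $Rx = T$ as the pair $Rx \leq T$ and $-Rx \leq -T$ when setting up the KKT system, so that the final multiplier vector can be repackaged into the form stated.
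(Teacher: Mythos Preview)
Your proposal is correct and follows essentially the same route as the paper: cases 1--2 by the pointwise inequality $g_2\le M_2\le m_1\le g_1$, and case 3 by taking a minimizer, localizing via continuity to reduce to a smooth local minimum of $g_1$ (or $g_2$), and then invoking the KKT necessary conditions. Your explicit attention to constraint qualification (Abadie's CQ for polyhedral feasible sets) and to repackaging the equality multipliers is more careful than the paper, which simply cites a textbook KKT theorem without discussing these points.
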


The proof is presented in \Cref{app:argmax}.

Now we turn to polytopes. For concepts in polytopes, see \Cref{app:def-geo} and textbook \cite{Z12_GTMpolytope}. The following proposition captures the relationship of geometric properties and constraint expressions, which helps in the further analysis of the minimization problem on a certain polytope.

\begin{proposition}
\label{prop:stand}
Consider the polytope $S=\left\{x\in\R^n:a_i^\T x\leq b_i,\forall i\in [k]\right\}$, where $a_i\in \R^n\setminus\{0\}$, $b_i\in\R$. Suppose the dimension of $S$, denoted by $\dim(S)$, is $m\leq n$. Then we have
\begin{enumerate}
\item \label{prop:stand:state1}
The affine hull $\aff(S)$ of $S$ can be written in the form $\left\{x\in\R^n:u_i^\T x=v_i,\forall i\in [n-m]\right\}$. 
\item \label{prop:stand:state2}
Vector $d$ is parallel to $S$ (denoted by $d\parallel S$) if and only if for every $i\in[n-m]$, $u_i^\T d=0$.
\end{enumerate}
The representations of geometric concepts about $S$ can be presented in the following order.

\begin{enumerate}[resume]
\item \label{prop:stand:S}
    (Representation of $S$) There exists a set $W\subseteq [k]$ of indices such that:
    \[\displaystyle S=\left\{x\in\R^n:u_i^\T x=v_i,\forall i\in [n-m]\right\}\cap \left\{x\in\R^n:a_i^\T x\leq b_i, \forall i\in W\right\}.\]
\item \label{prop:stand:partialS}
    (Representation of boundary $\partial S$ and interior $S^\circ$) Moreover:
    \[\partial S=\{x\in S: \exists i\in W, a_i^\T x=b_i\}, S^\circ= \{x\in S: \forall i\in W, a_i^\T x<b_i\}.\]
\item \label{prop:stand:facetS}
    (Representation of facets of $S$) For every $j\in W$, $S'_j:=\left\{x\in S: a_j^\T x=b_j\right\}$ is a distinct facet of $S$, and every facet of $S$ coincides with exactly one $S_j'$.
\item \label{prop:stand:faceS}
    (Representation of faces of $S$) For any face $T$ of $S$, $\dim(T)\leq m-1$, and $T$ can be expressed as the intersection of facets of $S$.
\end{enumerate}
\end{proposition}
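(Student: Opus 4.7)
The plan is to prove the six parts in sequence, with Parts 1 and 2 being standard linear algebra, Part 3 resting on a careful choice of the index set $W$, and Parts 4--6 following from that choice together with classical polytope theory. The main obstacle will be selecting $W$ so that it simultaneously yields the representation in Part 3, the boundary characterization in Part 4, and the facet correspondence in Part 5.

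For Part 1, since $\dim(S)=m$ by hypothesis, $\aff(S)$ is an $m$-dimensional affine subspace of $\R^n$, hence the solution set of some $n-m$ linearly independent affine equations, which I write as $u_i^\T x=v_i$ for $i\in[n-m]$. Part 2 is immediate: $d\parallel S$ iff $d=x-y$ for some $x,y\in\aff(S)$, which by Part 1 is equivalent to $u_i^\T d=v_i-v_i=0$ for every $i$.

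For Part 3, I would first separate the constraints by defining $E:=\{i\in[k]:a_i^\T x=b_i\text{ for all }x\in S\}$. For $i\in E$, every affine combination of points in $S$ also satisfies $a_i^\T x=b_i$ because $\sum_j\lambda_j=1$, so these equalities hold throughout $\aff(S)$ and can be absorbed into the affine-hull description, yielding $S=\aff(S)\cap\{x:a_i^\T x\leq b_i,\ \forall i\in[k]\setminus E\}$. To match Parts 4 and 5, I would further prune $[k]\setminus E$ by keeping exactly one index per facet-defining half-space of $S$ and discarding any index whose constraint is either strictly redundant on $S$ or coincides on $\aff(S)$ with one already kept; call this pruned subset $W$. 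The nontrivial step is verifying that every dropped inequality is implied on $\aff(S)$ by the retained ones, which follows from Farkas' lemma applied inside $\aff(S)$, or equivalently from the classical fact that a polytope equals the intersection of its affine hull with the closed half-spaces supporting its facets (as in Ziegler~\cite{Z12_GTMpolytope}).

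Parts 4 and 5 then follow from the construction. For Part 4, a point $x\in S$ satisfying $a_j^\T x<b_j$ for every $j\in W$ admits a neighborhood in $\aff(S)$ on which all retained constraints remain strict, so $x\in S^\circ$; conversely, if $a_j^\T x=b_j$ for some $j\in W$, then $a_j$ cannot be perpendicular to $\aff(S)$ (otherwise $j\in E$ and $j\notin W$), so some direction parallel to $\aff(S)$ strictly exits the corresponding half-space and $x\in\partial S$. For Part 5, the pruning makes each $S'_j$ a nonempty face of dimension exactly $m-1$, hence a facet, and distinct $j\in W$ give distinct facets by the one-index-per-facet choice. Finally, Part 6 is a standard induction on codimension: any proper face $T\subsetneq S$ lies on $\partial S$ by Part 4, hence is contained in some $S'_j$; restricting to that facet strictly drops the dimension, and iterating expresses $T$ as the intersection of facets containing it while bounding $\dim(T)\leq m-1$.
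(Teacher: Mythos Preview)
Your proof is correct and, for Parts 1--5, follows essentially the same path as the paper: both rely on the representation theorem for polytopes (Ziegler, Theorem~2.15) to obtain $W$ as a set of facet-defining indices among the original constraints, and the boundary/interior characterization in Part~4 is argued identically.

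The genuine difference is Part~6. The paper invokes the face-lattice machinery (Ziegler, Theorem~2.7): the face lattice $L(S)$ is coatomic, so every face is a meet (i.e., intersection) of coatoms (i.e., facets), which settles the claim in one line. Your inductive descent is more elementary and self-contained, but one step deserves a word of justification. The implication ``$T\subseteq\partial S$, hence $T$ is contained in some single $S'_j$'' does not follow from Part~4 alone, since Part~4 only gives $\partial S=\bigcup_{j\in W}S'_j$. What you need is that a point $x_0$ in the relative interior of $T$ satisfies $a_j^\T x_0=b_j$ for some $j\in W$, and then equality at a relative-interior point, together with $a_j^\T x\le b_j$ on all of $T$, forces $a_j^\T x=b_j$ throughout $T$, i.e., $T\subseteq S'_j$. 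With that detail supplied, the iteration goes through: applying Parts~3 and~5 to $S'_j$ (using the same original constraints, now with $a_j^\T x=b_j$ absorbed into the affine hull) shows every facet of $S'_j$ has the form $S'_j\cap S'_{j'}$, so the descent terminates with $T$ written as an intersection of facets of $S$.
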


The proof is presented in \Cref{details:stand}.

Now we combine discrete geometry and optimization. We derive three corollaries from \Cref{details:lemma:argmax} to deal with simpler cases. 

\begin{corollary}\label{details:cor:mincol}
For any convex polytope $S\in \R^n$ such that $\dim(S)=n$, suppose without loss of generality that it has a form that $S=\{x\in\R^n:U x\leq V \}$, where $U\in \R^{m\times n}$, $V\in \R^m$ and no rows of $U$ are zero. We have the following statements.

\begin{enumerate}[fullwidth]
\item \label{cor:mincol:state1}
The minimum of $g$ on $S$ must be obtained on
\begin{align*}
    S^+=\partial {S}\cup \{x\in S:\nabla g_1(x)=0\} \cup\{x\in S:\nabla g_2(x)=0\}\cup\{x\in S:g_1(x)=g_2(x)\}.
\end{align*}

\item \label{cor:mincol:state2}
Let $e_1,\dots,e_n$ be the standard orthonormal basis. For any $e_i$, we can divide the facets of $S$ into two collections: $P_i$ and $N_i$ according to whether they are parallel to $e_i$. Define $\partial S_P=\bigcup_{T\in P_i} T$ and $\partial S_T=\bigcup_{T\in N_i} T$.  $\partial S_P\cup \partial  S_T=\partial S$. For any index $i$, statement 1 still holds if we substitute $\partial S$ with
\[\left(\partial S_P\bigcap \bigcup_{k=1,2}\left\{x\in S:\frac{\partial g_k}{\partial x_i}(x)=0\right\}\right)\bigcup\partial  S_T.\]

\item \label{cor:mincol:state3}
If the polytope $S$ has the form $[m_1,M_1]\times [m_2,M_2]\times\dots\times [m_n,M_n]$ with $m_i<M_i$, then the minimum must be obtained on
\begin{align*}
    S^+=&\left\{x\in\R^n:\forall i, x_i\in \{m_i,M_i\}\right\}\bigcup\\
    &\bigcup_{i\in [n],k\in\{1,2\}}\left(\left\{x\in S:\frac{\partial g_k}{\partial x_i}(x)=0\right\}\right)\bigcup\\
    &\left\{x\in S:g_1(x)=g_2(x)\right\}.
\end{align*}
\end{enumerate}
\end{corollary}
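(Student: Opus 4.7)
The plan is to derive the three statements of the corollary in sequence, using Lemma \ref{details:lemma:argmax} as the starting point and then successively refining the set on which the minimum must be attained.

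For Statement \ref{cor:mincol:state1}, I would apply Lemma \ref{details:lemma:argmax} directly with the equality-constraint part $R, T$ empty, which is justified because $\dim(S) = n$ forces $\aff(S) = \R^n$. In the two degenerate cases where the ranges of $g_1$ and $g_2$ are separated (say $m_1 \geq M_2$), the minimum of $g$ equals $\min g_1$, and this minimum is attained either at an interior critical point (where $\nabla g_1 = 0$) or at a point of $\partial S$, either of which lies in $S^+$. In the generic third case, the lemma produces $S^*$ consisting of $\{g_1 = g_2\}$ together with two sets of KKT points in the regions where one function strictly dominates. Since there are no equality constraints, at an interior KKT point every multiplier $\lambda_i$ must vanish (because $U_i x < V_i$ strictly), so the KKT condition collapses to $\nabla g_k(x) = 0$; at a boundary KKT point some $U_i x = V_i$ is active and $x \in \partial S$. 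Thus $S^* \subseteq S^+$.

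For Statement \ref{cor:mincol:state2}, I would pick any minimizer $x^* \in S^+$ supplied by Statement \ref{cor:mincol:state1}. If $x^*$ already lies in $\{\nabla g_1 = 0\} \cup \{\nabla g_2 = 0\} \cup \{g_1 = g_2\}$, it automatically lies in the refined $S^+$. Otherwise $x^* \in \partial S = \partial S_P \cup \partial S_T$; if $x^* \in \partial S_T$ nothing more is needed. The interesting case is $x^* \in \partial S_P$, meaning $x^*$ lies on some facet $T$ parallel to $e_i$. Because $e_i$ is tangent to $T$, the segment $x^* + t e_i$ remains in $T \subseteq S$ for small $|t|$. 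Since $g_1(x^*) \neq g_2(x^*)$ by assumption, continuity makes the active function locally equal to a single $g_k$, so $t \mapsto g_k(x^* + t e_i)$ attains a local minimum at $t = 0$ on an open interval, forcing $\partial g_k / \partial x_i(x^*) = 0$.

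For Statement \ref{cor:mincol:state3}, the box has exactly two facets perpendicular to each $e_i$ and $2(n-1)$ facets parallel to $e_i$. Let $x^*$ be any minimizer. If $g_1(x^*) = g_2(x^*)$ we are done; otherwise the argument of Statement \ref{cor:mincol:state2} is applied coordinate-wise. For each $i$, either $x^*_i \in \{m_i, M_i\}$, or $x^*_i \in (m_i, M_i)$ in which case the $e_i$-tangent argument yields $\partial g_k / \partial x_i(x^*) = 0$ for the locally dominant $k$. If no coordinate falls into the second alternative, then every $x^*_i$ is extremal and $x^*$ is a vertex of the box; otherwise $x^*$ already sits in the partial-derivative set.

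The main obstacle I anticipate is keeping the tangent-direction argument clean when $x^*$ sits on a lower-dimensional face where several constraints are simultaneously active, since ``facet parallel to $e_i$'' really only asserts that $e_i$ lies in that facet's tangent space. The resolution is to observe that among the facets of $S$ containing $x^*$, either one has $e_i$ in its tangent space (so the directional-derivative argument applies) or none does, in which case every facet through $x^*$ belongs to $\partial S_T$ and $x^* \in \partial S_T$ directly. This case split is automatic for the box in Statement \ref{cor:mincol:state3} but must be handled carefully for the general polytope of Statement \ref{cor:mincol:state2}.
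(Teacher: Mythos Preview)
Your approach to Statements~\ref{cor:mincol:state1} and~\ref{cor:mincol:state3} is correct and, for Statement~\ref{cor:mincol:state3}, more direct than the paper's: the paper routes both Statements~\ref{cor:mincol:state2} and~\ref{cor:mincol:state3} through an auxiliary Claim~\ref{claim:face-parallel}, proved by induction on the dimension of the face containing $x^*$, whereas your coordinate-by-coordinate argument for the box avoids this entirely.

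For Statement~\ref{cor:mincol:state2}, however, your resolution of the ``lower-dimensional face'' obstacle has the dichotomy backwards. You write that among the facets through $x^*$, ``either one has $e_i$ in its tangent space (so the directional-derivative argument applies) or none does''. But the two-sided directional argument---that $x^*+te_i\in S$ for all small $|t|$---requires that \emph{every} active facet constraint $U_j x^*=V_j$ satisfy $U_j e_i=0$, i.e.\ that \emph{all} facets through $x^*$ are parallel to $e_i$, not merely one. The correct split is: either some facet through $x^*$ is not parallel to $e_i$, in which case $x^*\in\partial S_T$ immediately; or all facets through $x^*$ are parallel to $e_i$, in which case every active constraint has $U_j e_i=0$, so $x^*\pm te_i$ satisfies all active constraints exactly and all inactive ones strictly for small $t$, and the first-order optimality in the $e_i$ direction gives $\partial g_k/\partial x_i(x^*)=0$.

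With that correction your argument goes through, and it is in fact tidier than the paper's inductive treatment of Claim~\ref{claim:face-parallel}: you trade the induction for a single direct observation about active constraints. The paper's version, by contrast, works at the KKT level (using $\nabla g_1(x)=-\lambda^\T U$ and the complementary-slackness relations) and descends through the face lattice; this is more machinery than the statement needs, though it does make the role of the KKT multipliers explicit.
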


The proof is presented in \Cref{app:mincol}.

Statement 1 can be used to compute the minimum of $g$ on any polytope $S$ with recursion. Since all components of $S^+$ have at most $(n-1)$ dimensions ($\partial S$ can be split into many facets), we can compress certain dimensions and recursively compute the $(n-1)$-dimensional case. Although we only present algorithms to solve cases where $t \leq 3$, we present statements 2 and 3 in a very general form. They are useful for further investigation of cases with $t>3$.

\subsection{Detailed algorithms}
\label{details:subsec:mixing-algo}

With all above preparations, we are now able to derive our algorithms for the mixing problem defined in \Cref{def:st-mixing}.

We first consider the $(1,w)$-mixing problem. Note that this problem can be directly transformed into a linear program given by \Cref{algo:concise-1dbpoint}. We denote the complexity of solving a standard-form linear program with $w$ variables and $m$ inequalities by $L(w,m)$, which is polynomial in $w$ and $m$. See, e.g. \cite{CLS21}. Then, the complexity of our $(1,w)$-mixing algorithm is $O(mnw+L(w,m))$.

\begin{algorithm}
\caption{$(1,w)$-mixing algorithm}
\label{algo:concise-1dbpoint}
\textbf{Input:} An $m\times n$ bimatrix game $(R,C)$, mixed strategies $x_1$ for the row player and $y_1, y_2,\dots , y_w$ for the column player.

\textbf{Output:} $\beta\in \Delta_w$ that minimizes $f(x_1,\beta_1 y_1+\dots +\beta_w y_w)$.

Calculate and store the $m$-dimensional vectors $Ry_1, \dots, Ry_w$ and the values $x_1^\T R y_1 \dots x_w^\T R y_w$.  \tcp{This can be done by direct matrix multiplication within  $O(mnw+m w)$ time.}

Solve the optimal $\alpha$ of the following linear program and output it.
\[
\begin{aligned}
    \min_\alpha \quad &t\\
    \text{s.t.}\quad & t\geq \max(C^\T x_1) -\alpha_1(x_1^\T Cy_1)-\dots -\alpha_w (x_w^\T C y_w),\\
    &\text{for every }i\in [m], \quad t\geq \alpha_1 (Ry_1)_i+\dots +\alpha_w (Ry_w)_i -\alpha_1(x_1^\T R y_1)-\dots -\alpha_w (x_w^\T R y_w),\\
    &\text{for every }j\in [w], \quad \alpha_j\geq 0,\\
    &\alpha_1+\dots+\alpha_w=1.
\end{aligned}
\]
\tcp{The complexity is equivalent to solving a non-negative linear programming problem with $m+1$ constraints and $w+1$ variables.}
\end{algorithm}

To avoid detailed case-by-case discussions, we only give sketches of the $(2,2)$ and $(2,3)$-mixing algorithms here in \Cref{algo:concise-2dbpoint} and \Cref{algo:concise-3dbpoint}. The full process, correctness, and time-complexity analysis are presented in \Cref{app:prec-2-2-mixing} and \Cref{app:prec-2-3-mixing}.

\begin{algorithm}
\caption{$(2,2)$-mixing algorithm}
\label{algo:concise-2dbpoint}
\textbf{Input}: A size $m\times n$ bimatrix game $(R,C)$, mixed strategies $x_1,x_2$ for the row player and $y_1, y_2$ for the column player.

\textbf{Output}: $\alpha,\beta\in \Delta_2$ that minimizes $f(\alpha_1 x_1+\alpha_2 x_2,\beta_1 y_1+\beta_2 y_2)$.

Apply the $(2,n)$-separation algorithm (see \Cref{app:prec-2-m-separation}) for $\alpha$ that outputs separated polytopes $P_i^R$, where $i\in [n]$ (actually intervals of $\alpha_1$). \tcp{Time complexity $O(n\log n)$}

Apply the $(2,m)$-separation algorithm (see \Cref{app:prec-2-m-separation}) for $\beta$ that outputs separated polytopes $P_j^C$, where $j\in [m]$ (actually intervals of $\beta_1$). \tcp{Time complexity $O(m\log m)$}

Compute the exact form of $F_i(\alpha,\beta)=f_i(\alpha x_1+(1-\alpha) x_2, \beta y_1+(1-\beta)y_2)$, where $i\in \{R,C\}$. \tcp{Time complexity $O(mn)$}

\For{$i=1:n$, $j=1:m$}
{
    Minimize $f$ in each grid $P_i^R \times P_j^C$. Apply statement 3 in \Cref{details:cor:mincol}. It suffices to scan the following regions:\\
    
    (1) Points with $\partial F_k(\alpha,\beta)/\partial\alpha=0$ or $\partial F_k(\alpha,\beta)/\partial\beta=0$, where $k=R,C$.\\
    
    (2) The four vertices of its domain.\\
    
    (3) Points with $F_R(\alpha,\beta)=F_C(\alpha,\beta)$.\\
    
    \tcp{For details, see \Cref{state:append}.}
}
\tcp{We can show that each case can be done in constant time over $m,n$. Thus, the time complexity is $O(mn)$.}
Finally, compare the $f$-values of the minimum on the $mn$ grids and obtain the global minimum of $f$ on $\Delta_2\times \Delta_2$. \tcp{Time complexity $O(mn)$}
\end{algorithm}

\begin{algorithm}
\caption{$(2,3)$-mixing algorithm}
\label{algo:concise-3dbpoint}
\textbf{Input:} A size $m\times n$ bimatrix game $(R,C)$, mixed strategies $x_1,x_2$ for the row player and $y_1, y_2, y_3$ for the column player.

\textbf{Output:} $\alpha \in \Delta_2$, $\beta \in \Delta_3$ that minimizes $f(\alpha_1 x_1+\alpha_2 x_2,\beta_1 y_1+\beta_2 y_2+\beta_3 y_3)$.

Apply the $(2,n)$-separation algorithm (\Cref{app:prec-2-m-separation}) for $\alpha$ that outputs separated polytopes $P_i^R, i\in [n]$ (Actually are intervals for $\alpha_1$). \tcp{Time complexity $O(n\log n)$}

Apply the $(3,m)$-separation algorithm (\Cref{app:prec-3-m-separation}) for $\beta$ that outputs separated polytopes $P_j^C, j\in [m]$. \tcp{Time complexity $O(m^2\log m)$}

Compute the exact form of $F_i(\alpha,\beta,\gamma)=f_i(\alpha x_1+(1-\alpha) x_2, \beta y_1+\gamma y_2+(1-\beta-\gamma)y_3), i\in {R,C}$. \tcp{Time complexity $O(mn)$}

\For{$i=1:n$, $j=1:m$}
{
Minimize $f$ in each grid $P_i^R \times P_j^C$. Apply statement 2 in \Cref{details:cor:mincol}, it suffices to scan the following regions:
\begin{enumerate}
\item $(\alpha,\beta)$ belongs to side surfaces of $S$ and
\begin{enumerate}
\item either there exists $k\in{R,C}$ such that $\partial F_k/\partial \gamma=0$, or
\item $(\alpha,\beta)$ is in the intersection of side surfaces and top/bottom surfaces.
\end{enumerate}
\item $(\alpha,\beta)$ belongs to top/bottom surfaces of $S$ and
\begin{enumerate}
\item there exists $k\in{R,C}$ such that either $\partial F_k/\partial \alpha=0$ or $\partial F_k/\partial \beta=0$, or
\item $(\alpha,\beta)$ is in the intersection of side surfaces and top/bottom surfaces.
\end{enumerate}
\item $F_R(\alpha,\beta)=F_C(\alpha,\beta)$.
\item $\nabla F_R(\alpha,\beta)=0$ or $\nabla F_C(\alpha,\beta)=0$.\\
\tcp{For details, see \Cref{state:append2}}
\end{enumerate}
}
\tcp{We can show that each case can be done in $O(m)$ time. Thus, the time complexity is $O(m^2 n)$.}

Finally, compare the $f$-values of the minimum on the $mn$ grids, and obtain the global minimum of $f$ on $\Delta_2\times \Delta_3$. \tcp{Time complexity $O(mn)$}
\end{algorithm}

\subsection{Examples}\label{app:subsec:mixing-example}

From the procedure given in \Cref{fig:framework-search-mix}, we can modify the TS algorithm~\cite{TS07_0.3393NE} and the DFM algorithm~\cite{DFM22_0.3333NE} with the mixing algorithms.

Both algorithms try to minimize the objective function $f(x,y)$ by a descent procedure. The difficulty of such an attempt is that $f(x,y)$ is a nonconvex nonsmooth function. Gradients cannot be defined at non-differentiable points. Even worse, local minima are usually non-differentiable. Fortunately, we can calculate \emph{directional derivatives} of the function $f$ at each point and in each direction. Furthermore, we can compute the steepest descent direction using a linear program. Here, the "steepest descent direction" refers to a direction vector $(x'-x,y'-y)$ such that
\[(x',y')\in\argmin_{(x',y')} Df(x,y,x',y'),\]
where $Df(x,y,x',y')$ is the Dini directional derivative defined as the following limit:
\[\lim_{\alpha\downarrow 0}\frac{f(x+\alpha(x'-x),y+\beta(y'-y))-f(x,y)}{\alpha}.\]

Such a descent process will stop if the minimum directional derivative at $(x,y)$ is zero. If that is the case, we call $(x,y)$ a \emph{stationary point}. Usually, stationary points are hard to reach. Hence, the descent procedure actually tries to find a $\delta$-stationary point, at which $\min_{(x',y')} Df(x,y,x',y')\geq -\delta$. \cite{TS07_0.3393NE} showed that the descent procedure finds a $\delta$-stationary point in time polynomial of $m,n$ and $1/\delta$. Since we calculate the steepest descent direction by a linear program, we can also calculate the optimal solution of its dual program. The dual solution can be expressed as $(\rho, w,z)$, where $\rho\in[0,1]$, $w\in\br_{R}(y)$ and $z\in\br_{C}(x)$.

\begin{example}[The modified TS algorithm~\cite{TS07_0.3393NE}]\label{example:modified-TS}
~
\begin{itemize}
    \item \emph{Search phase}: Use the descent procedure to compute a $\delta$-approximate stationary point $(x_s, y_s)$ of the function $f$. Then, by linear programming, compute the dual strategy profile $(w, z)$ as well as the dual parameter $\rho$. 
    \item \emph{Mixing phase}:  Directly apply the $(2, 2)$-mixing algorithm on the strategies $x_s$, $w$ of the row player and $y_s$, $z$ of the column player. The algorithm outputs the optimal mixing parameters $\alpha^*$, $\beta^*$ that minimize the $f$ value on all possible mixing.
\end{itemize}

\end{example}

\begin{example}[The modified DFM algorithm~\cite{DFM22_0.3333NE}]\label{example:modified-DFM}
~
\begin{itemize}
    \item \emph{Search phase}\footnote{To simplify the further discussions, we state it in a slightly different form from the original literature.}: The DFM algorithm is an adjustment of the TS algorithm. Besides the found $(x_s,y_s), (w,z)$ pairs, if $f_R(w,z)\geq f_C(w,z)$, it also considers the strategy $\hat{y}=\frac{1}{2}(y_s+z)$ and selects $\hat{w}\in \br_C(\hat{y})$. The operations in the case $f_R(w,z)< f_C(w,z)$ are symmetric.
    \item \emph{Mixing phase}: Directly apply the $(2,3)$-mixing algorithm on the strategies $x_s,w,\hat{w}$ of the row player and $y_s, z$ of the column player. The algorithm outputs the optimal mixing parameters $\alpha, \beta$ that minimize the $f$ value over all the possible mixes.
\end{itemize}

\end{example}

\section{Approximation analysis}
\subsection{Linearization}
\label{details:subsec:linearization}
To begin with, we propose some basic notions for linearization. Recall the concept of \emph{barycentric coordination} in mixing operation defined in \Cref{sec:pre}. In the discussion below, if we refer to "barycentric coordination", we mean the operation defined by this coordination.

Linearization aims at giving an upper bound of a convex function which meets the function values on the vertices, illustrated in \Cref{fig:linearization-comp} below.

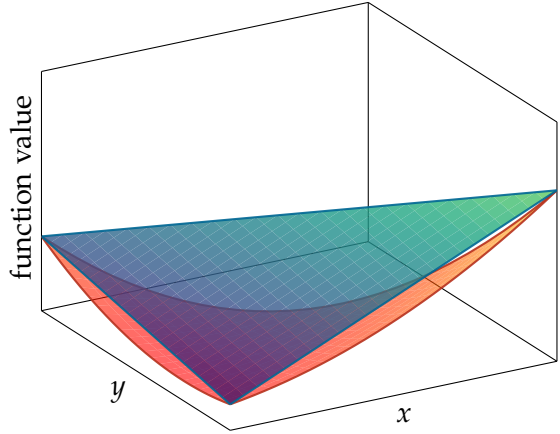
\begin{figure}[ht]
    \centering  
    \begin{tikzpicture}[declare function={f(\x,\y)=0.5^2+\y*(1.5^2-0.5^2)+(2.5^2-0.5^2)*\x;g(\x,\y)=(2*\x-2*\y+0.5)^2;}]
\begin{axis}[
        xlabel={$x$},
        ylabel={$y$},
        zlabel={function value},
        ticks = none,
        view={-30}{30},
        ]

\begin{scope}
    \clip plot[variable=\x,domain=0:1.0] (\x,{1-\x},{g(\x,1-\x)}) -- plot[variable=\x,domain=0:1.0] ({1-\x},0,{g(1-\x,0.0)}) -- (0,0,{g(0.0,0.0)}) -- plot[variable=\x,domain=0:1.0] (0,\x,{g(0,\x)});
    \addplot3 [no marks, surf, fill opacity=0.6, 
    shader=faceted, faceted color = none, draw opacity=0., line width=0.,draw=none,
    colormap/autumn, 
    domain={0:1},
    ] {(2*x-2*y+0.5)^2};
\end{scope}

\addplot3 [domain=0:1.0, BrickRed, thick, samples=30, samples y=0, opacity = 0.9] (x,1-x,{g(x,1-x)});
\addplot3 [domain=0:1.0, BrickRed, thick, samples=30, samples y=0, opacity = 0.9] (x,0,{g(x,0)});
\addplot3 [domain=0:1.0, BrickRed, thick, samples=30, samples y=0, opacity = 0.9] (0,x,{g(0,x)});

\begin{scope}
    \clip plot[variable=\x,domain=0:1.0] (\x,{1-\x},{f(\x,1-\x)}) -- plot[variable=\x,domain=0:1.0] (\x,0,{f(\x,0.0)}) -- (0,0,{f(0.0,0.0)}) -- plot[variable=\x,domain=0:1.0] (0,\x,{f(0,\x)});
    \addplot3 [no marks, surf, fill opacity=0.8, 
    shader=faceted, faceted color = none, draw opacity=0., line width=0.,draw=none,
    colormap/viridis,
    domain={0:1}
    ] {0.5^2+y*(1.5^2-0.5^2)+(2.5^2-0.5^2)*x};
\end{scope}

\addplot3 [domain=0:1.0, MidnightBlue, thick, samples=30, samples y=0, opacity = 0.9] (x,1-x,{f(x,1-x)});
\addplot3 [domain=0:1.0, MidnightBlue, thick, samples=30, samples y=0, opacity = 0.9] (x,0,{f(x,0)});
\addplot3 [domain=0:1.0, MidnightBlue, thick, samples=30, samples y=0, opacity = 0.9] (0,x,{f(0,x)});

\end{axis}
\end{tikzpicture}
    \caption{The value of a convex function controlled by the linear function over the vertices.}
    \label{fig:linearization-comp}
\end{figure}

\begin{proposition}
\label{details:prop:linear-tight}
    For any convex function $h(\alpha)$ defined on $\Delta_w$ and any $\alpha\in\Delta_w$, $h(\alpha)\leq \alpha_1 h(e_1)+\dots+\alpha_w h(e_w)$. The equality holds if $h$ is linear in $\alpha$.
\end{proposition}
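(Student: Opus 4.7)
The plan is to prove the inequality by a straightforward induction on $w$, which is essentially Jensen's inequality for finite convex combinations. The second part (equality when $h$ is linear) will follow immediately from the definition of linearity, so no induction is needed there.

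For the base case $w=1$, the simplex $\Delta_1$ consists of the single point $e_1$, so both sides of the inequality equal $h(e_1)$ trivially. For the inductive step, assume the claim holds for $w-1$. Take any $\alpha \in \Delta_w$. If $\alpha_w = 1$ then $\alpha = e_w$ and the claim is immediate; otherwise write
\[
\alpha = \alpha_w \, e_w + (1-\alpha_w)\, \beta, \qquad \beta := \sum_{i=1}^{w-1} \frac{\alpha_i}{1-\alpha_w}\, e_i,
\]
where $\beta$ lies in the face of $\Delta_w$ spanned by $e_1,\dots,e_{w-1}$, which is canonically identified with $\Delta_{w-1}$. Applying the definition of convexity to the two points $e_w$ and $\beta$ with weight $\alpha_w$, I obtain
\[
h(\alpha) \leq \alpha_w\, h(e_w) + (1-\alpha_w)\, h(\beta).
\]
The induction hypothesis applied to $h$ restricted to this face gives $h(\beta) \leq \sum_{i=1}^{w-1} \frac{\alpha_i}{1-\alpha_w}\, h(e_i)$, and substituting yields the desired bound $h(\alpha) \leq \sum_{i=1}^{w} \alpha_i\, h(e_i)$.

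For the equality case, if $h$ is linear in $\alpha$ then by definition $h\bigl(\sum_i \alpha_i e_i\bigr) = \sum_i \alpha_i\, h(e_i)$ holds as an exact identity, so equality in the stated bound is automatic.

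The only delicate point is verifying that the restriction of a convex function on $\Delta_w$ to a face remains convex (so that the inductive hypothesis applies); this is immediate since a face is itself a convex set and convexity is inherited by restriction. I do not anticipate any real obstacle: the argument is a textbook application of the recursive definition of a convex combination, and the result is essentially the finite-dimensional Jensen inequality specialized to the standard simplex.
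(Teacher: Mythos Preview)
Your proof is correct and takes essentially the same approach as the paper: both apply the two-point convexity inequality iteratively to peel off one vertex at a time, which is precisely the ``gradual decomposition onto the components $e_1,\dots,e_w$'' the paper sketches. Your version is simply a more carefully written-out induction, and your handling of the equality case is identical to the paper's.
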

\begin{proof}
    By the definition of convexity, we note that for any $\beta_1,\beta_2 \in \Delta_w$ and $\alpha\in \Delta_2$, we have $h(\alpha_1 \beta_1 +\alpha_2 \beta_2)\leq \alpha_1 h(\beta_1)+\alpha_2 h(\beta_2)$. Thus, we can gradually decompose $h(\alpha)$ onto the components of $e_1,\dots, e_w$, and obtain the conclusion. 
    
    Besides, if $h$ is linear, then by linearity, $h(\alpha_1 \beta_1 +\alpha_2 \beta_2)= \alpha_1 h(\beta_1)+\alpha_2 h(\beta_2)$ holds for any $\beta_1,\beta_2 \in \Delta_w$ and $\alpha\in \Delta_2$. Similarly, by the decomposition, we obtain the equality condition.
\end{proof}

\subsection{Designing an upper bound constructor}
\label{details:subsec:auxiliary-design}

As the counterpart to $(1,w), (2,2),(2,3)$-mixing algorithms, in this part, we propose upper bound constructors for cases $(1,w)$, $(2,2)$, $(2,3)$ to analyze the approximation bound. In fact, we are able to propose the upper bound constructor for arbitrary general case $(v,w)$. The idea is to use the linearization approach proposed in \Cref{details:subsec:linearization} to give an upper bound to $f$ with respect to its values over the vertices.

\subsubsection{A brief overview of the techniques}\label{details:subsubsec:brief-overview-auxiliary-design}

For case $(1,w)$, the upper bound can be derived easily. Our task is to give an upper bound for the minimum of $f$ on the region $\mathcal{A}=\{(x_1,\alpha_1 y_1+ \dots +\alpha_w y_w), \beta\in \Delta_w\}$, denoted by $f^*_{\mathcal{A}}$. We can show that both $f_R$ and $f_C$ are convex in $\alpha$. Thus a direct use of the linearization gives a linear upper bound function for both $f_R$ and $f_C$, and further an upper bound for $f_*^{\mathcal{A}}$ by linear programming.

For general case $(v,w)$, the region $\mathcal{A}$ is modified to $\{(\alpha_1 x_1+ \dots +\alpha_v x_v,\beta_1 y_1+ \dots +\beta_w y_w), \alpha\in \Delta_v, \beta\in \Delta_w\}$. However, in functions $f_R, f_C$, only max terms are convex in $(\alpha,\beta)$, while the bi-linear term $x^T R y$ and $x^T C y$ contains the cross term of $\alpha,\beta$, and thus is no longer convex.

To overcome this difficulty, we only apply linearization to the max term of $f_R$ and $f_C$, but we leave the bilinear term unaltered. Then, we propose two solutions:
\begin{enumerate}
\item \emph{Simple upper bound constructor}: We restrict $f$ to the boundary\footnote{For a formal definition of these geometric concepts, see \Cref{app:def-geo}.} of $\mathcal{A}$ and consider the faces of this boundary one by one. On each face, either $\alpha$ or $\beta$ is fixed. Thus, the bilinear function deteriorates to a linear function, which allows us to apply the result in case $(1,w)$ (or $(v,1)$). The upper bound given in this way is the minimum of several linear programs (LP).
\item \emph{Strong upper bound constructor}: We do not restrict $f$. The upper bound given in this way becomes the minimum of several quadratically constrained linear programs (QCLP).
\end{enumerate}

Clearly, simple upper bound constructors are easier to analyze and computationally tractable than the stronger ones. In fact, most works in the literature follow a similar but ad hoc analysis approach to our simple upper bound constructors.

The detailed algorithms are given in the follows sections.

\subsubsection{Case \texorpdfstring{$(1,w)$}{(1,w)}}

\begin{proposition}[$(1,w)$-upper bound constructor]\label{prop:1-w-aux-mixing}

Given strategies $x_1 \in \Delta_m$ and $y_1,\dots ,y_w \in \Delta_n$, the region defined by their convex combinations is $\mathcal{A}=\{(x_1, \alpha_1 y_1+\dots +\alpha_w y_w) \mid \alpha \in \Delta_w\}$. Let $f_*^{\mathcal{A}}$ be the minimum of $f$ on $\mathcal{A}$. Solve the following linear program:
\[
\begin{aligned}
    \min & \quad h\\
    \text{s.t.} &\quad h \geq \alpha_1 f_R(x_1,y_1)+\dots +\alpha_w f_R(x_1,y_w),\\
    &\quad h \geq \alpha_1 f_C(x_1,y_1)+\dots +\alpha_w f_C(x_1,y_w),\\
     &\quad \alpha \in \Delta_w.
\end{aligned}
\]

Denote by $\alpha^*, h^*$ the optimal solution and objective value to the linear program above. Then we have: $f^{\mathcal{A}}_{*} \leq f(x_1, \alpha^*_1 y_1+\dots +\alpha^*_w y_w) \leq h^*$. The inequalities hold simultaneously when $f_R$ is linear in $\alpha$. 
\end{proposition}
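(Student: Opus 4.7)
The plan is to reduce the claim to the linearization fact already recorded as Proposition on convex linear upper bounds (\Cref{details:prop:linear-tight}), applied separately to $f_R$ and $f_C$ as functions of $\alpha\in\Delta_w$. First I would verify the two inequalities separately. The left one, $f^{\mathcal{A}}_{*} \leq f(x_1, \alpha^*_1 y_1+\dots +\alpha^*_w y_w)$, is immediate from the definition of $f^{\mathcal{A}}_*$ as the minimum of $f$ on $\mathcal{A}$, since the point on the right-hand side lies in $\mathcal{A}$. So the whole substance of the proposition sits in the right inequality, $f(x_1,\alpha^*_1 y_1+\dots+\alpha^*_w y_w)\le h^*$, together with the tightness claim.

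For the right inequality, I would fix an arbitrary $\alpha\in\Delta_w$ and write $y(\alpha)=\sum_{i=1}^w\alpha_i y_i$. The key observation is that $\alpha\mapsto f_R(x_1,y(\alpha))=\max\{Ry(\alpha)\}-x_1^{\T}Ry(\alpha)$ is convex in $\alpha$: the map $\alpha\mapsto Ry(\alpha)$ is affine, $\max\{\cdot\}$ is convex, and $\alpha\mapsto x_1^{\T}Ry(\alpha)$ is linear. By \Cref{details:prop:linear-tight} applied to this convex function on $\Delta_w$, we get
\[
f_R(x_1,y(\alpha))\le\sum_{i=1}^w\alpha_i f_R(x_1,y_i).
\]
Similarly $\alpha\mapsto f_C(x_1,y(\alpha))=\max\{C^{\T}x_1\}-x_1^{\T}Cy(\alpha)$ is in fact \emph{linear} in $\alpha$ (the first term is constant in $\alpha$, the second is linear), so the analogous bound $f_C(x_1,y(\alpha))\le\sum_i\alpha_i f_C(x_1,y_i)$ holds with equality by the second clause of \Cref{details:prop:linear-tight}. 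Taking the max of the two bounds, one obtains $f(x_1,y(\alpha))\le\max\{\sum_i\alpha_i f_R(x_1,y_i),\sum_i\alpha_i f_C(x_1,y_i)\}$, and this is precisely the quantity that the LP minimizes over $\alpha\in\Delta_w$; specializing at $\alpha=\alpha^*$ gives $f(x_1,y(\alpha^*))\le h^*$.

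For tightness, I would observe that the only inequality in the chain above that is not automatically an equality is the one coming from the linearization of $f_R$. Since $f_C(x_1,y(\alpha))=\sum_i\alpha_i f_C(x_1,y_i)$ holds identically by the linearity noted above, the hypothesis that $f_R$ is linear in $\alpha$ (again using the equality clause of \Cref{details:prop:linear-tight}) upgrades the $f_R$-bound to equality as well. Then $f(x_1,y(\alpha))=\max\{\sum_i\alpha_i f_R(x_1,y_i),\sum_i\alpha_i f_C(x_1,y_i)\}$ pointwise in $\alpha$, so minimizing the left side over $\Delta_w$ yields exactly $h^*$; in particular $f^{\mathcal{A}}_{*}=f(x_1,y(\alpha^*))=h^*$ and both inequalities are simultaneously equalities.

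I do not expect any serious obstacle. The whole argument is essentially bookkeeping around \Cref{details:prop:linear-tight}; the only subtlety worth being explicit about is that $f_C$, unlike $f_R$, is already linear in $\alpha$, so the side condition on tightness only needs to be imposed on $f_R$. I would also verify in passing that the LP is feasible and bounded (the feasible set contains $e_1,\dots,e_w$ and the objective is bounded below by $0$ since $f_R,f_C\ge 0$), which guarantees that the optimal $\alpha^*,h^*$ referenced in the statement exist.
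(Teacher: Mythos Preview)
Your proposal is correct and follows essentially the same argument as the paper: observe that $f_C(x_1,y(\alpha))$ is linear in $\alpha$ (hence the linearized bound is an equality) while $f_R(x_1,y(\alpha))$ is convex in $\alpha$, apply \Cref{details:prop:linear-tight} to get the linear upper bounds, and specialize at $\alpha=\alpha^*$. Your additional remarks on the trivial left inequality, on feasibility/boundedness of the LP, and on why only $f_R$ needs the linearity hypothesis for tightness are all correct and simply make explicit what the paper leaves implicit.
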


\begin{proof}
Over the region $\mathcal{A}$, we have $f_C(x_1, \alpha_1 y_1+\dots +\alpha_w y_w)=\max\{C^T x_1\}-x_1^T C(\alpha_1 y_1+\dots +\alpha_w y_w)=\alpha_1 f_C(x_1,y_1)+\dots +\alpha_w f_C(x_1,y_w)$, which is linear on $\alpha$. Besides, $f_R(x_1, \alpha_1 y_1+\dots +\alpha_w y_w)=\max\{R(\alpha_1 y_1+\dots +\alpha_w y_w)\}-x_1^T R(\alpha_1 y_1+\dots +\alpha_w y_w)$, which is convex on $\alpha$. Thus, by linearization in \Cref{details:prop:linear-tight}, on region $\mathcal{A}$, $f_R(x_1, \alpha_1 y_1+\dots +\alpha_w y_w)\leq \alpha_1 f_R(x_1,y_1)+\dots +\alpha_w  f_R(x_1,y_w)$. 

From the deduction above, the optimal solution $h^*$ of the linear program in the proposition gives an upper bound of $f$. Specifically, taking $\alpha=\alpha^*$ in the inequalities above, we have $f(x_1,\alpha^*_1 y_1+...+\alpha^*_w y_w) \leq h^*$. Besides, from \Cref{details:prop:linear-tight}, the equalities hold when $f_R$ is linear.
\end{proof}

The readers can refer to \Cref{fig:12-upper-bound} as an illustration for the case $w=2$.

\begin{figure}[htb]
    \centering
    \begin{tikzpicture}[scale=4]
    \draw[-Stealth, thick] (0,0) -- (1.1*1.5,0) node[above right] {$\alpha_1$};
    \draw[-Stealth, thick] (0,0) -- (0,1.1);
    \draw[very thick] (0,0.2) -- (1*1.5,0.6) node[right] {$f_C$};
    \draw[very thick] (0,0.6) -- (0.2*1.5,0.4)  -- (0.6*1.5,0.3) node [below] {$f_R$}-- (1*1.5,0.4);
    \draw[very thick, dashed] (0,0.6) -- (1*1.5,0.4);
    \draw[ultra thin] (0.3*1.5,0.6) -- (0.35*1.5,0.8) node[above] {linearized $f_R$};
    
    \draw (0,0) node [above left] {$0$};
    \draw (0,0) node[below] {$(x_1,y_1)$};
    \draw[thick, dashed] (1*1.5,0) node[below] {$(x_1,y_2)$} -- (1*1.5,0.6);
    
    \draw[fill,Periwinkle] (0.384615*1.5,0.353846) circle [radius=0.025];
    \draw[fill,Periwinkle] (0.7*1.5,0.95) circle [radius=0.025] node[right] {\quad global minimum};
    
    \draw[fill,Salmon] (2/3*1.5,-0.2*2/3+0.6) circle [radius=0.025];
    \draw[fill,Salmon] (0.7*1.5,0.8) circle [radius=0.025] node[right] {\quad upper bound};
\end{tikzpicture}
    \caption{Illustration of $(1,2)$-upper bound constructor.}
    \label{fig:12-upper-bound}
\end{figure}
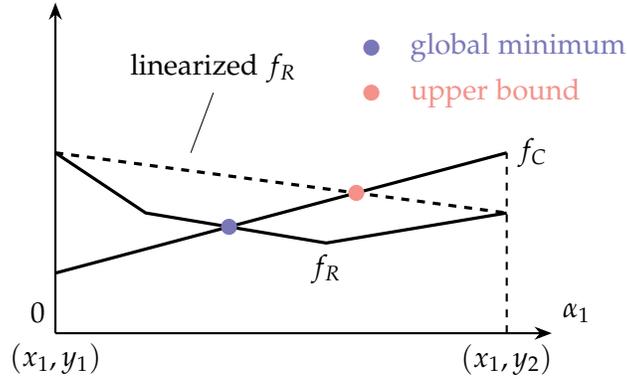


\begin{remark}
\label{rk:linear-decomposition}
One of the key properties of linear programming is that there exists an optimal solution on the boundary of the domain. Applying this property reversely, we can actually show that \Cref{prop:1-w-aux-mixing} is equivalent to using the $(1,2)$-upper bound constructor on each edge of the domain.
\end{remark}


\subsubsection{Case \texorpdfstring{$(v,w)$}{(v,w)}}

As discussed in \Cref{details:subsubsec:brief-overview-auxiliary-design}, there are two approaches for the general case $(v,w)$. One gives a simple upper bound constructor and another gives a strong upper bound constructor. In \Cref{fig:ubc}, we illustrate the relationship between the minimum $f$, simple upper bound constructor, and strong upper bound constructor on a section.

\begin{figure}[ht]
    \centering
    \begin{tikzpicture}[scale=4,elegant/.style={smooth, very thick, samples=200}]
    \draw[-Stealth, thick] (0,0) -- (1.1*1.5,0) node[above right] {$\alpha_1=\beta_1$};
    \draw[-Stealth, thick] (0,0) -- (0,1.1);

    \draw[elegant, domain=0:1] plot({1.5*\x},{0.5*max(\x,1-\x)-0.5*\x^2+0.2});
    \draw (0.2*1.5,0.5) node {$f_R$};
    \draw[elegant, dashed, domain=0:1] plot({1.5*\x},{0.5-0.5*\x^2+0.2});
    \draw[ultra thin] (0.3*1.5,0.7) -- (0.35*1.5,0.8) node[above] {linearized $f_R$};

    \draw[elegant, domain=0:1] plot({1.5*\x},{0.5*max(\x,1-0.5*\x)+0.2*\x^2-0.2});
    \draw (0.25*1.5,0.15) node {$f_C$};
    \draw[elegant, dashed, domain=0:1] plot({1.5*\x},{0.5+0.2*\x^2-0.2});
    \draw[ultra thin] (-0.02*1.5,0.4)  node[left] {linearized $f_C$} -- (0.18*1.5,0.33);
    
    \draw (0,0) node [above left] {$0$};
    \draw (0,0) node[below] {$(x_1,y_1)$};
    \draw[thick, dashed] (1*1.5,0) node[below] {$(x_2,y_2)$} -- (1*1.5,0.5);
    
    \draw[fill,Periwinkle] ({sqrt(4/7)*1.5},{4/7*0.2-0.2+0.5*sqrt(4/7)}) circle [radius=0.025];
    \draw[fill,Periwinkle] (0.8*1.5,0.95) circle [radius=0.025] node[right] {\quad global minimum};
    
    \draw[fill,Salmon] (1*1.5,0.5) circle [radius=0.025];
    \draw[fill,Salmon] (0.8*1.5,0.8) circle [radius=0.025] node[right] {\quad simple upper bound};

    \draw[fill,SpringGreen] ({sqrt(4/7)*1.5},{4/7*0.2+0.3}) circle [radius=0.025];
    \draw[fill,SpringGreen] (0.8*1.5,0.65) circle [radius=0.025] node[right] {\quad strong upper bound};

\end{tikzpicture}
    \caption{Relationships of function $f$, simple upper bound, and strong upper bound in case $(2,2)$. We only show a section of figure on line $\alpha_1=\beta_1$. The linearization approach only applies on max terms. The simple upper bound only considers the boundary (i.e., two endpoints) while the strong upper bound considers the whole domain (i.e., the whole interval).}
    \label{fig:ubc}
\end{figure}
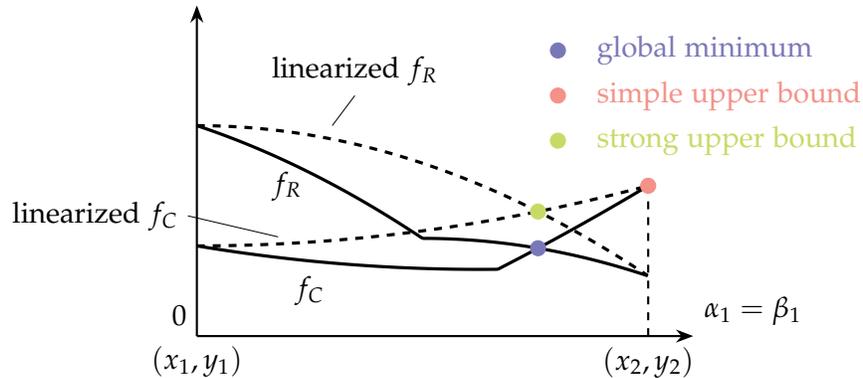

\begin{proposition}[simple $(v,w)$-upper bound constructor]\label{prop:v-w-aux-mixing}
Consider strategies $x_1, \dots, x_v\in \Delta_m$, $y_1,\dots, y_w\in \Delta_n$. On the region $\mathcal{A}=\{(\alpha_1 x_1+ \dots +\alpha_v x_v,\beta_1 y_1+ \dots +\beta_w y_w), \alpha\in \Delta_v, \beta\in \Delta_w\}$, denote the minimum of $f$ by $f_*^{\mathcal{A}}$. 

An upper bound of $f_*^{\mathcal{A}}$ is given by the minimum of $f$ over all the faces of $\mathcal{A}$, given by 
following $v+w$ regions:
\[
\begin{aligned}
    \{(x_i,\beta_1 y_1+ \dots +\beta_w y_w), \beta\in \Delta_w\},\quad i\in [v],\\
    \{(\alpha_1 x_1+ \dots +\alpha_v x_v, y_j), \alpha\in \Delta_v\},\quad j\in [w].
\end{aligned}
\]
Moreover, on each region, the minimum of $f$ is bounded by the output of the $(1,w)$ or $(v,1)$-upper bound constructor.
\end{proposition}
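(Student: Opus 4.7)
The plan is to prove the two assertions of the proposition separately. The first assertion, that $f^{\mathcal{A}}_{*}$ is bounded above by the minimum of $f$ over the union of the $v+w$ specified regions, follows from the trivial fact that the minimum of any function over a larger set is at most its minimum over any subset. Since each of the $v+w$ specified regions is a subset of $\mathcal{A}$ (obtained by fixing either $\alpha=e_i$ for some $i\in[v]$ or $\beta=e_j$ for some $j\in[w]$), the minimum of $f$ over their union is an upper bound for $f^{\mathcal{A}}_{*}$. This step requires no additional machinery.

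For the second assertion, I would treat each type of region separately. Consider a region of the first type, $\mathcal{A}_i=\{(x_i,\beta_1 y_1+\dots+\beta_w y_w):\beta\in\Delta_w\}$. Observe that this region is itself a $(1,w)$-mixing region in the sense of \Cref{prop:1-w-aux-mixing}, where $x_1$ there is set to $x_i$ here. The hypotheses of \Cref{prop:1-w-aux-mixing} only require strategies in $\Delta_m$ and $\Delta_n$, and the key convexity/linearity observations in its proof (namely, $f_C(x_i,\cdot)$ being linear in $\beta$ and $f_R(x_i,\cdot)$ being convex in $\beta$) depend only on the structure of $f_R$ and $f_C$, not on any special property of the underlying strategies. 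Hence \Cref{prop:1-w-aux-mixing} applies directly and bounds the minimum of $f$ on $\mathcal{A}_i$ by the output of the $(1,w)$-upper bound constructor with input $x_i,y_1,\dots,y_w$. Symmetrically, on a region of the second type, $\mathcal{A}^j=\{(\alpha_1 x_1+\dots+\alpha_v x_v,y_j):\alpha\in\Delta_v\}$, we swap the roles of the two players: by the symmetric counterpart of \Cref{prop:1-w-aux-mixing} (which holds by exchanging $R,C$ and the two marginals), we obtain the $(v,1)$-upper bound constructor bound with input $y_j,x_1,\dots,x_v$.

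Combining the two steps, $f^{\mathcal{A}}_{*}$ is at most the minimum over all $v+w$ regions, and on each region the minimum of $f$ is at most the output of the corresponding $(1,w)$ or $(v,1)$-upper bound constructor, which is exactly what the proposition asserts. I do not anticipate any serious obstacle here: the argument is essentially bookkeeping plus one invocation of the previously established $(1,w)$ case together with its mirror image. The only subtlety worth a sentence in the written proof is to note that the $(v,1)$-upper bound constructor is obtained from \Cref{prop:1-w-aux-mixing} by swapping the roles of the row and column players, so that no separate proof is needed.
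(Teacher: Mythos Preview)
Your proposal is correct and follows essentially the same approach as the paper: the paper also uses the inclusion of the $v+w$ facets in $\mathcal{A}$ to get $f_*^{\mathcal{A}}\leq f_*^{\partial\mathcal{A}}$, and then applies \Cref{prop:1-w-aux-mixing} (and its symmetric $(v,1)$ version) on each facet. Your explicit remark that the $(v,1)$ case is obtained by swapping the roles of the two players is a small clarification the paper leaves implicit.
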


\begin{proof}
   Denote the boundary of $\mathcal{A}$ by $\partial \mathcal{A}$. By the property of simplex, $\partial \mathcal{A}$ is composed of the facets of $\mathcal{A}$, namely $\{(x_i,\beta_1 y_1+ \dots +\beta_w y_w):\beta\in \Delta_w\}$, $i\in [v]$ and $\{(\alpha_1 x_1+ \dots +\alpha_v x_v, y_j):\alpha\in \Delta_v\}$, $j\in [w]$. Since $\partial \mathcal{A}\subseteq \mathcal{A}$, $f_*^{\mathcal{A}}\leq f_*^{\partial \mathcal{A}}$.   
    
    Now, on each facet, directly apply the results in \Cref{prop:1-w-aux-mixing} and obtain an upper bound given by a linear program. Combine all these upper bounds, we obtain an upper bound of $f_*^{\partial \mathcal{A}}$, which is also an upper bound of $f_*^{\mathcal{A}}$.
\end{proof}

\begin{remark}
\label{rk:linear-decomposition+}
Due to \Cref{rk:linear-decomposition}, by further decomposing the $(1,w)$ and $(v,1)$-upper bound constructors, the $(v,w)$-upper bound constructor can be decomposed into $(1,2)$ or $(2,1)$-upper bound constructors over all edges of the domain.
\end{remark}


\begin{proposition}[strong $(v,w)$-upper bound constructor]\label{prop:v-w-strong-aux-mixing}
Consider strategies $x_1, \dots, x_v\in \Delta_m$, $y_1,\dots, y_w\in \Delta_n$. On the region $\mathcal{A}=\{(\alpha_1 x_1+ \dots +\alpha_v x_v,\beta_1 y_1+ \dots +\beta_w y_w), \alpha\in \Delta_v, \beta\in \Delta_w\}$, denote the minimum of $f$ by $f_*^{\mathcal{A}}$. An upper bound of $f_*^{\mathcal{A}}$ is given by the following program:
\[
\begin{aligned}
    \min & \quad h\\
    \text{s.t.} & \quad h \geq \sum_{i=1}^v \sum_{j=1}^w \alpha_i \beta_j f_R(x_i,y_j),\\
    & \quad h \geq \sum_{i=1}^v \sum_{j=1}^w \alpha_i \beta_j f_C(x_i,y_j),\\
    &\quad \alpha \in \Delta_w,\\
    &\quad \beta \in \Delta_v.
\end{aligned}
\]
\end{proposition}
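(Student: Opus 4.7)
The plan is to establish a pointwise inequality: for every $\alpha\in\Delta_v$ and $\beta\in\Delta_w$, the value of $f$ at the corresponding point of $\mathcal{A}$ is bounded above by both $\sum_{i,j}\alpha_i\beta_j f_R(x_i,y_j)$ and $\sum_{i,j}\alpha_i\beta_j f_C(x_i,y_j)$. Once this holds pointwise, any feasible triple $(\alpha,\beta,h)$ of the program witnesses $f\leq h$ at the corresponding mixed profile; in particular, the optimum $h^*$ supplies a point in $\mathcal{A}$ whose $f$ value is at most $h^*$, giving $f_*^{\mathcal{A}}\leq h^*$.

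The heart of the argument is the pointwise bound on $f_R$; the bound on $f_C$ is symmetric. Setting $x:=\sum_i \alpha_i x_i$ and $y:=\sum_j \beta_j y_j$, I would expand
\[
f_R(x,y)\ =\ \max\{Ry\}\ -\ x^\T R y\ =\ \max\Bigl\{\sum_j \beta_j R y_j\Bigr\}\ -\ \sum_{i,j}\alpha_i\beta_j\, x_i^\T R y_j.
\]
Unlike in the simple constructor of \Cref{prop:v-w-aux-mixing}, I would \emph{not} linearize the bilinear term: keeping $x^\T R y$ exact is precisely what makes this bound strictly tighter than the simple one (compare the salmon and green markers in \Cref{fig:ubc}). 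The only linearization is applied to the max term: by convexity of $\max\{\cdot\}$ on $\R^m$ (this is exactly \Cref{details:prop:linear-tight} applied to the convex function $\beta\mapsto\max\{R(\sum_j\beta_j y_j)\}$) we get $\max\{Ry\}\leq \sum_j \beta_j \max\{R y_j\}$. Because $\sum_i\alpha_i=1$, this majorant can be rewritten as $\sum_{i,j}\alpha_i\beta_j \max\{R y_j\}$, which combined with the exact bilinear term yields
\[
f_R(x,y)\ \leq\ \sum_{i,j}\alpha_i\beta_j\,\bigl(\max\{R y_j\}-x_i^\T R y_j\bigr)\ =\ \sum_{i,j}\alpha_i\beta_j\, f_R(x_i,y_j).
\]
The analogous bound for $f_C$ comes from linearizing $\max\{C^\T x\}$ in $\alpha$ and using $\sum_j\beta_j=1$; taking the pointwise maximum then finishes the proposition.

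I do not anticipate a serious obstacle: the entire argument rests on a single use of convexity of $\max$ together with the identities $\sum_i\alpha_i=\sum_j\beta_j=1$. The only point of care is the restraint from linearizing the bilinear term, since doing so would collapse the constructor back into the simple one of \Cref{prop:v-w-aux-mixing} (as formalized in \Cref{rk:linear-decomposition+}) and discard the improvement illustrated in \Cref{fig:ubc}. Consequently the resulting program carries two genuinely bilinear constraints in $(\alpha,\beta)$ rather than linear ones, which matches the QCLP description in \Cref{details:subsubsec:brief-overview-auxiliary-design}.
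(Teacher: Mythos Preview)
Your proof is correct and follows essentially the same route as the paper: linearize only the $\max$ term via convexity while keeping the bilinear part $x^\T R y$ exact, then combine with $\sum_i\alpha_i=\sum_j\beta_j=1$ to obtain $f_R(x,y)\leq\sum_{i,j}\alpha_i\beta_j f_R(x_i,y_j)$ (and symmetrically for $f_C$). One small wording slip in your opening plan: it is $f_R$ that is bounded by the $f_R$-sum and $f_C$ by the $f_C$-sum, not $f$ by both; the conclusion $f\leq h$ then follows because any feasible $h$ dominates the maximum of the two sums, exactly as you use it afterward.
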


\begin{proof}
The idea is to only apply the linearization on the max terms. For any point in $\mathcal{A}$, we have:
\[
\begin{aligned}
    &f_R(\alpha_1 x_1+ \dots +\alpha_v x_v,\beta_1 y_1+ \dots +\beta_w y_w) \\
    = &\max\{ R(\beta_1 y_1+ \dots +\beta_w y_w)\}-(\alpha_1 x_1+ \dots +\alpha_v x_v)^\T R (\beta_1 y_1+ \dots +\beta_w y_w)\\
     \leq & \beta_1 \max\{R y_1\}+ \dots +\beta_w \max\{R y_w\}-(\alpha_1 x_1+ \dots +\alpha_v x_v)^\T R (\beta_1 y_1+ \dots +\beta_w y_w)\\
    = &\sum_{i=1}^v \sum_{j=1}^w \alpha_i \beta_j f_R(x_i,y_j).
\end{aligned}
\]

Similarly, we have $f_C(\alpha_1 x_1+ \dots +\alpha_v x_v,\beta_1 y_1+ \dots +\beta_w y_w)\leq \sum_{i=1}^v \sum_{j=1}^w \alpha_i \beta_j f_C(x_i,y_j)$. Thus, $f_*^{\mathcal{A}}=\min_{(x,y)\in \mathcal{A}} \max\{ f_R(x,y), f_C(x,y)\}$ is bounded by the optimal value of the program above.
\end{proof}

Note that the upper bound constructed above only depends on the $f_R, f_C$ values on the vertices of the domain, i.e., the strategy profiles given by the search phase. However, this QCLP is much harder to solve than LP in \Cref{prop:v-w-aux-mixing}.

\begin{remark}\label{remark:simple-and-strong-rel}
The relationship between simple upper bound constructors and the strong ones is as follows. Set $\beta_j=1$ and other $\beta_i$'s to $0$. Then constraints of the program in \Cref{prop:v-w-strong-aux-mixing} becomes $h\geq\sum_{i=1}^v\alpha_i f_R(x_i,y_j)$, $h\geq \sum_{i=1}^v\alpha_i f_R(x_i,y_j)$, and $\alpha\in\Delta_v$, which are exactly constraints in the program of simple $(w,1)$-upper bound constructor (\Cref{prop:v-w-aux-mixing}). Thus, the simple upper bound constructor is a strong upper bound constructor additionally constrained on the boundary of the domain.
\end{remark}

\begin{remark}\label{remark:2-2-contructor-with-eq}
A more practical consideration of the strong $(2,2)$-upper bound constructor is to add an additional constraint $\alpha=\beta$ in the program of \Cref{prop:v-w-strong-aux-mixing}. With more constraints, the optimal $h$ in \Cref{prop:v-w-strong-aux-mixing} can only be larger, which is thus again an upper bound. As a return, the program now becomes a univariate program, which is much easier to solve.
\end{remark}

\begin{remark}
In the spirit of \Cref{prop:v-w-aux-mixing}, we can also have a hybrid version of \Cref{prop:v-w-strong-aux-mixing}. For example, a hybrid strong $(w,v)$-upper bound constructor can be given by strong $(2,2)$-upper bound constructors as follows. Suppose the region to be minimized on is $\mathcal A=\{(\alpha_1 x_1+ \dots +\alpha_v x_v,\beta_1 y_1+ \dots +\beta_w y_w), \alpha\in \Delta_v, \beta\in \Delta_w\}$. Then an upper bound of $f_*^\mathcal A$ is given by the minimum of upper bounds on the following regions: $\mathcal A_{i_1,i_2,j_1,j_2}=\{(\alpha x_{i_1}+(1-\alpha) x_{i_2},\beta y_{j_1}+ (1-\beta) y_{j_2}), \alpha\in [0,1], \beta\in [0,1]\}$, where $i_1,i_2\in[v]$ and $j_1,j_2\in[w]$.
\end{remark}

\subsection{Analyzing an upper bound constructor}
\label{details:subsec:auxiliary-minmax}

After the intuitive linear programming description of the upper bound constructor, now we compute the explicit form of $(v,w)$-upper bound.

From \Cref{rk:linear-decomposition+}, we know that any upper bound constructor can be viewed as taking the minimum of the output of several $(1,2)$-upper bound constructors. Therefore, we only need to present the explicit solution of the $(1,2)$-upper bound constructor.

\begin{proposition}
\label{prop:minmax-mixing}
Given strategies $x_1\in \Delta_m$ and $y_1,y_2\in \Delta_n$, denote $a_R=f_R(x_1,y_1)$,  $a_C=f_C(x_1,y_1)$, $b_R=f_R(x_1,y_2)$, $b_C=f_C(x_1,y_2)$. The solution $h^*$ of $(1,2)$-upper bound constructor on strategy is:
\begin{enumerate}
\item If $a_C\leq a_R$ and $b_C\leq b_R$, then $h^*= \min\{a_R,b_R\}$.
\item If $a_R\leq a_C$ and $b_R\leq b_C$, then $h^*= \min\{a_C,b_C\}$.
\item If $a_R+b_C-a_C-b_R=0$, then $h^*=\min\{\max\{a_C,b_C\},\max\{a_R,b_R\}\}$.
\item Otherwise, 
$h^*=\frac{a_Rb_C-a_Cb_R}{a_R+b_C-a_C-b_R}$.
\end{enumerate}

Conversely, when $f_R$ and $f_C$ are both linear in parameter $t$, the equalities of $f_*^{AB}$ in all three cases hold.
\end{proposition}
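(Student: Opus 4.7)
The plan is to translate the constructor of Proposition~\ref{prop:1-w-aux-mixing} in the case $w=2$ into a one-dimensional scalar problem. Writing $\alpha=(t,1-t)$ with $t\in[0,1]$, the LP becomes
\[
h^*\;=\;\min_{t\in[0,1]}\phi(t),\qquad \phi(t):=\max\bigl\{L_R(t),\,L_C(t)\bigr\},
\]
where $L_R(t):=t\,a_R+(1-t)\,b_R$ and $L_C(t):=t\,a_C+(1-t)\,b_C$. Both $L_R$ and $L_C$ are affine in $t$, so $\phi$ is piecewise-linear and convex on $[0,1]$, and its minimizer is either an endpoint of $[0,1]$ or a point where $L_R=L_C$. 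The entire proof amounts to identifying which of these possibilities applies under each hypothesis.

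For cases 1 and 2, I use that $L_R-L_C$ is itself affine, hence stays nonnegative on all of $[0,1]$ if and only if it is nonnegative at both endpoints. That equivalence turns "$a_R\geq a_C$ and $b_R\geq b_C$" into "$L_R\geq L_C$ on $[0,1]$", so $\phi$ collapses to the affine function $L_R$ and its minimum is $\min\{a_R,b_R\}$. Case 2 is the symmetric statement. When neither case applies, a short sign analysis of $a_R-a_C$ and $b_R-b_C$ rules out the two logically contradictory sign patterns and leaves only configurations in which $L_R-L_C$ changes strict sign between the two endpoints.

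For case 4, the hypothesis $a_R+b_C-a_C-b_R\neq 0$ is exactly the slope of $L_R-L_C$ being nonzero, so $L_R=L_C$ has a unique real root
\[
t^*\;=\;\frac{b_C-b_R}{a_R+b_C-a_C-b_R}.
\]
Combined with the sign-change observation from the previous paragraph, $t^*\in(0,1)$, so convexity of $\phi$ places the minimum at $t^*$. Substituting back into $L_R(t^*)$ and simplifying yields the claimed value $(a_Rb_C-a_Cb_R)/(a_R+b_C-a_C-b_R)$. Case 3 is the degenerate parallel-lines situation: the slope condition forces $L_R-L_C$ to be a constant equal to both $a_R-a_C$ and $b_R-b_C$; if that constant is nonzero, one of case 1 or case 2 already applies, and if it is zero then $L_R\equiv L_C$ and $\phi$ coincides with that common affine function. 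In either subcase the minimum of $\phi$ on $[0,1]$ can be read off directly at the endpoints, giving the stated min-of-maxes expression.

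For the converse tightness claim, I appeal to Proposition~\ref{details:prop:linear-tight}: whenever $f_R$ and $f_C$ are themselves linear in the edge parameter, their linearized upper bounds become pointwise equalities, so $\max\{f_R,f_C\}\equiv\phi$ on the whole edge and the LP optimum $h^*$ coincides with the true minimum $f_*^{AB}$. The main obstacle I anticipate is purely combinatorial rather than computational: checking that cases 1–4 patch together consistently, in particular that $t^*\in(0,1)$ precisely when cases 1 and 2 both fail and case 3 does not apply. This is resolved by the four-way sign case analysis above; the remaining algebra (evaluating $L_R(t^*)$ and reconciling endpoint formulas) is routine.
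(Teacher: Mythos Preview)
Your proposal is correct and follows essentially the same route as the paper's own proof: both rewrite the $(1,2)$ constructor as $\min_{t\in[0,1]}\max\{L_R(t),L_C(t)\}$ for the two affine interpolants, then carry out the same sign analysis on $L_R-L_C$ at the endpoints, locating the crossing point $t^*=(b_C-b_R)/(a_R+b_C-a_C-b_R)$ in the non-parallel case and evaluating $L_R(t^*)$. The paper packages this inside a slightly more general statement (an arbitrary segment $AB$ under a concavity hypothesis on the bilinear part) before specializing, but the computation of $h^*$ and the endpoint/intersection case split are identical to yours; if anything, you are more explicit than the paper about why $t^*\in(0,1)$ in case~4.
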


The proof is given in \Cref{app:minmax-mixing}. 

For (hybrid) strong upper bound constructors, the explicit form also exists and is purely given by the $f_R, f_C$ values over the vertices of the domain. Of course, it can be solved by a symbolic calculator, such as Mathematica. However, there is no neat expression for general cases. For practical use, we almost always use hybrid upper bound constructors given by the strong $(2,2)$-upper bound constructor in with an additional constraint $\alpha=\beta$, as stated in \Cref{remark:2-2-contructor-with-eq}. Now the optimal $h$ is in the form of $\min_{\alpha\in[0,1]}\max\{a_1\alpha^2+a_2\alpha+a_3,b_1\alpha^2+b_2\alpha+b_3\}$, i.e., the minimum value of the maximum of two univariate quadratic functions. Using \Cref{details:cor:mincol}, the minimum value of this program can be easily solved even by hand.

\subsection{Examples}
\label{details:subsec:aux-example}
In this part, we analyze the approximation of the modified TS algorithm in \Cref{example:modified-TS} and the modified DFM algorithm in \Cref{example:modified-DFM}. 

\begin{example}[Approximation analysis of the modified TS algorithm]
\label{ex-TSalgo}
In this example, we analyze the approximation bound of the modified TS algorithm in \Cref{example:modified-TS}. Additionally, we provide a tight instance for the modified TS algorithm. The mixing region is $\mathcal{A}:= \{(\alpha x_s+(1-\alpha) w, \beta y^*+(1-\beta) b_2): \alpha, \beta\in[0,1]]\}$. Now, we derive an upper bound of $f^*_\mathcal{A}$ following the same procedure as in \Cref{sec:approx-anal}.

\vspace{10pt}
\textbf{Step 1: Construct the upper bound.}

First, give the name alias of the vertices. Denote $\za=f_R(x_s,y_s)$, $\zb=f_C(x_s,y_s)$, $\zc=f_R(x_s,z)$, $\zd=f_C(x_s,z)$, $\ze=f_R(w,y_s)$, $\zf=f_C(w,y_s)$, $\zg=f_R(w,z)$, $\zh=f_C(w,z)$. The positions of them are illustrated in \Cref{fig:square}.

\begin{figure}[ht]
    \centering
    \includegraphics[width=0.6\textwidth]{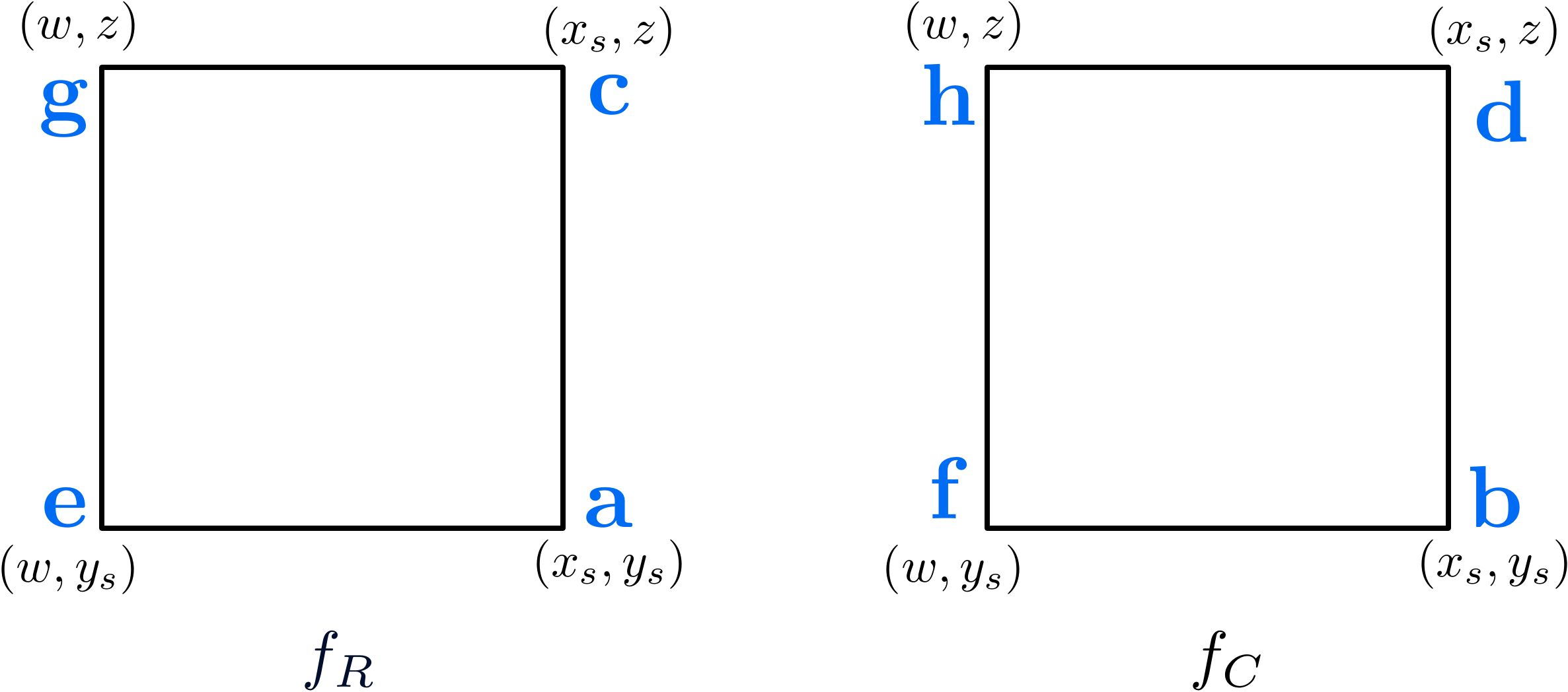}
    \caption{Positions of little letter aliases in the modified TS algorithm.}
    \label{fig:square}
\end{figure}

Then, by directly taking $v=w=2$ in \Cref{prop:v-w-aux-mixing} and \Cref{rk:linear-decomposition}, we can derive the $(2,2)$-upper bound constructor. Then $f_\mathcal A^*$ is bounded by $h^*$ given by the optimal value of the following optimization problem:
\[
\begin{aligned}
    \min_{\alpha_1,\alpha_2,\alpha_3,\alpha_4}&\quad \min\{b_1,b_2,b_3,b_4\}\\
    \text{s.t.} &\quad b_1 \geq \alpha_1 \za + (1-\alpha_1)\ze, b_1 \geq \alpha_1 \zb + (1-\alpha_1)\zf,\\
    &\quad b_2 \geq \alpha_2 \za + (1-\alpha_2)\zc, b_2 \geq \alpha_2 \zb + (1-\alpha_2)\zd,\\
    &\quad b_3 \geq \alpha_3 \zg + (1-\alpha_3)\zc, b_3 \geq \alpha_3 \zh + (1-\alpha_3)\zd,\\
    &\quad b_4 \geq \alpha_4 \zg + (1-\alpha_4)\ze, b_4 \geq \alpha_4 \zh + (1-\alpha_4)\zf,\\
    &\quad 0\leq \alpha_1, \alpha_2, \alpha_3, \alpha_4 \leq 1.
\end{aligned}
\]

\vspace{10pt}
\textbf{Step 2: Extract relations from the search phase.}

Although the descent algorithm in fact finds a $\delta$-stationary point, we actually consider stationary points when we make the bound analysis. The upper bound will increase by at most $\delta$. In this way, we can simplify notations.

For the strategies $x_s$, $y_s$, $w$, and $z$ given by the search phase of the TS algorithm, we can prove the following properties:

\begin{lemma}[Proposition 1 in \cite{CDH+21_0.3393tight}]
\label{lemma:strongineq}
\begin{align*}
    f_R(x_s,y_s)=f_C(x_s,y_s)=f(x_s,y_s)
    \leq &\rho(w^\T R y'-(x')^\T Ry_s-x_s^\T R y'+x_s^\T R y_s)+\\
    &\quad(1-\rho)((x')^\T C z-(x')^\T C y_s-x_s^\T C y'+x_s^\T C y_s).
\end{align*}
\end{lemma}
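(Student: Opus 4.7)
The plan is to exploit that $(x_s, y_s)$ is a stationary point of $f$: by construction the Dini directional derivative $Df(x_s, y_s, x', y') \geq 0$ for every $(x', y') \in \Delta_m \times \Delta_n$. I will first establish the two equalities $f_R(x_s, y_s) = f_C(x_s, y_s) = f(x_s, y_s)$ and then deduce the main inequality by an LP/minimax duality argument, which will simultaneously reveal how the dual triple $(\rho, w, z)$ arises.

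For the equalities I argue by contradiction. Assume without loss of generality $f_R(x_s, y_s) > f_C(x_s, y_s)$. By continuity, $f \equiv f_R$ on a neighborhood of $(x_s, y_s)$, so locally $Df$ coincides with $Df_R$. Choosing any $x' \in \br_R(y_s)$ and moving along $(x' - x_s, 0)$, the inner maximum $\max\{R y_s\}$ is unchanged and only the bilinear term contributes, yielding $Df_R(x_s, y_s, x', y_s) = -(x' - x_s)^\T R y_s = -f_R(x_s, y_s) < 0$, which contradicts stationarity. The symmetric case is analogous, so $f_R(x_s, y_s) = f_C(x_s, y_s) = f(x_s, y_s)$.

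Let $v := f(x_s, y_s)$. A direct expansion of the Dini derivative, using that $\br_R(y_s)$ is the convex hull of $\{e_i : i \in \suppmax(R y_s)\}$, gives
\[
Df_R(x_s, y_s, x', y') + v = \max_{w \in \br_R(y_s)} \bigl[w^\T R y' - (x')^\T R y_s - x_s^\T R y' + x_s^\T R y_s\bigr],
\]
and symmetrically
\[
Df_C(x_s, y_s, x', y') + v = \max_{z \in \br_C(x_s)} \bigl[(x')^\T C z - (x')^\T C y_s - x_s^\T C y' + x_s^\T C y_s\bigr].
\]
Because $f_R = f_C = v$ at the stationary point, $Df = \max\{Df_R, Df_C\}$, so the stationarity condition $\min_{x', y'} Df \geq 0$ rewrites as
\[
\min_{x', y'}\; \max_{\rho \in [0,1],\, w \in \br_R(y_s),\, z \in \br_C(x_s)} \bigl[\rho\,\phi_R(x', y', w) + (1-\rho)\,\phi_C(x', y', z)\bigr] \;\geq\; v,
\]
where $\phi_R, \phi_C$ denote the bracketed right-hand sides above. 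The integrand is linear in $(x', y')$ for fixed $(\rho, w, z)$ and linear in $(\rho, w, z)$ for fixed $(x', y')$, while both feasible sets are compact and convex; Sion's minimax theorem therefore permits swapping $\min$ and $\max$, yielding a saddle $(\rho^*, w^*, z^*)$ for which the inner minimum over $(x', y')$ is still at least $v$. Reading this back for arbitrary $(x', y') \in \Delta_m \times \Delta_n$ produces precisely the inequality in the statement.

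The main obstacle I anticipate is not any single calculation but the identification of the minimax saddle maximizers $(\rho^*, w^*, z^*)$ with the dual triple furnished by the search phase's steepest-descent linear program. One must write that LP explicitly and check that its dual has exactly the feasible region $[0,1] \times \br_R(y_s) \times \br_C(x_s)$ with objective matching $\rho\,\phi_R + (1-\rho)\,\phi_C$. Once this correspondence is in place, the $\delta$-stationary version of the statement is recovered by absorbing an additive $\delta$ on the right-hand side, and the entire argument is robust under swapping the roles of the two players.
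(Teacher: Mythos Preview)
The paper itself does not prove this lemma; it is quoted from \cite{CDH+21_0.3393tight}. Your overall route---compute the Dini derivatives, use stationarity to get $\min_{x',y'}\max\{Df_R,Df_C\}\geq 0$, then extract a dual triple $(\rho,w,z)$ via a minimax/LP argument---is exactly the standard one, and your formulas $Df_R+v=\max_{w\in\br_R(y_s)}\phi_R$ and $Df_C+v=\max_{z\in\br_C(x_s)}\phi_C$ are correct, as is the contradiction argument for $f_R(x_s,y_s)=f_C(x_s,y_s)$.

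There is one genuine slip. The claim that $\rho\,\phi_R(x',y',w)+(1-\rho)\,\phi_C(x',y',z)$ is ``linear in $(\rho,w,z)$'' is false: the terms $\rho\,\phi_R(w)$ and $(1-\rho)\,\phi_C(z)$ are bilinear, and the resulting function is not even quasiconcave on $[0,1]\times\br_R(y_s)\times\br_C(x_s)$ (e.g.\ take $\phi_R(w)=w$, $\phi_C(z)=z$ scalar; the superlevel set $\{\rho w+(1-\rho)z\geq 1\}$ contains $(1,1,0)$ and $(0,0,1)$ but not their midpoint). So Sion's theorem does not apply on that domain. The fix is precisely what you anticipate in your last paragraph: pass to the LP formulation. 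Equivalently, reparametrize by nonnegative weights $\mu_i$ ($i\in\suppmax(Ry_s)$), $\nu_j$ ($j\in\suppmax(C^\T x_s)$) with $\sum_i\mu_i+\sum_j\nu_j=1$; the objective $\sum_i\mu_i\,\phi_R(x',y',e_i)+\sum_j\nu_j\,\phi_C(x',y',e_j)$ is then genuinely bilinear over a simplex times a simplex, von Neumann's minimax theorem applies, and the optimal $(\mu^*,\nu^*)$ yields $\rho=\sum_i\mu_i^*$, $w=\mu^*/\rho$, $z=\nu^*/(1-\rho)$. With this correction your argument is complete and, by LP duality, the $(\rho,w,z)$ so obtained coincides with the dual solution of the steepest-descent LP used in the search phase.
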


We can also rewrite it in the form of objective functions:
\begin{align*}
    f(x_s,y_s)\leq &\rho(f_R(x_s,y')-f_R(w,y')+f_R(x',y_s)-f_R(x_s,y_s))+\\
    &\quad(1-\rho)(f_C(x',y_s)-f_C(x',z)+f_C(x_s,y')-f_C(x_s,y_s)).
\end{align*}

Define 
\begin{align*}
    F_I(\alpha, \beta) &:= f_I\left(\alpha w+(1-\alpha) x_s, \beta z+(1-\beta) y_s\right), \quad I \in \{R, C\}, \\
    F(\alpha, \beta) &:= \max\left\{F_R(\alpha, \beta), F_C(\alpha, \beta)\right\},
\end{align*}
where $\alpha, \beta \in [0,1]$. Then we can directly use the coordinates $(\alpha, \beta)$ of each point to express the value of $f$ on it. The following lemmas can be derived from \Cref{lemma:strongineq}:

\begin{lemma}\label{lemma:frfc}
\begin{enumerate}
    \item $F(0,0) = F_R(0,0) = F_C(0,0)$.
    \item $F_R(1,0) = F_C(0,1) = 0$.
    \item $0 \leq F_I(\alpha, \beta) \leq 1$, for $I \in \{R, C\}$ and $\forall \alpha, \beta \in [0,1]$.
\end{enumerate}
\end{lemma}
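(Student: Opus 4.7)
The plan is to verify the three items one at a time by unwinding the definition $F_I(\alpha,\beta)=f_I(\alpha w+(1-\alpha)x_s,\,\beta z+(1-\beta)y_s)$ and invoking the specific properties of the four vertex strategies delivered by the search phase of the TS algorithm.

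For item 1, I would evaluate $F_R(0,0)=f_R(x_s,y_s)$ and $F_C(0,0)=f_C(x_s,y_s)$. The equality $f_R(x_s,y_s)=f_C(x_s,y_s)$ is exactly the first clause of \Cref{lemma:strongineq}, which encodes the first-order (stationary-point) condition at the point returned by the descent procedure; once I simplify by treating the $\delta$-stationary point as a true stationary point as flagged in the text above, this gives the claim immediately, and then $F(0,0)=\max\{F_R(0,0),F_C(0,0)\}$ trivially equals either of the two.

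For item 2, I would use the fact, recorded in the paragraph preceding \Cref{example:modified-TS}, that the dual of the steepest-descent linear program outputs $w\in\br_R(y_s)$ and $z\in\br_C(x_s)$. Then $F_R(1,0)=\max\{Ry_s\}-w^{\T}Ry_s=0$ by the defining property of a best response, and symmetrically $F_C(0,1)=\max\{C^{\T}x_s\}-x_s^{\T}Cz=0$.

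For item 3, I would simply observe that under the standing normalization $R,C\in[0,1]^{m\times n}$, both $\max\{Ry\},\,x^{\T}Ry\in[0,1]$ for every mixed strategy pair $(x,y)$, so $f_R(x,y)\in[0,1]$; the lower bound comes for free from the definition of the regret as a maximum minus a payoff attained by one feasible strategy. The analogous bound holds for $f_C$. Since $(\alpha w+(1-\alpha)x_s,\,\beta z+(1-\beta)y_s)\in\Delta_m\times\Delta_n$ for every $(\alpha,\beta)\in[0,1]^2$, the bound transfers verbatim to $F_I$.

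The only step that carries any genuine content is item 1, whose backbone is \Cref{lemma:strongineq} borrowed from \cite{CDH+21_0.3393tight}; items 2 and 3 are pure bookkeeping against the definitions of best response and of the normalized regret functions.
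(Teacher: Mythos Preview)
Your proposal is correct and matches the paper's intent: the paper does not spell out a proof, stating only that the lemma ``can be derived from \Cref{lemma:strongineq}'', and you have unpacked exactly that derivation---using the equality clause of \Cref{lemma:strongineq} for item~1, the best-response membership $w\in\br_R(y_s)$, $z\in\br_C(x_s)$ for item~2, and the $[0,1]$ payoff normalization for item~3.
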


\begin{lemma}[Lemma 2 in \cite{CDH+21_0.3393tight}]\label{lemma:increasing}
For the functions $F_R$ and $F_C$ defined above:

\begin{enumerate}
\item Given $\beta$, $F_{C}(\alpha, \beta)$ is increasing, convex, and linear-piecewise in $\alpha$; $F_{R}(\alpha, \beta)$ is decreasing and linear in $\alpha$.

\item Given $\alpha$, $F_{R}(\alpha, \beta)$ is increasing, convex, and linear-piecewise in $\beta$; $F_{C}(\alpha, \beta)$ is decreasing and linear in $\beta$.
\end{enumerate}
\end{lemma}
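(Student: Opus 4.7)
The plan is to write $x(\alpha)=\alpha w+(1-\alpha)x_s$ and $y(\beta)=\beta z+(1-\beta)y_s$, and to expand
\[
F_R(\alpha,\beta)=\max\{Ry(\beta)\}-x(\alpha)^\T Ry(\beta),\qquad F_C(\alpha,\beta)=\max\{C^\T x(\alpha)\}-x(\alpha)^\T Cy(\beta).
\]
The bilinear term $x(\alpha)^\T M y(\beta)$ is affine in each variable separately, while $\max\{Ry(\beta)\}=\max_i\bigl(\beta(Rz)_i+(1-\beta)(Ry_s)_i\bigr)$ is independent of $\alpha$ and a pointwise maximum of affine functions of $\beta$, hence convex and piecewise linear in $\beta$; symmetrically $\max\{C^\T x(\alpha)\}$ is independent of $\beta$ and convex, piecewise linear in $\alpha$. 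Combining these observations will immediately yield all four shape claims: $F_R(\cdot,\beta)$ is linear in $\alpha$, $F_R(\alpha,\cdot)$ is convex and piecewise linear in $\beta$, and symmetrically for $F_C$.

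\textbf{Decreasing directions via stationarity.} I will assume $f(x_s,y_s)>0$, for otherwise $(x_s,y_s)$ is already a Nash equilibrium and the lemma is trivial. To obtain that $F_R(\cdot,\beta)$ is decreasing, I will substitute $x'=x_s$ and $y'=y(\beta)$ into \Cref{lemma:strongineq}. After using $x_s^\T C(z-y(\beta))=(1-\beta)f_C(x_s,y_s)$ together with $f_R(x_s,y_s)=f_C(x_s,y_s)=f(x_s,y_s)$, the inequality should rearrange to
\[
f(x_s,y_s)\bigl[\rho+\beta(1-\rho)\bigr]\leq \rho\,(w-x_s)^\T R y(\beta),
\]
forcing $\rho>0$ and giving $(w-x_s)^\T Ry(\beta)\geq 0$ for every $\beta\in[0,1]$. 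Since $\partial F_R/\partial\alpha=-(w-x_s)^\T Ry(\beta)$, the function $F_R(\cdot,\beta)$ is then non-increasing. A symmetric substitution $y'=y_s$, $x'=x(\alpha)$ will yield $x(\alpha)^\T C(z-y_s)\geq 0$, and hence $F_C(\alpha,\cdot)$ is non-increasing in $\beta$.

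\textbf{Increasing directions and main obstacle.} For $F_C(\cdot,\beta)$ to be increasing in $\alpha$, by the convexity established above it will suffice to check that the right derivative at $\alpha=0$ is nonnegative. Since $z\in\br_C(x_s)$ is supported on the argmax set of $C^\T x_s$, the right derivative of $\max\{C^\T x(\alpha)\}$ at $\alpha=0$ is bounded below by $(w-x_s)^\T Cz$; thus the right derivative of $F_C(\cdot,\beta)$ at $\alpha=0$ is at least $(1-\beta)\bigl[(w^\T Cz-w^\T Cy_s)-f_C(x_s,y_s)\bigr]$. Applying \Cref{lemma:strongineq} with $x'=w$ and $y'=y_s$ (so that the row block vanishes) will reduce to $f(x_s,y_s)\leq(1-\rho)(w^\T Cz-w^\T Cy_s)$, which delivers $w^\T Cz-w^\T Cy_s\geq f_C(x_s,y_s)$ and hence the desired nonnegativity. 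The monotonicity of $F_R(\alpha,\cdot)$ in $\beta$ will follow by the symmetric substitution $x'=x_s$, $y'=z$. I expect this increasing-direction step to be the main obstacle, because the derivative of the max term is only available as a subdifferential, and the key inequality $w^\T Cz-w^\T Cy_s\geq f_C(x_s,y_s)$ is not evident from the best-response properties of $w$ and $z$ alone but requires the full stationarity condition encoded in \Cref{lemma:strongineq}.
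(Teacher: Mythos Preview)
The paper does not prove this lemma at all: it is quoted verbatim as ``Lemma 2 in \cite{CDH+21_0.3393tight}'' and used as a black box, so there is no in-paper argument to compare against. Judged on its own merits, your proposal is correct. The structural claims (piecewise-linearity and convexity of the $\max$ terms, linearity of the bilinear terms in each variable separately) are immediate. Your monotonicity arguments via \Cref{lemma:strongineq} are sound: the substitutions $(x',y')=(x_s,y(\beta))$ and $(x(\alpha),y_s)$ indeed yield $(w-x_s)^\T R y(\beta)\geq 0$ and $x(\alpha)^\T C(z-y_s)\geq 0$ once $\rho\in(0,1)$ is forced, and the substitution $(x',y')=(w,y_s)$ collapses the row block and gives $w^\T Cz-w^\T Cy_s\geq f(x_s,y_s)/(1-\rho)\geq f_C(x_s,y_s)$, which combined with the subgradient bound on $\max\{C^\T x(\alpha)\}$ at $\alpha=0$ and convexity yields the increasing direction. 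The symmetric substitution $(x_s,z)$ handles $F_R(\alpha,\cdot)$. This is almost certainly the intended argument in \cite{CDH+21_0.3393tight}, since both this lemma and \Cref{lemma:strongineq} are attributed to that source and the latter is precisely the tool needed.
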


We mention that these inequalities fully characterize strategies $x_s$, $y_s$, $w$, and $z$, thus providing crucial tools for the bound analysis. 

Now we can find relations of these $f_R$'s and $f_C$'s. We notice that $\za=\zb$, $\zd=\ze=0$, $\za, \zb, \zc, \zd, \ze, \zf, \zg, \zh\in [0,1]$ (\Cref{lemma:frfc}), $\zc\geq \zg$, $\zc\geq \za$, $\zf\geq \zh$, $\zf\geq \zb$ (\Cref{lemma:increasing}). By symmetry, we assume without loss of generality that $\zg\geq \zh$. Finally, we have $\za\leq \rho(\zc-\zg), \za\leq (1-\rho)(\zf-\zh)$ (\Cref{lemma:strongineq}).

\vspace{10pt}
\textbf{Step 3: Write down the optimization problem of the upper bound.}

By the relations derived above, we can directly write down the optimization problem of the approximation bound. See \Cref{app:subsec:TS-formalization} for details. 

\vspace{10pt}
\textbf{Step 4-1: Eliminate the minimum operator over $\alpha_i$'s in $h^*$.}

On the edge $\{(\alpha x_s+(1-\alpha)w, y_s): \alpha \in [0,1]\}$, since $\za=\zb, \ze=0,\zf\geq \zb\geq 0$, by \Cref{prop:minmax-mixing}, the optimal value is $\za$, which is obtained at $\alpha^*=1$. Similarly, we can show that the optimal value on $\{(\alpha x_s+(1-\alpha)w, z): \alpha \in [0,1]\}$ is $\zg$, obtained at $\alpha^*=0$; the optimal value on $\{(x_s, \beta y_s+(1-\beta)z): \beta \in [0,1]\}$ is $\zb$, obtained at $\beta^*=1$; the optimal value on $\{(w, \beta y_s+(1-\beta)z): \beta \in [0,1]\}$ is $\frac{\zf\zg}{\zf+\zg-\zh}$, obtained at $\beta^*=\frac{\zg-\zh}{\zf-\zh+\zg}$. 

Thus, the upper bound $h^*$ is simplified to 
\[\min\left\{\za, \frac{\zf \zg}{\zf+\zg-\zh}\right\}.\]

\vspace{10pt}
\textbf{Step 4-2: Obtain the constant upper bound.}

Now the optimization problem of the approximation bound becomes
\begin{align*}
    \text{maximize}\quad &\min\left\{\za,\frac{\zf \zg}{\zf+\zg-\zh}\right\},\\
    \text{s.t.}\quad &0\leq \za,\zc,\zd,\zf,\zg,\zh,\rho\leq 1,\\
               & \za \leq \rho(\zc-\zg),\\
               & \za \leq (1-\rho)(\zf-\zh),\\
               & \zf\geq \za.
\end{align*}

This can be solved by a computer program. Below we show the solution manually. Let $\lambda=\zc-\zg$ and $\mu=\zf-\zh$. Then,
\begin{align}
    h^*=&\min \left\{\za, \frac{\zf \zg}{\zf+\zg-\zh}\right\}\notag\\
    \leq  & \min \left\{\rho\lambda, (1-\rho)\mu, \frac{\zf(\zc-\lambda)}{\zc+\mu-\lambda}\right\}\label{formula:TS:scale-1}\\
    \leq & \min \left\{\frac{\lambda\mu}{\lambda+\mu}, \frac{1-\lambda}{1+\mu-\lambda}\right\}. \label{formula:TS:scale-2}
\end{align}

By the same argument in \cite{TS07_0.3393NE}, the optimal value of \eqref{formula:TS:scale-2} is $b\approx 0.3393$, obtained when $\lambda=\lambda^*\approx 0.5825$ and $\mu=\mu^*\approx 0.8128$. That is the desired upper bound of the approximation.

Therefore, we have shown that:
\begin{theorem}
\label{details:thm:1/3-tight}
    The approximation bound of the modified TS algorithm is $0.3393+\delta$.
\end{theorem}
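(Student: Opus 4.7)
The plan is to follow the four-step procedure laid out in \Cref{example:approx-ana} and apply it to the modified TS algorithm. First I would invoke the simple $(2,2)$-upper bound constructor of \Cref{prop:v-w-aux-mixing} (decomposed as in \Cref{rk:linear-decomposition+} into four $(1,2)$-constructors, one per edge of the square $\mathcal{A} = \{(\alpha x_s + (1-\alpha) w,\ \beta y_s + (1-\beta) z) : \alpha,\beta \in [0,1]\}$). This gives an upper bound $h^*$ of $f^*_\mathcal{A}$ that is a function of only the eight quantities $\za,\ldots,\zh$ defined at the four vertices of $\mathcal{A}$, and is independent of the particular structure of the TS search phase.

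Next I would extract the relations among $\za,\ldots,\zh$ implied by the $\delta$-stationary point $(x_s,y_s)$ together with its dual certificate $(\rho, w, z)$. By quoting \Cref{lemma:frfc} and \Cref{lemma:increasing} (and ignoring the $\delta$ slack for the moment), these relations are $\za = \zb$, $\zd = \ze = 0$, all eight values lie in $[0,1]$, $\zc \geq \zg$, $\zc \geq \za$, $\zf \geq \zh$, and $\zf \geq \zb$. The symmetry hypothesis $\zg \geq \zh$ may be assumed without loss of generality. Finally, \Cref{lemma:strongineq} yields $\za \leq \rho(\zc - \zg)$ and $\za \leq (1-\rho)(\zf - \zh)$, which is the key global constraint tying everything together.

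With these relations in hand, the third step is mechanical: I would plug them into the explicit formula for the $(1,2)$-upper bound given by \Cref{prop:minmax-mixing} on each of the four edges. The two edges containing the vertex $(x_s,y_s)$ collapse under the relations $\ze=0$, $\zd=0$, $\za=\zb$ to $\min$-values $\za$ and $\zb=\za$, respectively; the edge through $(x_s,z)$ collapses to $\zg$; and the remaining edge through $(w, \beta y_s + (1-\beta)z)$ gives $\zf\zg/(\zf + \zg - \zh)$ by the generic case of \Cref{prop:minmax-mixing}. Taking the minimum leaves $h^* = \min\{\za,\ \zf\zg/(\zf+\zg-\zh)\}$.

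The last and main step is to maximize $h^*$ subject to the constraints in Step~2. The natural substitution is $\lambda := \zc - \zg$ and $\mu := \zf - \zh$; using $\za \leq \rho\lambda$ and $\za \leq (1-\rho)\mu$ to bound the first term, and using $\zh = \zf - \mu$ together with $\zf \leq 1$ to bound the second term, reduces the problem to maximizing $\min\{\lambda\mu/(\lambda+\mu),\ (1-\lambda)/(1+\mu-\lambda)\}$ over $\lambda,\mu \in [0,1]$. This is precisely the optimization solved in the original TS paper~\cite{TS07_0.3393NE}, whose optimum is $\approx 0.3393$ at $\lambda^* \approx 0.5825$, $\mu^* \approx 0.8128$. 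The main obstacle I anticipate is exactly this final reduction: picking the right substitution and chain of relaxations so that the bound collapses to the known TS expression rather than being weakened further. Finally, reinstating the $\delta$ slack from using a $\delta$-stationary point (which perturbs each $\zf$-related inequality by at most $\delta$ and hence perturbs $h^*$ by at most $\delta$) produces the claimed $0.3393 + \delta$ bound.
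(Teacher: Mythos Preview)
Your plan is correct and follows essentially the same route as the paper: the same $(2,2)$-upper bound constructor decomposed over the four edges, the same relations from \Cref{lemma:frfc}, \Cref{lemma:increasing}, and \Cref{lemma:strongineq}, the same simplification to $h^*=\min\{\za,\ \zf\zg/(\zf+\zg-\zh)\}$, and the same substitution $\lambda=\zc-\zg$, $\mu=\zf-\zh$ reducing to the TS expression. One small omission in your Step~4 sketch: bounding the second term also needs $\zc\leq 1$ (equivalently $\zg\leq 1-\lambda$), not just $\zf\leq 1$; otherwise $\zg$ remains and you cannot reach $(1-\lambda)/(1+\mu-\lambda)$.
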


\vspace{10pt}
\textbf{One more step: Derive the tightness conditions.}

The analysis also provides us with the necessary condition for tightness. Tracing each step of the relaxation, we can easily obtain the tightness condition. From step \eqref{formula:TS:scale-2}, we have $\zf^*=\zc^*=1$, $\rho^*=\mu^*/(\lambda^*+\mu^*)\approx 0.5825$. Thus, $\zg^*=\zc^*-\lambda^*\approx 0.4175$, $\zh^*=\zf^*-\mu^*\approx 0.1872$. From step \eqref{formula:TS:scale-1}, we have $\za^*\approx 0.3393$. In this way, we uniquely determine the value at each point in the tight case. A more elaborate consideration is provided by \cite{CDH+21_0.3393tight}, which is further able to devise a generator that generates all tight instances.

We also mention that Theorem 1 in \cite{CDH+21_0.3393tight} gives a game instance $(R,C)$:
\[
R=\begin{pmatrix}
0.1 & 0 & 0 \\
0.1+b & 1 & 1 \\
0.1+b & \lambda^* & \lambda^*
\end{pmatrix}, \quad C=\begin{pmatrix}
0.1 & 0.1+b & 0.1+b \\
0 & 1 & \mu^* \\
0 & 1 & \mu^*
\end{pmatrix},
\]
such that on the square $\mathcal{A}$ defined by $x_s=y_s=(1,0,0)^\T, z=w=(0,0,1)^\T$, $f^*_{\mathcal{A}}$ is exactly $b\approx 0.3393$. Therefore, the approximation bound is tight.
\end{example}

\begin{example}[Approximation analysis of the modified DFM algorithm]\label{ex-modifiedDFM}

In this example, we analyze the approximation bound of the modified DFM algorithm in \Cref{example:modified-DFM}. Additionally, we provide a tight instance for the modified DFM algorithm. Due to symmetry, we only present the case where $\zg\geq \zh$, namely $f_R(w,z)\geq f_C(w,z)$.

Note that the mixing region is $\mathcal{A}:=\{(\alpha_1 x_s+\alpha_2 w +\alpha_3 \hat{w},\beta_1 y_s+ \beta_2 z+\beta_3 \hat{y}) \mid \alpha\in \Delta_3, \beta\in \Delta_3\}$. Now, we derive an upper bound of $f^*_\mathcal{A}$ following the same procedure as in \Cref{sec:approx-anal}.

\vspace{10pt}
\textbf{Step 1: Construct the upper bound.}

First, we present the following alias table.
\begin{center}
\centering
    \begin{minipage}{0.4\textwidth}
\centering
\[\begin{array}{c|c}
\text{original notation}&\text{new alias}\\
\hline
f_R(x_s,y_s) &\za\\
f_C(x_s,y_s) &\zb\\
f_R(x_s,z) &\zc\\
f_C(x_s,z) &\zd\\
f_R(w,y_s) &\ze\\
f_C(w,y_s) &\zf\\
f_R(w,z) &\zg\\
f_C(w,z) &\zh
\end{array}
\]
\end{minipage}
\begin{minipage}{0.5\textwidth}
\centering
\[\begin{array}{c|c|c}
\text{condition}&\text{original notation}&\text{new alias}\\
\hline
\multirow{6}{*}{$\zg\geq \zh$}
&f_R(\hat{w},z)&\zi \\
&f_R(\hat{w},y_s)&\zj \\
&f_C(\hat{w},z)&\zk \\
&f_C(\hat{w},y_s)&\zl\\
&f_R(w,\hat{y})&\zm\\
&f_R(x_s,\hat{y})&\zn\\
\hline
\multirow{6}{*}{$\zg <\zh$}
&f_C(w,\hat{z})&\zi \\
&f_C(x_s,\hat{z})&\zj \\
&f_R(w,\hat{z})&\zk \\
&f_R(x_s,\hat{z})&\zl\\
&f_C(\hat{x},y_s)&\zm\\
&f_C(\hat{x},z)&\zn
\end{array}\]
\end{minipage}
\end{center}

Here, we draw the prism $\mathcal{A}$ to show the positions of little letter aliases, as shown in \Cref{fig:prism}.

\begin{figure}[ht]
    \centering
    \includegraphics[width=0.9\textwidth]{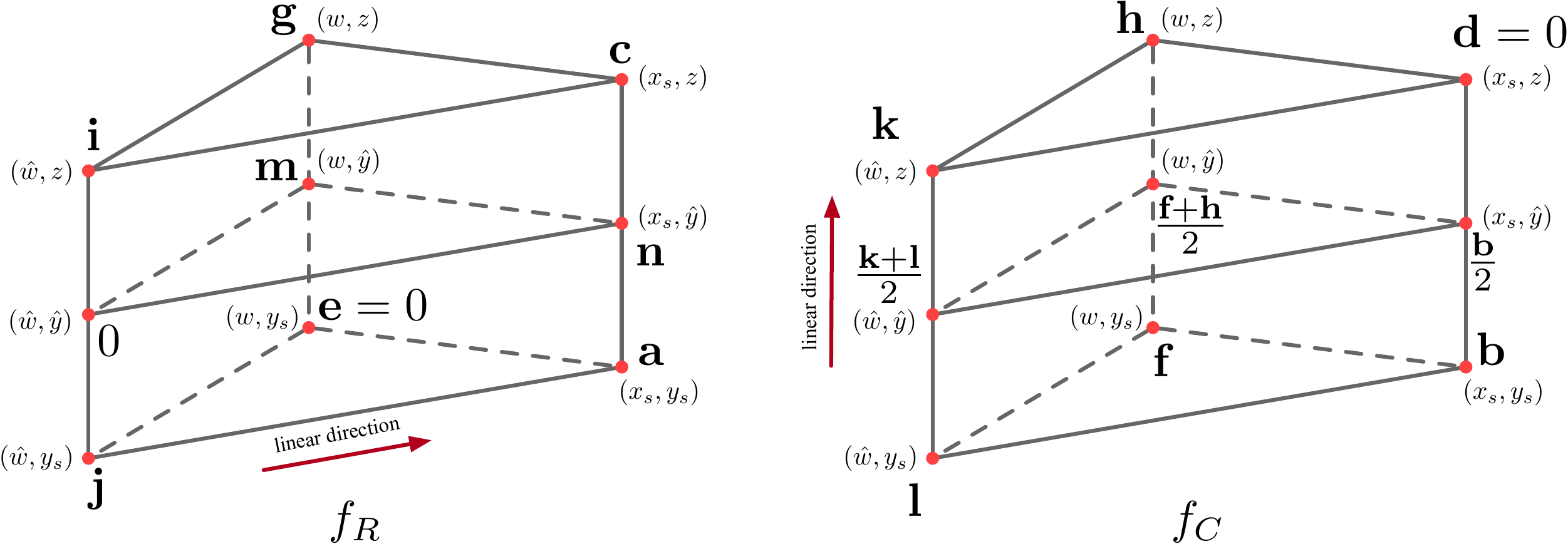}
    \caption{Positions of little letter aliases in the modified DFM algorithm (case $\zg\geq \zh$).}
    \label{fig:prism}
\end{figure}

Then, directly using \Cref{prop:v-w-aux-mixing} and \Cref{rk:linear-decomposition+}, we can derive the $(3,3)$-upper bound constructor, which is decomposed into $(1,2)$-upper bound constructors among all the edges of the triangular prism $\mathcal{A}$. The form is very complicated so we omit it here.

\vspace{10pt}
\textbf{Step 2: Extract relations from the search phase.}

Similar to \Cref{ex-TSalgo}, to simplify the notations, we here consider stationary points rather than $\delta$-stationary points. The relations is presented below. The proof is given in \Cref{app:DFM-relation}.

\begin{lemma}\label{lemma:DFM-relation}
Let $\zb=f_C(x_s,y_s)$, $\zd=f_C(x_s,z)$, $\ze=f_R(w,y_s)$. We have the following relations.
\begin{enumerate}
\item $\zd=\ze=0$, $\za=\zb$.
\item $\za,\zb,\zc,\zf,\zg,\zh,\zi,\zj,\zk,\zl, \zn, \zm \in [0,1]$.
\item $\zc\geq \zg$, $\zg\geq \zm$, $\zc\geq \zn\geq \za$, $\zf\geq \zh$, $\zf\geq \zb$.
\item $\za\leq \rho(\zc-\zg)$, $\zb\leq (1-\rho)(\zf-\zh)$, $\za\leq \rho \zj+(1-\rho)(\zl-\zk)$.
\item $f_R(\hat{w},\hat{y})=0$.
\item $\zg\geq \zi+\zj$.
\item $f_C(\hat{w},\hat{y})=\frac{1}{2}(\zl+\zk)$.
\end{enumerate}
\end{lemma}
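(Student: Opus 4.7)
The whole lemma is an accounting exercise: each claim should fall out of one of four ingredients, namely (i) the best-response defining equations $\max\{C^\T x_s\}=x_s^\T Cz$ and $\max\{Ry_s\}=w^\T Ry_s$ (which say $\zd=\ze=0$), (ii) the normalization of entries of $R,C$ to $[0,1]$ (which gives the range bound in item 2), (iii) the convexity/monotonicity properties of $F_R,F_C$ from \Cref{lemma:frfc} and \Cref{lemma:increasing} (which feed item 1's equality $\za=\zb$ and all the monotone comparisons in item 3), and (iv) the stationary-point inequality of \Cref{lemma:strongineq} applied to a carefully chosen direction $(x',y')$ (which yields each line of item 4). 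My plan is to verify the seven items in that order, because the later ones use the structural facts established earlier.

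For item 4 the key observation is that the inequality in \Cref{lemma:strongineq} is quantified over arbitrary $(x',y')$, and most terms telescope once specialized. Taking $(x',y')=(x_s,z)$ collapses the $C$-part to zero and yields $\za\leq\rho(\zc-\zg)$; taking $(x',y')=(w,y_s)$ collapses the $R$-part and yields $\zb\leq(1-\rho)(\zf-\zh)$; and for the new DFM-specific inequality I will plug in $(x',y')=(\hat w,y_s)$, which gives $\text{something}_R=\zj$ and $\text{something}_C=\zl-\zk$ (using $\ze=0$), producing $\za\leq\rho\zj+(1-\rho)(\zl-\zk)$. Item 5 is immediate since $\hat w\in\br_R(\hat y)$ by construction.

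Items 6 and 7 reduce to algebraic identities once $\hat y$ is expanded as $(y_s+z)/2$. For item 7, direct computation gives
\[
f_C(\hat w,\hat y)=\max\{C^\T\hat w\}-\tfrac{1}{2}\hat w^\T C(y_s+z)=\tfrac{1}{2}\bigl[(\max\{C^\T\hat w\}-\hat w^\T Cy_s)+(\max\{C^\T\hat w\}-\hat w^\T Cz)\bigr]=\tfrac{1}{2}(\zl+\zk),
\]
using linearity of the inner product in the second slot. For item 6, I expand
\[
\zg-\zi-\zj=\hat w^\T R(z+y_s)-w^\T Rz-\max\{Ry_s\}=2\hat w^\T R\hat y-w^\T Rz-w^\T Ry_s=2(\hat w-w)^\T R\hat y,
\]
where the middle step uses $\max\{Ry_s\}=w^\T Ry_s$ (from $w\in\br_R(y_s)$) and the last step re-pairs the terms via $\hat y=(y_s+z)/2$; nonnegativity then follows from $\hat w\in\br_R(\hat y)$.

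The only step that requires genuine insight is item 6, since it is the one place where the defining property of the newly introduced strategy $\hat w$ (best response to the midpoint $\hat y$) is actually exploited nontrivially: the $1/2$-averaging of $y_s$ and $z$ is precisely what makes $w^\T Rz+w^\T Ry_s$ collapse into $2w^\T R\hat y$ and lets the best-response inequality close the gap. I anticipate this re-pairing trick to be the main step to identify; everything else is mechanical bookkeeping from the stationary-point condition plus convexity.
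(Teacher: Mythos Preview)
Your proposal is correct and matches the paper's proof essentially line for line: items 1--3 and 5 are referred back to \Cref{lemma:frfc}, \Cref{lemma:increasing}, and the definition of $\hat w$; item 4 is obtained by the same three specializations $(x',y')\in\{(x_s,z),(w,y_s),(\hat w,y_s)\}$ in \Cref{lemma:strongineq}; and item 7 is the same one-line linearity computation. The only cosmetic difference is in item 6: the paper introduces the auxiliary quantities $\zu=\max\{Rz\}$, $\zv=\max\{R\hat y\}$, $\zw=\max\{Ry_s\}$ and derives the identity $\zm=(\zg-\zi-\zj)/2$ (so the inequality follows from $\zm\geq 0$), whereas you compute $\zg-\zi-\zj=2(\hat w-w)^\T R\hat y$ directly---but these are the same calculation, since $\zm=f_R(w,\hat y)=\hat w^\T R\hat y-w^\T R\hat y$ once one uses $\hat w\in\br_R(\hat y)$.
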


\vspace{10pt}
\textbf{Step 3: Write down the optimization problem of the upper bound.}

With these relations, we can write down the optimization problem of the approximation bound. See \Cref{app:subsec:DFM-formalization} for details.

\vspace{10pt}
\textbf{Step 4-1: Eliminate the minimum operator over $\alpha_i$'s in $h^*$.}

Now we give the explicit solution of the $(3,3)$-upper bound constructor $h^*$. Each $(1,2)$-upper bound composing the $(3,3)$-upper bound constructor can be completely determined by \Cref{prop:minmax-mixing}. This can be calculated by a computer program. Again, since there are too many cases to present, we omit it here.

\vspace{10pt}
\textbf{Step 4-2: Obtain the constant upper bound.}

Now, the form of the optimization problem of the upper bound has no inner $\min_{\alpha_i}$ operator and thus can be solved by a numerical optimization solver, such as Mathematica.

The output of the computer program shows that the approximation bound of the modified DFM algorithm in \Cref{example:modified-DFM} is $1/3$. However, the correctness of the numerical optimization procedure is somewhat hard to check, so we also prove manually that the bound is exactly $1/3$ in \Cref{app:approx-DFM}.

\begin{theorem}
\label{details:thm:1/3-approx}
    The approximation bound of the modified DFM algorithm is $1/3+\delta$.
\end{theorem}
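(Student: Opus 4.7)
The plan is to complete Step 4-2 of the procedure already begun in Example \ref{ex-modifiedDFM}. By symmetry, I restrict throughout to the case $\zg \geq \zh$, since the case $\zg < \zh$ follows by exchanging the roles of the row and column players. The additive $\delta$ in the bound absorbs the gap between the $\delta$-stationary points produced by the descent subroutine and exact stationary points, exactly as in Example \ref{ex-TSalgo}; so it suffices to prove a bound of $1/3$ under the assumption that $(x_s,y_s)$ is an exact stationary point.

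Applying the $(3,3)$-upper bound constructor of \Cref{prop:v-w-aux-mixing} to the triangular prism $\mathcal{A}$ and invoking \Cref{rk:linear-decomposition+}, I decompose the construction into nine $(1,2)$-upper bounds, one per edge of $\mathcal{A}$. \Cref{prop:minmax-mixing} then yields a closed-form formula for each edge-bound in the aliases $\za,\ldots,\zn$, and \Cref{lemma:DFM-relation} (in particular $\zd=\ze=0$, $\za=\zb$, $f_R(\hat w,\hat y)=0$, and $f_C(\hat w,\hat y)=(\zl+\zk)/2$) simplifies several of them substantially. The result is an explicit upper bound $h^*$ written as the minimum of nine terms in the variables $\za,\zc,\zf,\zg,\zh,\zi,\zj,\zk,\zl,\zm,\zn,\rho$.

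It remains to show $h^*\le 1/3$ under the remaining constraints of \Cref{lemma:DFM-relation}. Following the TS analysis of Example \ref{ex-TSalgo}, the edges lying on the $(x_s,w)\times(y_s,z)$ square already yield $h^*\le\min\{\za,\ \zf\zg/(\zf+\zg-\zh)\}$, which recovers the TS bound of roughly $0.3393$. The DFM improvement comes from edges incident to $\hat w$ or $\hat y$: the edges through $(\hat w,\hat y)$ exploit $f_R(\hat w,\hat y)=0$ and $f_C(\hat w,\hat y)=(\zl+\zk)/2$, while the constraint $\za\le \rho\zj+(1-\rho)(\zl-\zk)$ couples the new variables $\zj,\zk,\zl$ to $\za$, and $\zg\ge \zi+\zj$ couples them back to the TS variables. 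I plan to set $\lambda=\zc-\zg$ and $\mu=\zf-\zh$ as in the TS analysis, introduce an auxiliary slack for the $\hat w$-row contribution, and then relax the remaining edge-bounds to a three- or four-variable program whose optimum can be computed by hand and shown to equal $1/3$.

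The principal obstacle is the combinatorial explosion arising from \Cref{prop:minmax-mixing}: each of the nine edge-bounds has four subcases, and a consistent relaxation must be chosen across them. I expect that only a small core subset of edges (together with a specific branch of each of their subcases) actually drives the bound, and that tracing the chosen relaxations backward yields tightness conditions that coincide with the tight instance exhibited by \cite{DFM22_0.3333NE}. The full case analysis is delegated to \Cref{app:approx-DFM}, and a Mathematica certification of the value $1/3$ serves as an independent cross-check of the manual derivation.
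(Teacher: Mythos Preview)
Your framework—reduce to edge bounds via \Cref{prop:v-w-aux-mixing} and \Cref{rk:linear-decomposition+}, then simplify using \Cref{lemma:DFM-relation}—is exactly the paper's. A few points need correction, though, and one of them is where the real work lies.

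First, the $(3,3)$ construction here produces fifteen edge bounds, not nine: the column ``triangle'' has vertices $y_s,\hat y,z$, and even though $\hat y$ lies on the segment $y_sz$, the constructor treats it as a separate vertex. Those extra edges through $\hat y$ (and the relation $f_R(\hat w,\hat y)=0$) are precisely what pushes the bound below $0.3393$, so your nine-edge prism would not suffice.

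Second, the paper does \emph{not} carry over the $\lambda,\mu$ substitution from the TS analysis. Instead it isolates just five of the fifteen terms (\Cref{lemma:1stpart}):
\[
\rho(\zc-\zg),\quad (1-\rho)(\zf-\zh),\quad \rho\zj+(1-\rho)(\zl-\zk),\quad \frac{\zf\zg}{\zf+\zg-\zh},\quad D,
\]
where the fifth term $D$ depends on the sign of $\zi-\zk$. This dichotomy $\zi>\zk$ versus $\zi\le\zk$ is the organizing principle of the whole case analysis: it determines whether $D$ comes from the edge $(\hat w,\hat y)$--$(\hat w,z)$ or from the edge $(w,z)$--$(\hat w,z)$. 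Within $\zi>\zk$ the secondary split is $\zi\lessgtr 2\zg-1$; within $\zi\le\zk$ one discards the fourth term and argues that the four remaining terms may be taken simultaneously equal at the optimum, after which solving the resulting algebraic system yields $1/3$ directly.

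Your instinct that only a small core of edges drives the bound is correct, but identifying that core and the $\zi$ versus $\zk$ split is the substantive step; the $\lambda,\mu$ route you propose does not lead there, and without the five-term reduction the combinatorial explosion you anticipate does not collapse.
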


\begin{remark}\label{remark:optimal-1/2-DFM}
    One may wonder if it is possible to improve the approximation by choosing different $\hat{y}$ and $\hat{x}$. Indeed, we can generalize them to a parameterized form, e.g., $\hat{y}_\theta=\theta w+(1-\theta) y_s$. The letter $\theta$ can be regarded as a parameter. Involving $\theta$, \Cref{lemma:DFM-relation} and \Cref{lemma:1stpart} will change correspondingly. The optimization problem will also change, including a new variable $\theta$. However, numerical computation shows that the optimal value of the optimization problem is still $1/3$, attained at $\theta=1/2$. Thus, the modified DFM algorithm in \Cref{example:modified-DFM} makes the optimal choice. Once again, our method demonstrates its power to provide a quick result here.
\end{remark}

\vspace{10pt}
\textbf{One more step: Derive the tightness conditions.}

Based on the approximation analysis above, we identify a tight instance $(R,C)$ with the following matrices:

\begin{align}
R=\begin{pmatrix}
0 & 0 & 0 \\
0 & 0 & 1 \\
1 / 3 & 2 / 3 & 2 / 3
\end{pmatrix}, \quad C=\begin{pmatrix}
0 & 1 / 3 & 1 / 3 \\
0 & 0 & 1 / 3 \\
0 & 1 & 2 / 3
\end{pmatrix},\label{eq:tight-DFM}
\end{align}

By considering the degraded triangular prism $\mathcal{A}$, constructed with $x_s=y_s=(1,0,0)^\T$, $z=w=\hat{w}=(0,0,1)^\T$, and $\rho=1/2$, we find that $f^*_{\mathcal{A}}=1/3$. We note that this tight instance is stronger than that provided in \cite{chen_tightness_2023}, which is only tight for the original DFM algorithm but not for the modified DFM algorithm in \Cref{example:modified-DFM}. Thus, our method for tightness analysis is much stronger and generally applicable.

The proof of tightness is provided in \Cref{app:1/3-tight}.
\end{example}

\newpage
\section{Missing proofs and algorithms}\label{app:miss-proof}

\subsection{The \texorpdfstring{$(2,m)$}{(2,m)}-separation algorithm}
\label{app:prec-2-m-separation}
This problem can be restated as a famous problem in computational geometry called \emph{envelope problem}, which is a special case of \emph{half-plane intersection problem}. The half-plane intersection problem can be solved with the plane sweep method in time $O(n\log n)$, see, e.g., Section 4.2 of~\cite{MOMM08}. For completeness, we restate the full algorithm here.

Specifically, suppose we are given two series $\{a_i\}_{i=1}^k,\{b_i\}_{i=1}^k$. We want to compute the breakpoints of function $h(x)=\max_{i\in[k]} \{a_ix+b_i\}(x\in[0,1])$ and the value of $h$ on these points. We present a method based on ideas from computational geometry.

First, we turn the case into $a_1< a_2<...< a_k$. To do so, reorder functions $\{a_ix+b_i\}_i$ so that $a_1\leq a_2\leq...\leq a_k$. Then we check all contiguous pairs $(a_i,a_{i+1})$. If $a_i=a_{i+1}$, then we delete the function with smaller $b_i$, since it is strictly smaller that the other one. By this procedure, we obtain $a_1< a_2<...< a_k$ in time $O(k\log k)$.

Let us use a list $w$ to memorize the breakpoints and a list $t$ to memorize the value of $h(x)$ on these points. Define $h_s(x)=\max_{i\in[s]}\{a_ix+b_i\}$. We use a recursion method to find the breakpoints by gradually updating the set of breakpoints from $h_1(x)$ to $h_k(x)=h(x)$. For the beginning, since $h_1(x)=a_1x+b_1$, we can initialize list $w$ with $w(0)=0,w(1)=1$ and list $t$ with $t(0)=b_1,t(1)=a_1+b_1$. Then we consider how to update from $h_s(x)$ to $h_{s+1}(x)$.

By assumption $a_1<a_2<...<a_s<a_{s+1}$, function $\Delta h_s(x)=h_s(x)-a_{s+1}x-b_{s+1}$ is continuous on $[0,1]$ and decreasing on every linear piece. So $\Delta h_s$ is decreasing on $[0,1]$ and has at most one zero point. Therefore, $h_s(x)$ has at most one intersection point with $a_{s+1}x+b_{s+1}$. If such point exists, say $x^*$, then we have $h_s(x)\leq a_{s+1}x+b_{s+1}$ if and only if $x\geq x^*$. So, we only need to add $x^*$ into list $w$ and delete all the points in list $w$ which belong to $[x^*,1)$. Similarly, we add $a_{s+1}x^*+b_{s+1}$ into the list $t$, update the value corresponding to $1$ with $a_{s+1}+b_{s+1}$ and delete all values between them. The geometric illustration of such a procedure is given in \Cref{fig:2-m-sep}. 
\begin{figure}[ht]
    \centering
    \includegraphics[width=0.8\textwidth]{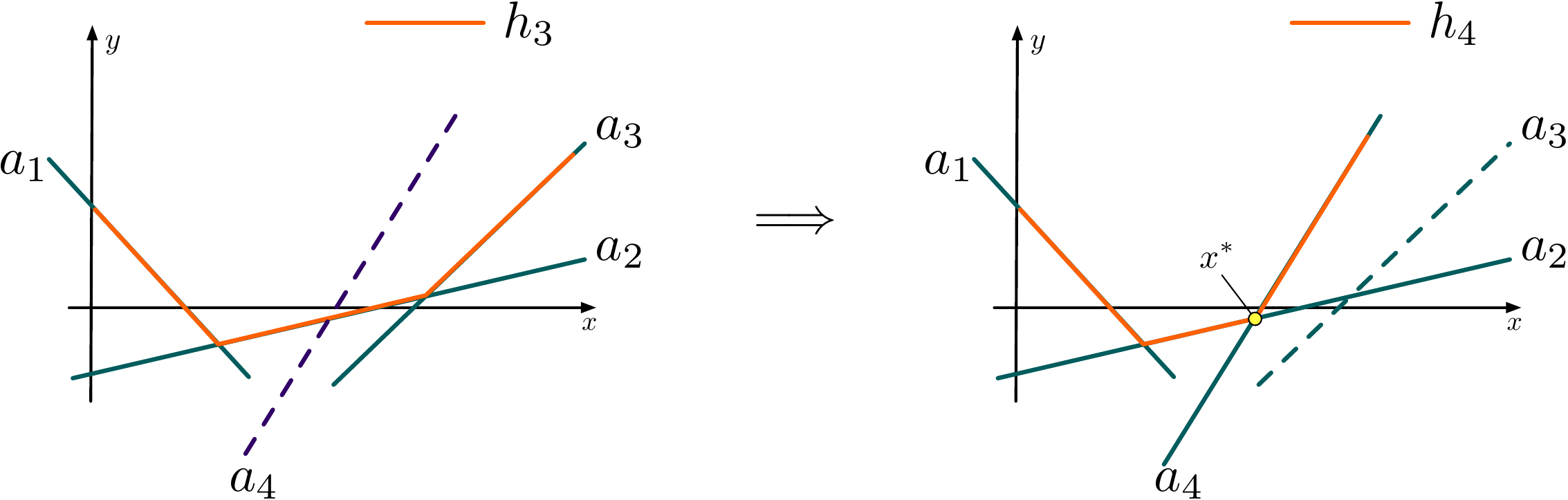}
    \caption{Illustration of the update procedure from $h_3$ (orange line on the left) to $h_4$ (orange line on the right). We try to add term $a_3x+b_3$ (dashed line on the left) into the max operator in $h_3$. $a_4x+b_4$ intersects $h_3$ at $a_2x+b_2$. Thus $x^*$ (yellow dot) is calculated and the breakpoints larger than $x^*$ are deleted. Equivalently, $a_3x+b_3$ is removed (dashed line on the right).}
    \label{fig:2-m-sep}
\end{figure}

To find such $x^*$, we use a binary search on index $t$ to locate the proper line $a_tx+b_t$ forming the intersection point $x^*$. Such a search costs only logarithm time of the number of lines.

Now we analyze the time complexity of this algorithm. There are in total $k$ rounds of binary searches, with the $i$th round using time $O(\log i)$. In total, the time complexity is $O\left(\sum_{i=1}^k \log i\right)=O(k\log k)$. We collect the above arguments into the following proposition.

\begin{proposition}\label{prop:2dbpoint}
There exists an algorithm that outputs all the breakpoints of $h(x)$ and their corresponding function values in time $O(k\log k)$. 
\end{proposition}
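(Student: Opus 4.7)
The plan is to verify correctness and time complexity of the incremental upper-envelope construction sketched above. First I would establish the preprocessing step: sort the $k$ pairs $(a_i,b_i)$ in increasing order of slope $a_i$ in time $O(k\log k)$, and in one linear pass remove any line that shares its slope with a strictly dominant neighbor, yielding $a_1<a_2<\dots<a_{k'}$ with $k'\leq k$. This normalization is needed so that the key monotonicity argument — that $\Delta h_s(x) := h_s(x) - (a_{s+1}x + b_{s+1})$ is nonincreasing in $x$ on $[0,1]$ — can be invoked uniformly.

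Next I would state and justify the invariants maintained by the incremental loop. After processing the first $s$ lines, the lists $w$ and $t$ encode the breakpoints $0 = x_0 < x_1 < \dots < x_r = 1$ and values of the upper envelope $h_s$, so that on each subinterval $[x_{\ell},x_{\ell+1}]$ exactly one line from $\{a_1 x+b_1,\dots,a_s x+b_s\}$ realizes $h_s$. When processing line $a_{s+1}x+b_{s+1}$ (which has the largest slope so far), the monotonicity of $\Delta h_s$ implies that either $\Delta h_s(1)\geq 0$, in which case the new line is dominated on $[0,1]$ and is discarded, or $\Delta h_s$ has a unique zero $x^*\in[0,1]$, and $a_{s+1}x+b_{s+1}$ realizes $h_{s+1}$ precisely on $[x^*,1]$. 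I would then describe the binary search over the current breakpoint list to locate the subinterval containing $x^*$ (evaluating $\Delta h_s$ at a breakpoint is a constant-time comparison of two linear expressions), after which $x^*$ is computed by solving a single linear equation, all trailing breakpoints are popped, and $x^*$ together with its envelope value is appended.

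Finally I would carry out the complexity bookkeeping. The binary search in round $s$ takes $O(\log s)$ comparisons, so the total binary-search cost is $\sum_{s=1}^{k'} O(\log s) = O(k\log k)$. For deletions, I would use an amortized argument: each breakpoint is inserted at most once during the whole run, so the total number of deletions across all rounds is at most $k$, contributing $O(k)$ in aggregate. Adding the sorting cost, the overall running time is $O(k\log k)$, and since the final lists $w$ and $t$ describe the upper envelope of all $k$ lines exactly, these are precisely the breakpoints of $h(x)$ and their function values.

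The main obstacle, such as it is, is making the monotonicity argument for $\Delta h_s$ fully rigorous: one has to observe that on each linear piece of $h_s$ the slope of $\Delta h_s$ is $a_j - a_{s+1} < 0$ for the unique $j\leq s$ realizing that piece, and that continuity of $h_s$ across breakpoints transfers to continuity of $\Delta h_s$, so $\Delta h_s$ is strictly decreasing on $[0,1]$ and thus has at most one zero. Once this is in place, both the correctness of the binary search (the sign of $\Delta h_s$ at a breakpoint tells us which side $x^*$ lies on) and the "delete everything past $x^*$" step follow immediately.
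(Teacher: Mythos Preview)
Your proposal is correct and follows essentially the same incremental upper-envelope construction as the paper: sort by slope, add lines one at a time using the monotonicity of $\Delta h_s$ to guarantee a unique crossover $x^*$, and locate $x^*$ by binary search over the current breakpoint list. Your writeup is in fact slightly more careful than the paper's own argument, since you explicitly handle the dominated-line case $\Delta h_s(1)\geq 0$ and separately account for the amortized $O(k)$ deletion cost, which the paper leaves implicit.
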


\subsection{The \texorpdfstring{$(3,m)$}{(3,m)}-separation algorithm}
\label{app:prec-3-m-separation}
Note that since $t=3$, $\beta$ can be represented by two free variables, i.e., $\beta=(x,y,1-x-y)$. Then the polytope $P_i$ in \Cref{def:tm-sep-algo} is actually a polygon on the plane. The $(3,m)$-separation algorithm needs to find a clockwise enumeration of vertices of $P_i$. This problem, again, can be stated by the half-plane intersection. For completeness, we present the algorithm here.

In fact, a proper application of the $(2,m)$-separation algorithm will give us the desired algorithm. A key observation is that the boundary of a polygon can be expressed as a union of four parts: left boundary, right boundary, upper semi-boundary and lower semi-boundary. If we write all constraints of $P_i$ in the form $l_j:=\tilde{a}_jx+\tilde{b}_jy+\tilde{c}_j\geq 0, j\in[k]$, then each constraint belongs to exactly one part of the boundary:
\begin{enumerate}
    \item When $\tilde{b}_j=0$, $l_j=0$ is a candidate of the left (right) boundary if $\tilde{a}_j>0$ ($<0$).
    \item When $\tilde{b}_j\neq 0$, $l_j=0$ is a candidate of the upper (lower) semi-boundary if $\tilde{b}_j<0$ ($>0$).
\end{enumerate}
In the second case, we write the boundary into the form $y=\tilde{a}x+\tilde{b}$. Then we can apply \Cref{prop:2dbpoint} on the upper (lower) semi-boundary to obtain ordered vertices in time $O(k\log k)$. Next, we combine the two semi-boundaries to obtain the leftmost and rightmost vertices.

Now we determine the vertical boundaries. The left (right) boundary, if exists, has the maximum (minimum) $-\tilde{c}_j/\tilde{a}_j$, which can be found in $O(k)$ time. If the left (right) boundary does not rule out the leftmost (rightmost) vertex, then there is no left (right) boundary. Otherwise, by a binary search on vertices of the two semi-boundaries, we can find two segments adjacent to the left (right) boundary in $O(\log k)$ time. 
An illustration of this procedure is presented in \Cref{fig:cal-vertices-polygon}.
\begin{figure}[ht]
    \centering
    \includegraphics[width=0.8\textwidth]{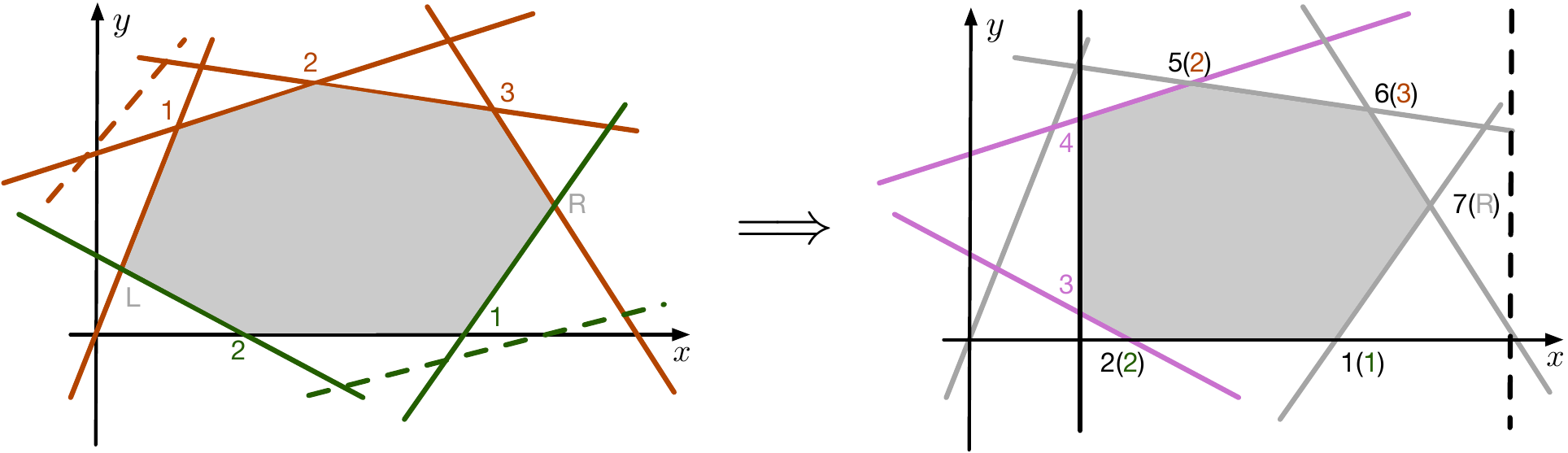}
    \caption{Illustration of the procedure computing ordered vertices (vertical line cases). On the left, the lower semi-boundary and the upper one is colored with green and orange, respectively. The vertices are labeled clockwise on each semi-boundary. On the right, we try to add vertical lines. The black dashed line does not change the structure at all, so it is omitted. The black solid line will change the structure, and the new labels of vertices are computed (and in the bracket is the original labels).
    }
    \label{fig:cal-vertices-polygon}
\end{figure}

We collect the above arguments into the following proposition.
\begin{proposition}
\label{prop:3dbpoint}
There exists an algorithm that outputs all vertices in a clockwise order of the polygon $P_i$ in time $O(k\log k)$.
\end{proposition}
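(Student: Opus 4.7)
The plan is to reduce the problem to two instances of the envelope problem already solved by \Cref{prop:2dbpoint}, plus a light post-processing step for the vertical constraints. First, I would partition the $k$ constraints $l_j: \tilde{a}_j x + \tilde{b}_j y + \tilde{c}_j \geq 0$ into four classes based on the signs of $\tilde{a}_j$ and $\tilde{b}_j$, exactly as the excerpt indicates: vertical constraints with $\tilde{b}_j = 0$ (split into left- and right-bounding depending on the sign of $\tilde{a}_j$) and non-vertical constraints with $\tilde{b}_j \neq 0$ (split into upper- and lower-bounding depending on the sign of $\tilde{b}_j$). This classification is justified by observing that at any point on $\partial P_i$, the active constraint of each class contributes to exactly one of the four boundary arcs, so the overall boundary is the concatenation of the four arcs.

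Second, I would rewrite every non-vertical upper-bounding constraint in the form $y \leq a_j x + b_j$ and compute the lower envelope $y = \min_j(a_j x + b_j)$; symmetrically, I would compute the upper envelope of the lower-bounding constraints. Each is an instance of \Cref{prop:2dbpoint} (up to a sign flip of $y$), so both are obtained in $O(k \log k)$ time as ordered sequences of breakpoints, which are exactly the candidate vertices lying on the non-vertical part of $\partial P_i$.

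Third, I would handle the vertical constraints. Among left-bounding ones, the binding constraint is the one maximizing $-\tilde{c}_j / \tilde{a}_j$; this maximum $x_L$ can be found in $O(k)$ time, and symmetrically $x_R$. If $x_L$ lies to the left of the leftmost vertex produced in the previous step, the left vertical constraint is redundant and can be discarded; otherwise, I would use binary search on the $x$-coordinates of the upper- and lower-envelope vertex sequences to locate, in $O(\log k)$ time, the two edges clipped by $x = x_L$, compute the two new intersection vertices, and truncate both sequences accordingly. The analogous procedure handles $x_R$. Finally, I would stitch the (possibly truncated) upper and lower envelopes together at the leftmost and rightmost vertices to produce a single clockwise vertex list. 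The dominant cost is the two envelope computations, yielding $O(k \log k)$ overall.

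The main obstacle I expect is not any single step but the careful case analysis of the stitching and truncation: one must treat correctly the corner cases in which $P_i$ is empty, is unbounded in $\Delta_3$ (not really possible here since we intersect with the simplex, but the simplex constraints must themselves be folded into the $k$ inequalities), has coincident candidate vertices, or has a vertical edge that coincides with an envelope breakpoint. Once the classification into four monotone arcs is fixed, however, each arc is itself a monotone chain, so all these degeneracies can be resolved locally in constant time per candidate vertex, preserving the $O(k \log k)$ bound claimed by the proposition.
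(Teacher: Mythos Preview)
Your proposal is correct and follows essentially the same approach as the paper: classify constraints into vertical (left/right, by the sign of $\tilde{a}_j$) and non-vertical (upper/lower semi-boundary, by the sign of $\tilde{b}_j$), compute the two envelopes via \Cref{prop:2dbpoint} in $O(k\log k)$, then handle the at most two binding vertical constraints by a linear scan plus binary search into the envelope vertex lists. The paper's argument is the same in structure and complexity, with your additional remarks on degenerate cases being a reasonable elaboration of details the paper leaves implicit.
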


\subsection{Proof of \Cref{details:lemma:argmax}}\label{app:argmax}
In case 1, take any $x\in S$. Then $g_2(x)\leq M_2\leq m_1\leq g_1(x)$. Therefore, $\min_S g(x)=\min_S g_1(x)=m_1$. The minimum is obtained exactly in $\{x\in S: g_1(x)=m_1\}$, namely $g_1^{-1}(m_1)$.

Case 2 is symmetric to case 1, so we omit it.

For case 3, we consider which set function $g$ reaches the minimum. Suppose that $x$ is a minimum, and $g_1(x)\neq g_2(x)$, then suppose without loss of generality that $g_1(x)>g_2(x)$. By the continuity of $g_1,g_2$, $g_1>g_2$ in some neighborhood of $x$. Therefore, $g=g_1$ in this neighborhood, so $x$ must be a local minimum for $g_1$. The problem becomes solving the first-order condition for optimization with linear constraints. Thus $x$ must also be a KKT point of $g_1$ and satisfy the KKT condition given in the statement.

\subsection{Proof of \Cref{prop:stand}}\label{details:stand}

We prove the statement in the order of \ref{prop:stand:state1}, \ref{prop:stand:state2}, \ref{prop:stand:S}, \ref{prop:stand:facetS}, \ref{prop:stand:partialS}, and \ref{prop:stand:faceS}.

\begin{proof}[Proof of Statement \ref{prop:stand:state1} and \ref{prop:stand:state2}]
For statement \ref{prop:stand:state1}, we note that notation $\dim (S)=m$ means that the smallest affine space containing it has dimension $m$. Therefore, the space can be expressed by the solution of $(n-m)$ linear equations, say $u_i^\T  x=v_i, i\in [n-m]$. Moreover, the remaining constraints cannot contain any equation like $u^\T x=v$; otherwise, $\dim(S)\leq m-1$, which violates the definition of the affine hull.
    
For statement \ref{prop:stand:state2}, by the definition of parallel, $d\parallel S$ if and only if there exists a line segment defined by $x_0,x_1\in S$ such that $x_1=x_0+\delta d$, where $\delta$ is a nonzero constant. Since $x_0,x_1\in S$, we have for every $i\in [n-m]$, $u_i^\T  x_0= v_i$ and  $u_i^\T (x_0+\delta d)=v_i$, so $u_i^\T  d=0$.
\end{proof}
To prove statements \ref{prop:stand:S}, \ref{prop:stand:facetS} and \ref{prop:stand:partialS}, we need the representation theorem for polytopes. We say a halfspace (inequality) $a^\T x\leq b$ is \emph{facet-defining} (for polytope $P$) if $P\cap\{x\in\R^n:a^\T x\geq b\}$ defines a facet of $P$.
\begin{theorem}[Representation theorem for polytopes, Theorem 2.15 in \cite{Z12_GTMpolytope}]\label{thm:represent-polytope}
A subset $P\subseteq\R^n$ is a polytope if and only if it can be described as a bounded intersection of facet-defining halfspaces, one for each facet, and of the affine hull of $P$. Moreover, the facet-defining inequalities are uniquely determined (if we write them as $a_i^\T x\leq 1)$, and none of them can be deleted.
\end{theorem}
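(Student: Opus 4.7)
The plan is to prove both halves of this classical theorem, treating the biconditional and the uniqueness claim separately.

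First I would establish the equivalence of V-polytopes (convex hulls of finite point sets) and H-polytopes (bounded intersections of halfspaces) via Fourier--Motzkin elimination. Given a V-polytope $P=\mathrm{conv}\{v_1,\dots,v_k\}$, write $x\in P$ iff $x=\sum_i\lambda_i v_i$ with $\lambda\in\Delta_k$, and project out the $\lambda$ variables by iteratively eliminating one at a time; each elimination preserves the property of being a finite intersection of halfspaces, and the projection onto the $x$-coordinates yields an H-representation. The reverse direction---that a bounded polyhedron has finitely many vertices and equals their convex hull---follows from the fact that the extreme points of a bounded polyhedron are finite in number (each is determined by $n$ linearly independent active constraints). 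This gives us \emph{some} H-representation $P=\{x:Ax\leq b\}$.

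Next, I would reduce a generic H-representation to the canonical form stated in the theorem. Separate the rows of $Ax\leq b$ into two groups: the \emph{implicit equalities} (rows $a_i^{\T}x\leq b_i$ that are satisfied with equality for every $x\in P$) and the \emph{proper inequalities}. The implicit equalities jointly cut out $\mathrm{aff}(P)$, so working inside $\mathrm{aff}(P)$ we may assume $P$ is full-dimensional. Then I would call an inequality \emph{redundant} if its removal does not enlarge $P$, and \emph{facet-defining} otherwise. For each non-redundant inequality $a_i^{\T}x\leq b_i$, the set $F_i=P\cap\{a_i^{\T}x=b_i\}$ must be $(d{-}1)$-dimensional (where $d=\dim P$): otherwise a small perturbation of $b_i$ upward would not change $P$, contradicting non-redundancy. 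Removing the redundant rows leaves exactly the facet-defining inequalities, one per facet.

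The main obstacle is the uniqueness claim, which I would handle last. Suppose $P$ has two minimal H-representations (after normalizing $b_i=1$, which is legitimate once we intersect with $\mathrm{aff}(P)$ and translate so that $P$ has nonempty interior in $\mathrm{aff}(P)$). Each facet $F$ of $P$ is an intrinsic geometric object: it is a maximal proper face, and its affine hull $\mathrm{aff}(F)$ is a hyperplane in $\mathrm{aff}(P)$. Any facet-defining inequality $a_i^{\T}x\leq 1$ with $F=P\cap\{a_i^{\T}x=1\}$ must have $\{a_i^{\T}x=1\}\supseteq \mathrm{aff}(F)$, which pins down the linear functional $a_i^{\T}\cdot$ on $\mathrm{aff}(P)$ up to positive scalar; the normalization $b_i=1$ removes the remaining freedom (and the ``into $P$'' orientation fixes the sign). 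Hence each facet contributes exactly one inequality in any minimal representation, so the two representations coincide. The subtlety to watch is that when $P$ is not full-dimensional the normalization $b_i=1$ is only meaningful modulo the affine hull equations; this is why the theorem packages the affine hull together with the facet-defining inequalities.

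Overall the proof is a standard combination of Fourier--Motzkin (for existence) and a face-lattice argument (for uniqueness); the only genuinely delicate point is handling the lower-dimensional case cleanly, which motivates the explicit appearance of $\mathrm{aff}(P)$ in the statement.
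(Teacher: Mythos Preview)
The paper does not prove this theorem: it is quoted as Theorem 2.15 of Ziegler's \emph{Lectures on Polytopes} and used as a black box to establish statements \ref{prop:stand:S} and \ref{prop:stand:facetS} of \Cref{prop:stand}. So there is no ``paper's own proof'' to compare your proposal against.

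That said, your sketch is a reasonable outline of the standard argument (Fourier--Motzkin for the V/H equivalence, then pruning to an irredundant system and identifying the surviving inequalities with facets). One point worth tightening: your claim that a non-redundant inequality must cut out a $(d-1)$-dimensional face needs the full-dimensional assumption already in place, and the converse direction---that every facet arises from some inequality in the list---also deserves a sentence. The uniqueness argument is fine once you have fixed an interior point to normalize against; just be careful that ``translate so $0$ is interior'' is what makes $b_i=1$ achievable, and this step interacts with the affine-hull reduction exactly as you note.
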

\begin{proof}[Proof of Statement \ref{prop:stand:S} and \ref{prop:stand:facetS}]
By \Cref{thm:represent-polytope}, we can write $S$ in the form
\[\aff(S)\cap\bigcap_{i\in \tilde{W}}\left\{x\in\R^n:\tilde{a}_i^\T x\leq 1\right\}.\]
Here, inequalities $\tilde{a}_i^\T x\leq 1$ are facet-defining. Since they are unique and cannot be deleted, each inequality $\tilde{a}_i^\T x\leq 1$ corresponds to a constraint $a_j^\T x\leq b_j$ with $a_j=b_j\tilde{a}_i$. For each $i\in\tilde{W}$, pick such a $j$. Then we have already selected an index subset $W$ of $[k]$. By statement \ref{prop:stand:state1}, $\aff(S)$ can be written in the form of  $\left\{x\in\R^n:u_i^\T  x=v_i,\forall i\in [n-m]\right\}$. Now we have proved statement \ref{prop:stand:S}.

For statement 5, since constraint $a_j^\T x\leq b_j$, $j\in W$ is fact-defining, again by \Cref{thm:represent-polytope}, $S_j'=\aff(S)\cap\left\{x\in\R^n:a_j^\T x\leq b_j\right\}$ exactly represents a facet and \emph{vice versa}.
\end{proof}

Now we prove statement \ref{prop:stand:partialS}.

\begin{proof}[Proof of statement \ref{prop:stand:partialS}]
We first prove that
\[S_1:=\left\{x\in S:\forall k\in W, a_k^\T  x<b_k\right\} \subseteq S^\circ.\] 
Pick $x\in S_1$. By the continuity of $a_k^\T x$, there exists a small neighborhood $U$ of $x$ such that $U\cap S\subseteq S_1$. Therefore, $x$ is an interior point of $S$. Since $x$ is arbitrary, $S_1\subseteq S^\circ$ holds.
    
Second, we show that 
\[S_2:=\left\{x\in S:\exists k\in W, a_k^\T  x=b_k\right\}\subseteq \partial S .\] 
Note that by the construction of statement \ref{prop:stand:S}, all constraints indexed in $W$ in $S$ can not be deleted. It then guarantees that for every $k\in W$, there exists $x_0\in \R^n\setminus S$ such that $a_k^\T  x_0 > b_k$ and $a_j^\T  x_0\leq b_j$ holds for each $j\in W\setminus\{k\}$. Suppose $x_1\in S_2$. Then $a_k^\T  x_1=b_k$ for some $k\in W$. Consider the direction $d=x_0-x_1$. Notice that the set $\left\{x\in\R^n: a_k^\T  x \geq b_k\text{ and }\forall j\in W\setminus\{k\}, a_j^\T  x\leq b_j\right\}$ is convex. Since $x_1,x_0$ are both in it, the line segment defined by $x_1,x_0$ also lies in it. Therefore, for arbitrarily small $\epsilon>0$,
\begin{align*}
    a_k^\T \left(x_1+\epsilon(x_0-x_1)\right)=(1-\epsilon)\underbrace{a_k^\T x_1}_{=b_k}+\epsilon \underbrace{a_k^\T x_0}_{>b_k}>b_k.
\end{align*}
Hence $x_1+\epsilon(x_0-x_1)\notin S$ and thus $x_1\in\partial S$. Since $x_1$ is arbitrary, $S_2\subset \partial S$ holds.

Now we combine these two results. Clearly $S_1\cap S_2=\varnothing$ and $S_1\cup S_2=S=\partial S\cup S^{\circ}$. So we must have $S_1=S^\circ$ and $S_2=\partial S$.
\end{proof}
Finally, statement \ref{prop:stand:faceS} is the direct corollary of a result on face lattice. 

\begin{definition}
A \emph{graded lattice} is a finite partially ordered set $(S,\leq)$ if it shares \emph{all} the following properties.
\begin{itemize}
    \item It has a unique minimal element $\hat{0}$ and a unique maximal element $\hat{1}$.
    
    \item Every maximal chain has the same length. 
    \item Every two elements $x, y\in S$ have a unique minimal upper bound in $S$, called the \emph{join} $x \vee y$, and every two elements $x, y\in S$ have a unique maximal lower bound in $S$, called the \emph{meet} $x\wedge y$.
\end{itemize}

For a graded lattice, the minimal elements of $S\setminus\hat{0}$ are called \emph{atoms}, and the maximal elements of $S\setminus\hat{1}$ are called \emph{coatoms}.

A lattice is \emph{atomic} if every element is a join $x = a_1 \vee\dots\vee a_k$ of $k\geq 0$ of atoms. Similarly, a lattice is \emph{coatomic} if every element is a meet of coatoms.
\end{definition}

\begin{theorem}[Proposition 2.3 and Theorem 2.7 in \cite{Z12_GTMpolytope}]\label{thm:face-lattice}
Let $P$ be a convex polytope. Consider the set of all faces $L(P)$, partially ordered by inclusion.
\begin{enumerate}
    \item Set $L(P)$ is a graded lattice of length $\dim(P) + 1$. The meet operation is exactly the intersection of sets.
    \item The face lattice $L(P)$ is both atomic and coatomic.
    \item The faces of $F$ are exactly the faces of $P$ that are contained in $F$.
\end{enumerate}
\end{theorem}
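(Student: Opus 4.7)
The plan is to treat \Cref{thm:face-lattice} as a black-box citation rather than reprove it: the three assertions are classical (Proposition~2.3 and Theorem~2.7 of \cite{Z12_GTMpolytope}), and in the present excerpt they will be invoked solely to deduce statement~\ref{prop:stand:faceS} of \Cref{prop:stand}. For completeness, however, I will sketch how each of the three parts would be obtained from the representation-theoretic tools already used in the excerpt, in an order that avoids circular dependencies.

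First I would prove part~(3) directly from the supporting-hyperplane characterization of faces, independently of the lattice structure: a face of a face $F$ is cut out by intersecting $F$ with a supporting hyperplane, and such a hyperplane extends to a supporting hyperplane of $P$ because $F$ itself arises as the intersection of $P$ with a supporting hyperplane. Next I would establish part~(1). The meet is the intersection because the intersection of two exposed sets of a convex polytope is again exposed; the unique minimum and maximum are $\emptyset$ and $P$. Gradedness with common chain length $\dim(P)+1$ then follows by induction on $\dim(P)$: any maximal chain must pass through some facet $F$, and by part~(3) the sub-chain below $F$ is a maximal chain in the face lattice of $F$, which has length $\dim(P)$ by the inductive hypothesis.

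For part~(2), coatomicity is an immediate consequence of \Cref{thm:represent-polytope}: any face $G$ of $P$ is itself a polytope, hence the intersection of its affine hull with its facet-defining halfspaces, and by part~(3) each facet of $G$ arises as $G\cap F'$ for some facet $F'$ of $P$. Atomicity is the Minkowski direction of the Minkowski--Weyl theorem, together with the observation that the vertices of $G$ are precisely those vertices of $P$ that lie in $G$. The main obstacle is purely organizational: parts~(1) and~(2) both lean on part~(3), so part~(3) has to be proved first and without invoking the lattice structure. Since the only downstream consumer, namely statement~\ref{prop:stand:faceS} of \Cref{prop:stand}, needs only coatomicity plus the dimension bound extracted from gradedness, the cleanest write-up in practice is simply to cite \cite{Z12_GTMpolytope} and move on.
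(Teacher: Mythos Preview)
Your proposal is correct and matches the paper exactly: the paper does not prove \Cref{thm:face-lattice} at all but simply imports it as a black-box citation from Ziegler, then uses only coatomicity and the meet-is-intersection fact to establish statement~\ref{prop:stand:faceS} of \Cref{prop:stand}. Your additional proof sketch goes beyond what the paper provides, but your bottom-line recommendation---cite and move on---is precisely what the paper does.
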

\begin{proof}[Proof of statement \ref{prop:stand:faceS}]
By \Cref{thm:face-lattice}, $L(S)$ is a graded lattice. Suppose $F$ is a face in $L(S)$. Then since $L(S)$ is coatomic, $F$ is the meet (i.e., intersection) of coatoms, i.e., facets.
\end{proof}

\subsection{Proof of \Cref{details:cor:mincol}}\label{app:mincol}
We prove the corollary by discussing all possible cases that achieve the minimum. Since $g(x)=\max\{g_1(x),g_2(x)\}$, we partition the domain into three parts according to 
whether $g_1(x)$ is greater than, smaller than or equal to $g_2(x)$. 

\begin{proof}[Proof of Statement \ref{cor:mincol:state1}]
By symmetry, we only need to consider the case of 
\[S_1:=\left\{x\in S : g_1(x)>g_2(x),\exists \lambda\geq 0, \nabla g_1(x)+ \lambda^\T  U=0\text{ and }\forall i\in [m],  \lambda_i (U_i x-V_i)=0  \right\}.\]
It suffices to show that for every $x\in S_1$, either $\nabla g_1(x)=0$ or $x\in \partial S$.

For a given $x\in S_1$, if  $\nabla g_1(x)=0$, then $\lambda=0$ is a solution for the KKT conditions given in Theorem 12.1 in \cite{N99_NumOpt}. Otherwise, since $\nabla g_1(x)=-\lambda^\T  U\neq 0$, there must exist $i$ such that $\lambda_i\neq 0$. Therefore, we must have $U_i x=V_i$. By definition, every $x\in S$ satisfies $U_i x\leq V_i$. By our assumption on $U$, $U_i\neq 0$. So there is a vector $d\in\R^n$ such that $U_i d>0$. For any $\epsilon>0$, $U_i(x+\epsilon d)>V_i$. Hence $x+\epsilon d\notin S$, i.e., $x\in \partial S$. 
\end{proof}

To prove the rest statements, we need the following claim.

\begin{claim}\label{claim:face-parallel}
For any face $T$ of $S$, if $T$ is parallel to $e_i$, we have: for any $x\in T\cap S_1$, if the minimum of $f$ can be obtained at $x$, then either  $x$ is contained in a facet not parallel to $e_i$ or $\partial g_1(x)/ \partial x_i=0$.
\end{claim}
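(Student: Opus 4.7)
The plan is to combine the KKT equality that is built into the definition of $S_1$ with the algebraic characterization of ``facet parallel to $e_i$'' coming from \Cref{prop:stand}. The face $T$ itself plays no active role in the argument beyond providing the ambient hypothesis; the conclusion is determined entirely by the active-facet structure at $x$.

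First, by the definition of $S_1$, for any $x\in T\cap S_1$ there exist multipliers $\lambda\geq 0$ with $\nabla g_1(x)+\lambda^{\T}U=0$ and complementary slackness $\lambda_j(U_j x-V_j)=0$ for every $j$. Hence $\lambda_j=0$ whenever the $j$-th constraint is inactive, so
\[
\nabla g_1(x)=-\sum_{j\in J(x)}\lambda_j U_j^{\T},
\]
where $J(x):=\{j:U_j x=V_j\}$ denotes the active index set.

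Next, I would recast ``facet of $S$ parallel to $e_i$'' as a condition on the corresponding row of $U$. Taking $U$ to consist of the facet-defining inequalities supplied by statement \ref{prop:stand:facetS} of \Cref{prop:stand} (any redundant rows can be dropped without altering $S$), and using $\dim S=n$ so that $\aff(S)=\R^n$, the facet $S'_j=\{x\in S:U_j x=V_j\}$ is parallel to $e_i$ if and only if $e_i$ lies in the defining hyperplane, which is equivalent to $(U_j)_i=0$. Now suppose $x$ is not contained in any facet that fails to be parallel to $e_i$; then every index $j\in J(x)$ has $(U_j)_i=0$. Reading off the $i$-th coordinate of the KKT equation gives
\[
\frac{\partial g_1}{\partial x_i}(x)=-\sum_{j\in J(x)}\lambda_j (U_j)_i=0,
\]
which is the desired alternative in the dichotomy.

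The one technical point that I expect to dwell on is the passage from an arbitrary constraint matrix $U$ to its irredundant, facet-defining subset: before this reduction, a redundant row that happens to be active at $x$ could in principle carry a nonzero KKT multiplier whose $i$-th entry is nonzero, and one needs to know that such a row can be absorbed or that the KKT multipliers can be rechosen on the reduced system. Once this reduction is justified via statement \ref{prop:stand:S} of \Cref{prop:stand}, the rest of the argument is a one-line coordinatewise reading of the stationarity condition.
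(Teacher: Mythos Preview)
Your approach is correct but takes a genuinely different route from the paper's. The paper proceeds by induction on $\dim(T)$: it forms the segment $\tilde{l}_x=\{x+te_i:t\in\R\}\cap T$ and splits according to whether $x$ is an interior point of $\tilde{l}_x$ (then $x\pm\epsilon e_i\in S$, forcing $U_je_i=0$ for every active row, redundant or not, directly from the given system) or an endpoint (then $x\in\partial T$ and one recurses to a lower-dimensional face, the base case $\dim T=1$ showing explicitly that a vertex of $S$ must lie on some facet with $(U_j)_i\neq0$). In that argument the hypothesis $T\parallel e_i$ is essential: it is what makes $\tilde{l}_x$ a nondegenerate segment. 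Your argument bypasses the induction by switching to a facet-defining subsystem, after which the dichotomy is indeed a one-line coordinate reading of the KKT equation, and you correctly observe that in this framing $T$ plays no role (you in fact prove the stronger statement for all $x\in S_1$). The price is precisely the step you flag: statement~\ref{prop:stand:S} of \Cref{prop:stand} only supplies the irredundant description of $S$, not that the KKT multipliers from $S_1$ can be rechosen on it. That transfer is true---it amounts to the fact that the normal cone of a polytope at $x$ is generated by the outer normals of the facets through $x$, or equivalently one may re-derive first-order conditions on the reduced system directly from the hypothesis that $f$ attains its minimum at $x$ together with $g_1(x)>g_2(x)$---but it is an external ingredient not furnished by \Cref{prop:stand} alone. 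The paper's inductive proof never leaves the given $U$ and so avoids needing it.
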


\begin{proof}
Since $x\in S_1$, by the KKT condition given in $S_1$, there exists $\lambda$ such that $\nabla g_1(x)=-\lambda^\T  U$. Thus $\partial g_1(x)/\partial x_i=\nabla g_1(x)^\T  e_i=-\lambda ^\T  U e_i$.  Note that by the definition of parallel, there exists a line $l$ such that $l\subseteq \aff(T)$ and $l\parallel e_i$. Therefore, for any $x\in T$, there exists a line $l_x\subseteq\aff(T)$ such that $x\in l_x$ and $l_x\parallel e_i$. Define $\tilde{l}_x:=l_x\cap T$. 

If $x$ is not an endpoint of the line segment $\tilde{l}_x$, then $\pm e_i$ are both feasible directions for $x$, namely for any sufficiently small $\epsilon>0$, $x\pm \epsilon e_i\in T$. We show that in this case, $x$ must satisfy $\partial g_1(x)/\partial x_i=0$. The KKT condition implies that for any $j\in [n]$, we either have $\lambda_j=0$ or $U_j x=V_j$. If $U_j x=V_j$, since $x\pm \epsilon e_i\in S$, $U_j (x\pm \epsilon e_i )\leq V_j= U_j x$, which means $U_j e_i=0$. Therefore, either $\lambda_j =0$ or $ U_j e_i=0$, we have $\lambda^\T  U e_i=\sum_j \lambda_j U_j e_i=0$. Thus $\partial g_1(x)/\partial x_i= -\lambda^\T  U e_i=0$ as desired.

To finish our proof, it suffices to show that if $x\in S_1\cap T$ is an endpoint of $\tilde{l}_x$, then either $x$ is contained in a facet $N$ of $S$ not parallel to $e_i$ or $\partial g_1(x)/\partial x_i=0$. We prove it by induction on $\dim(T)$. 

Suppose $\dim(T)=1$. By the definition of polytopes, $T$ must be a one-dimensional bounded and closed convex set, i.e., a line segment. In this case, $e_i\parallel T$, so for any $x_0\in T$, $\tilde{l}_{x_0}$ is exactly $T$. Thus, $x_0$ is the endpoint of $\tilde{l}_{x_0}$ if and only if $\{x_0\}$ is a face of $T$. By \Cref{thm:face-lattice}, $\{x_0\}$ is a face of $S$. Then by statement \ref{prop:stand:faceS} in \Cref{prop:stand}, $\{x_0\}$ is the intersection of several facets of $S$.  Since $S=\left\{x\in\R^n:U_i^\T  x\leq V_i, i\in [m]\right\}$, by statement \ref{prop:stand:facetS} in \Cref{prop:stand}, the facets of $S$ can be expressed as $S_i'=\left\{x\in S: U_i^\T  x=V_i\right\}$, $i\in W$ for some index subset $W$.  Thus there exists a nonempty subset of $W$, denoted by $I$, such that $\{x_0\}=\bigcap_{i\in I} S_i'$ and $x_0\notin S_j'$ for every $j\in W\setminus I$. Note that by assumption $\dim(S)=n$, we have $\aff(S)=\R^n$. By statement \ref{prop:stand:S} in \Cref{prop:stand}, $S=\left\{x\in \R^n: U_i^\T  x\leq V_i, \forall i\in W\right\}$. Then we have
\begin{align*}
    \{x_0\}=&\bigcap_{i\in I} S_i'\\
    =&\bigcap_{i\in I}\left(S\cap \left\{x\in \R^n: U_i^\T  x=V_i\right\}\right)\\
    =&S\cap \bigcap_{i \in I}\left(\left\{x\in \R^n: U_i^\T  x=V_i\right\}\right)\\
    =&\left\{x\in\R^n: U_i^\T  x=V_i, i\in I, U_j^\T  x<V_j, j\in W\setminus I\right. \}.
\end{align*}
Now we show that there exists a facet $S_i'$, $i\in I$ not parallel to $e_i$. Suppose on the contrary that for any $i\in I$, $S_i'$ is parallel to $e_i$, then we have $U_i^\T  e_i=0$ by the definition of parallel. Thus for any $k\in \R$, we have $U_i^\T (x_0+k e_i)=V_i$ for every $i\in I$. Note that by continuity there exists a sufficiently small $\epsilon>0$ such that for every $j\in W\setminus I$, $U_j^\T  (x+\epsilon e_i)< V_j$. Thus $x_0+\epsilon e_i$ is also contained in the set $\left\{x\in\R^n: U_i^\T  x=V_i, i\in I, U_j^\T  x< V_j, j\in W\setminus I \right\}=\{x_0\}$, a contradiction. So we finish the proof of the case $\dim(T)=1$.

Now we suppose that the result holds on every $h$-dimensional face with $h=m-1\leq n-1$, and let $\dim(T)=m$. Note that for any $x\in T^\circ$, there exists $\epsilon>0$ such that every $d\parallel  \aff(T)$ satisfies $x+\epsilon d\in T$. So $x$ must be an interior point of $\tilde{l}_x$ and thus not be an endpoint of $\tilde{l}_x$. We have assumed that $x$ is the endpoint of $\tilde{l}_x$, so this is not the case. We must have $x\in \partial T$. By statement \ref{prop:stand:partialS} and \ref{prop:stand:facetS} in \Cref{prop:stand}, $x$ must be contained in a face $T'\subseteq \partial T$ of $T$ with $\dim(T')=m-1$. By \Cref{thm:face-lattice}, $T'$ is also a face of $S$. If $T'\parallel e_i$, then line $l_x\subseteq\aff(T')$ with $x\in l_x$. Let $\tilde{l}'_x=l_x\cap T'$. By the same argument, $\tilde{l}'_x$ is a line segment. If $x$ is not an endpoint of this segment, then we have proved that $\partial g_1(x)/\partial x_i=0$ as desired. If $x$ is an endpoint, then by the induction hypothesis either $x$ is contained in a facet $N$ of $S$ not parallel to $e_i$ or $\partial g_1(x)/\partial x_i=0$. Thus the induction holds. If face $T'$ is not parallel to $e_i$, then we show that $T'$ is contained in some facet $N$ of $S$ not parallel to $e_i$. Note that since face $T'$ is not parallel to $e_i$, $l_x\cap T'=\{x\}$. If all facets $S'$ of $S$ containing $T'$ are parallel to $e_i$, by the same argument on the case of $\dim(T)=1$, for sufficiently small $\epsilon>0$, $x+\epsilon e_i\in T'$. However, clearly $x+\epsilon e_i\in l_x$ and thus $x+\epsilon e_i\in l_x\cap T'$, which leads to a contradiction.
\end{proof}

Now we can continue the main proof.
\begin{proof}[Proof of Statement \ref{cor:mincol:state2}]
For statement \ref{cor:mincol:state2}, it suffices to show that for every $x\in \partial S\cap S_1$, either $x\in \partial S_N$ or both $x\in \partial S_P$ and $\partial g_1(x)/\partial x_i=0$ hold. Equivalently, we show that for every $x \in P\cap S_1$, where $P$ is a facet parallel to $e_i$, either there exists a facet $N$ which is not parallel to $e_i$ such that $x \in N$, or $\partial g_1(x)/\partial x_i=0$. This immediately follows by taking $m=n-1$ in \Cref{claim:face-parallel}.
\end{proof}

\begin{proof}[Proof of Statement \ref{cor:mincol:state3}]
We first show that any face $T$ of $S$ must have the form 
\[
T_{I_1,I_2}:=\prod_{i\in [n]} S_i.
\]
Here, $S_i=\{m_i\}$ for every $i\in I_1$, $S_i=\{M_i\}$ for every $i\in I_2$ and $S_i=[m_i,M_i]$ for every $i\in[n]\setminus(I_1\cup I_2)$, where $I_1,I_2\subseteq[n]$ and $I_1\cap I_2=\varnothing$.

By applying statement \ref{prop:stand:S} in \Cref{prop:stand}, $S$ can be written into $\{x\in \R^n: x_i\leq M_i, -x_i\leq -m_i, \forall i\in [n]\}$. Therefore, by statement \ref{prop:stand:facetS} in \Cref{prop:stand}, the facets of $S$ is given by $S\cap \{x\in\R^n: x_i=m_i\}$ or $S\cap \{x\in\R^n: x_i=M_i\}$. By statement \ref{prop:stand:faceS} in \Cref{prop:stand}, any face $T$ of $S$ can be expressed as the intersection of several facets. Thus there exist index subsets $I_1, I_2$ such that 
\[T=S\cap \bigcap_{i\in I_1} \{x\in\R^n: x_i=m_i\}\cap \bigcap_{i\in I_2} \{x\in\R^n: x_i=M_i\}.\]
If $I_1\cap I_2=\varnothing$, then exactly $T=T_{I_1,I_2}$; otherwise, if $i\in I_1\cap I_2$, by $m_i\neq M_i$, $T=\varnothing$. Now, it suffices to show that any $T_{I_1,I_2}$ is a face of $S$. For any $T_{I_1, I_2}$, consider $a=-\sum_{i\in I_1} e_i +\sum_{i\in I_2} e_i, b=-\sum_{i\in I_1} m_i +\sum_{i\in I_2} M_i$. On the one hand, for any $x\in S$, we have $m_i\leq x^\T  e_i=x_i\leq M_i$, so $a^\T  x=\sum_{i\in I_1} (-x_i)+\sum_{i\in I_2} x_i\leq -\sum_{i\in I_1}m_i+\sum_{i\in I_2} M_i=b$. On the other hand, we can see that the equality holds if and only if $x_i=m_i$ for every  $i\in I_1$ and $x_i=M_i$ for every $i\in I_2$. This set is exactly given by $T_{I_1,I_2}$, so $T_{I_1,I_2}$ is the face of $S$ determined by $a,b$, and we finish our proof.

With the clear description of all the faces of $S$ by $T_{I_1,I_2}$, we can easily see that a face $T_{I_1,I_2}$ is a single point if and only if $I_1\cup I_2=[n]$. Also, for any face $T$ that is not a single point, there exists $i\in[n]$ such that $i\notin I_1\cup I_2$. So, for any $y=(y_1,...,y_n)$ in $T$, $\prod_{j\neq i} \{y_j\}\times [m_i,M_i]\subseteq T$, which defines a line from $(y_1,\dots, y_{i-1}, m_i, y_{i+1},\dots, y_n)$ to $(y_1,\dots, y_{i-1}, M_i, y_{i+1},\dots, y_n)$ parallel to $e_i$. Therefore, by the definition of parallel, $e_i\parallel T$. So every face $T$ of $S$ is parallel to some $e_i$.

Suppose $x\in\partial S$. We show that either $x$ is a single point face, or $x$ belongs to the interior of some face $T$ parallel to some $e_i$. Note that $x$ must belong to some face $T$ of $S$. We prove it by induction on the dimension of $T$. When $\dim(T)=0$, $T=\{x\}$ is a single point face. For $\dim(T)=n$, we only need to consider the case that $x\notin T^\circ$. Immediately, $x\in \partial T$, so by \Cref{prop:stand} and \Cref{thm:face-lattice}, $x$ belongs to some face of $S$ with lower dimension. The result then follows by the induction hypothesis. 

Thus, either $x$ is a single point face given by $\left\{x\in\R^n: \forall i, x_i \in\left\{m_i, M_i\right\}\right\}$, or $x$ belongs to the interior of a face $T$ parallel to some $e_i$. We now apply the result in \Cref{claim:face-parallel}. Suppose $x$ is not a single point face that attains the minimum of $f$. If $g_1(x)>g_2(x)$, then $\partial g_1(x)/\partial x_i=0$; if $g_1(x)<g_2(x)$, then $\partial g_2(x)/\partial x_i=0$. So, $x$ must be contained in the set $S^+$ given in this statement.
\end{proof}

\subsection{The \texorpdfstring{$(2,2)$}{(2,2)}-mixing algorithm}
\label{app:prec-2-2-mixing}
We first give some notations Let $F_R(\alpha,\beta)=f_R(\alpha x_1+(1-\alpha) x_2, \beta y_1+(1-\beta)y_2)$. Define $F_C(\alpha,\beta)$ similarly. Then let $F(\alpha,\beta)=\max\{F_R(\alpha,\beta),F_C(\alpha,\beta)\}$. The goal of the $(2,2)$-mixing algorithm is to calculate the minimum of $F$ on square $\mathcal{A}=[0,1]\times[0,1]$. Now we state the algorithm.

Applying the $(2,m)$-separation algorithm in \Cref{app:prec-2-m-separation}, we can construct a mesh grid of $(\alpha,\beta)$ so that on each grid, both $F_R$ and $F_C$ are linear in $\alpha$ and $\beta$ respectively. Then both $F_R$ and $F_C$ have the form $x_1+x_2 \alpha+x_3 \beta+x_4 \alpha \beta$, where $x_i$'s are constants determined by $F_R$ or $F_C$ values on four vertices of the grid. Our next step is then to give a method computing the minimum point of $F(\alpha,\beta)$ on each grid.

On each grid, by statement 3 in \Cref{details:cor:mincol}, it suffices to minimize $F=\max\{F_R,F_C\}$ over:
\begin{enumerate}
    \item points with $\partial F_k(\alpha,\beta)/\partial\alpha=0$ or $\partial F_k(\alpha,\beta)/\partial\beta=0$, $k\in\{R,C\}$,
    \item the four vertices of the grid, and
    \item points with $F_R=F_C$.
\end{enumerate}

\begin{itemize}\label{state:append}
    \item (Case 1) Equations $\partial F_k(\alpha,\beta)/\partial\alpha=0$ and $\partial F_k(\alpha,\beta)/\partial\beta=0$ have the form that $\alpha$ or $\beta$ takes a fixed value.\footnote{When the coefficient of $\alpha$ (or $\beta$) is zero, all or none of $\alpha$ (or $\beta$) solve the equation.} Hence the problem becomes computing the minimum of two univariate linear functions, which can be solved by statement 3 in \Cref{prop:minmax}.
    \item (Case 2) We just need to enumerate the value of $F$ on the four vertices.
    \item (Case 3) By solving the equation $F_R(\alpha,\beta)=F_C(\alpha,\beta)$, we obtain an expression of $\beta$ given by a linear fraction of $\alpha$. If the denominator linear function of $\alpha$ is zero, then we can solve it just like in case 1. Otherwise, by substituting the expression of $\beta$ into the expression of $F=F_R$, we convert this problem into finding the minimum of a function $g(\alpha)$ with the form $(a_2\alpha^2+a_1\alpha +a_0)/(b_1\alpha+b_0)$. This can be done by calculating its values at two boundary points and points with zero derivatives. Note that $g'(\alpha)=0$ is equivalent to a quadratic equation in $\alpha$, which has at most two solutions. So in this case we can test at most four points to find the minimum. 
\end{itemize}

We collect the above arguments into the following proposition:

\begin{proposition}
\label{prop:2dmin}
There exists an algorithm finding the minimum point of $F(\alpha,\beta)$ on any grid in $O(1)$ time.
\end{proposition}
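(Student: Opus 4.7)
The plan is to follow the case decomposition laid out immediately before the proposition, invoking statement~3 of \Cref{details:cor:mincol} to reduce the minimization of $F=\max\{F_R,F_C\}$ over a single grid to three finite families of candidate points. The key structural observation is that, by construction of the grid via the $(2,n)$- and $(2,m)$-separation algorithms, both $F_R$ and $F_C$ are bilinear on the grid, i.e., $F_k(\alpha,\beta)=x_1^{(k)}+x_2^{(k)}\alpha+x_3^{(k)}\beta+x_4^{(k)}\alpha\beta$ for $k\in\{R,C\}$, with coefficients precomputed in the preprocessing step. Consequently, every candidate set is defined by equations whose degree is bounded by a small absolute constant, so the total work per grid is independent of $m$ and $n$.

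For Case~1 (partial-derivative conditions), I observe that $\partial F_k/\partial\alpha=x_2^{(k)}+x_4^{(k)}\beta$ and $\partial F_k/\partial\beta=x_3^{(k)}+x_4^{(k)}\alpha$ are linear. Setting one of these to zero either fixes one of the two variables to a single value or trivializes (if the corresponding coefficient vanishes), so the restriction of $F$ to the locus reduces to the maximum of two univariate linear functions on an interval. The minimum of such a maximum admits a closed-form answer in $O(1)$ time. Case~2 is immediate: simply evaluate $F$ at the four grid corners.

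For Case~3 (the diagonal $F_R=F_C$), the equation is linear-plus-bilinear in $(\alpha,\beta)$; solving it for $\beta$ yields a linear-fractional expression $\beta(\alpha)=(p_1\alpha+p_0)/(q_1\alpha+q_0)$. Substituting into $F_R(\alpha,\beta(\alpha))$ produces a univariate rational function $g(\alpha)$ of the form $(a_2\alpha^2+a_1\alpha+a_0)/(b_1\alpha+b_0)$. Clearing denominators in $g'(\alpha)=0$ yields a quadratic in $\alpha$ with at most two real roots, so the minimum of $g$ over the grid's $\alpha$-interval is attained either at these critical points or at the two endpoints, giving at most four candidates, each checkable in $O(1)$. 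Taking the minimum over all $O(1)$ candidates produced by the three cases yields the grid-wise minimum in constant time, as claimed.

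The main subtlety I expect to manage is the careful branching over degenerate subcases: when $x_4^{(k)}$ vanishes (so a partial derivative is constant in the remaining variable), when the coefficient of $\beta$ in $F_R-F_C$ vanishes identically (so $\beta$ cannot be expressed as a function of $\alpha$ and the roles of the two variables must be swapped), or when the linear-fractional denominator $q_1\alpha+q_0$ has a zero inside the $\alpha$-interval. Each such branch, however, either eliminates a variable outright or reduces the problem to a strictly simpler univariate minimization of a max of linear or rational functions, preserving the constant-size candidate set; verifying this exhaustively is the only nontrivial bookkeeping in the proof, but the verification itself is routine.
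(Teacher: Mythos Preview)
Your proposal is correct and follows essentially the same approach as the paper's own argument: invoke statement~3 of \Cref{details:cor:mincol} to reduce to the three candidate families, then exploit the bilinear form of $F_R,F_C$ so that Case~1 fixes one variable and leaves a max of two linear functions, Case~2 is four evaluations, and Case~3 parametrizes $\beta$ as a linear fraction of $\alpha$ yielding the rational $g(\alpha)=(a_2\alpha^2+a_1\alpha+a_0)/(b_1\alpha+b_0)$ whose critical points come from a quadratic. The degenerate branches you flag (vanishing $x_4^{(k)}$, vanishing denominator) are exactly the ones the paper handles, and your treatment of them is consistent with theirs.
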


With these results above, we can efficiently calculate the minimum point of $f$ on each grid where both $F_R$ and $F_C$ are linear in $\alpha$ and $\beta$ respectively. Note that the numbers of breakpoints of $\alpha$ and $\beta$ are at most $m$ and $n$ respectively, so there are at most $mn$ grids. On each grid the minimization procedure takes $O(1)$ time, implying a total $O(mn)$ time on $\mathcal{A}$. Thus time complexity of calculating the minimum of $f$ on $\mathcal{A}$ is $O(\max\{m,n\}\log\max\{m,n\})+O(m n)=O(m n)$. We summarize it as the following theorem. 

\begin{theorem}\label{thm:mn-min}
Give any strategies $x_1,x_2\in\Delta_m$ and $y_1,y_2\in\Delta_n$, let
\[F(\alpha,\beta)=f(\alpha x_1+(1-\alpha) x_2,\beta y_1+(1-\beta)y_2),\quad\alpha,\beta\in[0,1].\]
Then there exists an algorithm finding the minimum point of $F(\alpha,\beta)$ in time $O(mn)$.
\end{theorem}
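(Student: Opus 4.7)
The plan is to assemble three earlier ingredients: the $(2,m)$-separation of \Cref{thm:2-m-separation}, the necessary-optimality structure of \Cref{details:cor:mincol}, and the per-grid local-minimization cost bound of \Cref{prop:2dmin}. The overall structure follows \Cref{algo:concise-2dbpoint}.

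First I would unpack the objective. Writing
\[
F_R(\alpha,\beta) = \max\{R(\beta y_1 + (1-\beta)y_2)\} - (\alpha x_1 + (1-\alpha)x_2)^\T R(\beta y_1 + (1-\beta)y_2),
\]
and symmetrically for $F_C$, displays $F_R$ as the sum of a piecewise-linear convex function of $\beta$ alone and a bilinear term in $(\alpha,\beta)$, while $F_C$ is the sum of a piecewise-linear convex function of $\alpha$ alone and a bilinear term. Applying the $(2,m)$-separation algorithm to $Ry_1, Ry_2$ and to $C^\T x_1, C^\T x_2$ produces at most $m+1$ breakpoints in $\beta$ and $n+1$ in $\alpha$ in $O(m\log m + n\log n)$ time. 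Precomputing the vectors $Ry_1, Ry_2, C^\T x_1, C^\T x_2$ and the four scalars $x_k^\T R y_\ell, x_k^\T C y_\ell$ by direct matrix--vector multiplication takes $O(mn)$ time, after which the coefficients of $F_R$ and $F_C$, viewed as bilinear polynomials $c_0 + c_1\alpha + c_2\beta + c_3\alpha\beta$, can be read off in $O(1)$ per grid.

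The main step is to minimize $F = \max\{F_R, F_C\}$ over each grid $[\alpha_i,\alpha_{i+1}] \times [\beta_j,\beta_{j+1}]$. On such a grid both $F_R$ and $F_C$ are differentiable bilinear functions, so by statement 3 of \Cref{details:cor:mincol} the minimum is attained either at one of the four vertices, at a point where some $\partial F_k/\partial\alpha$ or $\partial F_k/\partial\beta$ vanishes, or on the locus $\{F_R = F_C\}$. \Cref{prop:2dmin} shows that each case reduces to minimizing a univariate function of constant algebraic degree, solvable in $O(1)$. With at most $(m+1)(n+1) = O(mn)$ grids each costing $O(1)$, the grid pass takes $O(mn)$ time. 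Summing the $O(mn)$ preprocessing, the $O(m\log m + n\log n)$ separation, the $O(mn)$ grid pass, and the final $O(mn)$ comparison of local minima yields the claimed $O(mn)$ bound.

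The step I expect to require the most care is the analysis of the locus $\{F_R = F_C\}$ on a grid: since $F_R - F_C$ is bilinear, the locus is a (possibly degenerate) conic, and one must argue that restricting $F$ to it reduces to a univariate rational minimization of bounded degree so that \Cref{prop:2dmin} applies uniformly. Once this is settled, the remainder of the proof is routine bookkeeping of the three complexity contributions above.
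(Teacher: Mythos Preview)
Your proposal is correct and follows essentially the same approach as the paper: separate into a grid via the $(2,m)$-separation algorithm, invoke statement~3 of \Cref{details:cor:mincol} on each cell, and appeal to \Cref{prop:2dmin} for the $O(1)$ per-cell cost, summing to $O(mn)$. Your anticipated difficulty on the locus $\{F_R=F_C\}$ is exactly the paper's Case~3, where $F_R-F_C$ being bilinear lets one solve for $\beta$ as a linear fractional function of $\alpha$ and reduce to minimizing a quadratic-over-linear univariate rational function by checking boundary points and at most two critical points.
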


\subsection{The \texorpdfstring{$(2,3)$}{(2,3)}-mixing algorithm}
\label{app:prec-2-3-mixing}
We begin by some notations. Let
\begin{align*}
F_R(\alpha,\beta,\gamma)=&\max\{R(\gamma y_1+(1-\gamma)y_2)\}-(\alpha x_1+\beta x_2+(1-\alpha-\beta)x_3)^\T R(\gamma y_1+(1-\gamma)y_2),\\
F_C(\alpha,\beta,\gamma)=&\max\left\{C^\T (\alpha x_1+\beta x_2+(1-\alpha-\beta)x_3)\right\}-\\
&\qquad (\alpha x_1+\beta x_2+(1-\alpha-\beta)x_3)^\T C(\gamma y_1+(1-\gamma)y_2).
\end{align*}
Define $F(\alpha,\beta,\gamma)=\max\{F_R(\alpha,\beta,\gamma),F_C(\alpha,\beta,\gamma)\}$. Then the algorithm in this part minimizes $F$ on the prism $\mathcal{A}=\{(\alpha,\beta,\gamma)\in[0,1]^3:\alpha+\beta\leq 1\}$.

Using $(2,m)$-separation algorithm in \Cref{app:prec-2-m-separation} and $(2,m)$-separation algorithm in \Cref{app:prec-2-3-mixing}, we can obtain the linear region\footnote{To shorten statements, we say region $X$ is a \emph{linear region} of function $F$ if $F$ is linear in every variable on $X$.} of function $F$. Our next step is then to minimize $F$ on each linear region, in which both $F_R$ and $F_C$ have form $a_0 \alpha \gamma+a_1 \beta \gamma+a_2 \gamma+a_3 \alpha+a_4 \beta+a_5$. Note that every linear region $S$ is given by the Cartesian product of a polygon $P$ and an interval $I$. Thus by statement 2 in \Cref{details:cor:mincol}, the minimum of $F$ must be obtained when:
\begin{enumerate}
    \item $(\alpha,\beta)$ belongs to side surfaces of $S$ and 
    \begin{enumerate}
        \item either there exists $k\in\{R,C\}$ such that $\partial F_k/\partial \gamma=0$, or
        \item $(\alpha,\beta)$ is in the intersection of side surfaces and top/bottom surfaces.
    \end{enumerate}
    \item $(\alpha,\beta)$ belongs to top/bottom surfaces of $S$ and  
    \begin{enumerate}
        \item there exists $k\in\{R,C\}$ such that either $\partial F_k/\partial \alpha=0$ or $\partial F_k/\partial \beta=0$, or
        \item $(\alpha,\beta)$ is in the intersection of side surfaces and top/bottom surfaces.
    \end{enumerate}
    \item $F_R(\alpha,\beta)=F_C(\alpha,\beta)$.
    \item $\nabla F_R(\alpha,\beta)=0$ or $\nabla F_C(\alpha,\beta)=0$.
\end{enumerate}

\label{state:append2}
For case 1b and 2b, note that the boundary is formed by $O(m)$ line segments. Furthermore, $F_R$ and $F_C$ are linear on each segment. Thus it suffices to apply \Cref{prop:minmax} on each segment.

For case 2a and case 4, the equation of zero derivative gives a linear equation on $\gamma$. Then $\gamma$ takes a fixed value. Now we have to minimize $F$ over the polygon $P$ of $(\alpha,\beta)$. We can use statement 2 in \Cref{details:cor:mincol} again, and minimize $F$ at points on the boundary of $P$, with $F_R=F_C$, and with zero partial derivative in $\alpha$ or $\beta$. The case of the boundary is similar to case 1b and 2b, consuming time $O(m)$. The rest cases are similar to discussions in \Cref{prop:2dmin}: We can turn the problem into minimizing a univariate function $g$. The only difference here is the domain $J$ of $g$. By the same calculation in the proof of \Cref{prop:2dmin}, it can be shown that $J$ is a segment (or line) $\tilde{J}$ parallel or perpendicular to $\alpha=0$ when we ignore the restriction of $P$. Domain $J$ then can be determined by searching the intersection points of $\tilde{J}$ and the boundary of $P$ in $O(m)$ time. 

For case 1a, the equation of zero derivative gives a linear equation on $(\alpha,\beta)$. Then the equation produces a line $l$ on $(\alpha,\beta)$. Now, the feasible set of $(\alpha,\beta)$ is a segment $L$ determined by the intersection of $l$ and $P$. Similar to $J$, we can compute two endpoints of $L$ in $O(m)$ time. Note that by a suitable linear transformation from $(\alpha,\beta)$ to $(\alpha',\beta')$, the equation of $l$ becomes $\alpha'=0$. Then on $L\times I$, $F$ becomes a function of $(\beta',\gamma)$ being linear in $\beta'$ and $\gamma$, respectively. Now we can apply \Cref{prop:2dmin} to minimize $F$ on $L\times I$ in $O(1)$ time.

For case 3, by $F_R=F_C$, we obtain an expression of $\gamma$ given by the fraction of linear functions in $\alpha$ and $\beta$. A special case is that the denominator equals zero. We can deal with this case in the same way as case 1a. Otherwise, by substituting this expression into $F_R$, it suffices to minimize a function $h(\alpha,\beta)$ with the form $(c_0+c_1 \alpha+c_2 \beta+c_3 \alpha^2+c_4 \alpha \beta+c_5 \beta^2)/(d_0+d_1 \alpha+d_2 \beta)$ on a given linear region. 

When $d_1=d_2=0$, this is to solve quadratic programming on a polygon with $O(m)$ sides. The minimum is taken either on the sides or at interior points with zero derivatives. Since it has only two variables, we can cancel one of the variables via the linear equation of a side. Then the minimization on the side is equivalent to minimizing a univariate quadratic function on a segment. On the other hand, the zero-derivative condition is exactly two linear equations with two variables. In both situations, the calculation can be completed within $O(m)$ time. 

Otherwise, we substitute the denominator with $\theta$, and the expression is transformed into
\[
G_1(\alpha, \theta):=\frac{e_0+e_1 \alpha+e_2 \alpha^2}{\theta}+e_3+e_4 \alpha+e_5 \theta.
\]

First, we consider the minimum about $\theta$. By the property of hyperbolic function, the minimum can only be obtained at the boundary points or at $\theta=\pm \sqrt{\left(e_0+e_1 \alpha+e_2 \alpha^2\right) / e_5}$ (if exists). Since $\theta$ is linear in $(\alpha,\beta)$ and the domain of $(\alpha,\beta)$ is a polygon $P$, the domain of $\theta$ is a interval given by $[M_{\min}(\alpha),M_{\max}(\alpha)]$, where $M_{\min},M_{\max}$ are piecewise linear functions with $O(m)$ pieces. By considering vertices of $P$ in order, we can calculate linear pieces of $M_{\min}$ and $M_{\max}$ in $O(m)$ time, denoted by $I_i^{\min}$ and $I_j^{\max}$, respectively. So we only need to consider $O(m)$ cases that $\theta$ takes $M_{\min}(\alpha)$ on $\alpha\in I_i^{\min}$, $M_{\max}(\alpha)$ on $\alpha \in I_j^{\max}$, or $\pm \sqrt{\left(e_0+e_1 \alpha+e_2 \alpha^2\right) / e_5}$ when $\left(e_0+e_1 \alpha+e_2 \alpha^2\right) e_5\geq 0$. In each case, it suffices to find the minimum of either
\begin{gather*}
\frac{e_0+e_1 \alpha+e_2 \alpha^2}{t(\alpha)}+e_3+e_4 \alpha+e_5 t(\alpha), t \in\{M_{\min}, M_{\max}\}\text{ or }
e_3+e_4 \alpha \pm 2 \sqrt{e_5\left(e_0+e_1 \alpha+e_2 \alpha^2\right)},
\end{gather*}
where $\alpha$ belongs to a certain interval. Each case can be solved by calculating points on the boundary and points with zero derivatives in $O(m)$ time.

We conclude the discussion above with the following proposition.

\begin{proposition}\label{prop:3dmin}
There exists an algorithm finding the minimum point of $F(\alpha,\beta,\gamma)$ on linear region $S=P\times I$ in $O(m)$ time, where $P$ is a polygon of $(\alpha,\beta)$ formed by $O(m)$ linear constraints and $I$ is a closed interval of $\gamma$.

\end{proposition}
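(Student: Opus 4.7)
The plan is to apply statement 2 of \Cref{details:cor:mincol} to reduce the minimization of $F=\max\{F_R,F_C\}$ over $S=P\times I$ to a finite family of lower-dimensional subproblems, and then to verify that each subproblem admits an $O(m)$-time solver. Observe that $S$ is a right prism whose axis is aligned with $e_\gamma$: its side facets are parallel to $e_\gamma$ while the top and bottom facets are perpendicular to it. The corollary therefore localizes any minimizer on a side facet either at points where $\partial F_k/\partial\gamma=0$ for some $k\in\{R,C\}$ or on the edges shared with the top/bottom facets. Invoking the corollary symmetrically along $e_\alpha$ and $e_\beta$ to handle the top/bottom facets, and adjoining the two generic branches $F_R=F_C$ and $\nabla F_k=0$ corresponding to interior stationarity of a single branch, partitions the search into the six cases 1a, 1b, 2a, 2b, 3, 4 used in the discussion above.

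The routine cases are handled by direct enumeration. For 1b and 2b the candidate points lie on the $O(m)$ edges of $P$ and on each edge both $F_R$ and $F_C$ are univariate linear, so one sweep of the edges using the two-linear-function subroutine suffices. For 2a and 4 the stationarity condition fixes $\gamma$ via a linear equation, collapsing the problem to a bivariate bilinear-max over $P$; a second application of \Cref{details:cor:mincol} in two dimensions reduces this to a boundary walk plus an interior rational program, solved exactly as in \Cref{prop:2dmin}. Case 1a constrains $(\alpha,\beta)$ to a line whose intersection with $P$ is computed in $O(m)$ time, after which the residual problem is a $(2,2)$-instance handled by \Cref{prop:2dmin}.

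The main obstacle is Case 3. Solving $F_R=F_C$ for $\gamma$ produces $\gamma=N(\alpha,\beta)/D(\alpha,\beta)$ with $N$ quadratic and $D$ linear, so the objective becomes the rational function
\[
h(\alpha,\beta)=\frac{c_0+c_1\alpha+c_2\beta+c_3\alpha^2+c_4\alpha\beta+c_5\beta^2}{d_0+d_1\alpha+d_2\beta}
\]
to be minimized over $P$. When $d_1=d_2=0$ this is bivariate quadratic programming on an $O(m)$-edge polygon and the $2\times2$ zero-gradient system together with a sweep of the edges settles the case. Otherwise the plan is to introduce $\theta=d_0+d_1\alpha+d_2\beta$ and rewrite the objective as
\[
G_1(\alpha,\theta)=\frac{e_0+e_1\alpha+e_2\alpha^2}{\theta}+e_3+e_4\alpha+e_5\theta,
\]
then exploit hyperbolic monotonicity in $\theta$: for each fixed $\alpha$ the optimum lies either at an endpoint of the feasible $\theta$-interval or at $\theta=\pm\sqrt{(e_0+e_1\alpha+e_2\alpha^2)/e_5}$ when the radicand is non-negative.

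The delicate quantitative step, and the crux of the proof, is to show that the envelopes $M_{\min}(\alpha)$ and $M_{\max}(\alpha)$ bounding this $\theta$-interval are piecewise linear with only $O(m)$ pieces, and that all of these pieces can be enumerated in $O(m)$ total time by sweeping the vertices of $P$ in order of $\alpha$. Granting this, each piece contributes a constant-time univariate subproblem of the form \emph{minimize a rational or square-root expression on a subinterval}, resolved by evaluating endpoints together with the at most two real roots of the derivative; summing over the $O(m)$ pieces of each envelope and each branch gives the $O(m)$ bound for Case 3. Combining the per-case bounds yields the claimed total runtime and completes the proof.
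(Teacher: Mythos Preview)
Your proposal is correct and follows essentially the same approach as the paper: the same case decomposition via \Cref{details:cor:mincol}, the same per-case reductions, and in particular the same substitution $\theta=d_0+d_1\alpha+d_2\beta$ in Case~3 with piecewise-linear envelopes $M_{\min},M_{\max}$ obtained by sweeping the vertices of $P$. One small slip: when you solve $F_R=F_C$ for $\gamma$, the numerator $N(\alpha,\beta)$ is \emph{linear}, not quadratic (both $F_R$ and $F_C$ are affine in $\gamma$ with coefficients affine in $(\alpha,\beta)$); your subsequent form of $h$ is nonetheless correct because the quadratic numerator arises only after back-substituting $\gamma=N/D$ into $F_R$.
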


Now we come back to function $F(\alpha, \beta, \gamma)$. Using \Cref{prop:2dbpoint} and \Cref{prop:3dbpoint}, we can split the domain of $F$ into $O(mn)$ linear regions in time $O(n\log n+m^2\log m)$. Then on each region, we can use \Cref{prop:3dmin} to compute the minimum value of $F$ in time $O(m)$. The total time complexity is then $O\left(m^2(n+\log m)+n\log n\right)$. We summarize it into the following theorem.
\begin{theorem}
Give any $x_1, x_2, x_3 \in \Delta_m$ and $y_1, y_2 \in \Delta_n$, let
$$
F(\alpha, \beta, \gamma)=f\left(\alpha x_1+\beta x_2+(1-\alpha-\beta) x_3, \gamma y_1+(1-\gamma) y_2\right),
$$
where $\alpha, \beta, \gamma, \alpha+\beta \in[0,1]$. Then there exists an algorithm finding the minimum point of $F(\alpha, \beta, \gamma)$ in time $O\left(m^2(n+\log m)+n\log n\right)$.

\end{theorem}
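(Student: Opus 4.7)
The plan is to treat this as the $(2,3)$-mixing problem developed in \Cref{algo:concise-3dbpoint} (with rows and columns playing symmetric roles) and combine the linear-piece partitioning of \Cref{details:subsec:linearpiece} with the KKT-based polytope optimization of \Cref{details:subsec:opt-over-polytope}. The algorithm proceeds in three stages: partition the domain $\mathcal{A} = \{(\alpha, \beta, \gamma) \in [0,1]^3 : \alpha + \beta \leq 1\}$ into regions on which both $F_R$ and $F_C$ are separately linear in $\gamma$ and in $(\alpha,\beta)$; minimize the bilinear-max $\max\{F_R, F_C\}$ on each region; and return the best per-region minimum as the global optimum.

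First I would invoke the separation algorithms of \Cref{thm:2-m-separation} and \Cref{thm:3-m-separation}. The term $\max\{R(\gamma y_1 + (1-\gamma) y_2)\}$ is a pointwise maximum of $m$ linear functions in the scalar $\gamma$, so a $(2,m)$-separation yields $O(m)$ intervals in $O(m \log m)$ time; the term $\max\{C^\T(\alpha x_1 + \beta x_2 + (1-\alpha-\beta) x_3)\}$ is a pointwise maximum of $n$ affine functions on the two-dimensional simplex $\Delta_3$, so a $(3,n)$-separation yields $O(n)$ polygons, each bounded by $O(n)$ half-planes, in $O(n^2 \log n)$ time. Their product tiling has $O(mn)$ boxes $P \times I$, and on each box both $F_R$ and $F_C$ assume the bilinear form $a_0 + a_1\alpha + a_2\beta + a_3\gamma + a_4\alpha\gamma + a_5\beta\gamma$; the coefficients at the breakpoints are precomputed once in total time $O(mn)$ using the stored vectors $Ry_i$ and $C^\T x_j$.

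The heart of the proof is per-box minimization. Each box is a prism $P \times I$, and the objective is the maximum of two smooth but nonconvex bilinear functions, so direct convex arguments fail. I plan to apply statement 2 of \Cref{details:cor:mincol}: the minimum must lie among (i) points on a side surface with $\partial F_k/\partial\gamma = 0$ or on a side-top edge; (ii) points on the top or bottom with $\partial F_k/\partial\alpha = 0$ or $\partial F_k/\partial\beta = 0$ or on an edge; (iii) points with $F_R = F_C$; and (iv) interior critical points of $F_R$ or $F_C$. Cases (i), (ii), and (iv) reduce to linear or constant-degree subproblems that can be swept along the $O(m+n)$ boundary edges or resolved at constant cost per candidate. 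The main obstacle is case (iii): eliminating $\gamma$ via $F_R = F_C$ yields a rational expression in $(\alpha,\beta)$, and substituting back produces a minimax target of the form (quadratic in $\alpha,\beta$)/(linear in $\alpha,\beta$). I plan to handle it via the substitution $\theta \leftarrow \mathrm{denominator}$ used in the proof of \Cref{prop:3dmin}, reducing the target to $G_1(\alpha,\theta) = (e_0 + e_1\alpha + e_2\alpha^2)/\theta + e_3 + e_4\alpha + e_5\theta$; the optimal $\theta$ either lies on the piecewise-linear envelope $[M_{\min}(\alpha), M_{\max}(\alpha)]$ determined in $O(m)$ time from the vertices of $P$, or on the stationary locus $e_5\theta^2 = e_0 + e_1\alpha + e_2\alpha^2$, and each sub-case reduces to a one-dimensional calculus problem on a segment, solvable in closed form. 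Aggregating across the $O(mn)$ boxes with the separation costs yields the claimed bound after routine accounting.
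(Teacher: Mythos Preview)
Your approach is essentially the paper's: separate the domain into prisms via the $(2,\cdot)$- and $(3,\cdot)$-separation algorithms, then on each prism apply the KKT case split of \Cref{details:cor:mincol}, handling the $F_R=F_C$ case by the denominator substitution $\theta$ and the resulting $G_1(\alpha,\theta)$ analysis exactly as in \Cref{prop:3dmin}.

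The only defect is the final bookkeeping. In the theorem as stated, $(\alpha,\beta)$ mixes the \emph{row} strategies $x_1,x_2,x_3\in\Delta_m$, so the two-dimensional piecewise-linear term is $\max\{C^\T x\}$, a maximum of $n$ affine functions. You correctly set this up as a $(3,n)$-separation producing polygons with $O(n)$ edges at cost $O(n^2\log n)$, but then slip and write $O(m)$ for the per-box polygon sweeps and envelopes. With $O(n)$-edge polygons the per-box cost is $O(n)$, and over $O(mn)$ boxes the total becomes $O\bigl(n^2(m+\log n)+m\log m\bigr)$, i.e., the claimed bound with $m$ and $n$ interchanged. The paper's appendix has the same conflation: its per-region analysis and \Cref{prop:3dmin} assume $O(m)$-edge polygons (matching \Cref{algo:concise-3dbpoint}, where the triangle is on the \emph{column} side), while the theorem statement here puts the triangle on the row side. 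To obtain the literal bound $O\bigl(m^2(n+\log m)+n\log n\bigr)$ you must first swap the players so that the two-dimensional mixing sits with the column player, where the relevant max ranges over the $m$ rows of $R$.
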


\subsection{Proof of \Cref{prop:minmax-mixing}}
\label{app:minmax-mixing}

Here we prove a slightly generalized proposition:

\begin{proposition}
Given any two points $A=(x_A,y_A),B=(x_B,y_B) \in\Delta_m\times \Delta_n$, we can define a line segment $AB:=\{t A+(1-t) B:t\in[0,1]\}$. It has the following properties:

\begin{enumerate}

\item $AB\subseteq \Delta_m\times \Delta_n$. Therefore the value of $f$ is well-defined on $AB$.

\item Denote the value of $f_I$ on the vertices $A$ and $B$ by $a_I$ and $b_I$ respectively, where $I\in\{R,C\}$. Let $f^{AB}_{*}$ be the minimum of $f$ on $AB$. Suppose 
\[(x_A-x_B)^\T  R(y_A-y_B)\leq 0 \text{ and }(x_A-x_B)^\T C(y_A-y_B)\leq 0.\] 
Specifically, this condition holds when $x_A=x_B$ or $y_A=y_B$. We have the following properties.
\begin{enumerate}
\item If $a_C\leq a_R$ and $b_C\leq b_R$, then $f^{AB}_{*}\leq \min\{f_R(A),f_R(B)\}= \min\{a_R,b_R\}$.
\item If $a_R\leq a_C$ and $b_R\leq b_C$, then $f^{AB}_{*}\leq \min\{f_C(A),f_C(B)\}= \min\{a_C,b_C\}$.
\item If $a_R+b_C-a_C-b_R=0$, then 
\[f_*^{AB}\leq\min\{f(A),f(B))=\min\{\max\{a_C,b_C\},\max\{a_R,b_R\}\}.\]
\item Otherwise, $f^{AB}_{*}\leq f\left(\frac{b_C-a_C}{a_R+b_C-a_C-b_R} A+\frac{a_R-b_R}{a_R+b_C-a_C-b_R}B\right)\leq \frac{a_Rb_C-a_Cb_R}{a_R+b_C-a_C-b_R}$.

\end{enumerate}
Conversely, when $f_R$ and $f_C$ are both linear in parameter $t$, the equalities of $f_*^{AB}$ in all three cases hold.
\end{enumerate}
\end{proposition}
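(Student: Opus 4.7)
Part~1 is immediate from convexity of simplices, so the real work lies in Part~2, which I would reduce to minimizing the pointwise maximum of two affine functions of a single variable on $[0,1]$.

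For Part~1, observe that for any $t\in[0,1]$ the components $tx_A+(1-t)x_B$ and $ty_A+(1-t)y_B$ are convex combinations of vectors in $\Delta_m$ and $\Delta_n$ respectively; since simplices are convex, $AB\subseteq\Delta_m\times\Delta_n$ and $f$ is well defined along the segment. For Part~2, let $F_I(t):=f_I(tA+(1-t)B)$ for $I\in\{R,C\}$ and let $F(t):=\max\{F_R(t),F_C(t)\}$. The pivotal step is the following convexity claim: the hypothesis $(x_A-x_B)^{\T}R(y_A-y_B)\leq 0$ makes $F_R$ convex on $[0,1]$, and similarly for $F_C$. To see this, write
\[
F_R(t)=\max\{R(y_B+t(y_A-y_B))\}-\bigl(x_B+t(x_A-x_B)\bigr)^{\T}R\bigl(y_B+t(y_A-y_B)\bigr).
\]
The first term is a pointwise maximum of affine functions of $t$, hence convex and piecewise-linear. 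Expanding the second term yields a univariate quadratic in $t$ whose leading coefficient is exactly $(x_A-x_B)^{\T}R(y_A-y_B)$; by hypothesis this coefficient is nonpositive, so the quadratic is concave and its negation (which appears in $F_R$) is convex. Thus $F_R$ is a sum of two convex functions on $[0,1]$, and the same argument applied to $C$ gives convexity of $F_C$. In the special cases $x_A=x_B$ or $y_A=y_B$ the leading coefficient is zero, so the hypothesis is automatic.

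Granted convexity, the chord inequality gives $F_R(t)\leq ta_R+(1-t)b_R$ and $F_C(t)\leq ta_C+(1-t)b_C$ for every $t\in[0,1]$, whence
\[
F(t)\leq G(t):=\max\bigl\{ta_R+(1-t)b_R,\;ta_C+(1-t)b_C\bigr\}.
\]
It therefore suffices to minimize $G$, the pointwise maximum of two affine functions on a segment, and I would finish by the following case split. In case~(a) the second affine lies below the first at both endpoints (and hence everywhere on $[0,1]$, by affineness of their difference), so $G$ collapses to the $R$-line and $\min_t G(t)=\min\{a_R,b_R\}$; case~(b) is symmetric. In case~(c) the two lines have equal slope, so $G$ is again affine and its minimum is at whichever endpoint is smaller, yielding $\min\{\max\{a_R,a_C\},\max\{b_R,b_C\}\}$. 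In the remaining case~(d) the two lines are not parallel and neither dominates the other, so they cross at a unique interior $t^{\ast}\in(0,1)$; solving $ta_R+(1-t)b_R=ta_C+(1-t)b_C$ gives $t^{\ast}$ in terms of $a_R,b_R,a_C,b_C$, and substituting back produces the claimed value $\frac{a_Rb_C-a_Cb_R}{a_R+b_C-a_C-b_R}$.

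The converse direction, where $f_R$ and $f_C$ are linear along the segment, is then immediate: linearity turns each chord inequality into equality, so $F\equiv G$ on $[0,1]$ and $f^{AB}_{\ast}=\min_{t\in[0,1]}G(t)$, which matches the formulas above, giving equality in each case. The load-bearing step of the whole argument is the convexity observation for $F_R$ and $F_C$; without the sign assumption on the bilinear cross-terms $(x_A-x_B)^{\T}M(y_A-y_B)$ the quadratic coefficient could be positive, $F_R$ could fail to be convex, and the chord bound would be false. Everything else is either elementary convex geometry of the simplex or one-dimensional optimization of piecewise affine functions, so I expect the proof itself to be short once the convexity lemma is in place.
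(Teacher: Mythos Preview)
Your proposal is correct and follows essentially the same approach as the paper: establish convexity of $F_R$ and $F_C$ along the segment via the sign condition on the bilinear cross-terms, bound each by its chord, and then minimize the pointwise maximum of two affine functions on $[0,1]$ by the same four-way case split. The paper's proof is organized identically, so there is nothing to add.
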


We prove these properties one by one.

\begin{enumerate}[fullwidth, listparindent=\parindent]
\item Since any standard simplex is convex and the Cartesian product of two convex sets is again convex, the property follows directly from the convexity of $\Delta_m\times \Delta_n$.

\item Define $F^{AB}_I(t)=f_I(tA+(1-t)B),t\in[0,1], I\in\{R,C\}$. Then 
\begin{align*}
    F^{AB}_R(t)&=\max\left\{R(ty_A+(1-t)y_B)\right\}-(tx_A+(1-t)x_B)^\T  R(ty_A+(1-t)y_B),\text{ and}\\ F^{AB}_C(t)&=\max\left\{C^\T (tx_A+(1-t)x_B)\right\}-(tx_A+(1-t)x_B)^\T  C(ty_A+(1-t)y_B).
\end{align*}

First, we show the convexity of $F_R^{AB}$ and $F_C^{AB}$ in $t$. Note that for any vector $u, v$ with identical length, $\max\{tu+v\}$ is convex in $t$, so the first parts of both functions are convex. Besides, the second parts of both functions are quadratic. So, it is clear that both functions are convex when $(x_A-x_B)^\T  R(y_A-y_B)\leq 0$ and $(x_A-x_B)^\T C(y_A-y_B)\leq 0$.

Recall that we define $a_I=f_I(x_A,y_A), b_I=f_I(x_B,y_B), I\in \{R,C\}$. By the convexity of $F_R^{AB}(t)$ and $F_C^{AB}(t)$, we can bound the value of them on $t\in [0,1]$ by linear functions $L_R^{AB}(t)=ta_R+(1-t)b_R$ and $L_C^{AB}=ta_C+(1-t)b_C$. Thus, we have 
\[
f_*^{AB}=\min_{t\in [0,1]} \left\{F_R^{AB}(t),  F_C^{AB}(t)\right\}\leq \min_{t\in [0,1]} \left\{L_R^{AB}(t),  L_C^{AB}(t)\right\}.
\]

Now, we consider the exact expression of $\min_{t\in [0,1]} \left\{L_R^{AB}(t),  L_C^{AB}(t)\right\}$.

\begin{enumerate}
    \item If $a_C\leq a_R$ and $b_C\leq b_R$, then $L_R^{AB}$ is totally above $L_C^{AB}$ on $t\in [0,1]$. Thus,
\begin{align*}
    \min_{t\in [0,1]} \left\{L_R^{AB}(t),  L_C^{AB}(t)\right\}=\min\left\{L_R^{AB}(0),L_R^{AB}(1)\right\}=\min\{a_R,b_R\}.
\end{align*}

The equality holds at $t=0$.

\item If $a_R\leq a_C$ and $b_R\leq b_C$, analogously, 
\[\min_{t\in [0,1]} \left\{L_R^{AB}(t),  L_C^{AB}(t)\right\}= \min\{a_C,b_C\}.\] 
The equality holds at $t=1$.

\item If $a_R+b_C-a_C-b_R=0$, then lines defined by $L_R^{AB}$ and $L_C^{AB}$ on $t\in [0,1]$ are parallel. Thus we have
\[\min_{t\in [0,1]} \left\{L_R^{AB}(t),  L_C^{AB}(t)\right\}=\min\{\max\{a_C,b_C\},\max\{a_R,b_R\}\}.\]
The equality holds at $t=0$ if $f^{AB}(0)\leq f^{AB}(1)$ and at $t=1$ if $f^{AB}(0)>f^{AB}(1)$.
\item Otherwise, two lines intersect. We solve the equation $L_R^{AB}(t)=L_C^{AB}(t)$ and get $t_0=\frac{b_C-b_R}{a_R+b_C-a_C-b_R}$. Thus,
\begin{align*}
   \min_{t\in [0,1]} \left\{L_R^{AB}(t),  L_C^{AB}(t)\right\}=\max\left\{L_R^{AB}(t_0),L_C^{AB}(t_0)\right\}=\frac{a_Rb_C-a_Cb_R}{a_R+b_C-a_C-b_R}.
\end{align*}
\end{enumerate}

When $f_R$ and $f_C$ are both linear, the equality holds at $t_0$ since the minimum of the maximum of two linear functions is obtained exactly at the intersection point.
\end{enumerate}

\subsection{Proof of \Cref{lemma:DFM-relation}}\label{app:DFM-relation}
Relation 1 follows from \Cref{lemma:frfc}. Relation 2 follows from the fact that $f_R,f_C\in[0,1]$. Relation 3 follows from  \Cref{lemma:increasing}. Relation 5 follows from the definition of $\hat{w}$. Thus we only prove relations 4 and 6. 

The three inequalities in relation 4 can be obtained by substituting $(w,y_s),(x_s,z),(\hat{w},y_s)$ into \Cref{lemma:strongineq}, respectively.

It is convenient to draw a prism to show the positions of little letter aliases, as in \Cref{fig:app-prism}.
\begin{figure}[ht]
    \centering
    \includegraphics[width=0.9\textwidth]{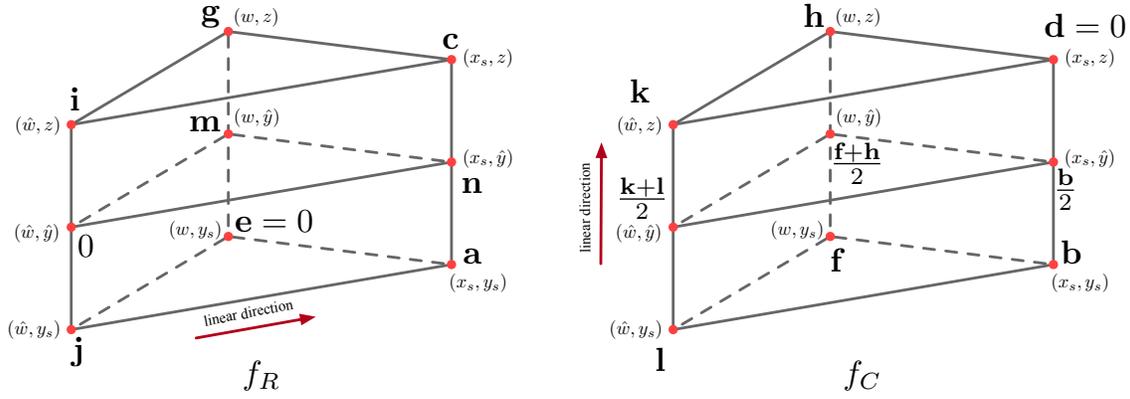}
    \caption{Positions of little letter aliases.}
    \label{fig:app-prism}
\end{figure}
Relation 6 relies on the linearity of $x^\T Ry$. Define $\zm:=f_R(w,\hat{y})$, $\zu:=\max\{Rz\}$, $\zv:=\max\{R\hat{y}\}$, and $\zw:=\max\{Ry_s\}$. By relation 5, $\hat{w}^\T Ry_s+\hat{w}^\T Rz=2\hat{w}^\T R\hat{y}$. Thus we have $2\zv=(\zu-\zi)+(\zw-\zj)$. By the definition of $\hat{y}$, $w^\T Ry_s+w^\T Rz=2w^\T R\hat{y}$ and $x^\T Ry_s+x^\T Rz=2x^\T R\hat{y}$. Hence $2(\zv-\zm)= (\zu-\zg)+(\zw-0)$. By subtracting the second equation from the first one, we obtain the equation $\zm=(\zg-\zi-\zj)/2$. 
Relation 6 then follows from that $\zm$ is nonnegative.

Relation 7 relies on the linearity of $x^\T C y$ and $\hat{y}=\frac{1}{2}(y_s+z)$. Note that 
\begin{align*}
    f_C(\hat{w},\hat{y})=&\max\{C^\T \hat{w}\}-\hat{w}^\T C \hat{y}\\
    =&\frac{1}{2}(\max\{C^\T \hat{w}\}-\hat{w}^\T C y_s)+\frac{1}{2}(\max\{C^\T \hat{w}\}-\hat{w}^\T C z)\\
    =&\frac{1}{2}(f_C(\hat{w},y_s)+f_C(\hat{w},z))=\frac{1}{2}(\zl+\zk).
\end{align*}

\subsection{Proof of 
\Cref{details:thm:1/3-approx}}\label{app:approx-DFM}

Now, we show the $1/3+\delta$ approximation of the modified DFM algorithm. Again, to simplify the notations, we consider stationary points and omit $\delta$ in the proof.

The key observation is that we can drop several cases in the discussion. We choose cases to be dropped according to the tight case computed by the numerical optimization solver.

We want to give an upper bound of $f^*_{\mathcal{A}}$, the global minimum of $f$ on the triangular prism $\mathcal{A}=\{(\alpha_1 x_s+\alpha_2 w +\alpha_3 \hat{w},\beta_1 y_s+ \beta_2 z+\beta_3 \hat{y}), \alpha\in \Delta_3, \beta\in \Delta_3\}$. To do so, we use the upper bound constructor.

\begin{lemma}
\label{lemma:1stpart}
Suppose that $\zg\geq\zh$. Define
\[D=\begin{cases}
    \frac{\zi(\zk+\zl)}{2\zi+\zl-\zk},&\text{if }\zi>\zk,\\[5pt]  \frac{\zg\zk-\zi\zh}{\zg+\zk-\zi-\zh},&\text{if }\zi\leq \zk.
\end{cases}\]
Then
\[f^*_{\mathcal{A}}\leq \min\left\{\rho(\zc-\zg),(1-\rho)(\zf-\zh),\rho \zj+(1-\rho)(\zl-\zk), \frac{\zf \zg}{\zf+\zg-\zh},D\right\}.\] 
\end{lemma}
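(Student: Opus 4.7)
The plan is to obtain each of the five quantities inside the minimum as a separate upper bound on $f^*_{\mathcal{A}}$ by exhibiting an explicit point or edge inside $\mathcal{A}$, and to invoke either the trivial vertex bound $f^*_{\mathcal{A}}\le f(x_s,y_s)$ or the $(1,2)$-upper bound constructor of \Cref{prop:minmax-mixing} on a carefully chosen edge; the overall bound is then the minimum of the five separate bounds.

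First I would use the fact that the vertex $(x_s,y_s)$, which corresponds to $\alpha=\beta=(1,0,0)$, lies in $\mathcal{A}$. By relation~(1) of \Cref{lemma:DFM-relation}, $f(x_s,y_s)=\za=\zb$, so $f^*_{\mathcal{A}}\le \za$. Substituting the three inequalities in relation~(4) of \Cref{lemma:DFM-relation} then gives the first three bounds $\rho(\zc-\zg)$, $(1-\rho)(\zf-\zh)$, and $\rho\zj+(1-\rho)(\zl-\zk)$ at once. For the fourth bound, I would apply \Cref{prop:minmax-mixing} to the edge joining $(w,y_s)$ and $(w,z)$; the endpoint $(f_R,f_C)$ values are $(0,\zf)$ and $(\zg,\zh)$ (using $\ze=0$ from relation~(1) of \Cref{lemma:DFM-relation}), and under the standing hypothesis $\zg\ge\zh$ this configuration falls in the ratio case of \Cref{prop:minmax-mixing}, yielding exactly $\zf\zg/(\zf+\zg-\zh)$.

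The bound $D$ needs a case split. If $\zi>\zk$, I would apply \Cref{prop:minmax-mixing} to the edge from $(\hat{w},\hat{y})$ to $(\hat{w},z)$; the endpoint values are $(0,\tfrac{1}{2}(\zl+\zk))$ and $(\zi,\zk)$ by relations~(5) and~(7) of \Cref{lemma:DFM-relation}, and the assumption $\zi>\zk$ puts us in the ratio case and produces $\zi(\zk+\zl)/(2\zi+\zl-\zk)$. If instead $\zi\le\zk$, I would apply \Cref{prop:minmax-mixing} to the edge from $(w,z)$ to $(\hat{w},z)$, with endpoint values $(\zg,\zh)$ and $(\zi,\zk)$; the combined conditions $\zg\ge\zh$ and $\zi\le\zk$ again trigger the ratio case and give $(\zg\zk-\zi\zh)/(\zg+\zk-\zi-\zh)$. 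The principal technical obstacle is verifying, in each invocation of \Cref{prop:minmax-mixing}, that we really sit in the ratio case and that the denominator has the correct sign and is nonzero; these facts are forced by the stated hypotheses on $\zg,\zh,\zi,\zk$ together with the nonnegativity of $f_R,f_C$ values from \Cref{lemma:DFM-relation}(2), and the degenerate equality cases can be handled either by continuity of the ratio formula or by observing that the other cases of \Cref{prop:minmax-mixing} return values no larger than the ratio expression, so the stated $D$ remains a valid upper bound in all configurations.
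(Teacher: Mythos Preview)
Your proposal is correct and follows essentially the same approach as the paper: both bound $f^*_{\mathcal{A}}$ via the value $\za$ at $(x_s,y_s)$ (the paper reaches $\za$ through the edge from $(x_s,y_s)$ to $(x_s,\hat y)$, while you use the vertex directly, which is a harmless simplification) and then apply relation~(4) of \Cref{lemma:DFM-relation} for the first three terms; and both obtain the remaining two bounds by applying \Cref{prop:minmax-mixing} to exactly the same edges you name, with the same case split on $\zi$ versus $\zk$.
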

\begin{proof}
Note that the constructor gives an upper bound of $f^*_{\mathcal{A}}$ by applying the $(1,2)$-upper bound constructor on each edge of the prism. In this way, we can only pick certain edges, and give a loose upper bound.

First, we pick the edge $\{(x_s, \beta y_s+ (1-\beta)\hat{y}): \beta\in [0,1]\}$. The values on the vertices are: $f_R(x_s,y_s)=\za, f_C(x_s,y_s)=\zb, f_R(x_s,\hat{y})=\zn,
f_C(x_s,\hat{y})=\frac{\zb}{2}$. From \Cref{lemma:DFM-relation},we have $\zn\geq \za\geq \zb\geq \frac{\zb}{2}$. Thus, by relation 3 of \Cref{prop:minmax-mixing}, the upper bound is given by $\za$. Besides, from relation 4 of \Cref{lemma:DFM-relation}, we have $\za\leq \rho(\zc-\zg), \zb\leq (1-\rho)(\zf-\zh)$, $\za\leq \rho \zj+(1-\rho)(\zl-\zk)$. Therefore, the upper bound given by this edge is $\{\rho(\zc-\zg), (1-\rho)(\zf-\zh), \rho \zj+(1-\rho)(\zl-\zk)\}$. 

Second, consider the edge $\{(w, \beta y_s+ (1-\beta)z): \beta\in [0,1]\}$. The values on the vertices are: $f_R(w,y_s)=\ze, f_R(w,z)=\zg, f_C(w,y_s)=\zf, f_C(w,z)=\zh$. From the condition $\zg\geq \zh$ and relation 1 of \Cref{lemma:DFM-relation} that $\ze=0$, the upper bound on this edge is $\frac{\zf \zg}{\zf+\zg-\zh}$. 

Finally, we discuss the relationship between $\zi$ and $\zk$. 
\begin{enumerate}
    \item If $\zi>\zk$, consider the edge $\{(\hat{w},\beta\hat{y}+(1-\beta)z):\beta\in[0,1]\}$. From relation 5 and 7 of \Cref{lemma:DFM-relation}, we have $f_R(\hat{w},\hat{y})=0, f_R(\hat{w},z)=\zi, f_C(\hat{w},\hat{y})=\frac{1}{2}(\zk+\zl), f_C(\hat{w},z)=\zk$. The upper bound given by \Cref{prop:minmax-mixing} is $\frac{\zi(\zk+\zl)}{2\zi+\zl-\zk}$.

    \item Otherwise, consider the edge $\{(\alpha \hat{w}+(1-\alpha)w,z):\alpha\in[0,1]\}$, similarly, the upper bound is given by $\frac{\zg\zk-\zi\zh}{\zg+\zk-\zi-\zh}$.\qedhere
\end{enumerate}

\end{proof}

Then, we prove the following proposition to assist the proof.

\begin{proposition}[Min-max Properties]\label{prop:minmax}
Consider continuous functions $f:\R^n\rightarrow\R$, $g:\R^m\rightarrow\R$, $h_1,h_2:\R^n\times\R^m\rightarrow\R$, $f_1,f_2: \R\rightarrow \R$ and set $C\subseteq\R^n\times\R^m$. The following properties hold.

\begin{enumerate}
\item $\displaystyle\sup_{(a,b)\in C}\min\left\{h_1(a,b),h_2(a,b)\right\}\leq \sup_{\left\{a:\exists b \,(a,b)\in C\right\}}\min\left\{\sup_{\left\{b:(a,b)\in C\right\}} h_1(a,b),\sup_{\left\{b:(a,b)\in C\right\}} h_2(a,b)\right\}$.

\item  If $C=A\times B$ for some $A\in\R^n$ and $B\in\R^m$, then 
\[\sup_{(a,b)\in C}\min\left\{f(a),g(b)\right\}=\min\left\{\sup_{a\in A} f(a),\sup_{b\in B} g(b)\right\}.\]

\item If $C=A\times B$ for some $A\in\R^n$ and $B\in\R^m$, then 
\[\sup_{(a,b)\in C}\min\{f(a),g(b),h_1(a,b)\}=\sup_{a\in A}\min\left\{f(a),\sup_{b\in B}\min\left\{g(b),h_1(a,b)\right\}\right\}.\]

\item Suppose $f_1$ is non-decreasing on $[u,v]$ and $f_2$ is non-increasing on $[u,v]$. Then 
\[\max_{a\in [u,v]}\min\{f_1(a),f_2(a)\}=
\begin{cases} 
f_1(v), &f_1(v)<f_2(v),\\
f_2(u), &f_1(u)>f_2(u), \\ 
f_1(a^*),&\text{otherwise}. \end{cases}\]
Here, $a^*$ is the solution of $f_1(a^*)=f_2(a^*)$.

\item Suppose $f_1$ is non-decreasing on $[u,v]$ and $f_2$ is non-increasing on $[u,v]$. Suppose further that the solution of equation $f_1(a^*)=f_2(a^*)$, denoted by $a^*$, is in $[u,v]$. Consider interval $[u_0,v_0]\subseteq [u,v]$. Then $\max_{a\in [u_0,v_0]}\min\{f_1(a),f_2(a)\}\leq f_1(a^*)$.
\end{enumerate}
\end{proposition}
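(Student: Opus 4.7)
The five statements split naturally into three groups. Statements 1 and 2 are the core min--max interchange lemmas. Statement 3 is a straightforward consequence of statement 2. Statements 4 and 5 are elementary monotone-function arguments based on comparing the graphs of $f_1$ and $f_2$ on $[u,v]$. My plan is to prove 1 and 2 first, then derive 3 by iterating 2, and finally handle 4 and 5 through a case analysis on the signs of $f_1-f_2$ at the endpoints.

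\textbf{Statements 1 and 2.} For statement 1, fix $(a,b)\in C$ and note that for each $i\in\{1,2\}$ we have $h_i(a,b)\le \sup_{\{b':(a,b')\in C\}} h_i(a,b')$, since $b$ itself lies in the indexing set. Monotonicity of $\min$ then yields $\min\{h_1(a,b),h_2(a,b)\}\le\min\{\sup_{b'} h_1(a,b'),\sup_{b'} h_2(a,b')\}$; the right-hand side depends only on $a$, so taking supremum over admissible $a$ gives the claim. For statement 2 on $C=A\times B$, the inequality $\le$ is immediate since $f(a)\le \sup_A f$ and $g(b)\le\sup_B g$. For the reverse inequality, assume without loss of generality that $\sup_A f\le\sup_B g$ (the other case is symmetric). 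For any $\epsilon>0$, pick $a_\epsilon\in A$ with $f(a_\epsilon)>\sup_A f-\epsilon$ and $b_\epsilon\in B$ with $g(b_\epsilon)>\sup_A f-\epsilon$ (possible since $\sup_B g\ge\sup_A f$); then $\min\{f(a_\epsilon),g(b_\epsilon)\}>\sup_A f-\epsilon$. Letting $\epsilon\downarrow 0$ yields $\sup_{A\times B}\min\{f,g\}\ge\sup_A f=\min\{\sup_A f,\sup_B g\}$.

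\textbf{Statement 3.} The plan is to rewrite $\sup_{(a,b)\in A\times B}=\sup_{a\in A}\sup_{b\in B}$ and then, for each fixed $a$, treat $f(a)$ as a constant $c_a$ and apply statement 2 to the pair consisting of the constant function $c_a$ on $A$ and the function $b\mapsto\min\{g(b),h_1(a,b)\}$ on $B$ (formally, take $A$ to be a singleton in statement 2). This gives $\sup_{b\in B}\min\{c_a,g(b),h_1(a,b)\}=\min\{c_a,\sup_{b\in B}\min\{g(b),h_1(a,b)\}\}$, and taking $\sup_{a\in A}$ produces the desired identity.

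\textbf{Statements 4 and 5.} For statement 4, the strategy is a three-case analysis on the ordering of $f_1$ and $f_2$ at the endpoints. If $f_1(v)<f_2(v)$, monotonicity gives $f_1(a)\le f_1(v)<f_2(v)\le f_2(a)$ for every $a\in[u,v]$, so $\min\{f_1,f_2\}=f_1$, which is non-decreasing and hence attains its maximum $f_1(v)$ at $a=v$. The case $f_1(u)>f_2(u)$ is symmetric. In the remaining case, $f_1(u)\le f_2(u)$ and $f_1(v)\ge f_2(v)$, and continuity plus monotonicity guarantee the existence of a crossing point $a^*\in[u,v]$ with $f_1(a^*)=f_2(a^*)$; on $[u,a^*]$ the minimum equals the non-decreasing $f_1$, and on $[a^*,v]$ it equals the non-increasing $f_2$, so the maximum is $f_1(a^*)$. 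For statement 5, the key observation is that the hypothesis $a^*\in[u,v]$ forces $f_1(u)\le f_2(u)$ and $f_1(v)\ge f_2(v)$, putting us in the ``otherwise'' branch of statement 4. Since $[u_0,v_0]\subseteq[u,v]$, the maximum over the smaller interval is at most that over the larger, yielding $\max_{a\in[u_0,v_0]}\min\{f_1(a),f_2(a)\}\le f_1(a^*)$. The only mild subtlety anywhere in the proof is the selection argument in the reverse direction of statement 2, which is handled cleanly by the $\epsilon$-approximation above without needing compactness of $A$ or $B$.
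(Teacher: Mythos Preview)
Your proposal is correct and follows essentially the same approach as the paper. The only minor difference is in statement~3: the paper proves $\le$ via the $\sup$--$\min$ interchange from statement~1 and $\ge$ by a separate $\epsilon$-argument, whereas you obtain the equality in one stroke by applying statement~2 with a singleton first factor---a slightly cleaner but equivalent route.
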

\begin{proof}
    For convenience, we write abbreviations LHS and RHS for the words ``left-hand side'' and ``right-hand side'', respectively. They are used to represent two sides of an equation in the proposition.
\begin{enumerate}[fullwidth, listparindent=\parindent]
\item Since $\LHS=\sup_{\left\{a:\exists b \,(a,b)\in C\right\}}\sup_{\left\{b:(a,b)\in C\right\}}\min\left\{h_1(a,b),h_2(a,b)\right\}$, it suffices to prove
the exchange between the second $\sup$ and $\min$ makes a greater value. Equivalently, we prove that for any functions $F_1,F_2$,
\[\sup_x\min\{F_1(x), F_2(x)\}\leq\min\left\{\sup_x F_1(x),\sup_x F_2(x)\right\}.\]
Clearly, for $i=1,2$, $\min\{F_1(x), F_2(x)\}\leq F_i(x)$. Thus $\sup_x\min\{F_1(x), F_2(x)\}\leq\sup_x F_i(x)$. By taking the minimum, the desired inequality then follows.

\item By the same argument of statement 1, $\LHS\leq\RHS$. To show the reverse inequality, pick any $\epsilon>0$. Let $a^*\in A$ such that $f(a^*)>\sup_{a\in A} f(a)-\epsilon$ and $b^*\in B$ such that $g(b^*)>\sup_{b\in B} g(b)-\epsilon$. Then $ \RHS<\min\{f(a^*),g(b^*)\}+\epsilon\leq\LHS+\epsilon$. Since $\epsilon$ is arbitrary, $\RHS\leq\LHS$.

\item By statement 1, $\LHS\leq\RHS$. The reverse inequality follows by the same argument in the proof of statement 2.

\item The proof is the same as that of \Cref{prop:minmax-mixing}. We here omit it.

\item It is a  direct corollary of statement 4.\qedhere
\end{enumerate}
\end{proof}

Finally, we prove that under the relations given by \Cref{lemma:DFM-relation}, 
\[
\min\left\{\rho(\zc-\zg),(1-\rho)(\zf-\zh),\rho \zj+(1-\rho)(\zl-\zk), \frac{\zf \zg}{\zf+\zg-\zh},D\right\}\leq \frac{1}{3},
\]
which finishes the proof. 

We first give a simple lemma which helps to deal with some details in our proof.
\begin{lemma}\label{lemma:sugarineq}
For any real numbers $a,b,u,v$ satisfying $u\geq 0$, $v\geq u$, and $u+b\geq 0$, we have $\frac{u+a}{u+b}\leq \frac{v+a}{v+b}$ holds if only if $a\leq b$.
\end{lemma}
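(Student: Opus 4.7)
The plan is to reduce the claim to a single-sign analysis by taking the difference of the two fractions and factoring. Compute
\[
\frac{v+a}{v+b}-\frac{u+a}{u+b}=\frac{(v+a)(u+b)-(u+a)(v+b)}{(u+b)(v+b)}.
\]
Expanding the numerator, the terms $uv$ and $ab$ cancel, and the remaining terms regroup as $b(v-u)-a(v-u)=(b-a)(v-u)$. Hence
\[
\frac{v+a}{v+b}-\frac{u+a}{u+b}=\frac{(b-a)(v-u)}{(u+b)(v+b)}.
\]

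Once this identity is established, the lemma reduces to a sign check. The hypotheses $u\ge 0$ and $u+b\ge 0$ give $u+b\ge 0$, and $v\ge u$ then yields $v+b\ge u+b\ge 0$, so the denominator is nonnegative; in the nondegenerate regime where $u+b>0$ (needed even to make sense of the left-hand fraction) we get that both denominator factors are strictly positive. The hypothesis $v\ge u$ also makes $(v-u)\ge 0$. Therefore the sign of the difference equals the sign of $(b-a)$, which immediately gives the equivalence $\frac{u+a}{u+b}\le\frac{v+a}{v+b}\iff a\le b$ in the strict case $v>u$.

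The only real subtlety is the pair of degenerate cases: if $v=u$ then the difference vanishes regardless of $a,b$ (so the inequality holds trivially and the ``only if'' direction is vacuous), and if $u+b=0$ the left fraction is undefined. Both of these should be flagged but do not obstruct the argument, since the lemma is invoked in the surrounding proof only when $v>u$ and the denominators are positive. I expect no real obstacle here; the bulk of the work is just performing the cancellation cleanly and recording the sign of the denominator from the stated hypotheses.
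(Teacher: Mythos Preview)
Your proof is correct and essentially identical to the paper's: both reduce the inequality to the factored identity $(u-v)(a-b)\geq 0$ (equivalently, your $(b-a)(v-u)\geq 0$) after clearing the positive denominators $u+b$ and $v+b$. Your treatment is simply more explicit about the algebra and the degenerate cases $v=u$ and $u+b=0$, which the paper's one-line proof glosses over.
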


\begin{proof}
Since $v+b\geq u+b\geq 0$, $\frac{u+a}{u+b}\leq \frac{v+a}{v+b}\iff (u-v)(a-b)\geq 0\iff a-b\leq 0$.
\end{proof}

Now we give the proof of the main theorem.

Due to symmetry, we can assume without loss of generality that $\zg\geq\zh$.

We divide the proof into two cases. The main idea of upper bound estimation is to gradually eliminate parameters by properly relaxing inequalities. The main structure of our discussion is as follows.
\[
\begin{cases}
\ref{sec:proof:minmax:case1}~\zi>\zk\quad
    \begin{cases}
    \ref{sec:proof:minmax:case1(a)}\quad
          \zi\leq 2\zg-1,\\
           \ref{sec:proof:minmax:case1(b)}\quad \zi>2\zg-1.
    \end{cases}\\[15pt]
    \ref{sec:proof:minmax:case2}~\zi\leq \zk\quad
    \begin{cases}
        \ref{sec:proof:minmax:case2(a)}\quad \Term 3=\Term 4,\\
        \ref{sec:proof:minmax:case2(b)}\quad \Term 2=\Term 4,\\
        \ref{sec:proof:minmax:case2(c)}\quad \Term 1=\Term 2.
    \end{cases}
\end{cases}
\]
\begin{enumerate}[label=\textbf{Case \arabic*}., fullwidth, listparindent=\parindent]
    \item $\zi> \zk$.\label{sec:proof:minmax:case1}
    By \Cref{lemma:1stpart}, we have 
\[f^*_{\mathcal{A}}\leq \min
\left\{\rho(\zc-\zg),(1-\rho)(\zf-\zh),\rho \zj+(1-\rho)(\zl-\zk), \frac{\zf \zg}{\zf+\zg-\zh},\frac{\zi(\zk+\zl)}{2\zi+\zl-\zk}\right\}.\]

We first eliminate $\zc$ and $\zf$. By relation 2 in \Cref{lemma:DFM-relation},  $\zc\leq 1$ and $\zf\leq 1$. Thus 
\[(1-\rho)(\zf-\zh)\leq (1-\rho)(1-\zh)\quad\text{and}\quad\rho(\zc-\zg)\leq \rho(1-\zg).\]
Since $\zg-\zh\geq 0$ and $\zf\leq 1$, by \Cref{lemma:sugarineq},  
\[\frac{\zf\cdot\zg}{\zf+\zg-\zh}\leq \frac{1\cdot\zg}{1+\zg-\zh}.\]
Now we eliminate $\zj$. By relation 6 in \Cref{lemma:DFM-relation}, $\zj\leq \zg-\zi$. Hence 
\[\rho \zj+(1-\rho)(\zl-\zk)\leq \rho (\zg-\zi)+(1-\rho)(\zl-\zk).\] 

After these eliminations, our goal is then to prove
\[\max_{\rho,\zg,\zh,\zi,\zl,\zk} \min\left\{\rho(1-\zg),(1-\rho)(1-\zh),\rho (\zg-\zi)+(1-\rho)(\zl-\zk),\frac{\zg}{1+\zg-\zh},\frac{\zi(\zk+\zl)}{2\zi+\zl-\zk}\right\}\leq 1/3.\]
We further divide the discussion into two cases.

\begin{enumerate}[fullwidth, listparindent=\parindent]
\item \label{sec:proof:minmax:case1(a)}
$\zi\leq 2\zg-1$.

We continue to eliminate parameters. First, we eliminate $\zi$, $\zk$ and $\zl$. For term $\rho(\zg-\zi)+(1-\rho)(\zl-\zk)$, we simply throw it. Since $\zi>\zk$, we have $\frac{\zk+\zl}{2\zi+\zl-\zk}\leq 1$. Therefore, 
\[\frac{\zi(\zk+\zl)}{2\zi+\zl-\zk}\leq \zi\leq2\zg-1.\] 

Then we eliminate $\rho$. By statement 4 in \Cref{prop:minmax}, 
\[\max_{\rho}\min\{\rho(1-\zg),(1-\rho)(1-\zh)\}=\frac{(1-\zg)(1-\zh)}{2-\zg-\zh}.\] 

Next, we eliminate $\zh$. Since $1-\zg\geq0$, we have $\frac{(1-\zh)}{2-\zg-\zh}\leq 1$, so 
\[\frac{(1-\zg)(1-\zh)}{2-\zg-\zh}\leq 1-\zg.\]

Now, by statement 3 and 4 in \Cref{prop:minmax}, $f^*_{\mathcal{A}}$ is bounded by
\begin{align*}
    &\max_{\rho,\zg,\zh}\min\{\rho(1-\zg),(1-\rho)(1-\zh),2\zg-1\} \\
    =&\max_{\zg,\zh}\min\{\max_\rho\{\rho(1-\zg),(1-\rho)(1-\zh)\},2\zg-1\}\\
    =&\max_{\zg,\zh}\min\left\{\frac{(1-\zg)(1-\zh)}{2-\zg-\zh},2\zg-1\right\}\\
    \leq&\max_{\zg}\min\{1-\zg,2\zg-1\}\\
    \leq&\frac{2}{3}(1-\zg)+\frac{1}{3}(2\zg-1)=1/3.
\end{align*}
The last equality holds when $\zg=2/3$.

\item \label{sec:proof:minmax:case1(b)}
$\zi>2\zg-1$.

We first eliminate $\zl$. By relation 2 in \Cref{lemma:DFM-relation}, $\zl\leq 1$. By \Cref{lemma:sugarineq}, we have 
\[\frac{\zi(\zk+\zl)}{2\zi+\zl-\zk}\leq \frac{\zi(\zk+1)}{2\zi+1-\zk}\quad\text{and}\quad\rho (\zg-\zi)+(1-\rho)(\zl-\zk)\leq \rho (\zg-\zi)+(1-\rho)(1-\zk).\]
Then we eliminate $\zh$. For term $(1-\rho)(1-\zh)$, we simply throw it. Since $\zg\geq\zh$, we have 
\[\frac{\zg}{1+\zg-\zh}\leq \zg.\]
    
Now our upper bound becomes
\[\max_{\rho,\zg,\zi,\zk}\min\left\{\rho(1-\zg),\rho (\zg-\zi)+(1-\rho)(1-\zk),\zg,\frac{\zi(\zk+1)}{2\zi+1-\zk}\right\}.\]

Then we eliminate $\rho$, which only occurs in the first two terms.  By assumption, $\zi>2\zg-1$, so $\zg-\zi< 1-\zg$. Since $\zi>\zk$ by assumption and $\zg\leq 1$ by Relation 2 in \Cref{lemma:DFM-relation}, $\zg-\zi<1-\zk$. When other parameters are fixed, the range of the first and the second terms are $[0,1-\zg]$ and $[\zg-\zi,1-\zk]$, which intersect. So by statement 3 and 4 in \Cref{prop:minmax}, the bound of $f^*_{\mathcal{A}}$ becomes
\begin{align*}
    &\max_{\zg,\zi,\zk}\min\left\{\max_{\rho\in[0,1]}\{\rho(1-\zg),\rho (\zg-\zi)+(1-\rho)(1-\zk)\},\zg,\frac{\zi(\zk+1)}{2\zi+1-\zk}\right\}\\
    =&\max_{\zg,\zi,\zk}\min\left\{\frac{(1-\zk)(1-\zg)}{2-\zk+\zi-2\zg},\zg,\frac{\zi(\zk+1)}{2\zi+1-\zk}\right\}.
\end{align*}

Note that none of the cases $\zk=1$, $\zg=1$, $\zg=0$ and $\zi=0$ can attain the maximum. So below we suppose without loss of generality that $\zk\in[0,1)$, $\zg\in(0,1)$ and $\zi\in(0,1]$. In the following discussion, this will avoid the discussion of corner cases.

Next, we eliminate $\zg$. By statement 1 in \Cref{prop:minmax}, the bound of $f^*_{\mathcal{A}}$ becomes
\[\max_{\zi,\zk}\min\left\{\max_\zg\min\left\{\frac{(1-\zk)(1-\zg)}{2-\zk+\zi-2\zg},\zg\right\},\frac{\zi(\zk+1)}{2\zi+1-\zk}\right\}.\]
We complete the elimination by calculating the maximum over $\zg$ on
\[G(\zg):=\min\left\{\frac{(1-\zk)(1-\zg)}{2-\zk+\zi-2\zg},\zg\right\}.\]

Note that $\zg$ is increasing on $[0,1]$ and $\frac{(1-\zk)(1-\zg)}{2-\zk+\zi-2\zg}$ is decreasing in $\zg$ on $[0,1]$. We then apply statement 4 in \Cref{prop:minmax}. To do so, we solve the following equation of $\zg^*$ under constraint $\zi>2\zg^*-1$ and $\zg^*\in[0,1]$, or equivalently, $\zg^*\in(0,(1+\zi)/2)$.
\[\frac{(1-\zk)(1-\zg^*)}{2-\zk+\zi-2\zg^*}=\zg^*.\]
The equation is equivalent to a quadratic equation
\[2(\zg^*)^2-(3-2\zk+\zi)\zg^*+1-\zk=0.\]
The solution candidates are
\[\zg^*=\frac{3-2\zk+\zi \pm \sqrt{(3-2\zk+\zi)^2-8(1-\zk)}}{4}.\] 
To verify the candidates are real, we have the following lower bound on the discriminant.
\begin{align*}
    (3-2\zk+\zi)^2-8(1-\zk)&\geq(3-2\zk)^2-8(1-\zk)\\
    &=9-12\zk+4\zk^2-8+8\zk\\
    &=4\zk^2-4\zk+1=(2\zk-1)^2\geq 0.
\end{align*}
Now we pick the proper solution of $\zg^*$. Note that
\begin{align*}
    &\frac{3-2\zk+\zi+\sqrt{(3-2\zk+\zi)^2-8(1-\zk)}}{4}<\frac{1+\zi}{2}\\ 
    \implies & (3-2\zk+\zi)^2-8(1-\zk)< (\zi+2\zk-1)^2\\
    \iff & 0<8(\zk-1)\zi.
\end{align*}
However, $\zk-1<0$ and $\zi>0$, so that is not a solution. For the other candidate, we have
\begin{align*}
    &\frac{3-2\zk+\zi-\sqrt{(3-2\zk+\zi)^2-8(1-\zk)}}{4}<\frac{1+\zi}{2}\\ 
    \iff & \sqrt{(3-2\zk+\zi)^2-8(1-\zk)}> 1-\zi-2\zk.
\end{align*}
If $\zi+2\zk-1\geq 0$, the inequality is trivially true. Suppose $\zi+2\zk-1<0$, then 
\begin{align*}
    &\sqrt{(3-2\zk+\zi)^2-8(1-\zk)}> 1-\zi-2\zk\\
    \iff & (3-2\zk+\zi)^2-8(1-\zk)> (1-2\zk-\zi)^2\\
    \iff & 4(1-\zk)\zi>0.
\end{align*}
The last inequality holds since $\zk<1$ and $\zi>0$. Clearly, this solution candidate is greater than zero. Thus the valid solution is
\[\zg^*=\frac{3-2\zk+\zi-\sqrt{(3-2\zk+\zi)^2-8(1-\zk)}}{4}.\] 

So we now obtain a new upper bound \[\max_{\zk,\zi}\min\left\{\frac{3-2\zk+\zi-\sqrt{(3-2\zk+\zi)^2-8(1-\zk)}}{4},\frac{\zi(\zk+1)}{2\zi+1-\zk}\right\}.\]

If $\zk+\zi\geq 2/3$, then we have 
\begin{align*}
    &\zk+\zi\geq 2/3\\
    \iff& \left(\frac{5}{3}-2\zk+\zi\right)^2\leq (3-2\zk+\zi)^2-8(1-\zk)\\
    \implies
    &\frac{3-2\zk+\zi-\sqrt{(3-2\zk+\zi)^2-8(1-\zk)}}{4}\leq \frac{1}{3}.
\end{align*}
Otherwise, $ \zk+\zi<2/3$. We also have
\begin{align*}
    &\zk+\zi<2/3\\
    \implies& 3\zi\zk+\zi+\zk< 1\\ 
    \iff&\frac{\zi(\zk+1)}{2\zi+1-\zk}< \frac{1}{3}.
\end{align*}
 Therefore, we have finished the proof of this case.
\end{enumerate}

\item \label{sec:proof:minmax:case2}
$\zk\geq \zi$.
In this case, by \Cref{lemma:1stpart}, we have 
\[f^*_{\mathcal{A}}\leq \min\left\{\rho(1-\zg),(1-\rho)(1-\zh),\rho \zj+(1-\rho)(\zl-\zk), \frac{\zf \zg}{\zf+\zg-\zh}, \frac{\zk\zg-\zi\zh}{\zk+\zg-\zi-\zh}\right\}.\]

First, we eliminate $\zl$ and $\zj$. By relation 2 and 6 in \Cref{lemma:DFM-relation}, $\zl\leq 1$ and $\zj\leq \zg-\zi$. Thus
\[
\rho \zj+(1-\rho)(\zl-\zk)\leq \rho (\zg-\zi)+(1-\rho)(1-\zk).
\]
Similarly, by relation 2 in \Cref{lemma:DFM-relation}, $\zf\leq 1$ and $\zc\leq 1$. So we can eliminate $\zc$ and $\zf$ as follows.
\begin{align*}
    (1-\rho)(\zf-\zh)\leq (1-\rho)(1-\zh)\quad\text{and}\quad \rho(\zc-\zg)\leq \rho(1-\zg).
\end{align*}

For term $\frac{\zf\zg}{\zf+\zg-\zh}$, we simply throw it.
It suffices to prove 
\[\max_{\rho,\zg,\zh,\zi,\zj,\zk} \min\left\{\rho(1-\zg),(1-\rho)(1-\zh),\rho (\zg-\zi)+(1-\rho)(1-\zk),\frac{\zk\zg-\zi\zh}{\zk+\zg-\zi-\zh}\right\}\leq 1/3.\]

In this case, we prove that, to bound $f^*_{\mathcal{A}}$ by $1/3$, it is feasible to assume that the max-min value is attained when the four terms in the minimum are equal. We discuss them case by case.

\begin{enumerate}[fullwidth, listparindent=\parindent]
\item \label{sec:proof:minmax:case2(a)}
$\Term 3 = \Term 4$.

Since the only nontrivial constraints on $\zk,\zi$ are  $\zk\geq\zi$ and $\zi\leq \zg$, we can present our eliminations in a two-stage form. In the first stage, we gradually increase $\zi$ and $\zk$ by the same incremental until $\zi=\zg$ or $\zk=1$. Then, at least one of two parameters $\zi$ and $\zk$ attain its maximum. In the second stage, we solely increase the other parameter which has not reached its maximum until it also reaches $\zi=\zg$ or $\zk=1$. 

During such a two-stage operation, $\Term 3$ is decreasing since its coefficients on $\zi$ and $\zk$ are negative. On the other hand, $\Term 4$ is increasing. To verify this fact, note that in the first stage, the denominator does not change, while the numerator increases. In the second stage, when $\zi=\zg$, by \Cref{lemma:sugarineq}, $\Term 4$ is increasing in $\zk$; when $\zk=1$, by the assumption that $\zg\geq\zh$, the derivative of $\zi$ is $\frac{(\zh-\zk)(\zh-\zg)}{(\zk+\zg-\zi-\zh)^2}\geq 0$, so $\Term 4$ is increasing in $\zi$. 

Now we use this operation to show that in the final elimination form, we must have $\Term 3 = \Term 4$. Suppose $\Term 3>\Term 4$, then $\Term 3>0$. However, when the two-stage operation is no longer possible, we must have $\zi=\zg$ and $\zk=1$, which makes $\Term 3=0$, a contradiction.

We then show that $\Term 3<\Term 4$ is also not possible. Similarly, we can do the procedure of decreasing $\zi$ and $\zk$ with the same value until $\zi=0$ or $\zk=0$, and then making possible decreasing on the left non-zero parameter to $0$. This will make $\Term 3$ increase and $\Term 4$ decrease as the discussion above. So if $\Term 3<\Term 4$, we can again adjust the value of $\zi$ and $\zk$ to ensure that they are equal and the result will not get worse.

\item  \label{sec:proof:minmax:case2(b)}
$\Term 2=\Term 4$.

$\Term 2$ is decreasing in $\zh$ since $\zh$ has a negative coefficient. On the other hand, $\Term 4$ is increasing in $\zh$. To see this, note that by assumption, $\zi\leq \zk$ and by Relation 6 in \Cref{lemma:DFM-relation}, $\zi\leq \zg-\zj\leq \zg$. Then derivative in $\zh$ is
$\frac{(\zi-\zk)(\zi-\zg)}{(\zk+\zg-\zi-\zh)^2}\geq 0$, so $\Term 4$ is increasing in $\zh$.

The valid value of $\zh$ ranges in $[0,\zg]$. Thus the range of $\Term 2$ is $[(1-\rho)(1-\zg),1-\rho]$ and the range of $\Term 4$ is $\left[\frac{\zk\zg}{\zk+\zg-\zi},\zg\right]$. By statement 4 in \Cref{prop:minmax}, $\Term 2=\Term 4$ if and only if the ranges intersect, or equivalently, $\zg\geq (1-\rho)(1-\zg)$ and $1-\rho\geq \frac{\zk\zg}{\zk+\zg-\zi}$.

Let us suppose otherwise. If $\zg< (1-\rho)(1-\zg)$, then by \Cref{prop:minmax},
\begin{align*}
    f^*_{\mathcal{A}}&\leq\max_{\rho,\zg} \min\{\rho(1-\zg),\zg\}\\
    &=\max_{\rho,\zg}\min\{\rho(1-\zg),(1-\rho)(1-\zg),\zg\}\\
    &=\max_\zg\min\left\{\max_\rho\min\{\rho(1-\zg),(1-\rho)(1-\zg)\},\zg \right\}\\
    &=\max_{\zg}\min\left\{\frac{1-\zg}{2},\zg\right\}=1/3.
\end{align*}

If $1-\rho< \frac{\zk\zg}{\zk+\zg-\zi}$, then since $\Term 2=(1-\rho)(1-\zh)\leq 1-\rho$, we may assume $\rho< 2/3$. Since $\Term 1=\rho(1-\zg)< 2/3(1-\zg)$, we may assume $\zg<1/2$. Therefore, 
\begin{align*}
    f^*_{\mathcal{A}}&\leq \max_{\rho,\zg,\zi,\zk}\min\{1-\rho, \rho(\zg-\zi)+(1-\rho)(1-\zk)\}\\
    &\leq \max_{\rho,\zi,\zk}\min\{1-\rho, \rho(\zk-\zi-1/2)+1-\zk\}.
\end{align*}
Also, by \Cref{lemma:sugarineq}, $1-\rho< \frac{\zk\zg}{\zk+\zg-\zi}\leq \frac{\zk\cdot(1/2)}{\zk+(1/2)-\zi}=\frac{\zk}{2\zk-2\zi+1}$. We divide the discussion into two cases: 

\begin{enumerate}
\item If $\zk\geq \zi+1/2$, then since the term $1-\rho$ is decreasing with $\rho$ while the term $\rho(\zk-\zi-1/2)+1-\zk$ is non-decreasing with $\rho$. By statement 5 in \Cref{prop:minmax}, we can ignore the range of $\rho$ constrained by other parameters, and the max-min value is no more than the case the two terms are equal. Solving the equation, we have $\rho=\frac{2\zk}{2\zk-2\zi+1}\geq 2(1-\rho)$, so $\rho\geq 2/3$, a contradiction! So this case cannot happen.

\item If $\zk<\zi+1/2$, then both terms in the max-min form is decreasing in $\rho$, so it suffices to consider the case $\rho$ obtains its minimum, which is defined by $1-\rho\leq \frac{\zk}{2\zk-2\zi+1}$. Substituting $\zk-\zi$ with $\zt$, we have $0<\zt<1/2$, and
\[f^*_{\mathcal{A}}\leq \max_{\zk,\zt}\min\left\{\frac{\zk}{2\zt+1}, \frac{2\zt (\zt-1/2)+(1-\zk)(3\zt+1/2)}{2\zt+1}\right\}.\]

Since the first term is increasing in $\zk$ and the second term is decreasing in $\zk$, the max-min value is not larger than the case the two terms are equal. Solving the equation about $\zk$, we have $\zk=\frac{1+2\zt}{3}$, or equivalently, $\frac{\zk}{2\zt+1}=1/3$, which is the desired result.
\end{enumerate}
With the discussion above, we can suppose $\Term 2=\Term 4$ as well.

\item \label{sec:proof:minmax:case2(c)}
$\Term 1= \Term 2$.

For $\Term 1$, since we have verified the feasibility of the assumption that the three other terms are equal, if $\Term 1$ is not equal to these terms, then it must be greater or less than other terms. However, $\Term 1$ is decreasing with $\rho$ and $\Term 2$ with increasing in $\rho$. The fact that their range must intersect at $0$ suggests that $\Term 1$ cannot be smaller or larger than $\Term 2$; otherwise, we can increase or decrease the value of $\rho$ to make $\Term 1=\Term 2$ but do not make the bound larger.
\end{enumerate}

After all the above discussions, now we finally proved that we can suppose the four terms in the minimum to be equal when the max-min value is attained. 

Now we solve the equation when the four terms are equal.

Solving $\rho(1-\zg)=(1-\rho)(1-\zh)$, we have $\rho=\frac{1-\zh}{2-\zg-\zh}$. Substituting this expression into the equation $\Term 1=\Term 3$, we have
\[\frac{(1-\zg)(1-\zh)}{2-\zg-\zh}=\frac{(1-\zg)(1-\zk)+(1-\zh)(\zg-\zi)}{2-\zg-\zh},\] 
namely 
\begin{equation}\label{eq:eq-on-k-1}
    (\zk - \zh) (1 - \zg)=(1-\zh)(\zg-\zi).
\end{equation}
Substituting this into $\Term 4,$ we have
\[\Term 4=\frac{\zk\zg-\zi\zh}{\zk+\zg-\zi-\zh}=\frac{(\zk\zg-\zi\zh)(1-\zg)}{(2-\zg-\zh)(\zg-\zi)}.\]
And by $\Term 1=\Term 4$, we have
\begin{equation}\label{eq:eq-on-k-2}
    (1-\zh)(\zg-\zi)=\zk\zg-\zi\zh.
\end{equation}
Combining \eqref{eq:eq-on-k-1} and \eqref{eq:eq-on-k-2}, we have 
\[\zk=\frac{\zg - \zg \zh - \zi + 2 \zh \zi}{\zg}=\frac{-\zg - \zh + 2 \zg \zh + \zi - \zh \zi}{-1 + \zg},\] 
which can be simplified to $(\zg-\zi)(3\zg\zh+1-2\zg-2\zh)=0$.

If $\zg=\zi$, we have 
\[f^*_{\mathcal{A}}\leq\max_{\rho,\zg,\zh,\zk}\min\{\rho(1-\zg),(1-\rho)(1-\zh),(1-\rho)(1-\zk),\zg\}.\] 
Since $\zg=\zi\leq \zk$, the right-hand side is bounded by
\[\max_{\rho,\zg}\min\{\rho(1-\zg),(1-\rho)(1-\zg),\zg\}=\max_\zg\min\left\{\frac{1-\zg}{2},\zg\right\}=1/3.\]

If $3\zg\zh+1-2\zg-2\zh=0$, we have $\zg =(-1 + 2 \zh)/(-2 + 3 \zh)$. Plugging this into the expression of Term 1, 
\[(x^*,y^*)\leq \max_{\rho,\zg}\Term 1=\frac{(1-\zg)(1-\zh)}{2-\zg-\zh}=1/3.\]
\end{enumerate}

Now we finish our discussion and show that $f^*_{\mathcal{A}}\leq 1/3$ in all cases.

\subsection{Proof of the tightness result in \Cref{ex-modifiedDFM}}\label{app:1/3-tight}
Now we show that \eqref{eq:tight-DFM} is a tight instance for \Cref{ex-modifiedDFM}. We need the following proposition to verify the validity of a stationary point and its dual solution.
\begin{proposition}[Proposition 3 in \cite{CDH+21_0.3393tight}]\label{prop:verify-sp}
strategy pair $(x_s,y_s)$ is a stationary point with dual solution $(\rho,w,z)$ if and only if
\begin{align*}
\supp (x_s) &\subseteq \suppmin \left\{-\rho R y_s+(1-\rho) C(z-y_s)\right\}\text{ and} \\
\supp(y_s) &\subseteq \suppmin\left\{\rho R^\T(w-x_s)-(1-\rho) C^\T x_s\right\}.
\end{align*}
\end{proposition}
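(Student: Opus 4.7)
The plan is to unpack stationarity as a min-max LP over directions $(x', y') \in \Delta_m \times \Delta_n$, apply LP strong duality to introduce $(\rho, w, z)$, and then exploit the fact that once the dual variables are fixed, the inner objective becomes affine in $(x', y')$, so the support conditions fall out from the standard fact that minimizing a linear function over a simplex selects the coordinates of minimum value.

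Concretely, first I would compute the Dini directional derivatives using $\max\{R(y_s + \alpha(y'-y_s))\} = \max\{R y_s\} + \alpha \max_{w \in \br_R(y_s)} w^\T R(y'-y_s) + O(\alpha^2)$ for small $\alpha > 0$, which gives
\begin{align*}
Df_R(x_s, y_s; x', y') &= \max_{w \in \br_R(y_s)} \bigl[w^\T R(y'-y_s) - (x'-x_s)^\T R y_s - x_s^\T R(y'-y_s)\bigr], \\
Df_C(x_s, y_s; x', y') &= \max_{z \in \br_C(x_s)} \bigl[(x'-x_s)^\T C(z-y_s) - x_s^\T C(y'-y_s)\bigr].
\end{align*}
Stationarity is the condition $\min_{(x',y') \in \Delta_m \times \Delta_n} \max\{Df_R, Df_C\} = 0$. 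Writing $\max\{Df_R, Df_C\} = \max_{\rho \in [0,1]} \rho\, Df_R + (1-\rho) Df_C$ and letting $w, z$ range over the two best-response polytopes, this turns into a finite LP because $\br_R(y_s)$ and $\br_C(x_s)$ are convex hulls of the pure best responses in $\suppmax(R y_s)$ and $\suppmax(C^\T x_s)$. The dual is parameterized by $(\rho, w, z) \in [0,1] \times \br_R(y_s) \times \br_C(x_s)$, with $\rho$ being the total mass placed on the $f_R$-constraints and $w, z$ the normalized convex combinations of the active pure best responses; this is exactly the object called the dual solution in \Cref{example:modified-TS}.

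The crucial observation is that for fixed $(\rho, w, z)$, the function $F(x', y', \rho, w, z) := \rho\, Df_R + (1-\rho) Df_C$ is affine in $(x', y')$. Collecting coefficients gives $F = (x')^\T u + (y')^\T v + \mathrm{const}$ with
\[
u = -\rho R y_s + (1-\rho) C(z - y_s), \qquad v = \rho R^\T(w - x_s) - (1-\rho) C^\T x_s,
\]
matching the vectors in the statement. Since $\min_{x' \in \Delta_m}(x')^\T u = \min\{u\}$, attained iff $\supp(x') \subseteq \suppmin(u)$ (and analogously for $y'$), and since $F(x_s, y_s, \rho, w, z) = 0$ trivially (the directional derivatives vanish in the zero direction), the pair $(x_s, y_s)$ attains $\min_{(x', y')} F = 0$ for the given dual $(\rho, w, z)$ if and only if both support inclusions hold. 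For the converse, when the support inclusions hold we have $\min_{(x',y')} F(x', y', \rho, w, z) = 0$, and combining this with the trivial bound $\min_{(x',y')} \max_{(\rho', w', z')} F \leq \max_{(\rho', w', z')} F(x_s, y_s, \rho', w', z') = 0$ sandwiches the min-max value at $0$, which is stationarity of $(x_s, y_s)$ with $(\rho, w, z)$ as a witnessing dual solution.

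The main technical point will be justifying the minimax exchange cleanly; my plan is to reduce it to strong LP duality on the finite primal LP whose constraints are indexed by pure best responses in $\suppmax(R y_s)$ and $\suppmax(C^\T x_s)$, with dual multipliers summing to $\rho$ and $1-\rho$ and normalizing to the convex combinations $w$ and $z$. Everything else is mechanical coefficient collection.
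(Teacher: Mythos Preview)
The paper does not prove this proposition; it is quoted verbatim as Proposition~3 of \cite{CDH+21_0.3393tight} and used only as a verification tool for the explicit tight instance in \Cref{app:1/3-tight}. So there is no in-paper argument to compare against.

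Your sketch is the standard route and is essentially correct: compute the Dini derivative of $f=\max\{f_R,f_C\}$, express the steepest-descent LP as $\min_{(x',y')}\max_{(\rho,w,z)} F$, invoke LP strong duality, and read off the support conditions from the fact that $F$ is affine in $(x',y')$ with coefficient vectors $u=-\rho R y_s+(1-\rho)C(z-y_s)$ and $v=\rho R^\T(w-x_s)-(1-\rho)C^\T x_s$. The coefficient collection is right, and the sandwich argument for the converse is clean.

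One point to tighten when you write this out in full: your identity $Df=\max\{Df_R,Df_C\}$ presupposes $f_R(x_s,y_s)=f_C(x_s,y_s)$. For the ``only if'' direction this is fine because that equality holds at every stationary point (cf.\ \Cref{lemma:frfc}). For the ``if'' direction you should argue that the support conditions themselves force this equality (or force $\rho\in\{0,1\}$ with the corresponding regret vanishing): e.g.\ if $\rho=1$ then $\supp(x_s)\subseteq\suppmin(-Ry_s)=\suppmax(Ry_s)$ gives $f_R(x_s,y_s)=0$, and the analogous observation for $\rho=0$; for $\rho\in(0,1)$ complementary slackness in your LP dual forces both the $f_R$- and $f_C$-type constraints to be active, hence equal. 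Without this remark the converse only shows $\min_{(x',y')}\max\{Df_R,Df_C\}=0$, which is stationarity of $f$ only under $f_R=f_C$.
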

Now we begin to prove the tightness. We verify the following three facts.
\begin{enumerate}[fullwidth, listparindent=\parindent]
\item $(x_s,y_s)$ is a stationary point and its dual solution is $(\rho,w,z)$.

We verify this by \Cref{prop:verify-sp}. One can check that 
\begin{gather*} 
\supp(x_s)=\supp(y_s)=\{1\},\\ 
\suppmin \left\{-\rho R y_s+(1-\rho) C(z-y_s)\right\}=\suppmin(1/6,1/6,1/6)=\{1,2,3\},\\
\suppmin\left\{\rho R^\T(w-x_s)-(1-\rho) C^\T x_s\right\}=\suppmin(1/6,1/6,1/6)=\{1,2,3\}.
\end{gather*}

\item $\hat{w}\in \br_{R}(\hat{y})$.

Since $\suppmax\{R\hat{y}\}= \suppmax (0,1/2,1/2)=\{2,3\}$, $\supp\{w\}\subseteq \suppmax\{R\hat{y}\}$, namely $\hat{w}\in \br_{R}(\hat{y})$.

\item The minimum of $f$ on $\mathcal{A}$ is $1/3$.

Due to our construction, it suffices to minimize $f$ over
\[\mathcal{A}:=\{(\alpha x_s+(1-\alpha)w,\beta y_s+(1-\beta)z):\alpha,\beta\in[0,1]\}.\]
We have
\begin{align*}
&\min_{\alpha\in[0,1],\beta\in[0,1]}\{f(\alpha x_s+(1-\alpha)w,\beta y_s+(1-\beta)z)\}\\
=&\min\left\{ \max\left\{1-\beta,\frac{2-\beta}{3}\right\}-\frac{(2-\beta)(1-\alpha)}{3},1-\frac{2\alpha}{3}-\frac{(1-\beta)(2-\alpha)}{3}\right\}.
\end{align*}

To prove the lower bound of $1/3$, we suppose on the contrary that
\begin{align*}
    \max\left\{1-\beta,\frac{2-\beta}{3}\right\}-\frac{(2-\beta)(1-\alpha)}{3}< \frac{1}{3}\quad\text{and}\quad 1-\frac{2\alpha}{3}-\frac{(1-\beta)(2-\alpha)}{3}<\frac{1}{3}. 
\end{align*}
Therefore, we have $\alpha\beta> 2\alpha-2\beta$, $\alpha\beta >2\alpha-1$ and $\alpha\beta > 2\beta-\alpha$.
 Simplifying them on $\alpha$, we have
 \[\alpha\in\left(\frac{2\beta}{\beta+1},\min\left\{\frac{1}{2-\beta},\frac{2\beta}{2-\beta}\right\}\right).\]
 If $\beta<1/2$, then such $\alpha$ exists if and only if $2-\beta < \beta+1$, namely $\beta> 1/2$, a contradiction. If $\beta\geq 1/2$, then such $\alpha$ exists if and only if $\frac{2\beta}{\beta+1}<\frac{1}{2-\beta}\iff(\beta-1)(2\beta-1)>0\implies\beta<1/2$, also a contradiction.
\end{enumerate}
\newpage
\section{Automatic approximation analysis for algorithms in \Cref{tab:approx-result}}\label{details:literature-examples}

In this section, we provide details regarding the automatic approximation analysis for algorithms in \Cref{tab:instances}, including experiment setup and formalization of approximation bounds for all algorithms in \Cref{tab:instances}, which we utilize to derive the results presented in \Cref{tab:approx-result}.

\subsection{Experiment setup}

We run the programs on MacBook Pro, 13-inch, 2020. The CPU configuration is Intel(R) Core(TM) i5-1038NG7 CPU @ 2.00GHz. The memory is 16 GB 3733 MHz LPDDR4X. The macOS version is 13.5.2 (22G91). The power supply is using battery.\footnote{Actually, the performance of Mac OS is not affected by which power supply is used.}

We implement codes in Wolfram Mathematica, version 13.3.0.0. The codes are presented in notebook and run in local kernel.

All optimizers are using built-in function \texttt{NMaximize}. In the second case of BBM-$0.36$ \cite{BBM07_0.36NE}, it uses parameters as follows: 
\begin{itemize}
    \item \texttt{method -> "RandomSearch"}, \texttt{WorkingPrecision -> 10}, \texttt{MaxIterations -> 1000}.
\end{itemize}
In the case of TS-$0.3393$ \cite{TS07_0.3393NE} and DFM-$1/3$ \cite{DFM22_0.3333NE}, they use parameters as follows:
\begin{itemize}
    \item \texttt{AccuracyGoal -> 10}, \texttt{WorkingPrecision -> 20}, \texttt{Method -> "DifferentialEvolution"}.
\end{itemize}
For all other cases, we use default parameters.

\subsection{The KPS algorithm in \titlecite{KPS06_0.75NE}}
\begin{itemize}[fullwidth]
    \item \textbf{Search phase.}

    \begin{itemize}
    \item Row strategy: $e_{i_1}, e_{i_2} \in \Delta_m$.
    \item Column strategy: $e_{j_1}, e_{j_2}\in \Delta_n$. 
    \end{itemize}
    The mixing region is denoted by $\mathcal{A}=\{(\alpha e_{i_1}+(1-\alpha) e_{i_2}, \beta e_{j_1}+(1-\beta)e_{j_2}): \alpha\in [0,1], \beta\in [0,1]\}$.

    \item \textbf{Variables.}

    Denote $u_1=f_R(e_{i_1}, e_{j_1})$, $v_1=f_R(e_{i_1}, e_{j_2})$, $h_1=f_R(e_{i_2}, e_{j_1})$,
    $g_1=f_R(e_{i_2}, e_{j_2})$,
    $u_2=f_C(e_{i_1}, e_{j_1})$, $v_2=f_C(e_{i_1}, e_{j_2})$, $h_2=f_C(e_{i_2}, e_{j_1})$,
    $g_2=f_C(e_{i_2}, e_{j_2})$.
    \item \textbf{Relations.}

    In the KPS algorithm, $R_{i_1 j_1}=\max_{i,j} R_{ij}=1$ and $C_{i_2 j_2}=\max_{i,j} C_{ij}=1$. Thus, $u_1=g_2=0$.

    \item  \textbf{Upper bound.}

By taking $v=w=2$ in \Cref{prop:v-w-strong-aux-mixing}, the global minimum of $f$ over $\mathcal{A}$ is upper bounded by:
\begin{align*}
    \min_{\alpha,\beta} & \quad h\\
    \text{s.t.} &\quad h \geq \alpha \beta u_1+\alpha (1-\beta) v_1+(1-\alpha) \beta h_1+(1-\alpha) (1-\beta) g_1,\\
    &\quad h \geq \alpha \beta u_2+\alpha (1-\beta) v_2+(1-\alpha) \beta h_2+(1-\alpha) (1-\beta) g_2,\\
    &\quad 0\leq \alpha, \beta \leq 1.
\end{align*}
Thus, the approximation upper bound of the algorithm is given by:
\begin{align*}
\max_{u_1,v_1, h_1, g_1, u_2, v_2, h_2, g_2} \min_{\alpha,\beta} & \quad h\\
    \text{s.t.} &\quad h \geq \alpha \beta u_1+\alpha (1-\beta) v_1+(1-\alpha) \beta h_1+(1-\alpha) (1-\beta) g_1,\\
    &\quad h \geq \alpha \beta u_2+\alpha (1-\beta) v_2+(1-\alpha) \beta h_2+(1-\alpha) (1-\beta) g_2,\\
    &\quad 0\leq \alpha,\beta \leq 1,\\
    &\quad u_1=0, g_2=0,\\
    &\quad 0\leq v_1, h_1, g_1, u_2,v_2, h_2 \leq 1.
\end{align*}
\end{itemize}
\begin{remark}
It uses the strong $(2,2)$-constructor. Thus, it actually follows the extension framework as stated in \Cref{remark:stronger-upper-bound}.
\end{remark}

\subsection{The DMP algorithm in \titlecite{DMP06_0.5NE}}
Similar to the notation in \Cref{ex:1/2NE},
denote $a=f_R(i,j), b=f_R(k,j), c=f_C(i,j), d=f_C(k,j)$.

From \Cref{ex:1/2NE}, the expression of the approximation bound is given by:
\begin{align*}
\max_{a,b,c,d} \min_{\alpha} & \quad h\\
    \text{s.t.} &\quad h \geq \alpha a + (1-\alpha)b,\\
    &\quad h \geq \alpha c + (1-\alpha) d,\\
    &\quad 0\leq \alpha\leq 1,\\
    &\quad b=0, c=0,\\
    &\quad 0\leq a,d \leq 1.
\end{align*}

\subsection{The DMP algorithm in \titlecite{DMP07_0.38NE}}

\begin{itemize}[fullwidth]
    \item \textbf{Search phase.}
    \begin{itemize}
    \item Row strategy: $\alpha, x \in \Delta_m$.
    \item Column strategy: $\beta, y\in \Delta_n$. 
    \end{itemize}
    The mixing region is denoted by $\mathcal{A}=\{(\delta_1 \alpha+(1-\delta_1) x, \delta_2 \beta+(1-\delta_2)y): \delta_1\in [0,1], \delta_2 \in [0,1]\}$.

    \item \textbf{Variables.}

    Denote $u_1=f_R(\alpha, \beta)$, 
    $v_1=f_R(\alpha, y)$, 
    $h_1=f_R(x, \beta)$,
    $g_1=f_R(x, y)$,
    $u_2=f_C(\alpha, \beta)$, 
    $v_2=f_C(\alpha, y)$, 
    $h_2=f_C(x, \beta)$,
    $g_2=f_C(x, y)$.
    \item \textbf{Relations.}

    According to the DMP algorithm, there exists $v_R$, $v_C$ such that $0\leq u_1\leq 1+3\epsilon/2-v_R$, 
    $0\leq v_1\leq 2\epsilon$, 
    $0\leq h_1\leq 1+3\epsilon/2-v_R$,
    $0\leq g_1\leq v_R+\epsilon/2$,
    $0\leq u_2\leq 1+3\epsilon/2-v_C$, 
    $0\leq v_2\leq 1+3\epsilon/2-v_C$, 
    $0\leq h_2\leq 2\epsilon$,
    $0\leq g_2\leq v_C+\epsilon/2$.

    \item  \textbf{Upper bound.}
  
By taking $v=w=2$ in \Cref{prop:v-w-strong-aux-mixing}, the approximation upper bound of the algorithm is given by:
\begin{align*}
\max_{v_R, v_C} \max_{u_1,v_1, h_1, g_1, u_2, v_2, h_2, g_2} \min_{\delta_1,\delta_2} & \quad h\\
    \text{s.t.} &\quad h \geq \delta_1 \delta_2 u_1+\delta_1 (1-\delta_2) v_1+(1-\delta_1) \delta_2 h_1+(1-\delta_1) (1-\delta_2) g_1,\\
    &\quad h \geq \delta_1 \delta_2 u_2+\delta_1 (1-\delta_2) v_2+(1-\delta_1) \delta_2 h_2+(1-\delta_1) (1-\delta_2) g_2,\\
    &\quad 0\leq \delta_1,\delta_2 \leq 1,\\
    &\quad  0\leq u_1\leq 1+3\epsilon/2-v_R , 
     0\leq v_1\leq 2\epsilon , 
     0\leq h_1\leq 1+3\epsilon/2-v_R ,\\
      &\quad 0\leq g_1\leq v_R+\epsilon/2, 
   0\leq u_2\leq 1+3\epsilon/2-v_C , 
     0\leq v_2\leq 1+3\epsilon/2-v_C ,\\
      &\quad 0\leq h_2\leq 2\epsilon ,
     0\leq g_2\leq v_C+\epsilon/2,\\
     &\quad  0\leq v_R, v_C\leq 1.
\end{align*}
\end{itemize}

\begin{remark}
We note the solution to the optimization problem above is related to $v_R$ and $v_C$. Thus viewing $v_R$ and $v_C$ as parameters in the search phase. Also, it uses the strong $(2,2)$-constructor. Thus, it actually follows the extension framework as stated in \Cref{remark:0.36-modify-BBM} and \Cref{remark:stronger-upper-bound}.
\end{remark}

\subsection{The BBM algorithm in \titlecite{BBM07_0.36NE} with 0.382 approximation}
\label{details:ex:BBM0.38}
Similar to the notations of \Cref{example:start}, denote $g_1=f_R(x^*, y^*)$, $g_2=f_C(x^*, y^*)$, $h_1=f_R(x^*, b_2)$, $h_2=f_C(x^*, b_2)$, $v_1=f_R(r_1, b_2)$, $v_2=f_C(r_1, b_2)$, $u_1=f_R(r_1, y^*)$, $u_2=f_C(r_1, y^*)$.

From \Cref{example:approx-ana}, the expression of the approximation upper bound is given by:
\begin{align*}
\max_{g_1,g_2,h_1,h_2,u_1,u_2,v_1,v_2} & \quad \min_{\alpha_1,\alpha_2,\alpha_3,\alpha_4} \min\{s_1,s_2,s_3,s_4\}\\
    \text{s.t.} &\quad s_1 \geq \alpha_1 g_1 + (1-\alpha_1)u_1, s_1 \geq \alpha_1 g_2 + (1-\alpha_1)u_2,\\
    &\quad s_2 \geq \alpha_2 g_1 + (1-\alpha_2)h_1, s_2 \geq \alpha_2 g_2 + (1-\alpha_2)h_2,\\
    &\quad s_3 \geq \alpha_3 v_1 + (1-\alpha_3)h_1, s_3 \geq \alpha_3 v_2 + (1-\alpha_3)h_2,\\
    &\quad s_4 \geq \alpha_4 v_1 + (1-\alpha_4)u_1, s_4 \geq \alpha_4 v_2 + (1-\alpha_4)u_2,\\
    &\quad 0\leq \alpha_1, \alpha_2, \alpha_3, \alpha_4 \leq 1,\\
    &\quad u_1=v_2=0,
    0\leq g_1,g_2,h_1,h_2,u_2,v_1\leq 1,
    g_1\geq g_2,
    u_2\leq 1-g_1.
\end{align*}

\subsection{The CDFFJS algorithm in \titlecite{CDF+16_0.382NE&0.653WSNE}}
\begin{itemize}[fullwidth]
    \item \textbf{Search phase.}
    \begin{itemize}
    \item Row strategy: $x^*,\hat{x},r \in \Delta_m$.
    \item Column strategy: $y^*, j\in \Delta_n$. 
    \end{itemize}
    The mixing region is denoted by $\mathcal{A}=\{(\alpha \hat{x}+(1-\alpha) x^*, \beta \hat{y}+(1-\beta) y^*: \alpha \in [0,1], \beta \in [0,1]\}$.

    \item \textbf{Variables.}

    Denote $u_1=f_R( \hat{x}, j)$, 
    $v_1=f_R( \hat{x}, y^*)$, 
    $h_1=f_R(x^*, j)$,
    $g_1=f_R(x^*, y^*)$,
    $s_1=f_R(r,  j)$,
    $t_1=f_R(r, y^*)$,
    $u_2=f_C( \hat{x},  j)$, 
    $v_2=f_C( \hat{x}, y^*)$, 
    $h_2=f_C(x^*,  j)$,
    $g_2=f_C(x^*, y^*)$,
    $s_2=f_C(r, j)$,
    $t_2=f_C(r, y^*)$.
    
    \item \textbf{Relations.}

    According to the CDFFJS algorithm, denote $v_C={\hat{x}}^\T C\hat y$ and $v_R=(x^*)^\T Ry^*$. Assume $v_R\geq v_C$. From \cite{CDF+16_0.382NE&0.653WSNE}, we have $0\leq v_1\leq v_R, 0\leq v_2\leq v_C$, $g_1=h_2=s_1=0$, $0\leq h_1\leq 1-v_R$. Besides, we have $0\leq u_1,g_1, t_1, u_2, g_2, s_2, t_2\leq 1$.

    \item  \textbf{Upper bound.}
  
By applying \Cref{prop:v-w-aux-mixing}, the approximation upper bound of the algorithm is given by:
\begin{align*}
\max_{v_R,v_C}\max_{ v_1,u_1,g_1,h_1,t_1,s_1,v_2,u_2,g_2,h_2,t_2,s_2} & \quad \min_{ \alpha_1,\alpha_2,\alpha_3,\alpha_4,\alpha_5,\alpha_6,\alpha_7,\alpha_8,\alpha_9 } \min\{ b_1,b_2,b_3,b_4,b_5,b_6,b_7,b_8,b_9 \}\\
\text{s.t.}&\quad b_1\geq \alpha_1 u_1 +(1-\alpha_1) v_1,b_1\geq \alpha_1 u_2 +(1-\alpha_1) v_2,\\
&\quad b_2\geq \alpha_2 h_1 +(1-\alpha_2) g_1,b_2\geq \alpha_2 h_2 +(1-\alpha_2) g_2,\\
&\quad b_3\geq \alpha_3 s_1 +(1-\alpha_3) t_1,b_3\geq \alpha_3 s_2 +(1-\alpha_3) t_2,\\
&\quad b_4\geq \alpha_4 g_1 +(1-\alpha_4) v_1,b_4\geq \alpha_4 g_2 +(1-\alpha_4) v_2,\\
&\quad b_5\geq \alpha_5 t_1 +(1-\alpha_5) v_1,b_5\geq \alpha_5 t_2 +(1-\alpha_5) v_2,\\
&\quad b_6\geq \alpha_6 t_1 +(1-\alpha_6) g_1,b_6\geq \alpha_6 t_2 +(1-\alpha_6) g_2,\\
&\quad b_7\geq \alpha_7 h_1 +(1-\alpha_7) u_1,b_7\geq \alpha_7 h_2 +(1-\alpha_7) u_2,\\
&\quad b_8\geq \alpha_8 s_1 +(1-\alpha_8) u_1,b_8\geq \alpha_8 s_2 +(1-\alpha_8) u_2,\\
&\quad b_9\geq \alpha_9 s_1 +(1-\alpha_9) h_1,b_9\geq \alpha_9 s_2 +(1-\alpha_9) h_2,\\
&\quad 0\leq \alpha_1,\alpha_2,\alpha_3,\alpha_4,\alpha_5,\alpha_6,\alpha_7,\alpha_8,\alpha_9 \leq 1,\\
&\quad 0\leq v_1\leq v_R, 0\leq v_2\leq v_C, g_1=h_2=s_1=0,\\
&\quad 0\leq h_1\leq 1-v_R, 0\leq u_1, t_1, u_2, g_2, s_2, t_2\leq 1,\\
&\quad 0\leq v_C\leq v_R\leq 1.
\end{align*}
\end{itemize}

\begin{remark}
We note the solution to the optimization problem above is related to $v_R, v_C$. Thus, it actually follows the extension framework as stated in \Cref{remark:0.36-modify-BBM}. 
\end{remark}

\subsection{The BBM algorithm in \titlecite{BBM07_0.36NE} with 0.36 approximation}
\begin{itemize}[fullwidth]
    \item \textbf{Search phase.}
    \begin{itemize}
    \item Row strategy:  $\hat{x}=(1-\delta_1) x^*+\delta_1 r_1 \in \Delta_m$.
    \item Column strategy: $y^*, b_2\in \Delta_n$. 
    \end{itemize}
    The mixing region is denoted by $\mathcal{A}=\{(\hat{x}, \delta_2 y^*+(1-\delta_2) b_2: \delta_2 \in [0,1]\}$.

    \item \textbf{Variables.}

    Denote 
    $a=f_R(\hat{x},y^*), b=f_R(\hat{x},b_2), c=f_C(\hat{x},y^*), d=f_C(\hat{x},b_2)$.
    
    \item \textbf{Relations.}

    According to the BBM algorithm, denote $g_1=f_R(x^*,y^*), g_2=f_C(x^*,y^*), h_2=f_C(x^*,b_2)$. For symmetry, we suppose $g_1\geq g_2$, and 
    \[\delta_1= \begin{cases}0, & \text { if } g_1 \in[0,1 / 3], \\ \left(1-g_1\right)\left(-1+\sqrt{1+\frac{1}{1-2 g_1}-\frac{1}{g_1}}\right), & \text { if } g_1 \in(1 / 3, \beta], \\ 1, & \text { otherwise.}\end{cases}\]
    where $\beta$ is the root of the polynomial $x^3-x^2-2x+1$ in $[1/3,1/2]$. From \cite{BBM07_0.36NE}, we have $0\leq a \leq (1-\delta_1)g_1$, $0\leq b\leq 1-(1-\delta_1)h_2$, $c\leq (1-\delta_1)h_2+\delta_1 (1-g_1)$, $d=0$. Moreover, we have
    \[\begin{cases}a=g_1, c=g_2, & \text { if } g_1 \in(1 / 3, \beta], \\ h_2=g_2, & \text { if }g_1\in[\beta,1].\end{cases}\]

    \item  \textbf{Upper bound.}
    
By taking $w=2$ in \Cref{prop:1-w-aux-mixing}, the approximation upper bound of the algorithm is given by:
\begin{align*}
\max_{g_1,h_2,\delta_1} \max_{a,b,c,d} \min_{\delta_2} & \quad h\\
    \text{s.t.} &\quad h \geq \delta_2 a + (1- \delta_2)b,\\
    &\quad h \geq  \delta_2 c + (1- \delta_2) d,\\
    &\quad 0\leq \delta_2 \leq 1,\\
    &\quad0\leq a \leq 1-(1-\delta_1)h_2, 0\leq b\leq (1-\delta_1)g_1,\\
    &\quad c=0, d\leq (1-\delta_1)h_2+\delta_1 (1-g_1),\\
    & \quad 0\leq g_1, h_2\leq 1, g_1\geq g_2, g_2\geq h_2,\\
    & \quad \begin{cases}\delta_1=0, & \text { if } g_1 \in[0,1 / 3], \\ \delta_1=\left(1-g_1\right)\left(-1+\sqrt{1+\frac{1}{1-2 g_1}-\frac{1}{g_1}}\right), a=g_1, c=g_2, & \text { if } g_1 \in(1 / 3, \beta], \\ \delta_1=1, h_2=g_2, & \text { otherwise.}\end{cases}
\end{align*}

\end{itemize}
\begin{remark}
We note the solution to the optimization problem above is related to $g_1, h_2$. Thus, it actually follows the extension framework as stated in \Cref{remark:0.36-modify-BBM}. 
\end{remark}

\subsection{The TS algorithm in \titlecite{TS07_0.3393NE}}\label{app:subsec:TS-formalization}

Similar to the notation in \Cref{ex-TSalgo},  denote $\za=f_R(x_s,y_s)$, $\zb=f_C(x_s,y_s)$, $\zc=f_R(x_s,z)$, $\zd=f_C(x_s,z)$, $\ze=f_R(w,y_s)$, $\zf=f_C(w,y_s)$, $\zg=f_R(w,z)$, $\zh=f_C(w,z)$.

From \Cref{ex-TSalgo}, the expression of the approximation upper bound is given by:
\begin{align*}
\max_{\za,\zb,\zc,\zd,\ze,\zf,\zg,\zh, \rho} & \quad \min_{\alpha_1,\alpha_2,\alpha_3,\alpha_4} \min\{b_1,b_2,b_3,b_4\}\\
    \text{s.t.} &\quad b_1 \geq \alpha_1 \za + (1-\alpha_1)\ze, b_1 \geq \alpha_1 \zb + (1-\alpha_1)\zf,\\
    &\quad b_2 \geq \alpha_2 \za + (1-\alpha_2)\zc, b_2 \geq \alpha_2 \zb + (1-\alpha_2)\zd,\\
    &\quad b_3 \geq \alpha_3 \zg + (1-\alpha_3)\zc, b_3 \geq \alpha_3 \zh + (1-\alpha_3)\zd,\\
    &\quad b_4 \geq \alpha_4 \zg + (1-\alpha_4)\ze, b_4 \geq \alpha_4 \zh + (1-\alpha_4)\zf,\\
    &\quad 0\leq \alpha_1, \alpha_2, \alpha_3, \alpha_4 \leq 1,\\
    &\quad \za=\zb, \zd=\ze=0, \zc\geq \zg, \zc\geq \za, \zf\geq \zh, \zf\geq \zb, \zg\geq \zh, \za\leq \rho(\zc-\zg), \za\leq (1-\rho)(\zf-\zh),\\
    &\quad 0\leq \za, \zb, \zc, \zd, \ze, \zf, \zg, \zh, \rho \leq 1.
\end{align*}

\subsection{The DFM algorithm in \titlecite{DFM22_0.3333NE}}\label{app:subsec:DFM-formalization}
Due to \Cref{ex-modifiedDFM}, the expression of the approximation upper bound is given by:
\begin{align*}
\max_{ \za,\zb,\zc,\zf,\zg,\zh,\zi,\zj,\zk,\zl, \zn, \zm } & \quad \min_{ \alpha_{1},\alpha_{2},\alpha_{3},\alpha_{4},\alpha_{5},\alpha_{6},\alpha_{7},\alpha_{8},\alpha_{9},\alpha_{10},\alpha_{11},\alpha_{12},\alpha_{13},\alpha_{14},\alpha_{15} } \min\{ b_i \}_{i=1}^{15}\\
\text{s.t.}&\quad b_{1}\geq \alpha_{1} \za +(1-\alpha_{1}) \zn,b_{1}\geq \alpha_{1} \zb +(1-\alpha_{1}) \zb/2,\\
&\quad b_{2}\geq \alpha_{2} \zn +(1-\alpha_{2}) \zc,b_{2}\geq \alpha_{2} \zb/2 +(1-\alpha_{2}) \zd,\\
&\quad b_{3}\geq \alpha_{3} \ze +(1-\alpha_{3}) \zm,b_{3}\geq \alpha_{3} \zf +(1-\alpha_{3}) (\zf+\zh)/2,\\
&\quad b_{4}\geq \alpha_{4} \zm +(1-\alpha_{4}) \zg,b_{4}\geq \alpha_{4} (\zf+\zh)/2 +(1-\alpha_{4}) \zh,\\
&\quad b_{5}\geq \alpha_{5} \zj ,b_{5}\geq \alpha_{5} \zl +(1-\alpha_{5}) (\zk+\zl)/2,\\
&\quad b_{6}\geq  (1-\alpha_{6}) \zi,b_{6}\geq \alpha_{6} (\zk+\zl)/2 +(1-\alpha_{6}) \zk,\\
&\quad b_{7}\geq \alpha_{7} \za +(1-\alpha_{7}) \ze,b_{7}\geq \alpha_{7} \zb +(1-\alpha_{7}) \zf,\\
&\quad b_{8}\geq \alpha_{8} \ze +(1-\alpha_{8}) \zj,b_{8}\geq \alpha_{8} \zf +(1-\alpha_{8}) \zl,\\
&\quad b_{9}\geq \alpha_{9} \za +(1-\alpha_{9}) \zj,b_{9}\geq \alpha_{9} \zb +(1-\alpha_{9}) \zl,\\
&\quad b_{10}\geq \alpha_{10} \zn +(1-\alpha_{10}) \zm,b_{10}\geq \alpha_{10} \zb/2 +(1-\alpha_{10}) (\zf+\zh)/2,\\
&\quad b_{11}\geq \alpha_{11} \zm ,b_{11}\geq \alpha_{11} (\zf+\zh)/2 +(1-\alpha_{11}) (\zk+\zl)/2,\\
&\quad b_{12}\geq \alpha_{12} \zn ,b_{12}\geq \alpha_{12} \zb/2 +(1-\alpha_{12}) (\zk+\zl)/2,\\
&\quad b_{13}\geq \alpha_{13} \zc +(1-\alpha_{13}) \zg,b_{13}\geq \alpha_{13} \zd +(1-\alpha_{13}) \zh,\\
&\quad b_{14}\geq \alpha_{14} \zg +(1-\alpha_{14}) \zi,b_{14}\geq \alpha_{14} \zh +(1-\alpha_{14}) \zk,\\
&\quad b_{15}\geq \alpha_{15} \zc +(1-\alpha_{15}) \zi,b_{15}\geq \alpha_{15} \zd +(1-\alpha_{15}) \zk,\\
&\quad 0\leq \alpha_{1},\alpha_{2},\alpha_{3},\alpha_{4},\alpha_{5},\alpha_{6},\alpha_{7},\alpha_{8},\alpha_{9},\alpha_{10},\alpha_{11},\alpha_{12},\alpha_{13},\alpha_{14},\alpha_{15} \leq 1,\\
&\quad \zg\geq \zh, \zd=\ze=0, \za=\zb,\\
&\quad 0\leq  \za,\zb,\zc,\zf,\zg,\zh,\zi,\zj,\zk,\zl, \zn, \zm \leq 1,\\
&\quad \zc\geq \zg, \zg\geq \zm, \zc\geq \zn\geq \za, \zf\geq \zh, \zf\geq \zb,\\
&\quad \za\leq \rho(\zc-\zg), \zb\leq (1-\rho)(\zf-\zh), \za\leq \rho \zj+(1-\rho)(\zl-\zk), \zg\geq \zi+\zj.
\end{align*}
\newpage
\section{Definitions in discrete geometry}\label{app:def-geo}

Below are the definitions of several concepts in discrete geometry: 
\begin{definition}
The concepts below are either from \cite{M13_GTMdiscretegeo} or basic concepts in linear algebra. We append the location of the concepts from \cite{M13_GTMdiscretegeo} for further interests.
\begin{enumerate}[fullwidth]
    \item (Affine Space, P1, Section 1.1) An affine space is a displacement of a vector space. It has the form of $v+V=\{v+x:x\in V\}$, where $v$ is a vector and $V$ is a vector space. Equivalently, an affine space can be expressed as $\left\{x\in\R^n: u_i^\T  x=v_i, i\in [m]\right\}$ for some $u_i\in\R^n\setminus\{0\}$ and $v_i\in\R$, $i\in[m]$. The dimension of an affine space $V+v$ is defined to be the dimension of $V$.
    
    \item (Affine Hull, P1, Section 1.1) The affine hull of a set $S\in \R^n$, denoted by $\aff(S)$, is the minimal affine space containing it.
    
    \item (Dimension, P83, Section 5.2) The dimension of a set $S\in \R^n$, denoted by $\dim(S)$, is given by the dimension of its affine hull.
    
    \item (Hyperplane, P3, Section 1.1) In any linear space $H$, a hyperplane is an affine subspace whose dimension is one less than that of $H$.
    
    \item (Half-space, P3, Section 1.1) In $\R^n$, a half-space is the set $\{x\in\R^n:a^\T x\leq b\}$ or $\{x\in\R^n:a^\T x<b\}$, where $a$ is a nonzero vector in $\R^n$ and $b\in\R$.
    
    \item (Polytope, P82, Section 5.2) A convex polytope $S$ is defined as a bounded set that is the intersection of finitely many half-spaces, namely $S=\{x\in\R^n:Ax\leq b\}$, where $A\in\R^{k\times n}$ has no zero rows and $b\in\R^k$.
    
    \item (Face, P86, Section 5.3) A face of a polytope $S$ is defined by the set $\{x\in S:\forall y\in S ,a^\T  x\leq a^\T  y\}$ for a certain $a\in \R^n$. Note that by setting $a=0$, we have $S$ itself as a face. By definition, every face of a polytope is also a polytope.

    \item (Facet, P87, Section 5.3) A facet of polytope $S$ is a face of dimension exactly $\dim(S)-1$.

    \item (Boundary) The boundary of a face $S$ is defined as the set of points $x\in S$ such that for any $\epsilon>0$, there exists $y\in \aff(S)\setminus S$ satisfying $\|y-x\|<\epsilon$. We denote the boundary of $S$ as $\partial S$. The interior of $S$, denoted as $S^\circ$, is defined as $S\setminus\partial S$.

    \item (Line and segment) In $\R^n$, a line (segment) is a set of the form $\{y\in\R^n:y=td+b, t\in I\}$, where $d\in\R^n$ is a nonzero vector, $b\in\R^n$, and $I=\R$ ($I=[u,v]$). It represents a one-dimensional affine space. The vector $d$ is called the direction of the line.

    \item (Parallel lines) Two lines (segments) are said to be parallel if their directions $d_1$ and $d_2$ satisfy $d_1=kd_2$ for some nonzero real number $k$. The relation of being parallel is an equivalence class, and two lines are equivalent if and only if they share a proportional direction. Hence, we can also represent a line (segment) using its direction vector, which we will use directly in reference to a line below.

    \item (Parallel between lines and polytopes) For an affine space $A\subseteq \R^n$, we say that a vector $d$ is parallel to $A$ if $A$ contains a line parallel to $d$. Equivalently, if $A=\left\{x\in\R^n:u_i^\T x=v_i, i\in [m]\right\}$, then $d\parallel A$ if and only if $u_i^\T d=0$ holds for each $i\in [m]$. For a vector $d$ and a polytope $P\subseteq \R^n$, we say that $d$ is parallel to $P$ if $d\parallel \aff(P)$.
\end{enumerate}
\end{definition}

\end{document}